\UseRawInputEncoding
\documentclass[11pt]{article}

\usepackage{amssymb,amsmath,amsthm,amsfonts,dsfont}
\usepackage[numbers,comma,sort&compress]{natbib}
\usepackage[colorlinks,hypertexnames=false]{hyperref}
\usepackage[all]{hypcap}
\usepackage{url}
\urlstyle{same}
\usepackage{graphicx}
\usepackage[font=small]{caption}
\usepackage[labelformat=simple]{subcaption}
\usepackage{float}
\usepackage{footnote}
\usepackage[margin=1in]{geometry}

\usepackage{xcolor}
\usepackage{pgfplots}

\usepackage{mathtools}
\mathtoolsset{centercolon}

\DeclarePairedDelimiterX\braket[2]{\langle}{\rangle}{#1 \delimsize\vert #2}

\let\originalleft\left
\let\originalright\right
\renewcommand{\left}{\mathopen{}\mathclose\bgroup\originalleft}
\renewcommand{\right}{\aftergroup\egroup\originalright}

\newcommand{\floor}[1]{\lfloor{#1}\rfloor}
\newcommand{\ceil}[1]{\lceil{#1}\rceil}

\newtheorem{theorem}{Theorem}

\newtheorem{lemma}[theorem]{Lemma}
\newtheorem{proposition}[theorem]{Proposition}

\newtheorem{corollary}[theorem]{Corollary}

\newcommand{\eq}[1]{(\ref{eq:#1})}
\newcommand{\thm}[1]{\hyperref[thm:#1]{Theorem~\ref*{thm:#1}}}
\newcommand{\defn}[1]{\hyperref[defn:#1]{Definition~\ref*{defn:#1}}}
\newcommand{\lem}[1]{\hyperref[lem:#1]{Lemma~\ref*{lem:#1}}}
\newcommand{\prop}[1]{\hyperref[prop:#1]{Proposition~\ref*{prop:#1}}}
\newcommand{\fig}[1]{\hyperref[fig:#1]{Figure~\ref*{fig:#1}}}
\newcommand{\tab}[1]{\hyperref[tab:#1]{Table~\ref*{tab:#1}}}
\renewcommand{\sec}[1]{\hyperref[sec:#1]{Section~\ref*{sec:#1}}}
\newcommand{\append}[1]{\hyperref[append:#1]{Appendix~\ref*{append:#1}}}
\newcommand{\cor}[1]{\hyperref[cor:#1]{Corollary~\ref*{cor:#1}}}
\newcommand{\obs}[1]{\hyperref[obs:#1]{Observation~\ref*{obs:#1}}}

\newcommand{\R}{\mathbb{R}}

\newcommand{\comment}[1]{}

\newcommand{\norm}[1]{\left\lVert#1\right\rVert}

\newcommand{\Th}[1]{\Theta\left(#1\right)}
\newcommand{\OO}[1]{O\left(#1\right)}
\newcommand{\dist}{\textrm{dist}}
\usepackage[capitalise,compress]{cleveref}
\newcommand{\commm}[1]{\left[#1\right]}

\newcommand{\scA}{\mathscr{A}}

\newcommand{\supp}{\mathcal{S}}
\newcommand{\distance}{x}
\newcommand{\Norm}[1]{\Vert #1 \Vert}
\DeclareMathOperator{\cc}{cc}

\pgfplotsset{
	log x ticks with fixed point/.style={
		xticklabel={
			\pgfkeys{/pgf/fpu=true}
			\pgfmathparse{exp(\tick)}%
			\pgfmathprintnumber[fixed relative, precision=3]{\pgfmathresult}
			\pgfkeys{/pgf/fpu=false}
		}
	},
	log y ticks with fixed point/.style={
		yticklabel={
			\pgfkeys{/pgf/fpu=true}
			\pgfmathparse{exp(\tick)}%
			\pgfmathprintnumber[fixed relative, precision=3]{\pgfmathresult}
			\pgfkeys{/pgf/fpu=false}
		}
	}
}

\newenvironment{proof-sketch}{%
	\proof}{\endproof}

\usepackage{thmtools}
\usepackage{thm-restate}

\usepackage{ mathrsfs }

\newcommand{\ad}{\mathrm{ad}}

\newcommand{\expT}{\exp_{\mathcal{T}}}
\newcommand{\acomm}{\alpha_{\mathrm{comm}}}
\newcommand{\acommtilde}{\widetilde{\alpha}_{\mathrm{comm}}}

\usepackage{multirow}
\usepackage{enumerate}
\usepackage{booktabs}
\newcommand{\abs}[1]{\left\lvert#1\right\rvert}


\definecolor{bndclr}{RGB}{0, 196, 0}
\definecolor{empclr}{RGB}{125.9700, 46.9200, 141.7800}
\definecolor{looseclr}{RGB}{236.8950, 176.9700, 31.8750}

\newenvironment{customthm}[1]
{\innercustomthm}
{\endinnercustomthm}

\newcommand\blfootnote[1]{%
	\begingroup
	\renewcommand\thefootnote{}\footnote{#1}%
	\addtocounter{footnote}{-1}%
	\endgroup
}

\newcommand{\vertiii}[1]{{\left\vert\kern-0.25ex\left\vert\kern-0.25ex\left\vert #1
		\right\vert\kern-0.25ex\right\vert\kern-0.25ex\right\vert}}
\newcommand{\Vertiii}[1]{{\vert\kern-0.25ex\vert\kern-0.25ex\vert #1
		\vert\kern-0.25ex\vert\kern-0.25ex\vert}}

\newcommand{\cO}[1]{\mathcal{O}\left(#1\right)}
\newcommand{\tildecO}[1]{\widetilde{\mathcal{O}}\left(#1\right)}
\newcommand{\Om}[1]{\Omega\left(#1\right)}

\newcommand{\Htrunc}{H_{\text{trunc}}}
\newcommand{\Hlc}{H_{\text{lc}}}
\newcommand{\Strunc}{\mathscr{S}_{\text{trunc}}}
\newcommand{\Sreduce}{\mathscr{S}_{\text{reduce}}}
\newcommand{\Slc}{\mathscr{S}_{\text{lc}}}

\let\counterwithin\relax
\usepackage{chngcntr}
\usepackage{apptools}
\AtAppendix{\counterwithin{theorem}{section}}

\begin{document}
\title{\huge A Theory of Trotter Error}
\author
{Andrew M.\ Childs,$^{1,2,3}$ Yuan Su,$^{1,2,3,4}$ Minh C.\ Tran,$^{3,5}$\\
	Nathan Wiebe,$^{6,7,8}$ and Shuchen Zhu$^{9}$\\
}
\date{\vspace{-10mm}}
\maketitle
\blfootnote{This is a slightly enhanced version of the article entitled \emph{Theory of Trotter Error with Commutator Scaling} published in Physical Review X
	\textbf{11} (2021), 011020 \cite{CSTWZ21}.}
\blfootnote{$^{1}$Department of Computer Science, University of Maryland.}
\blfootnote{$^{2}$Institute for Advanced Computer Studies, University of Maryland.}
\blfootnote{$^{3}$Joint Center for Quantum Information and Computer Science, University of Maryland.}
\blfootnote{$^{4}$Institute for Quantum Information and Matter, California Institute of Technology.}
\blfootnote{$^{5}$Joint Quantum Institute, University of Maryland.}
\blfootnote{$^{6}$Department of Physics, University of Washington.}
\blfootnote{$^{7}$Pacific Northwest National Laboratory.}
\blfootnote{$^{8}$Google Inc., Venice CA.}
\blfootnote{$^{9}$Department of Computer Science, Georgetown University.}

\begin{abstract}
The Lie-Trotter formula, together with its higher-order generalizations, provides a simple approach to decomposing the exponential of a sum of operators. Despite significant effort, the error scaling of such product formulas remains poorly understood.

We develop a theory of Trotter error that overcomes the limitations of prior approaches based on truncating the Baker-Campbell-Hausdorff expansion. Our analysis directly exploits the commutativity of operator summands, producing tighter error bounds for both real- and imaginary-time evolutions. Whereas previous work achieves similar goals for systems with geometric locality or Lie-algebraic structure, our approach holds in general.

We give a host of improved algorithms for digital quantum simulation and quantum Monte Carlo methods, nearly matching or even outperforming the best previous results. Our applications include: (i) a simulation of second-quantized plane-wave electronic structure, nearly matching the interaction-picture algorithm of Low and Wiebe; (ii) a simulation of $k$-local Hamiltonians almost with induced $1$-norm scaling, faster than
the qubitization algorithm of Low and Chuang; (iii) a simulation of rapidly decaying power-law interactions, outperforming the Lieb-Robinson-based approach of Tran et al.; (iv) a hybrid simulation of clustered Hamiltonians, dramatically improving the result of Peng, Harrow, Ozols, and Wu; and (v) quantum Monte Carlo simulations of the transverse field Ising model and quantum ferromagnets, tightening previous analyses of Bravyi and Gosset.

We obtain further speedups using the fact that product formulas can preserve the locality of the simulated system. Specifically, we show that local observables can be simulated with complexity independent of the system size for power-law interacting systems, which implies a Lieb-Robinson bound nearly matching a recent result of Tran et al.

Our analysis reproduces known tight bounds for first- and second-order formulas. We further investigate the tightness of our bounds for higher-order formulas. For quantum simulation of a one-dimensional Heisenberg model with an even-odd ordering of terms, our result overestimates the complexity by only a factor of $5$. Our bound is also close to tight for power-law interactions and other orderings of terms. This suggests that our theory can accurately characterize Trotter error in terms of both the asymptotic scaling and the constant prefactor.
\end{abstract}

\newpage
{
	\thispagestyle{empty}
	\clearpage\tableofcontents
	\thispagestyle{empty}
}
\newpage

\section{Introduction}
\label{sec:intro}

Product formulas provide a convenient approach to decomposing the evolution of a sum of operators. The Lie product formula was introduced in the study of Lie groups in the late 1800s; later developments considered more general operators and higher-order approximations. Originally studied in the context of pure mathematics, product formulas have found numerous applications in other areas, such as applied mathematics (under the name ``splitting method'' or ``symplectic integrators''), physics (under the name ``Trotterization''), and theoretical computer science.

This paper considers the application of product formulas to simulating quantum systems. It has been known for over two decades that these formulas are useful for digital quantum simulation and quantum Monte Carlo methods. However, their error scaling is poorly understood and existing bounds can be several orders of magnitude larger than what are observed in practice, even for simulating relatively small systems.

We develop a theory of Trotter error that directly exploits the commutativity of operator summands to give tighter bounds. Whereas previous work achieves similar goals for systems with geometric locality or Lie-algebraic structure, our theory has no such restrictions. We present a host of examples in which product formulas can nearly match or even outperform state-of-the-art simulation results. We accompany our analysis with numerical calculation, which suggests that the bounds also have nearly tight constant prefactors.

We hope this work will motivate further studies of the product-formula approach, which has been deemphasized in recent years in favor of more advanced simulation algorithms that are easier to analyze but harder to implement. Indeed, despite the sophistication of these ``post-Trotter methods'' and their optimality in certain general models, our work shows that they can be provably outperformed by product formulas for simulating many quantum systems.

\subsection{Simulating quantum systems by product formulas}
\label{sec:sim_pf}
Simulating the dynamics of quantum systems is one of the most promising applications of digital quantum computers.
Classical computers apparently require exponential time to simulate typical quantum dynamics.
This intractability led Feynman \cite{Fey82} and others to propose the idea of quantum computers. In 1996, Lloyd gave the first explicit quantum algorithm for simulating $k$-local Hamiltonians \cite{Llo96}. Subsequent work considered the broader class of sparse Hamiltonians \cite{AT03,BACS05,FractionalQuery14,BCK15,LC17,Low18} and developed techniques for simulating particular physical systems \cite{WBCHT14,Pou15,BWMMNC18,MEABY18,JLP12,lanyon2010towards,Cao19}, with potential applications to developing new pharmaceuticals, catalysts, and materials. The study of quantum simulation has also inspired the design of various quantum algorithms for other problems \cite{HHL09,BS17,FGG07,CCDFGS03,Berry14}.

Lloyd's approach to quantum simulation is based on product formulas. Specifically, let $H=\sum_{\gamma=1}^{\Gamma}H_\gamma$ be a $k$-local Hamiltonian (i.e., each $H_\gamma$ acts nontrivially on $k=\cO{1}$ qubits). Assuming $H$ is time independent, evolution under $H$ for time $t$ is described by the unitary operation $e^{-itH}$. When $t$ is small, this evolution can be well approximated by the Lie-Trotter formula $\mathscr{S}_1(t)=e^{-itH_\Gamma}\cdots e^{-itH_1}$, where each $e^{-itH_\gamma}$ can be efficiently implemented on a quantum computer. To simulate for a longer time, we may divide the evolution into $r$ \emph{Trotter steps} and simulate each step with \emph{Trotter error} at most $\epsilon/r$. We choose the \emph{Trotter number} $r$ to be sufficiently large so that the entire simulation achieves an error of at most $\epsilon$.
The Lie-Trotter formula only provides a first-order approximation to the evolution, but higher-order approximations are also known from the work of Suzuki and others \cite{Suz91,bk:BC16}. While many previous works focused on the performance of specific formulas, the theory we develop holds for any formula; we use the term \emph{product formula} to emphasize this generality. A quantum simulation algorithm using product formulas does not require ancilla qubits, making this approach advantageous for near-term experimental demonstration.

Recent studies have provided alternative simulation algorithms beyond the product-formula approach (sometimes called ``post-Trotter methods''). Some of these algorithms have logarithmic dependence on the allowed error \cite{FractionalQuery14,BCCKS14,BCK15,LC17,LC16,LW18}, an exponential improvement over product formulas. However, this does not generally lead to an exponential reduction in time complexity for practical applications of quantum simulation. In practice, the simulation accuracy is often chosen to be constant. Then the error dependence only enters as a constant prefactor, which may not significantly affect the overall gate complexity. The reduction in complexity is more significant when quantum simulation is used as a subroutine in another quantum algorithm (such as phase estimation), since this may require high-precision simulation to ensure reliable behavior. However, this logarithmic error dependence typically replaces a factor that scales polynomially with time or the system size by another that scales logarithmically, giving only a polynomial reduction in the complexity. Furthermore, the constant-factor overhead and extra space requirements of post-Trotter methods may make them uncompetitive with the product-formula approach in practice.

Product formulas and their generalizations \cite{HP18,COS18,LKW19,OWC19} can perform significantly better when the operator summands commute or nearly commute---a unique feature that does not seem to hold for other quantum simulation algorithms \cite{FractionalQuery14,BCCKS14,BCK15,LC17,LC16,LW18,Campbell18}.
This effect has been observed numerically in previous studies of quantum simulations of condensed matter systems \cite{CMNRS18} and quantum chemistry \cite{RWSWT17,BMWAW15,wecker2015solving}.
An intuitive explanation of this phenomenon comes from truncating the Baker-Campbell-Hausdorff (BCH) expansion. However, the intuition that the lowest-order terms of the BCH expansion are dominant is surprisingly difficult to justify (and sometimes is not even valid \cite{CS19,WBCHT14}). Thus, previous work established loose Trotter error bounds, sometimes suggesting poor performance. Our results rigorously demonstrate that for many systems, such arguments do not accurately reflect the true performance of product formulas.

Product-formula decompositions directly translate terms of the Hamiltonian into elementary simulation steps, making them well suited to preserve certain properties such as the locality
of the simulated system. We show that this property can be used to further reduce the simulation cost when the goal is to simulate local observables as opposed to the full dynamics \cite{TPP19,KGE14}.

Besides digital quantum simulation, product formulas can also be applied to quantum Monte Carlo methods, in which the goal is to classically compute certain properties of the Hamiltonian, such as the partition function, the free energy, or the ground energy. Our results can also be applied to improve the efficiency of previous applications of quantum Monte Carlo methods for systems such as the transverse field Ising model \cite{Bravyi15} and quantum ferromagnets \cite{BG17}.

\subsection{Previous analyses of Trotter error}
\label{sec:pre_analyses}
We now briefly summarize prior approaches to analyzing Trotter error for simulating quantum systems, and we discuss their limitations.

The original work of Lloyd \cite{Llo96} analyzes product formulas by truncating the Taylor expansion (or the BCH expansion). Recall that the Lie-Trotter formula $\mathscr{S}_1(t)$ provides a first-order approximation to the evolution, so $\mathscr{S}_1(t)=e^{-itH}+\OO{t^2}$. To simplify the analysis, Lloyd dropped all higher-order terms in the Taylor expansion and focused only on the terms of lowest order $t^2$. This approach is intuitive and has been employed by subsequent works to give rough estimation of Trotter error. The drawback of this analysis is that it implicitly assumes that high-order terms are dominated by the lowest-order term. However, this does not necessarily hold for many systems such as nearest-neighbor lattice Hamiltonians \cite{CS19} and chemical Hamiltonians \cite{WBCHT14} when the time step $t$ is fixed.

This issue was addressed in the seminal work of Berry, Ahokas, Cleve, and Sanders by using a tail bound of the Taylor expansion \cite{BACS05}, giving a concrete bound on the Trotter error for high-order Suzuki formulas. For a Hamiltonian $H=\sum_{\gamma=1}^{\Gamma}H_\gamma$ containing $\Gamma$ summands, their bound scales
with $\Gamma\max_\gamma\norm{H_\gamma}$, although it is not hard to improve this \cite{HP18} to
$\sum_{\gamma=1}^{\Gamma}\norm{H_\gamma}$ \cite{Suzuki85,LKW19}. Regardless of which scaling to use, this worst-case analysis does not exploit the commutativity of Hamiltonian summands and the resulting complexity is worse than many post-Trotter methods.

Error bounds that exploit the commutativity of summands are known for low-order formulas, such as the Lie-Trotter formula \cite{Huyghebaert_1990,Suzuki85} and the second-order Suzuki formula \cite{Suzuki85,DT10,WBCHT14,Kivlichan19}. These bounds are tight in the sense that they match the lowest-order term of the BCH expansion up to an application of the triangle inequality. However, it is unclear whether they can be generalized, say, to the fourth- or the sixth-order case, which are still reasonably simple and can provide a significant advantage in practice \cite{CMNRS18}.

Instead, previous works made compromises to obtain improved analyses of higher-order formulas. Somma gave an improved bound by representing the Trotter error as an infinite series of nested commutators \cite{Somma16}. This approach is advantageous when the simulated system has an underlying Lie-algebraic structure with small structure factors, such as for a quantum harmonic oscillator and certain nonquadratic potentials. However, this reduces to the worst-case analysis of Berry, Ahokas, Cleve, and Sanders for other systems. 

An alternative approach of Thalhammer represented the error of a $p$th-order product formula using commutators of order up to $q$ for $q\geq p$ \cite{Thalhammer08}, with the ($q+1$)st-order remainder further bounded by some tail bound. This analysis is bottlenecked by the use of the tail bound. The special case where $q=p+1$ was studied in~\cite{CMNRS18} and the result was applied to estimate the quantum resource for simulating a one-dimensional Heisenberg model, which only offers a modest improvement over the worst-case analysis.

In recent work~\cite{CS19}, Childs and Su gave a Trotter error bound in which only the lowest-order error appears, avoiding manipulation of infinite series or use of tail bounds. As an immediate application, they showed that product formulas can nearly optimally simulate lattice systems with geometrically local interactions, justifying an earlier claim of Jordan, Lee, and Preskill~\cite{JLP12} in the context of simulating quantum field theory. Their improvement is based on a representation of Trotter error introduced by Descombes and Thalhammer~\cite{DT10}, which streamlines the previous analysis \cite{Thalhammer08}. In this approach, the Trotter error is represented using commutators nested with conjugations of matrix exponentials. For Hamiltonians with nearest-neighbor interactions, Ref.\ \cite{CS19} gave an argument based on locality to cancel the majority of the Trotter error. However, this approach reduces to the worst-case scenario for systems lacking geometric locality. In contrast, our representation of Trotter error does not have this restriction and results in speedups for simulating various strongly long-range interacting systems (see \tab{result_summary}).

For other related studies of Trotter error in the context of numerical analysis, we refer the reader to \cite{Jahnke2000,Thalhammer08,Thalhammer12,mclachlan_quispel_2002,McLachlan95,hairer2006geometric} and the references therein.

\subsection{Trotter error with commutator scaling}
\label{sec:commutator_scaling}
We give a new bound on the Trotter error that depends on nested commutators of the operator summands. This bound is formally stated in \sec{theory_rep} and previewed here.

\begin{customthm}{}[Trotter error with commutator scaling]
	\label{thm:main_result}
	Let $H=\sum_{\gamma=1}^{\Gamma}H_\gamma$ be an operator consisting of $\Gamma$ summands and let $t\geq 0$. Let $\mathscr{S}(t)$ be a $p$th-order $\Upsilon$-stage product formula as in \sec{prelim_pf}. Define
	$\acommtilde=\sum_{\gamma_1,\gamma_2,\ldots,\gamma_{p+1}=1}^{\Gamma}\norm{\big[H_{\gamma_{p+1}},\cdots\big[H_{\gamma_2},H_{\gamma_1}\big]\big]}$, where $\norm{\cdot}$ is the spectral norm. Then the additive error $\mathscr{A}(t)$ and the multiplicative error $\mathscr{M}(t)$, defined respectively by $\mathscr{S}(t)=e^{tH}+\mathscr{A}(t)$ and $\mathscr{S}(t)=e^{tH}(I+\mathscr{M}(t))$, can be asymptotically bounded as
	\begin{equation}
	\norm{\mathscr{A}(t)}=\cO{\acommtilde t^{p+1}e^{2t \Upsilon\sum_{\gamma=1}^{\Gamma}\norm{H_{\gamma}}}},\qquad
	\norm{\mathscr{M}(t)}=\cO{\acommtilde t^{p+1}e^{2t \Upsilon\sum_{\gamma=1}^{\Gamma}\norm{H_{\gamma}}}}.
	\end{equation}
	Furthermore, if the $H_\gamma$ are anti-Hermitian, corresponding to physical Hamiltonians, we have
	\begin{equation}
	\norm{\mathscr{A}(t)}=\cO{\acommtilde t^{p+1}},\qquad
	\norm{\mathscr{M}(t)}=\cO{\acommtilde t^{p+1}}.
	\end{equation}
\end{customthm}

We emphasize that this theorem does not follow from truncating the BCH series. Although the ($p+1$)st-order term of the BCH series is also a linear combination of nested commutators similar to $\acommtilde$, such a term can be dominated by a higher-order term when $t$ is fixed, as is the case for nearest-neighbor lattice systems \cite[Supplementary Section I]{CS19} and quantum chemistry \cite[Appendix B]{WBCHT14}. Truncating the BCH series ignores significant, potentially dominant error contributions and thus does not accurately characterize the Trotter error.

The above expression for our asymptotic error bound is succinct and easy to evaluate. In \sec{app}, we compute $\acommtilde$ for various examples, including the second-quantized electronic-structure Hamiltonians, $k$-local Hamiltonians, rapidly decaying power-law interactions, and clustered Hamiltonians. We further study the tightness of the prefactor of our bound in \sec{prefactor} and give a numerical implementation for one-dimensional Heisenberg models with either nearest-neighbor interactions or power-law interactions.

Although the definition of a specific product formula depends on the ordering of operator summands, our asymptotic bound does not. As an immediate consequence, the asymptotic speedups we obtain in \sec{app_dqs} hold irrespective of how we order the operator summands in the simulation. For the special case of nearest-neighbor lattice models, this answers a previous question of \cite{CS19} regarding the ``ordering robustness'' of higher-order formulas. However, the ordering becomes important if our goal is to simulate local observables or to get error bounds with tight constant prefactors, as we further discuss in \sec{app_local_obs} and \sec{prefactor}, respectively.

As mentioned in \sec{pre_analyses}, prior Trotter error analyses typically produce loose bounds and are only effective in special cases. Our approach overcomes those limitations in the following respects:
\begin{enumerate}[(i)]
	\item our bound only contains a finite number of error terms, in contrast to the bound in~\cite{Somma16};
	\item our bound involves pure nested commutators without introducing conjugations of matrix exponentials or tail bounds, overcoming the drawbacks of~\cite{CS19,Thalhammer08,Thalhammer12};
	\item our bound reduces to the worst-case analysis of~\cite{BACS05} by further bounding terms with the triangle inequality; and
	\item for Hamiltonians with two summands, our bound encompasses the tight analyses~\cite{Huyghebaert_1990,Suzuki85,WBCHT14,DT10,Kivlichan19} of the Lie-Trotter formula and the second-order Suzuki formula as special cases.
\end{enumerate}

\subsection{Overview of results}
\label{sec:results}
The commutator scaling of Trotter error uncovers a host of examples where product formulas can nearly match or even outperform the state-of-the-art results in digital quantum simulation. These examples include: (i) a simulation of second-quantized plane-wave electronic structure with $n$ spin orbitals for time $t$ with gate complexity $n^{2+o(1)}t^{1+o(1)}$, whereas the state-of-the-art approach performs simulation in the interaction picture \cite{LW18} with cost $\tildecO{n^2t}$ and likely large overhead; (ii) a simulation of $n$-qubit $k$-local Hamiltonians $H$ with complexity $n^k\vertiii{H}_1\norm{H}_1^{o(1)}t^{1+o(1)}$ that almost scales with the induced $1$-norm\footnote{The $1$-norm $\norm{H}_1$ and the induced $1$-norm $\vertiii{H}_1$ are formally defined in \sec{prelim_notation}. For now, it suffices to know that $\vertiii{H}_1\leq\norm{H}_1$ and that the gap can be significant for many $k$-local Hamiltonians.} $\vertiii{H}_1$, implying an improved simulation of $d$-dimensional power-law interactions that decay with distance $\distance$ as $1/\distance^{\alpha}$ for $\alpha\leq 2d$, whereas the fastest previous approach uses the qubitization algorithm \cite{LC16} with cost $\tildecO{n^k\norm{H}_1t}$; (iii) a simulation of $d$-dimensional power-law interactions $1/\distance^{\alpha}$ (for fixed $\alpha>2d$) with gate complexity $(nt)^{1+d/(\alpha-d)+o(1)}$, whereas the best previous algorithm decomposes the evolution based on Lieb-Robinson bounds \cite{Tran18} with cost $\tildecO{(nt)^{1+2d/(\alpha-d)}}$; and (iv) a hybrid simulation of clustered Hamiltonians of interaction strength $h_B$ and contraction complexity $\cc(g)$ with runtime $2^{\cO{h_B^{o(1)} t^{1+o(1)} \cc(g)/\epsilon^{o(1)}}}$, improving the previous result of $2^{\cO{h_B^2t^2 \cc(g)/\epsilon}}$ \cite{PHOW19}. We discuss these examples in more detail in \sec{app_dqs}.

We show in \sec{app_local_obs} that these gate complexities can be further improved when the goal is to simulate local observables instead of the full dynamics. We illustrate this for $d$-dimensional lattice systems with $1/\distance^\alpha$ interactions ($\alpha>2d$).
Lieb-Robinson bounds for power-law interactions~\cite{Tran18} suggest that the evolution of a local observable is mostly confined inside a light cone induced by the interactions. Simulating such an evolution by simulating the dynamics of the entire system appears redundant, especially when the system size is large.
We realize this intuition and show, without using Lieb-Robinson bounds,
that the gate count for simulating the evolution of a local observable scales as $ t^{\left(1+d\frac{\alpha -d}{\alpha -2 d}\right) \left(1+\frac{d}{\alpha -d}\right)+o(1)}
$, which is independent of the system size $n$ and smaller than simulating the dynamics of the entire system when $n = \Om{t^{d(\alpha-d)/(\alpha-2d)}}$.
The scaling also reduces to $t^{d+1+o(1)}$---proportional to the space-time volume inside a linear light cone---in the limit $\alpha\rightarrow \infty$, which corresponds to nearest-neighbor interactions.

Our bound can also be applied to improve the performance of quantum Monte Carlo simulation. In this case, we are limited to the use of second-order Suzuki formula and, due to imaginary-time evolution, the Trotter number scales at least linearly with the system size. Nevertheless, we are able to improve several existing classical simulations using our bound, without modifying the original algorithms. This includes: (i) a simulation of $n$-qubit transverse field Ising model with maximum interaction strength $j$ and precision $\epsilon$ with runtime $\tildecO{n^{45}j^{14}\epsilon^{-2}+n^{38}j^{21}\epsilon^{-9}}$, tightening the previous result of $\tildecO{n^{59}j^{21}\epsilon^{-9}}$ \cite{Bravyi15}; and (ii) a simulation of ferromagnetic quantum spin systems for (imaginary) time $\beta$ and accuracy $\epsilon$ with runtime $\tildecO{n^{92}(1+\beta^{46})/\epsilon^{25}}$, improving the previous complexity of $\tildecO{n^{115}(1+\beta^{46})/\epsilon^{25}}$ \cite{BG17}. These applications are further discussed in \sec{app_qmc}. \tab{result_summary} compares our results against the best previous ones for simulating quantum dynamics, simulating local observables, and quantum Monte Carlo simulation.

\begin{table}
	\scalebox{0.86}{\parbox{1.\linewidth}{%
	\begin{center}
		\renewcommand{\arraystretch}{1.75}
		\footnotesize
		\begin{tabular}{cccc}
			\toprule
			Application & System & Best previous result & New result\\
			\midrule
			\multirow{6}{*}{\shortstack[c]{Simulating \\quantum dynamics}}
			&Electronic structure &$\tildecO{n^2t}$ (Interaction picture) &$n^{2+o(1)}t^{1+o(1)}$ \\
			&$k$-local Hamiltonians &$\tildecO{n^k\norm{H}_1t}$ (Qubitization) &$n^k\vertiii{H}_1\norm{H}_1^{o(1)}t^{1+o(1)}$ \\
			&$1/\distance^\alpha$ ($\alpha<d$) &$\widetilde{O}(n^{4-\alpha/d}t)$ (Qubitization) &$n^{3-\alpha/d+o(1)}t^{1+o(1)}$ \\
			&$1/\distance^\alpha$ ($d\leq\alpha\leq 2d$) &$\widetilde{O}(n^3t)$ (Qubitization) &$n^{2+o(1)}t^{1+o(1)}$\\
			&$1/\distance^\alpha$ ($\alpha>2d$) &$\tildecO{(nt)^{1+2d/(\alpha-d)}}$ (Lieb-Robinson bound) &$(nt)^{1+d/(\alpha-d)+o(1)}$ \\
			&Clustered Hamiltonians &$2^{\cO{h_B^2t^2 \cc(g)/\epsilon}}$ &$2^{\cO{h_B^{o(1)} t^{1+o(1)} \cc(g)/\epsilon^{o(1)}}}$ \\
			\midrule
			\multirow{1}{*}{\shortstack[c]{Simulating \\local observables}}
			&$1/\distance^{\alpha}(\alpha>2d)$ &--- &$t^{\left(1+d\frac{\alpha -d}{\alpha -2 d}\right) \left(1+\frac{d}{\alpha -d}\right)+o(1)}$ \\
			\midrule
			\multirow{2}{*}{Monte Carlo simulation}
			&Transverse field Ising model &$\tildecO{n^{59}j^{21}\epsilon^{-9}}$ &$\tildecO{n^{45}j^{14}\epsilon^{-2}+n^{38}j^{21}\epsilon^{-9}}$ \\
			&Quantum ferromagnets &$\tildecO{n^{115}(1+\beta^{46})/\epsilon^{25}}$ &$\tildecO{n^{92}(1+\beta^{46})/\epsilon^{25}}$ \\
			\bottomrule
		\end{tabular}
		\renewcommand{\arraystretch}{1}
	\end{center}
	}}
	\caption{Comparison of our results and the best previous results for simulating quantum dynamics, simulating local observables, and quantum Monte Carlo simulation.}
	\label{tab:result_summary}
\end{table}

Given the numerous applications our bound provides in the asymptotic regime, we ask whether it has a favorable constant prefactor as well. This consideration is relevant to the practical performance of product formulas, especially for near-term quantum simulation experiments. For a two-term Hamiltonian, we show that our bound reduces to the known analyses of the Lie-Trotter formula \cite{Huyghebaert_1990,Suzuki85} and the second-order Suzuki formula \cite{DT10,WBCHT14,Kivlichan19}. We then bootstrap the result to analyze Hamiltonians with an arbitrary number of summands (\sec{prefactor_pf12}). The resulting bound matches the lowest-order term of the BCH expansion up to an application of the triangle inequality, and our analysis is thus provably tight for these low-order formulas.

We further numerically implement our bound for a one-dimensional Heisenberg model with a random magnetic field. This model can be simulated to understand condensed matter phenomena, but even a simulation of modest size seems to be infeasible for current classical computers. Childs et al.\ compared different quantum simulation algorithms for this model \cite{CMNRS18} and observed that product formulas have the best empirical performance, although their provable bounds were off by orders of magnitude even for systems of modest size, making it hard to identify with confidence the most efficient approach for near-term simulation. Reference \cite{CS19} claimed an improved fourth-order bound that is off by a factor of about $17$. Here, we give a tight bound that overestimates by only a factor of about $5$. We also give a nearly tight Trotter error bound for power-law interactions. We describe the numerical implementation of our bound in detail in \sec{prefactor_pf2k}.

Underpinning these improvements is a theory we develop concerning the \textit{types}, \textit{order conditions}, and \textit{representations} of Trotter error. We illustrate these concepts in \sec{theory_example} with the simple example of the first-order Lie-Trotter formula.

Let $H=\sum_{\gamma=1}^{\Gamma}H_\gamma$ be a sum of operators and let $\mathscr{S}(t)$ be a product formula corresponding to this decomposition. We say that $\mathscr{A}(t)$, $\mathscr{M}(t)$, and $\mathscr{E}(t)$ are the additive, multiplicative, and exponentiated Trotter error if
\begin{equation}
\mathscr{S}(t)=e^{tH}+\mathscr{A}(t),\qquad
\mathscr{S}(t)=e^{tH}(I+\mathscr{M}(t)),\qquad
\mathscr{S}(t)=\expT\bigg(\int_{0}^{t}\mathrm{d}\tau\big(H+\mathscr{E}(\tau)\big)\bigg),
\end{equation}
respectively, where $\expT$ denotes the time-ordered matrix exponential. For applications in digital quantum simulation, these three types of Trotter error are equivalent to each other. However, the multiplicative type and the exponentiated type are more versatile for analyzing quantum Monte Carlo simulation. We give a constructive definition of these error types and discuss how they are related in \sec{theory_type}.

A $p$th-order product formula $\mathscr{S}(t)$ can approximate the ideal evolution to $p$th order, in the sense that $\mathscr{S}(t)=e^{tH}+\OO{t^{p+1}}$. Motivated by this, we say that an operator-valued function $\mathscr{F}(t)$ satisfies the $p$th-order condition if $\mathscr{F}(t)=\OO{t^{p+1}}$. In \sec{theory_order}, we give order conditions for Trotter error and its various derived operators. One significance of order conditions is that they can be used to cancel low-order terms. In particular, if $\mathscr{F}(t)$ satisfies the $p$th-order condition, then all terms with order at most $p$ vanish in the Taylor series. This can be verified by brute-force differentiation when $\mathscr{F}(t)$ is explicitly given, but applying the correct order condition avoids such a cumbersome calculation.

We then consider representations of Trotter error in \sec{theory_rep}. Our representation only involves finitely many error terms, each of which is given by a nested commutator of operator summands. As mentioned earlier, these features overcome the drawbacks of previous representations and motivate a host of new applications. In deriving our representation, we work in a general setting where operator summands are not necessarily anti-Hermitian, so that our analysis simultaneously handles real-time evolutions for digital quantum simulation and imaginary-time evolutions for quantum Monte Carlo simulation.

\sec{prelim} gives a summary of background material that is necessary for understanding our Trotter error theory and its applications. \sec{discussion} concludes the paper with a brief discussion of the results and some open questions.

\section{Preliminaries}
\label{sec:prelim}

In this section, we summarize the preliminaries that we use in subsequent sections of the paper.
Specifically, we introduce notation and terminology in \sec{prelim_notation}, including various notions of norms and common asymptotic notations.
In \sec{prelim_expT}, we discuss time-ordered evolutions and their properties that are relevant to our analysis.
We then define general product formulas in \sec{prelim_pf} and prove a Trotter error bound with $1$-norm scaling.
Readers who are familiar with these preliminaries may skip ahead to \sec{theory} for the main result of our paper.

\subsection{Notation and terminology}
\label{sec:prelim_notation}
Unless otherwise noted, we use lowercase Latin letters to represent scalars, such as the evolution time $t$, the system size $n$, and the order of a product formula $p$. We also use the Greek alphabet to denote scalars, especially when we want to write a summation like $\sum_{\gamma=1}^{\Gamma}$. We use uppercase Latin letters, such as $A$, to denote operators. Throughout the paper, we assume that the underlying Hilbert space is finite dimensional and operators can be represented by complex square matrices. We expect that some of our analyses can be generalized to spaces with infinite dimensions, but we restrict ourselves to the finite-dimensional setting since this is most relevant for applications to digital quantum simulation and quantum Monte Carlo simulation. We use scripted uppercase letters, such as $\mathscr{F}(t)$, to denote operator-valued functions.

We organize scalars to form vectors $h_\gamma$ and tensors $h_{\gamma_1,\ldots,\gamma_k}$. We use standard norms for tensors, including the $1$-norm $\norm{h}_1:=\sum_{\gamma_1,\ldots,\gamma_k}\abs{h_{\gamma_1,\ldots,\gamma_k}}$, the Euclidean norm (or $2$-norm) $\norm{h}_2:=\sqrt{\sum_{\gamma_1,\ldots,\gamma_k}\abs{h_{\gamma_1,\ldots,\gamma_k}}^2}$, and the $\infty$-norm $\norm{h}_\infty:=\max_{\gamma_1,\ldots,\gamma_k}\abs{h_{\gamma_1,\ldots,\gamma_k}}$. In case there is ambiguity, we use $\vec{h}$ to emphasize the fact that $h$ is a vector (or a tensor more generally).

For an operator $A$, we use $\norm{A}$ to denote its spectral norm---the largest singular value of $A$. The spectral norm is also known as the operator norm. It is a matrix norm that satisfies the scaling property $\norm{aA}=\abs{a}\norm{A}$, the submultiplicative property $\norm{AB}\leq\norm{A}\norm{B}$, and the triangle inequality $\norm{A+B}\leq\norm{A}+\norm{B}$. If $A$ is unitary, then $\norm{A}=1$. We further use $A_{\gamma_1,\ldots,\gamma_k}$ to denote a tensor where each elementary object is an operator. We define a norm of $A_{\gamma_1,\ldots,\gamma_k}$ by taking the spectral norm of each elementary operator and evaluating the corresponding norm of the resulting tensor. For example, we have $\norm{A}_1:=\sum_{\gamma_1,\ldots,\gamma_k}\norm{A_{\gamma_1,\ldots,\gamma_k}}$ and $\norm{A}_\infty:=\max_{\gamma_1,\ldots,\gamma_k}\norm{A_{\gamma_1,\ldots,\gamma_k}}$.

For a tensor $A_{\gamma_1,\ldots,\gamma_k}$, we define
\begin{equation}
	\vertiii{A}_1:=\max_j\max_{\gamma_j}\sum_{\substack{\gamma_1,\ldots,\gamma_{j-1},\\\gamma_{j+1},\ldots,\gamma_{k}}}\norm{A_{\gamma_1,\ldots,\gamma_k}}.
\end{equation}
We call $\vertiii{A}_1$ the induced $1$-norm of $A$, since it can be seen as a generalization of the induced $1$-norm $\max_{\gamma_2}\sum_{\gamma_1}\abs{a_{\gamma_1,\gamma_2}}$ of a matrix $a_{\gamma_1,\gamma_2}$ \cite{horn2012matrix}. A quantum simulation algorithm with induced $1$-norm scaling runs faster than a $1$-norm scaled algorithm because
\begin{equation}
	\vertiii{A}_1\leq\norm{A}_1.
\end{equation}
In fact, as we will see in \sec{app_dqs}, the gap between these two norms can be significant for many realistic systems.

Let $f,g:\R\rightarrow\R$ be functions of real variables. We write $f=\OO{g}$ if there exist $c,t_0>0$ such that $\abs{f(\tau)}\leq c\abs{g(\tau)}$ whenever $\abs{\tau}\leq t_0$. Note that we consider the limit when the variable $\tau$ approaches zero as opposed to infinity, which is different from the usual setting of algorithmic analysis. For that purpose, we write $f=\cO{g}$ if there exist $c,t_1>0$ such that $\abs{f(\tau)}\leq c\abs{g(\tau)}$ for all $\abs{\tau}\geq t_1$. When there is no ambiguity, we will use $f=\cO{g}$ to also represent the case where $\abs{f(\tau)}\leq c\abs{g(\tau)}$ holds for all $\tau\in\R$. We then extend the definition of $\mathcal{O}$ to functions of positive integers and multivariate functions. For example, we use $f(n,t,1/\epsilon)=\cO{(nt)^2/\epsilon}$ to mean that $\abs{f(n,t,1/\epsilon)}\leq c(n\abs{t})^2/\epsilon$ for some $c,n_0,t_0,\epsilon_0>0$ and all $\abs{t}\geq t_0$, $0<\epsilon<\epsilon_0$, and integers $n\geq n_0$. If $\mathscr{F}(\tau)$ is an operator-valued function, we first compute its spectral norm and analyze the asymptotic scaling of $\norm{\mathscr{F}(\tau)}$. We write $f=\Om{g}$ if $g=\cO{f}$, and $f=\Th{g}$ if both $f=\cO{g}$ and $f=\Om{g}$. We use $\widetilde{\mathcal{O}}$ to suppress logarithmic factors in the asymptotic expression and $o(1)$ to represent a positive number that approaches zero as some parameter grows.

Finally, we use $\overleftarrow{\prod}$, $\prod_{\gamma=1}^{\Gamma}$ to denote a product where the elements have increasing indices from right to left and $\overrightarrow{\prod}$, $\prod_{\gamma=\Gamma}^{1}$ vice versa. Under this convention,
\begin{equation}
	\prod_{\gamma=1}^{\Gamma}A_\gamma=\overleftarrow{\prod_{\gamma}}A_\gamma=A_\Gamma\cdots A_2A_1,\qquad
	\prod_{\gamma=\Gamma}^{1}A_\gamma=\overrightarrow{\prod_{\gamma}}A_\gamma=A_1A_2\cdots A_\Gamma.
\end{equation}
We let a summation be zero if its lower limit exceeds its upper limit.

\subsection{Time-ordered evolutions}
\label{sec:prelim_expT}
Let $\mathscr{H}(\tau)$ be an operator-valued function defined for $0\leq\tau\leq t$. We say that $\mathscr{U}(\tau)$ is the time-ordered evolution generated by $\mathscr{H}(\tau)$ if $\mathscr{U}(0)=I$ and $\frac{\mathrm{d}}{\mathrm{d}\tau}\mathscr{U}(\tau)=\mathscr{H}(\tau)\mathscr{U}(\tau)$ for $0\leq\tau\leq t$. In the case where $\mathscr{H}(\tau)$ is anti-Hermitian, the function $\mathscr{U}(\tau)$ represents the evolution of a quantum system under Hamiltonian $i\mathscr{H}(\tau)$.
We do not impose any restrictions on the Hermiticity of $\mathscr{H}(\tau)$ in the development of our theory, so our analysis can be applied to not only real-time but also imaginary-time evolutions. Throughout this paper, we assume that operator-valued functions are continuous, which guarantees the existence and uniqueness of their generated evolutions \cite[p.\ 12]{bk:dollard_friedman}. We then formally represent the time-ordered evolution $\mathscr{U}(t)$ by $\expT\big(\int_{0}^{t}\mathrm{d}\tau \mathscr{H}(\tau)\big)$, where $\expT$ denotes the time-ordered exponential. In the special case where $\mathscr{H}(\tau)=H$ is constant, the generated evolution is given by an ordinary matrix exponential $\expT\big(\int_{0}^{t}\mathrm{d}\tau \mathscr{H}(\tau)\big)=e^{tH}$.

In a similar way, we define the time-ordered evolution $\expT\big(\int_{t_1}^{t_2}\mathrm{d}\tau \mathscr{H}(\tau)\big)$ generated on an arbitrary interval $t_1\leq\tau\leq t_2$. Its determinant satisfies $\det\big(\expT\big(\int_{t_1}^{t_2}\mathrm{d}\tau \mathscr{H}(\tau)\big)\big)=e^{\int_{t_1}^{t_2}\mathrm{d}\tau\mathrm{Tr}(\mathscr{H}(\tau))}\neq0$ \cite[p.\ 9]{bk:dollard_friedman}, so the inverse operator $\expT^{-1}\big(\int_{t_1}^{t_2}\mathrm{d}\tau \mathscr{H}(\tau)\big)$ exists; we denote it by $\expT\big(\int_{t_2}^{t_1}\mathrm{d}\tau \mathscr{H}(\tau)\big)$. We have thus defined $\expT\big(\int_{t_1}^{t_2}\mathrm{d}\tau \mathscr{H}(\tau)\big)$ for every pair of $t_1$ and $t_2$ in the domain of $\mathscr{H}(\tau)$.\footnote{Alternatively, we may define a time-ordered exponential by its Dyson series or by a convergent sequence of products of ordinary matrix exponentials, and verify that this alternative definition satisfies the desired differential equation. We prefer the differential-equation definition since it is more versatile for the analysis in this paper.} Time-ordered exponentials satisfy the differentiation rule \cite[p.\ 12]{bk:dollard_friedman}
\begin{equation}
\begin{aligned}
\frac{\partial}{\partial t_2}\expT\bigg(\int_{t_1}^{t_2}\mathrm{d}\tau\ \mathscr{H}(\tau)\bigg)&=\mathscr{H}(t_2)\expT\bigg(\int_{t_1}^{t_2}\mathrm{d}\tau\ \mathscr{H}(\tau)\bigg),\\
\frac{\partial}{\partial t_1}\expT\bigg(\int_{t_1}^{t_2}\mathrm{d}\tau\ \mathscr{H}(\tau)\bigg)&=-\expT\bigg(\int_{t_1}^{t_2}\mathrm{d}\tau\ \mathscr{H}(\tau)\bigg)\mathscr{H}(t_1),
\end{aligned}
\end{equation}
and the multiplicative property \cite[p.\ 11]{bk:dollard_friedman}
\begin{equation}
\expT\bigg(\int_{t_1}^{t_3}\mathrm{d}\tau\ \mathscr{H}(\tau)\bigg)=\expT\bigg(\int_{t_2}^{t_3}\mathrm{d}\tau\ \mathscr{H}(\tau)\bigg)\expT\bigg(\int_{t_1}^{t_2}\mathrm{d}\tau\ \mathscr{H}(\tau)\bigg).
\end{equation}

By definition, the operator-valued function $\mathscr{U}(t) = \expT\big(\int_{0}^{t}\mathrm{d}\tau \mathscr{H}(\tau)\big)$ satisfies the differential equation $\frac{\mathrm{d}}{\mathrm{d}\tau}\mathscr{U}(\tau)=\mathscr{H}(\tau)\mathscr{U}(\tau)$ with initial condition $\mathscr{U}(0)=I$. We then apply the fundamental theorem of calculus to obtain the integral equation
\begin{equation}
\label{eq:int_eq}
\mathscr{U}(t) = I + \int_{0}^{t}\mathrm{d}\tau\ \mathscr{H}(\tau) \mathscr{U}(\tau).
\end{equation}
We also consider a general differential equation $\frac{\mathrm{d}}{\mathrm{d}t}\mathscr{U}(t)=\mathscr{H}(t)\mathscr{U}(t)+\mathscr{R}(t)$, whose solution is given by the following variation-of-parameters formula:
\begin{lemma}[Variation-of-parameters formula {\cite[Theorem 4.9]{bk:knapp}} {\cite[p.\ 17]{bk:dollard_friedman}}]
	\label{lem:td_Duhamel}
	Let $\mathscr{H}(\tau)$, $\mathscr{R}(\tau)$ be continuous operator-valued functions defined for $\tau\in\R$. Then the first-order differential equation
	\begin{equation}
	\frac{\mathrm{d}}{\mathrm{d}t}\mathscr{U}(t)=\mathscr{H}(t)\mathscr{U}(t)+\mathscr{R}(t),\quad \mathscr{U}(0)\text{ known},
	\end{equation}
	has a unique solution given by the variation-of-parameters formula
	\begin{equation}
	\mathscr{U}(t)=\exp_{\mathcal{T}}\bigg(\int_{0}^{t}\mathrm{d}\tau\ \mathscr{H}(\tau)\bigg)\mathscr{U}(0)
	+\int_{0}^{t}\mathrm{d}\tau_1\ \exp_{\mathcal{T}}\bigg(\int_{\tau_1}^{t}\mathrm{d}\tau_2\ \mathscr{H}(\tau_2)\bigg)\mathscr{R}(\tau_1).
	\end{equation}
\end{lemma}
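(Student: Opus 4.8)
The plan is to prove the lemma by the standard two-step strategy for linear first-order ODEs: first establish uniqueness of the solution, and then verify directly that the proposed closed-form expression is a solution.

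For uniqueness, I would suppose that $\mathscr{U}_1(t)$ and $\mathscr{U}_2(t)$ both satisfy the stated equation with the same initial value, and set $\mathscr{D}(t):=\mathscr{U}_1(t)-\mathscr{U}_2(t)$. Subtracting the two equations cancels the inhomogeneous term $\mathscr{R}(t)$, so $\mathscr{D}$ satisfies the homogeneous equation $\frac{d}{dt}\mathscr{D}(t)=\mathscr{H}(t)\mathscr{D}(t)$ with $\mathscr{D}(0)=0$. The homogeneous equation has a unique solution for each initial condition (this is exactly the existence-and-uniqueness property of the time-ordered evolution invoked above, following Dollard and Friedman), and the zero function is evidently a solution with vanishing initial value; hence $\mathscr{D}(t)\equiv 0$.

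For existence, write $\mathscr{V}(t)$ for the right-hand side of the variation-of-parameters formula. At $t=0$, the first term is $\expT(\int_0^0 \mathrm{d}\tau\,\mathscr{H}(\tau))\,\mathscr{U}(0)=\mathscr{U}(0)$ and the second term is an integral over an empty interval, hence zero, so $\mathscr{V}(0)=\mathscr{U}(0)$. To differentiate $\mathscr{V}$, I apply the differentiation rule for time-ordered exponentials to the first term, obtaining $\mathscr{H}(t)\,\expT(\int_0^t\mathrm{d}\tau\,\mathscr{H}(\tau))\,\mathscr{U}(0)$. For the second term I differentiate $\int_0^t \mathrm{d}\tau_1\,\expT(\int_{\tau_1}^t \mathrm{d}\tau_2\,\mathscr{H}(\tau_2))\,\mathscr{R}(\tau_1)$, whose dependence on $t$ enters both through the outer upper limit and through the upper limit of the inner time-ordered exponential. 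The Leibniz rule yields a boundary contribution $\expT(\int_t^t\mathrm{d}\tau_2\,\mathscr{H}(\tau_2))\,\mathscr{R}(t)=\mathscr{R}(t)$ plus $\int_0^t\mathrm{d}\tau_1\,\big[\frac{\partial}{\partial t}\expT(\int_{\tau_1}^t\mathrm{d}\tau_2\,\mathscr{H}(\tau_2))\big]\mathscr{R}(\tau_1)$, which by the same differentiation rule equals $\mathscr{H}(t)\int_0^t\mathrm{d}\tau_1\,\expT(\int_{\tau_1}^t\mathrm{d}\tau_2\,\mathscr{H}(\tau_2))\,\mathscr{R}(\tau_1)$. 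Summing the two contributions gives $\frac{d}{dt}\mathscr{V}(t)=\mathscr{H}(t)\mathscr{V}(t)+\mathscr{R}(t)$, so $\mathscr{V}$ solves the equation, and by uniqueness $\mathscr{U}=\mathscr{V}$.

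The one point that needs care — and which I expect to be the main (though modest) obstacle — is justifying the Leibniz rule for the operator-valued integral with a moving limit: that $\frac{\partial}{\partial t}$ may be interchanged with $\int_0^t\mathrm{d}\tau_1$ and that the boundary term is exactly the integrand evaluated at $\tau_1=t$. This is legitimate because $(\tau_1,t)\mapsto \expT(\int_{\tau_1}^t\mathrm{d}\tau_2\,\mathscr{H}(\tau_2))\,\mathscr{R}(\tau_1)$ is jointly continuous and continuously differentiable in $t$ on the relevant compact region, a consequence of the continuity of $\mathscr{H}$ and $\mathscr{R}$ together with the differentiation rule and the multiplicative property of time-ordered exponentials recorded above; the classical Leibniz rule then applies entrywise. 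Everything else is direct substitution, and the result also appears as \cite[Theorem 4.9]{bk:knapp} and \cite[p.\ 17]{bk:dollard_friedman}.
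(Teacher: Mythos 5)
Your proof is correct: the paper does not prove this lemma itself (it cites Knapp and Dollard--Friedman), and your argument—uniqueness via the homogeneous equation plus direct verification of the candidate solution using the differentiation rule for time-ordered exponentials and the Leibniz rule—is exactly the standard proof given in those references. The one point you flag, the Leibniz rule for the operator-valued integral with a moving limit, is indeed the only technical step, and your justification via joint continuity of the integrand and of its $t$-derivative is adequate.
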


Let $\mathscr{H}(\tau)=\mathscr{A}(\tau)+\mathscr{B}(\tau)$ be a continuous operator-valued function with two summands defined for $0\leq\tau\leq t$. Then, the evolution under $\mathscr{H}(\tau)$ can be seen as the evolution under the rotated operator $\expT^{-1}\big(\int_{0}^{\tau}\mathrm{d}\tau_2\mathscr{A}(\tau_2)\big)\mathscr{B}(\tau)\expT\big(\int_{0}^{\tau}\mathrm{d}\tau_2\mathscr{A}(\tau_2)\big)$, followed by another evolution under $\mathscr{A}(\tau)$ that rotates back to the original frame \cite{LW18}. This is known as the ``interaction-picture'' representation in quantum mechanics and is formally stated in the following lemma.
\begin{lemma}[Time-ordered evolution in the interaction picture {\cite[p.\ 21]{bk:dollard_friedman}}]
	\label{lem:interaction_picture}
	Let $\mathscr{H}(\tau)=\mathscr{A}(\tau)+\mathscr{B}(\tau)$ be an operator-valued function defined for $\tau\in \R$ with continuous summands $\mathscr{A}(\tau)$ and $\mathscr{B}(\tau)$. Then
	\begin{equation}
	\begin{aligned}
	\exp_{\mathcal{T}}\bigg(\int_{0}^{t}\mathrm{d}\tau\ \mathscr{H}(\tau)\bigg)
	&=\exp_{\mathcal{T}}\bigg(\int_{0}^{t}\mathrm{d}\tau\ \mathscr{A}(\tau)\bigg)\\
	&\quad \cdot\expT\bigg(\int_{0}^{t}\mathrm{d}\tau_1\
	\expT^{-1}\bigg(\int_{0}^{\tau_1}\mathrm{d}\tau_2\ \mathscr{A}(\tau_2)\bigg)\mathscr{B}(\tau_1)\expT\bigg(\int_{0}^{\tau_1}\mathrm{d}\tau_2\ \mathscr{A}(\tau_2)\bigg)
	\bigg).
	\end{aligned}
	\end{equation}
\end{lemma}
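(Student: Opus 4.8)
The plan is to pass to the ``interaction picture'' defined by $\mathscr{A}$ and reduce the identity to the uniqueness of solutions of a first-order operator differential equation. Write $\mathscr{U}(t):=\expT\bigl(\int_{0}^{t}\mathrm{d}\tau\ \mathscr{H}(\tau)\bigr)$ for the left-hand side, which by definition satisfies $\frac{\mathrm{d}}{\mathrm{d}t}\mathscr{U}(t)=\bigl(\mathscr{A}(t)+\mathscr{B}(t)\bigr)\mathscr{U}(t)$ with $\mathscr{U}(0)=I$. Set $\mathscr{V}(t):=\expT^{-1}\bigl(\int_{0}^{t}\mathrm{d}\tau\ \mathscr{A}(\tau)\bigr)\,\mathscr{U}(t)$, so that $\mathscr{V}(0)=I$ and the asserted identity is equivalent to showing that $\mathscr{V}(t)$ equals the second time-ordered exponential appearing on the right-hand side.

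First I would differentiate $\mathscr{V}(t)$ using the product rule together with the differentiation rules for time-ordered exponentials recorded in \sec{prelim_expT}. Since $\expT^{-1}\bigl(\int_{0}^{t}\mathrm{d}\tau\ \mathscr{A}(\tau)\bigr)=\expT\bigl(\int_{t}^{0}\mathrm{d}\tau\ \mathscr{A}(\tau)\bigr)$, the lower-limit rule gives $\frac{\mathrm{d}}{\mathrm{d}t}\expT^{-1}\bigl(\int_{0}^{t}\mathrm{d}\tau\ \mathscr{A}(\tau)\bigr)=-\expT^{-1}\bigl(\int_{0}^{t}\mathrm{d}\tau\ \mathscr{A}(\tau)\bigr)\mathscr{A}(t)$. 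Combining this with the equation for $\mathscr{U}$, the two $\mathscr{A}(t)$ contributions cancel and one is left with $\frac{\mathrm{d}}{\mathrm{d}t}\mathscr{V}(t)=\expT^{-1}\bigl(\int_{0}^{t}\mathrm{d}\tau\ \mathscr{A}(\tau)\bigr)\mathscr{B}(t)\mathscr{U}(t)$. Inserting the identity $\expT\bigl(\int_{0}^{t}\mathrm{d}\tau\ \mathscr{A}(\tau)\bigr)\expT^{-1}\bigl(\int_{0}^{t}\mathrm{d}\tau\ \mathscr{A}(\tau)\bigr)=I$ just before $\mathscr{U}(t)$ rewrites this as $\frac{\mathrm{d}}{\mathrm{d}t}\mathscr{V}(t)=\widetilde{\mathscr{B}}(t)\,\mathscr{V}(t)$, where $\widetilde{\mathscr{B}}(t):=\expT^{-1}\bigl(\int_{0}^{t}\mathrm{d}\tau_2\ \mathscr{A}(\tau_2)\bigr)\mathscr{B}(t)\expT\bigl(\int_{0}^{t}\mathrm{d}\tau_2\ \mathscr{A}(\tau_2)\bigr)$ is precisely the rotated operator in the statement.

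Having shown that $\mathscr{V}(t)$ solves $\frac{\mathrm{d}}{\mathrm{d}t}\mathscr{V}(t)=\widetilde{\mathscr{B}}(t)\mathscr{V}(t)$ with $\mathscr{V}(0)=I$, I would invoke existence and uniqueness of solutions of such equations (equivalently, the variation-of-parameters formula of \lem{td_Duhamel} with $\mathscr{R}\equiv 0$ and generator $\widetilde{\mathscr{B}}$) to conclude $\mathscr{V}(t)=\expT\bigl(\int_{0}^{t}\mathrm{d}\tau_1\ \widetilde{\mathscr{B}}(\tau_1)\bigr)$, and then left-multiply by $\expT\bigl(\int_{0}^{t}\mathrm{d}\tau\ \mathscr{A}(\tau)\bigr)$ to recover the claimed identity. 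The one point needing care -- and the main obstacle -- is verifying the hypothesis of the uniqueness theorem, namely that $\widetilde{\mathscr{B}}(\tau)$ is a continuous operator-valued function. This holds because $\mathscr{B}$ is continuous by assumption, the map $\tau\mapsto\expT\bigl(\int_{0}^{\tau}\mathrm{d}\tau_2\ \mathscr{A}(\tau_2)\bigr)$ is continuous (indeed continuously differentiable, by the upper-limit rule), its inverse is continuous since the determinant never vanishes, and products of continuous operator-valued functions are continuous. Modulo this routine continuity check, the argument is a direct computation.
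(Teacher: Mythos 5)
Your proof is correct and is essentially the paper's argument run in the opposite direction: the paper differentiates the right-hand side and checks that it satisfies $\frac{\mathrm{d}}{\mathrm{d}t}\mathscr{U}(t)=\mathscr{H}(t)\mathscr{U}(t)$ with $\mathscr{U}(0)=I$, then invokes uniqueness, while you transform the left-hand side into the interaction picture and apply uniqueness to the rotated equation. Both rest on the same differentiation rules for time-ordered exponentials and the same uniqueness theorem, so the content is the same; your continuity check on $\widetilde{\mathscr{B}}$ is the right hypothesis to verify.
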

\begin{proof}
	A simple calculation shows that the right-hand side of the above equation satisfies the differential equation $\frac{\mathrm{d}}{\mathrm{d}t}\mathscr{U}(t)=\mathscr{H}(t)\mathscr{U}(t)$ with initial condition $\mathscr{U}(0)=I$. The lemma then follows as $\exp_{\mathcal{T}}\big(\int_{0}^{t}\mathrm{d}\tau\mathscr{H}(\tau)\big)$ is the unique solution to this differential equation.
\end{proof}

For any continuous $\mathscr{H}(\tau)$, the evolution $\exp_{\mathcal{T}}\big(\int_{0}^{t}\mathrm{d}\tau\mathscr{H}(\tau)\big)$ it generates is invertible and continuously differentiable. Conversely, the following lemma asserts that any operator-valued function that is invertible and continuously differentiable is a time-ordered evolution generated by some continuous function.
\begin{lemma}[Fundamental theorem of time-ordered evolution {\cite[p.\ 20]{bk:dollard_friedman}}]
	\label{lem:fte}
	The following statements regarding an operator-valued function $\mathscr{U}(\tau)$ ($\tau\in \R$) are equivalent:
	\begin{enumerate}
		\item $\mathscr{U}(\tau)$ is invertible and continuously differentiable;
		\item $\mathscr{U}(\tau)=\expT\big(\int_{0}^{\tau}\mathrm{d}\tau_1\mathscr{H}(\tau_1)\big)\mathscr{U}(0)$ for some continuous operator-valued function $\mathscr{H}(\tau)$.
	\end{enumerate}
	Furthermore, in the second statement,
	$\mathscr{H}(\tau)=\big(\frac{\mathrm{d}}{\mathrm{d}\tau}\mathscr{U}(\tau)\big)\mathscr{U}^{-1}(\tau)$ is uniquely determined.
\end{lemma}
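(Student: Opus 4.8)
The plan is to prove the equivalence (2)$\Rightarrow$(1) first, which is essentially immediate, and then concentrate on the harder direction (1)$\Rightarrow$(2), which is where the real content lies. For (2)$\Rightarrow$(1): if $\mathscr{U}(\tau)=\expT\big(\int_{0}^{\tau}\mathrm{d}\tau_1\,\mathscr{H}(\tau_1)\big)\mathscr{U}(0)$ for some continuous $\mathscr{H}$, then the differentiation rule for time-ordered exponentials stated in \sec{prelim_expT} shows directly that $\mathscr{U}(\tau)$ is continuously differentiable, with $\frac{\mathrm{d}}{\mathrm{d}\tau}\mathscr{U}(\tau)=\mathscr{H}(\tau)\expT\big(\int_{0}^{\tau}\mathrm{d}\tau_1\,\mathscr{H}(\tau_1)\big)\mathscr{U}(0)$. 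Invertibility follows because $\det\big(\expT\big(\int_{0}^{\tau}\mathrm{d}\tau_1\,\mathscr{H}(\tau_1)\big)\big)=e^{\int_{0}^{\tau}\mathrm{d}\tau_1\,\mathrm{Tr}(\mathscr{H}(\tau_1))}\neq 0$ (again recalled in \sec{prelim_expT}) and $\mathscr{U}(0)$ is invertible (it is invertible because $\mathscr{U}(\tau)$ is assumed invertible in statement (1)—though here we are proving the reverse, so we should instead just note that the product of invertible operators is invertible, using that $\mathscr{U}(0)$ is invertible as part of the hypothesis of (2), which is implicitly required for the statement to make sense). Cleaner: in (2)$\Rightarrow$(1), invertibility of $\mathscr{U}(0)$ is part of the data, so $\mathscr{U}(\tau)$ is a product of two invertible operators.

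For the main direction (1)$\Rightarrow$(2): given that $\mathscr{U}(\tau)$ is invertible and continuously differentiable, I would simply \emph{define} $\mathscr{H}(\tau):=\big(\frac{\mathrm{d}}{\mathrm{d}\tau}\mathscr{U}(\tau)\big)\mathscr{U}^{-1}(\tau)$. This is well-defined since $\mathscr{U}(\tau)$ is invertible, and it is continuous because $\frac{\mathrm{d}}{\mathrm{d}\tau}\mathscr{U}(\tau)$ is continuous and $\tau\mapsto\mathscr{U}^{-1}(\tau)$ is continuous (matrix inversion is continuous on the open set of invertible matrices, and $\tau\mapsto\mathscr{U}(\tau)$ is continuous with values in that set). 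With this choice, $\mathscr{U}(\tau)$ satisfies the differential equation $\frac{\mathrm{d}}{\mathrm{d}\tau}\mathscr{U}(\tau)=\mathscr{H}(\tau)\mathscr{U}(\tau)$ by construction, together with the initial condition $\mathscr{U}(0)$ known. On the other hand, $\mathscr{V}(\tau):=\expT\big(\int_{0}^{\tau}\mathrm{d}\tau_1\,\mathscr{H}(\tau_1)\big)\mathscr{U}(0)$ satisfies the same differential equation with the same initial condition, using the differentiation rule for $\expT$. By the uniqueness of solutions to first-order linear differential equations with continuous coefficients—which is exactly the existence-and-uniqueness fact cited from \cite{bk:dollard_friedman} at the start of \sec{prelim_expT}, or can be extracted as the special case $\mathscr{R}\equiv 0$ of \lem{td_Duhamel}—we conclude $\mathscr{U}(\tau)=\mathscr{V}(\tau)$ for all $\tau$, which is statement (2).

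Finally, the uniqueness assertion in the last sentence of the lemma: if $\mathscr{U}(\tau)=\expT\big(\int_{0}^{\tau}\mathrm{d}\tau_1\,\mathscr{H}(\tau_1)\big)\mathscr{U}(0)$ with $\mathscr{H}$ continuous, then differentiating via the $\expT$ rule gives $\frac{\mathrm{d}}{\mathrm{d}\tau}\mathscr{U}(\tau)=\mathscr{H}(\tau)\mathscr{U}(\tau)$, and right-multiplying by $\mathscr{U}^{-1}(\tau)$ forces $\mathscr{H}(\tau)=\big(\frac{\mathrm{d}}{\mathrm{d}\tau}\mathscr{U}(\tau)\big)\mathscr{U}^{-1}(\tau)$, so $\mathscr{H}$ is pinned down by $\mathscr{U}$. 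The only genuine subtlety—what I would call the main obstacle, though it is minor—is making sure continuity of $\mathscr{H}(\tau)=\big(\frac{\mathrm{d}}{\mathrm{d}\tau}\mathscr{U}(\tau)\big)\mathscr{U}^{-1}(\tau)$ is properly justified so that the cited existence/uniqueness theorem for time-ordered evolutions actually applies; this rests on continuity of the matrix-inversion map, which holds in our finite-dimensional setting. Everything else is bookkeeping with the differentiation and multiplicativity rules already established in \sec{prelim_expT}.
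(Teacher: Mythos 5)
Your proof is correct and is the standard argument for this result; the paper itself does not prove \lem{fte} but simply cites it from the reference, and your argument—defining $\mathscr{H}(\tau)=\big(\frac{\mathrm{d}}{\mathrm{d}\tau}\mathscr{U}(\tau)\big)\mathscr{U}^{-1}(\tau)$, checking its continuity via continuity of matrix inversion, and invoking uniqueness of solutions to the linear ODE (e.g., \lem{td_Duhamel} with $\mathscr{R}\equiv 0$)—is exactly what the cited source does. Your side remark that the implication (2)$\Rightarrow$(1) implicitly requires $\mathscr{U}(0)$ to be invertible is a fair observation about a minor imprecision in the statement, and you handle it appropriately.
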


Finally, we bound the spectral norm of a time-ordered evolution $\exp_{\mathcal{T}}\big(\int_{t_1}^{t_2}\mathrm{d}\tau\mathscr{H}(\tau)\big)$ and the distance between two evolutions.
\begin{lemma}[Spectral-norm bound for time-ordered evolution {\cite[p.\ 28]{bk:dollard_friedman}}]
	\label{lem:time_ordered_norm_bound}
	Let $\mathscr{H}(\tau)$ be a continuous operator-valued function defined on $\R$. Then,
	\begin{enumerate}
		\item $\norm{\expT\big(\int_{t_1}^{t_2}\mathrm{d}\tau\mathscr{H}(\tau)\big)}\leq e^{\abs{\int_{t_1}^{t_2}\mathrm{d}\tau\norm{\mathscr{H}(\tau)}}}$; and
		\item $\norm{\expT\big(\int_{t_1}^{t_2}\mathrm{d}\tau\mathscr{H}(\tau)\big)}= 1$ if $\mathscr{H}(\tau)$ is anti-Hermitian.
	\end{enumerate}
\end{lemma}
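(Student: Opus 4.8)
The plan is to prove both parts directly from the defining differential equation of the time-ordered evolution, reducing part~1 to a scalar integral inequality and part~2 to a one-line matrix identity; neither step needs any of the heavier machinery (\lem{td_Duhamel}, \lem{interaction_picture}, \lem{fte}) developed above.

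For part~1, fix the orientation $t_2 \ge t_1$; the case $t_2 < t_1$ follows by the same argument with the endpoints interchanged, since $\expT\big(\int_{t_2}^{t_1}\mathrm{d}\tau\,\mathscr{H}(\tau)\big)$ is itself a time-ordered evolution, and this is precisely why an absolute value appears in the exponent. Write $\mathscr{U}(\tau) := \expT\big(\int_{t_1}^{\tau}\mathrm{d}s\,\mathscr{H}(s)\big)$, which by definition satisfies $\mathscr{U}(t_1) = I$ and $\frac{\mathrm{d}}{\mathrm{d}\tau}\mathscr{U}(\tau) = \mathscr{H}(\tau)\mathscr{U}(\tau)$. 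Integrating via the fundamental theorem of calculus, exactly as in the derivation of \eq{int_eq} but with lower limit $t_1$, gives the integral equation $\mathscr{U}(\tau) = I + \int_{t_1}^{\tau}\mathrm{d}s\,\mathscr{H}(s)\mathscr{U}(s)$. Taking spectral norms and applying the triangle inequality together with submultiplicativity, the continuous nonnegative function $f(\tau) := \norm{\mathscr{U}(\tau)}$ obeys $f(\tau) \le 1 + \int_{t_1}^{\tau}\mathrm{d}s\,\norm{\mathscr{H}(s)}\,f(s)$ for $\tau \in [t_1,t_2]$. I would then conclude $f(t_2) \le \exp\big(\int_{t_1}^{t_2}\mathrm{d}s\,\norm{\mathscr{H}(s)}\big)$ by Gr\"onwall's inequality, which gives exactly the claimed bound.

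For part~2, assume $\mathscr{H}(\tau)^\dagger = -\mathscr{H}(\tau)$ and again set $\mathscr{U}(\tau) := \expT\big(\int_{t_1}^{\tau}\mathrm{d}s\,\mathscr{H}(s)\big)$. Differentiating the product $\mathscr{U}(\tau)^\dagger\mathscr{U}(\tau)$ and using the differential equation for $\mathscr{U}$ together with its adjoint $\frac{\mathrm{d}}{\mathrm{d}\tau}\mathscr{U}(\tau)^\dagger = \mathscr{U}(\tau)^\dagger\mathscr{H}(\tau)^\dagger$ gives
\[
\frac{\mathrm{d}}{\mathrm{d}\tau}\big(\mathscr{U}(\tau)^\dagger\mathscr{U}(\tau)\big)
= \mathscr{U}(\tau)^\dagger\big(\mathscr{H}(\tau)^\dagger + \mathscr{H}(\tau)\big)\mathscr{U}(\tau) = 0 ,
\]
so $\mathscr{U}(\tau)^\dagger\mathscr{U}(\tau) = \mathscr{U}(t_1)^\dagger\mathscr{U}(t_1) = I$ for every $\tau$. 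Since the Hilbert space is finite dimensional, this forces $\mathscr{U}(\tau)$ to be unitary, hence $\norm{\mathscr{U}(\tau)} = 1$; taking $\tau = t_2$ finishes the proof.

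The only point requiring a little care---the ``main obstacle,'' such as it is---is the passage from the integral inequality to the exponential bound in part~1. I would handle this by invoking the integral form of Gr\"onwall's inequality, or equivalently by a short self-contained argument: setting $g(\tau) := 1 + \int_{t_1}^{\tau}\mathrm{d}s\,\norm{\mathscr{H}(s)}\,f(s) \ge f(\tau)$, one has $g(t_1) = 1$ and $g'(\tau) = \norm{\mathscr{H}(\tau)}\,f(\tau) \le \norm{\mathscr{H}(\tau)}\,g(\tau)$, so $\frac{\mathrm{d}}{\mathrm{d}\tau}\big(g(\tau)\,e^{-\int_{t_1}^{\tau}\mathrm{d}s\,\norm{\mathscr{H}(s)}}\big) \le 0$; integrating from $t_1$ to $t_2$ then yields $f(t_2) \le g(t_2) \le e^{\int_{t_1}^{t_2}\mathrm{d}s\,\norm{\mathscr{H}(s)}}$. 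Everything else is routine manipulation of the defining ODE, and no positivity or Hermiticity of $\mathscr{H}$ is used in part~1; Hermiticity enters only in part~2, precisely to kill the symmetric part $\mathscr{H}^\dagger+\mathscr{H}$.
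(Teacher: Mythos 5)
Your proof is correct. The paper does not prove this lemma itself but cites Dollard--Friedman; your argument---the integral equation plus Gr\"onwall's inequality for part~1, and the vanishing derivative of $\mathscr{U}(\tau)^\dagger\mathscr{U}(\tau)$ for part~2---is the standard proof and handles the reversed orientation $t_2<t_1$ adequately, since the inverse evolution satisfies the analogous right-sided integral equation $\mathscr{U}^{-1}(\tau)=I-\int_{t_1}^{\tau}\mathrm{d}s\,\mathscr{U}^{-1}(s)\mathscr{H}(s)$ to which the same Gr\"onwall bound applies, and the inverse of a unitary is unitary.
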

\begin{corollary}[Distance bound for time-ordered evolutions {\cite[Appendix B]{Tran18}}]
	\label{cor:time_ordered_distance_bound}
	Let $\mathscr{H}(\tau)$ and $\mathscr{G}(\tau)$ be continuous operator-valued functions defined on $\R$. Then,
	\begin{enumerate}
		\item $\norm{\expT\big(\int_{t_1}^{t_2}\mathrm{d}\tau\mathscr{H}(\tau)\big)-\expT\big(\int_{t_1}^{t_2}\mathrm{d}\tau\mathscr{G}(\tau)\big)}
		\leq\abs{\int_{t_1}^{t_2}\mathrm{d}\tau\norm{\mathscr{H}(\tau)-\mathscr{G}(\tau)}}e^{\abs{\int_{t_1}^{t_2}\mathrm{d}\tau(\norm{\mathscr{H}(\tau)}+\norm{\mathscr{G}(\tau)})}}$; and
		\item $\norm{\expT\big(\int_{t_1}^{t_2}\mathrm{d}\tau\mathscr{H}(\tau)\big)-\expT\big(\int_{t_1}^{t_2}\mathrm{d}\tau\mathscr{G}(\tau)\big)}
		\leq\abs{\int_{t_1}^{t_2}\mathrm{d}\tau\norm{\mathscr{H}(\tau)-\mathscr{G}(\tau)}}$ if $\mathscr{H}(\tau)$ and $\mathscr{G}(\tau)$ are anti-Hermitian.
	\end{enumerate}
\end{corollary}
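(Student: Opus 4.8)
The plan is to reduce the corollary to the variation-of-parameters formula (\lem{td_Duhamel}) by writing down a linear differential equation for the difference of the two time-ordered evolutions. Set $\mathscr{U}(t):=\expT\big(\int_{t_1}^{t}\mathrm{d}\tau\,\mathscr{H}(\tau)\big)$ and $\mathscr{V}(t):=\expT\big(\int_{t_1}^{t}\mathrm{d}\tau\,\mathscr{G}(\tau)\big)$; by the differentiation rule for time-ordered exponentials these satisfy $\frac{\mathrm{d}}{\mathrm{d}t}\mathscr{U}(t)=\mathscr{H}(t)\mathscr{U}(t)$ and $\frac{\mathrm{d}}{\mathrm{d}t}\mathscr{V}(t)=\mathscr{G}(t)\mathscr{V}(t)$ with $\mathscr{U}(t_1)=\mathscr{V}(t_1)=I$. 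Adding and subtracting $\mathscr{H}(t)\mathscr{V}(t)$, the difference $\mathscr{D}(t):=\mathscr{U}(t)-\mathscr{V}(t)$ then obeys the inhomogeneous equation
\[
\frac{\mathrm{d}}{\mathrm{d}t}\mathscr{D}(t)=\mathscr{H}(t)\,\mathscr{D}(t)+\big(\mathscr{H}(t)-\mathscr{G}(t)\big)\mathscr{V}(t),\qquad \mathscr{D}(t_1)=0 .
\]

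Next I would apply \lem{td_Duhamel} — with the base point of the integration shifted from $0$ to $t_1$, which follows by an affine reparametrization of time or by rerunning the short proof of that lemma — to solve this equation. Since $\mathscr{D}(t_1)=0$, the homogeneous term drops out and we obtain the closed form
\[
\mathscr{D}(t_2)=\int_{t_1}^{t_2}\mathrm{d}\tau_1\;\expT\bigg(\int_{\tau_1}^{t_2}\mathrm{d}\tau_2\,\mathscr{H}(\tau_2)\bigg)\big(\mathscr{H}(\tau_1)-\mathscr{G}(\tau_1)\big)\expT\bigg(\int_{t_1}^{\tau_1}\mathrm{d}\tau_2\,\mathscr{G}(\tau_2)\bigg).
\]
Taking spectral norms, using submultiplicativity and the triangle inequality under the integral, and bounding each of the two time-ordered factors by \lem{time_ordered_norm_bound} gives
\[
\norm{\mathscr{D}(t_2)}\le\bigg|\int_{t_1}^{t_2}\mathrm{d}\tau_1\;\norm{\mathscr{H}(\tau_1)-\mathscr{G}(\tau_1)}\;e^{\left|\int_{\tau_1}^{t_2}\mathrm{d}\tau_2\,\norm{\mathscr{H}(\tau_2)}\right|}\,e^{\left|\int_{t_1}^{\tau_1}\mathrm{d}\tau_2\,\norm{\mathscr{G}(\tau_2)}\right|}\bigg| .
\]
Both exponents are at most $\big|\int_{t_1}^{t_2}\mathrm{d}\tau\,(\norm{\mathscr{H}(\tau)}+\norm{\mathscr{G}(\tau)})\big|$, which does not depend on $\tau_1$, so it can be pulled outside the integral, giving part~1. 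For part~2, if $\mathscr{H}$ and $\mathscr{G}$ are anti-Hermitian then the second item of \lem{time_ordered_norm_bound} says both time-ordered factors have spectral norm exactly $1$, so the exponential prefactor is replaced by $1$ and part~2 follows immediately.

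The one point that needs care — and the only place this is more than a routine application of the two lemmas — is the bookkeeping when $t_2<t_1$. In that regime the oriented integral $\int_{t_1}^{t_2}$ reverses sign and the inner factors $\expT\big(\int_{\tau_1}^{t_2}\cdot\big)$ and $\expT\big(\int_{t_1}^{\tau_1}\cdot\big)$ are backward (inverse) evolutions, so \lem{time_ordered_norm_bound} must be invoked in the form $\norm{\expT(\int\cdot)}\le e^{|\int\cdot|}$ with the absolute value in the exponent. Carrying the outer $\big|\int_{t_1}^{t_2}\big|$ throughout, exactly as in the statement, absorbs all of this and avoids any case split on the order of $t_1$ and $t_2$, since the differential-equation setup above is valid for all real $t$. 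Everything else is straightforward.
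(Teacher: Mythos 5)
Your proof is correct and is the standard argument: the paper itself gives no proof of this corollary (it only cites Tran et al., Appendix B, whose derivation is the same Duhamel/variation-of-parameters computation), and your closed-form expression for the difference together with \lem{time_ordered_norm_bound} yields both parts exactly as stated. The care you take with the oriented integrals and absolute values for $t_2<t_1$ is the right bookkeeping, so nothing is missing.
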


\subsection{Product formulas}
\label{sec:prelim_pf}
Let $H=\sum_{\gamma=1}^{\Gamma}H_\gamma$ be a time-independent operator consisting of $\Gamma$ summands, so that the evolution generated by $H$ is $e^{t\sum_{\gamma=1}^{\Gamma}H_\gamma}$. Product formulas provide a convenient way of decomposing such an evolution into a product of exponentials of individual $H_\gamma$. Examples of product formulas include the first-order Lie-Trotter formula
\begin{equation}
\label{eq:pf1}
\mathscr{S}_1(t):=e^{tH_\Gamma}\cdots e^{tH_1}
\end{equation}
and higher-order Suzuki formulas~\cite{Suz91} defined recursively via
\begin{equation}
\label{eq:pf2k}
\begin{aligned}
\mathscr{S}_2(t)&:=e^{\frac{t}{2}H_1}\cdots e^{\frac{t}{2}H_\Gamma}e^{\frac{t}{2}H_\Gamma}\cdots e^{\frac{t}{2}H_1},\\
\mathscr{S}_{2k}(t)&:=\mathscr{S}_{2k-2}(u_{k}t)^2 \, \mathscr{S}_{2k-2}((1-4u_{k})t) \, \mathscr{S}_{2k-2}(u_{k}t)^2,
\end{aligned}
\end{equation}
where $u_{k}:=1/(4-4^{1/(2k-1)})$. It is a challenge in practice to find the formula with the best performance for simulating a specific physical system~\cite{CMNRS18}. However, we address a different question, developing a theory of Trotter error that holds for a general product formula. For in-depth studies of these formulas, especially in the context of numerical analysis, we refer the reader to \cite{Jahnke2000,Thalhammer08,Thalhammer12,mclachlan_quispel_2002,McLachlan95,hairer2006geometric} and the references therein.

Specifically, we consider a product formula of the form
\begin{equation}
\mathscr{S}(t):=\prod_{\upsilon=1}^{\Upsilon}\prod_{\gamma=1}^{\Gamma}e^{ta_{(\upsilon,\gamma)}H_{\pi_{\upsilon}(\gamma)}},\label{eq:pf}
\end{equation}
where the coefficients $a_{(\upsilon,\gamma)}$ are real numbers. The parameter $\Upsilon$ denotes the number of \emph{stages} of the formula; for the Suzuki formula $\mathscr{S}_{2k}(t)$, we have $\Upsilon=2\cdot 5^{k-1}$. The permutation $\pi_{\upsilon}$ controls the ordering of operator summands within stage $\upsilon$ of the formula. For Suzuki's constructions, we alternately reverse the ordering of summands between neighboring stages, but other formulas may use general permutations. Throughout this paper, we fix $\Upsilon$, $\pi_{\upsilon}$ and assume that the coefficients $a_{(\upsilon,\gamma)}$ are uniformly bounded by $1$ in absolute value. We then consider the performance of the product formula with respect to the input operator summands $H_\gamma$ (for $\gamma=1,\ldots,\Gamma$) and the evolution time $t$.

Product formulas provide a good approximation to the ideal evolution when the time $t$ is small. Specifically, a $p$th-order formula $\mathscr{S}(t)$ satisfies
\begin{equation}
\mathscr{S}(t)=e^{tH}+\OO{t^{p+1}}.
\end{equation}
This asymptotic analysis gives the correct error scaling with respect to $t$, but the dependence on the $H_\gamma$ is ignored, so it does not provide a full characterization of Trotter error. This issue was addressed in the work of Berry, Ahokas, Cleve, and Sanders~\cite{BACS05}, who gave a concrete error bound for product formulas with dependence on both $t$ and $H_\gamma$. Their original bound depends on the $\infty$-norm $\Gamma\max_\gamma\norm{H_\gamma}$, although it is not hard to improve this to the $1$-norm scaling $\sum_{\gamma=1}^{\Gamma}\norm{H_\gamma}$. We prove a new error bound in the lemma below; for real-time evolutions, this improves a multiplicative factor of $e^{t\Upsilon\sum_{\gamma=1}^{\Gamma}\norm{H_\gamma}}$ over the best previous analysis \cite[Eq.~(13)]{LKW19}.

\begin{lemma}[Trotter error with $1$-norm scaling]
	\label{lem:trotter_error_one_norm_scaling}
	Let $H=\sum_{\gamma=1}^{\Gamma}H_\gamma$ be an operator consisting of $\Gamma$ summands and $t\geq 0$. Let $\mathscr{S}(t)=\prod_{\upsilon=1}^{\Upsilon}\prod_{\gamma=1}^{\Gamma}e^{ta_{(\upsilon,\gamma)}H_{\pi_{\upsilon}(\gamma)}}$ be a $p$th-order product formula. Then,
	\begin{equation}
	\label{eq:one_norm_error_imag}
	\norm{\mathscr{S}(t)-e^{tH}}=\cO{\bigg(\sum_{\gamma=1}^{\Gamma}\norm{H_\gamma}t\bigg)^{p+1}e^{t\Upsilon\sum_{\gamma=1}^{\Gamma}\norm{H_\gamma}}}.
	\end{equation}
	Furthermore, if $H_\gamma$ are anti-Hermitian,
	\begin{equation}
	\label{eq:one_norm_error_real}
	\norm{\mathscr{S}(t)-e^{tH}}=\cO{\bigg(\sum_{\gamma=1}^{\Gamma}\norm{H_\gamma}t\bigg)^{p+1}}.
	\end{equation}
\end{lemma}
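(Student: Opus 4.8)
The plan is to prove the bound by expressing the difference $\mathscr{S}(t)-e^{tH}$ via a telescoping/variation-of-parameters argument and then controlling the remainder using the $p$th-order condition together with the spectral-norm bounds from \lem{time_ordered_norm_bound}. First I would write the product formula $\mathscr{S}(t)=\prod_{\upsilon=1}^{\Upsilon}\prod_{\gamma=1}^{\Gamma}e^{ta_{(\upsilon,\gamma)}H_{\pi_\upsilon(\gamma)}}$ as a time-ordered evolution: each factor $e^{t a_{(\upsilon,\gamma)}H_{\pi_\upsilon(\gamma)}}$ is generated by a piecewise-constant operator-valued function, so by \lem{fte} the whole product $\mathscr{S}(t)$ equals $\expT(\int_0^t \mathrm{d}\tau\,\mathscr{H}_{\mathscr{S}}(\tau))$ for an appropriate (piecewise-constant in a rescaled variable) generator $\mathscr{H}_{\mathscr{S}}(\tau)$ whose pieces are among the $a_{(\upsilon,\gamma)}H_{\pi_\upsilon(\gamma)}$. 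The key bound here is that $\int_0^t \mathrm{d}\tau\,\norm{\mathscr{H}_{\mathscr{S}}(\tau)}\le t\,\Upsilon\sum_{\gamma=1}^{\Gamma}\norm{H_\gamma}$, since each of the $\Upsilon$ stages contributes at most $t\sum_\gamma\norm{H_\gamma}$ (using $\abs{a_{(\upsilon,\gamma)}}\le 1$).

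Next I would set up the comparison with $e^{tH}=\expT(\int_0^t\mathrm{d}\tau\, H)$ using the variation-of-parameters formula (\lem{td_Duhamel}): writing $\mathscr{S}(t)-e^{tH}$ as an integral $\int_0^t \mathrm{d}\tau_1\, \expT(\int_{\tau_1}^t\mathrm{d}\tau_2\,\mathscr{H}_{\mathscr{S}}(\tau_2))\,\big(\mathscr{H}_{\mathscr{S}}(\tau_1)-H\big)\,e^{\tau_1 H}$ or, more cleanly, applying \cor{time_ordered_distance_bound} to the two generators. However, bounding $\norm{\mathscr{H}_{\mathscr{S}}(\tau)-H}$ directly only gives $\cO{\big(\sum_\gamma\norm{H_\gamma}\big)t}$, i.e.\ a first-order bound, not the desired $t^{p+1}$. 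To recover the $p$th-order scaling I would instead use the defining property that $\mathscr{S}(t)$ is a $p$th-order formula, $\mathscr{S}(t)=e^{tH}+\OO{t^{p+1}}$, which already encodes that the first $p$ Taylor coefficients of $\mathscr{S}(t)-e^{tH}$ vanish. The clean way: define $\mathscr{R}(t):=\mathscr{S}(t)e^{-tH}-I$; Taylor-expand, use the $p$th-order condition to kill the low-order terms, and bound the degree-$(p+1)$ Taylor remainder in integral form. Each derivative of $\mathscr{S}(t)e^{-tH}$ pulls down factors bounded by $\sum_\gamma\norm{H_\gamma}$ (with an extra $\Upsilon$ bookkeeping factor per stage), and the $(p+1)$st-order integral remainder of a function whose $(p+1)$st derivative is bounded by $C\big(\sum_\gamma\norm{H_\gamma}\big)^{p+1}e^{t\Upsilon\sum_\gamma\norm{H_\gamma}}$ is $\cO{\big(\sum_\gamma\norm{H_\gamma}t\big)^{p+1}e^{t\Upsilon\sum_\gamma\norm{H_\gamma}}}$, which is \eqref{eq:one_norm_error_imag}. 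The anti-Hermitian case \eqref{eq:one_norm_error_real} follows identically but with the sharper norm bound $\norm{\expT(\cdots)}=1$ from part~2 of \lem{time_ordered_norm_bound} (and part~2 of \cor{time_ordered_distance_bound}), so the exponential prefactor drops out.

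The main obstacle I anticipate is organizing the derivative/Taylor-remainder bookkeeping so that the constant and the exponent are exactly $\Upsilon\sum_\gamma\norm{H_\gamma}$ and not something worse. Concretely, I would differentiate $\mathscr{S}(t)$ via the product rule across its $\Upsilon\Gamma$ exponential factors; each differentiation hits one factor and brings down an $a_{(\upsilon,\gamma)}H_{\pi_\upsilon(\gamma)}$, and after collecting, a single derivative of the whole product is bounded in norm by $\big(\sum_\gamma\norm{H_\gamma}\big)\cdot\Upsilon$ times the norm of the product itself (each factor being a matrix exponential of norm $\le e^{t\norm{H_\gamma}}$, so $\norm{\mathscr{S}(\tau)}\le e^{\tau\Upsilon\sum_\gamma\norm{H_\gamma}}$). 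Iterating $p+1$ times and multiplying in $e^{-tH}$ (whose norm is $\le e^{t\sum_\gamma\norm{H_\gamma}}$) gives the clean bound on the $(p+1)$st derivative, and then invoking the $p$th-order condition to discard the first $p$ Taylor terms completes the argument. The subtlety is purely in not double-counting the stage factor $\Upsilon$; everything else is a routine estimate.
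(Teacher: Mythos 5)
Your proposal follows essentially the same route as the paper: discard the time-ordered/variation-of-parameters idea (which, as you correctly note, only yields a first-order bound), then invoke the $p$th-order condition to kill the first $p$ Taylor coefficients, bound the $(p+1)$st derivative via the product rule over the $\Upsilon\Gamma$ exponential factors, and integrate the Taylor remainder; the anti-Hermitian case follows by replacing each $\norm{e^{\tau a H_\gamma}}\le e^{\tau\norm{H_\gamma}}$ with $1$. This is exactly the paper's argument (it Taylor-expands $\mathscr{S}(t)-e^{tH}$ to order $p$ with integral remainder and bounds $\norm{\mathscr{S}^{(p+1)}(ut)}\le(\Upsilon\sum_\gamma\norm{H_\gamma})^{p+1}e^{t\Upsilon\sum_\gamma\norm{H_\gamma}}$ and $\norm{H^{p+1}e^{utH}}$ separately).

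One quantitative slip: you Taylor-expand the multiplicative quantity $\mathscr{R}(t)=\mathscr{S}(t)e^{-tH}-I$ rather than the additive difference $\mathscr{S}(t)-e^{tH}$. Each derivative landing on the $e^{-tH}$ factor, plus the final multiplication by $e^{tH}$ to convert back to the additive error, each contribute a factor $e^{t\norm{H}}\le e^{t\sum_\gamma\norm{H_\gamma}}$, so your route yields an exponent of roughly $(\Upsilon+2)\sum_\gamma\norm{H_\gamma}\,t$ rather than the stated $\Upsilon\sum_\gamma\norm{H_\gamma}\,t$. Under the paper's convention for $\mathcal{O}(\cdot)$ (valid for all large arguments), $e^{(\Upsilon+2)x}$ is not $\cO{e^{\Upsilon x}}$, so as written you prove a slightly weaker version of \eq{one_norm_error_imag}; the improved constant in the exponent is precisely what the paper advertises over \cite{LKW19}. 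The fix is trivial: apply Taylor's theorem to $\mathscr{S}(t)-e^{tH}$ directly, as the paper does, so the two terms are bounded separately and the worst exponential factor is $e^{t\Upsilon\sum_\gamma\norm{H_\gamma}}$. The anti-Hermitian case \eq{one_norm_error_real} is unaffected either way.
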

\begin{proof}
	Since $\mathscr{S}(t)$ is a $p$th-order formula, we know from \cite[Supplementary Lemma 1]{CS19} that $\mathscr{S}(0)=\mathscr{S}'(0)=\cdots=\mathscr{S}^{(p)}(0)=0$. By Taylor's theorem,
	\begin{equation}
	\mathscr{S}(t)-e^{tH}
	=(p+1)\int_{0}^{1}\mathrm{d}u\ (1-u)^{p}\frac{t^{p+1}}{(p+1)!}\big(\mathscr{S}^{(p+1)}(ut)-H^{p+1}e^{utH}\big),
	\end{equation}
	where
	\begin{equation}
	\mathscr{S}^{(p+1)}(ut)
	=\sum_{q_{(1,1)}+\cdots+q_{(\Upsilon,\Gamma)}=p+1}\binom{p+1}{q_{(1,1)}\ \cdots\ q_{(\Upsilon,\Gamma)}}\prod_{\upsilon=1}^{\Upsilon}\prod_{\gamma=1}^{\Gamma}
	\big(a_{(\upsilon,\gamma)}H_{\pi_\upsilon(\gamma)}\big)^{q_{(\upsilon,\gamma)}}e^{uta_{(\upsilon,\gamma)}H_{\pi_\upsilon(\gamma)}}.
	\end{equation}
	The spectral norms of $\mathscr{S}^{(p+1)}(ut)$ and $H^{p+1}e^{utH}$ can be bounded as
	\begin{equation}
	\begin{aligned}
	\norm{\mathscr{S}^{(p+1)}(ut)}
	&\leq\sum_{q_{(1,1)}+\cdots+q_{(\Upsilon,\Gamma)}=p+1}\binom{p+1}{q_{(1,1)}\ \cdots\ q_{(\Upsilon,\Gamma)}}\prod_{\upsilon=1}^{\Upsilon}\prod_{\gamma=1}^{\Gamma}
	\norm{H_{\pi_\upsilon(\gamma)}}^{q_{(\upsilon,\gamma)}}e^{t\norm{H_{\pi_\upsilon(\gamma)}}}\\
	&=\bigg(\Upsilon\sum_{\gamma=1}^{\Gamma}\norm{H_\gamma}\bigg)^{p+1}e^{t\Upsilon\sum_{\gamma=1}^{\Gamma}\norm{H_\gamma}},\\
	\norm{H^{p+1}e^{utH}}
	&\leq\bigg(\sum_{\gamma=1}^{\Gamma}\norm{H_\gamma}\bigg)^{p+1}e^{t\sum_{\gamma=1}^{\Gamma}\norm{H_\gamma}}.
	\end{aligned}
	\end{equation}
	Applying these bounds to the Taylor expansion, we find that
	\begin{equation}\label{eq:trotter_error_one_norm_scaling_bound}
	\begin{aligned}
	\norm{\mathscr{S}(t)-e^{tH}}
	&\leq\frac{t^{p+1}}{(p+1)!}\bigg[\bigg(\Upsilon\sum_{\gamma=1}^{\Gamma}\norm{H_\gamma}\bigg)^{p+1}e^{t\Upsilon\sum_{\gamma=1}^{\Gamma}\norm{H_\gamma}}
	+\bigg(\sum_{\gamma=1}^{\Gamma}\norm{H_\gamma}\bigg)^{p+1}e^{t\sum_{\gamma=1}^{\Gamma}\norm{H_\gamma}}\bigg]\\
	&=\cO{\bigg(\sum_{\gamma=1}^{\Gamma}\norm{H_\gamma}t\bigg)^{p+1}e^{t\Upsilon\sum_{\gamma=1}^{\Gamma}\norm{H_\gamma}}}.
	\end{aligned}
	\end{equation}
	The special case where $H_\gamma$ are anti-Hermitian can be proved in a similar way, except we directly evaluate the spectral norm of a matrix exponential to $1$.
\end{proof}

The above bound on the Trotter error works well for small $t$. To simulate anti-Hermitian $H_\gamma$ for a large time, we divide the evolution into $r$ steps and apply the product formula within each step. The overall simulation has error
\begin{equation}
\norm{\mathscr{S}^r(t/r)-e^{tH}}
\leq r\norm{\mathscr{S}(t/r)-e^{\frac{t}{r}H}}
=\cO{\frac{\big(\sum_{\gamma=1}^{\Gamma}\norm{H_\gamma}t\big)^{p+1}}{r^p}}.
\end{equation}
To simulate with accuracy $\epsilon$, it suffices to choose
\begin{equation}
r=\cO{\frac{\big(\sum_{\gamma=1}^{\Gamma}\norm{H_\gamma}t\big)^{1+1/p}}{\epsilon^{1/p}}}.
\end{equation}
We have thus proved:
\begin{corollary}[Trotter number with $1$-norm scaling]
	\label{cor:trotter_number_one_norm_scaling}
	Let $H=\sum_{\gamma=1}^{\Gamma}H_\gamma$ be an operator consisting of $\Gamma$ summands with $H_\gamma$ anti-Hermitian and $t\geq 0$. Let $\mathscr{S}(t)$ be a $p$th-order product formula. Then, we have $\norm{\mathscr{S}^r(t/r)-e^{tH}}=\cO{\epsilon}$ provided
	\begin{equation}
	r=\cO{\frac{\big(\sum_{\gamma=1}^{\Gamma}\norm{H_\gamma}t\big)^{1+1/p}}{\epsilon^{1/p}}}.
	\end{equation}
\end{corollary}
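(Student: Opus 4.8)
The plan is to reduce the long-time simulation error to the single-step bound of \lem{trotter_error_one_norm_scaling} by a standard telescoping argument, and then solve for the Trotter number $r$.

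First I would record the telescoping identity
\[
\mathscr{S}^r(t/r)-e^{tH}=\sum_{j=0}^{r-1}\mathscr{S}^{j}(t/r)\,\big(\mathscr{S}(t/r)-e^{\frac{t}{r}H}\big)\,e^{\frac{(r-1-j)t}{r}H},
\]
which is just the factorization $A^r-B^r=\sum_{j=0}^{r-1}A^j(A-B)B^{r-1-j}$ with $A=\mathscr{S}(t/r)$ and $B=e^{(t/r)H}$. Since the $H_\gamma$ are anti-Hermitian, $H$ is anti-Hermitian and $e^{\frac{(r-1-j)t}{r}H}$ is unitary; likewise every factor $e^{ta_{(\upsilon,\gamma)}H_{\pi_\upsilon(\gamma)}}$ of the product formula is unitary, so each partial product $\mathscr{S}^{j}(t/r)$ has spectral norm $1$. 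Applying the triangle inequality together with submultiplicativity of the spectral norm then yields $\norm{\mathscr{S}^r(t/r)-e^{tH}}\le r\,\norm{\mathscr{S}(t/r)-e^{\frac{t}{r}H}}$, i.e.\ the per-step errors simply add with no blowup.

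Next I would invoke the anti-Hermitian case of \lem{trotter_error_one_norm_scaling} with evolution time $t/r$, which gives $\norm{\mathscr{S}(t/r)-e^{\frac{t}{r}H}}=\cO{\big(\sum_{\gamma}\norm{H_\gamma}t/r\big)^{p+1}}$; note the bound in that lemma is valid for all values of the time argument, so no smallness hypothesis on $t/r$ is needed. Combining with the telescoping estimate gives
\[
\norm{\mathscr{S}^r(t/r)-e^{tH}}=\cO{\frac{\big(\sum_{\gamma}\norm{H_\gamma}t\big)^{p+1}}{r^{p}}}.
\]
To force the right-hand side to be $\cO{\epsilon}$ it then suffices to take $r$ large enough that $\big(\sum_\gamma\norm{H_\gamma}t\big)^{p+1}/r^p=\cO{\epsilon}$, that is, $r=\cO{\big(\sum_\gamma\norm{H_\gamma}t\big)^{(p+1)/p}/\epsilon^{1/p}}=\cO{\big(\sum_\gamma\norm{H_\gamma}t\big)^{1+1/p}/\epsilon^{1/p}}$, which is exactly the claimed Trotter number.

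There is no real difficulty here beyond bookkeeping once \lem{trotter_error_one_norm_scaling} is available; the only point that requires care is the telescoping step, which depends crucially on anti-Hermiticity so that both the product formula and the exact evolution are unitary and the per-step errors are additive. (Without anti-Hermiticity the partial products $\mathscr{S}^j(t/r)$ would contribute an extra factor of $e^{\OO{t\Upsilon\sum_\gamma\norm{H_\gamma}}}$, which is why the corollary is stated only in the anti-Hermitian case.)
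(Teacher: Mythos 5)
Your proposal is correct and follows essentially the same route as the paper: divide into $r$ steps, use unitarity of both the product formula and the exact evolution (via the telescoping identity) to get $\norm{\mathscr{S}^r(t/r)-e^{tH}}\le r\norm{\mathscr{S}(t/r)-e^{\frac{t}{r}H}}$, apply the anti-Hermitian case of \lem{trotter_error_one_norm_scaling} at time $t/r$, and solve for $r$. Your remarks on why anti-Hermiticity is needed and why the single-step bound holds for all $t/r$ are accurate but add detail the paper leaves implicit.
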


Note that the above analysis only uses information about the norms of the summands. In the extreme case where all $H_\gamma$ commute, the Trotter error becomes zero but the above bound can be arbitrarily large. This suggests that the analysis can be significantly improved by leveraging information about commutation of the $H_\gamma$. Unfortunately, despite extensive efforts, dramatic improvements to the Trotter error bound are only known for certain low-order formulas \cite{Huyghebaert_1990,Suzuki85,WBCHT14,DT10,Kivlichan19} and special systems \cite{Somma16,CS19}.

To explain the limitations of prior approaches, it is instructive to examine a general bound developed by Descombes and Thalhammer \cite{Thalhammer08,DT10}
\begin{equation*}
\norm{\mathscr{S}(t)-e^{tH}}\leq b_{p+1}t^{p+1} + \cdots +b_{q}t^{q}+b_{q+1}t^{q+1},
\end{equation*}
where $H=\sum_{\gamma=1}^{\Gamma}H_\gamma$ is a sum of anti-Hermitian operators, $\mathscr{S}(t)$ is a $p$th-order formula, $q\geq p$ is a positive integer, and $t\geq 0$, suggesting a choice of
\begin{equation*}
r=\max\Bigg\{\cO{\frac{b_{p+1}^{1/p}t^{1+1/p}}{\epsilon^{1/p}}},\ldots,\cO{\frac{b_{q}^{1/(q-1)}t^{1+1/(q-1)}}{\epsilon^{1/(q-1)}}},
\cO{\frac{b_{q+1}^{1/q}t^{1+1/q}}{\epsilon^{1/q}}}\Bigg\}
\end{equation*}
to simulate with accuracy $\epsilon$. Here, all the leading coefficients $b_{p+1},\ldots,b_q$ depend on nested commutators of $H_\gamma$, but $b_{q+1}$ is determined by commutators interlaced with matrix exponentials, which is technically challenging to evaluate except for geometrically local systems. Consequently, a bound on $b_{q+1}$ must be used, resulting in a $1$-norm scaling similar to that of \lem{trotter_error_one_norm_scaling} and a loose Trotter error estimate for simulating general quantum systems.

We develop a theory of Trotter error that directly exploits the commutativity of operator summands. The resulting bound naturally reduces to the previous bounds for low-order formulas and special systems, but our analysis uncovers a host of new speedups for product formulas that were previously unknown. The central concepts of this theory are the types, order conditions, and representations of Trotter error, which we explain in \sec{theory}.

\section{Theory}
\label{sec:theory}

We now develop a theory for analyzing Trotter error. We explain the core ideas of this theory in \sec{theory_example} using the simple example of the first-order Lie-Trotter formula.
We then discuss the analysis of a general formula.
In particular, we study various types of Trotter error in \sec{theory_type} and compute their order conditions in \sec{theory_order}.
We then derive explicit representations of Trotter error in \sec{theory_rep}, establishing the commutator scaling of Trotter error in \thm{trotter_error_comm_scaling}.
We focus on the asymptotic error scaling here, and discuss potential applications and constant-prefactor improvements of our results in \sec{app} and \sec{prefactor}, respectively.

\subsection{Example of the Lie-Trotter formula}
\label{sec:theory_example}
In this section, we use the example of the first-order Lie-Trotter formula to illustrate the general theory we develop for analyzing Trotter error. For simplicity, consider an operator $H=A+B$ with two summands. The ideal evolution generated by $H$ is given by $e^{tH}=e^{t(A+B)}$. To decompose this evolution, we may use the Lie-Trotter formula $\mathscr{S}_1(t)=e^{tB}e^{tA}$. This formula is first-order accurate, so we have $\mathscr{S}_1(t)=e^{tH}+\OO{t^2}$.

A key observation here is that the error of a product formula can have various \emph{types}. Specifically, we consider three types of Trotter error: additive error, multiplicative error, and error that appears in the exponent. Note that $\mathscr{S}_1(t)$ satisfies the differential equation $\frac{\mathrm{d}}{\mathrm{d}t}\mathscr{S}_1(t)=H\mathscr{S}_1(t)+\big[e^{tB},A\big]e^{tA}$ with initial condition $\mathscr{S}_1(0)=I$. By the variation-of-parameters formula (\lem{td_Duhamel}),
\begin{equation}
\mathscr{S}_1(t)=e^{tH}+\int_{0}^{t}\mathrm{d}\tau\ e^{(t-\tau)H}\big[e^{\tau B},A\big]e^{\tau A},
\end{equation}
so we get the additive error $\mathscr{A}_1(t)=\int_{0}^{t}\mathrm{d}\tau\ e^{(t-\tau)H}\big[e^{\tau B},A\big]e^{\tau A}$ of the Lie-Trotter formula. For error with the exponentiated type, we differentiate $\mathscr{S}_1(t)$ to get $\frac{\mathrm{d}}{\mathrm{d}t}\mathscr{S}_1(t)=\big(B+e^{tB}Ae^{-tB}\big)\mathscr{S}_1(t)$. Applying the fundamental theorem of time-ordered evolution (\lem{fte}), we have
\begin{equation}
\mathscr{S}_1(t)=\expT\bigg(\int_{0}^{t}\mathrm{d}\tau\big(B+e^{\tau B}Ae^{-\tau B}\big)\bigg),
\end{equation}
so $\mathscr{E}_1(\tau)=e^{\tau B}Ae^{-\tau B}-A$ is the error of Lie-Trotter formula that appears in the exponent. To obtain the multiplicative error, we switch to the interaction picture using \lem{interaction_picture}:
\begin{equation}
\mathscr{S}_1(t)=e^{tH}\expT\bigg(\int_{0}^{t}\mathrm{d}\tau\big(e^{-\tau H}e^{\tau B}Ae^{-\tau B}e^{\tau H}-e^{-\tau H}Ae^{\tau H}\big)\bigg),
\end{equation}
so $\mathscr{M}_1(t)=\expT\big(\int_{0}^{t}\mathrm{d}\tau\big(e^{-\tau H}e^{\tau B}Ae^{-\tau B}e^{\tau H}-e^{-\tau H}Ae^{\tau H}\big)\big)-I$ is the multiplicative Trotter error. These three types of Trotter error are equivalent for analyzing the complexity of digital quantum simulation (\sec{app_dqs}) and simulating local observables (\sec{app_local_obs}), whereas the multiplicative error and the exponentiated error are more versatile when applied to quantum Monte Carlo simulation (\sec{app_qmc}). We compute error operators for a general product formula in \sec{theory_type}.

Since product formulas provide a good approximation to the ideal evolution for small $t$, we expect all three error operators $\mathscr{A}_1(t)$, $\mathscr{E}_1(t)$, and $\mathscr{M}_1(t)$ to converge to zero in the limit $t\rightarrow 0$. The rates of convergence are what we call \emph{order conditions}. More precisely,
\begin{equation}
\begin{aligned}
\mathscr{A}_1(t)&=\int_{0}^{t}\mathrm{d}\tau\ e^{(t-\tau)H}\big[e^{\tau B},A\big]e^{\tau A}=\OO{t^2},\\
\mathscr{E}_1(t)&=e^{t B}Ae^{-t B}-A=\OO{t},\\
\mathscr{M}_1(t)&=\expT\bigg(\int_{0}^{t}\mathrm{d}\tau\big(e^{-\tau H}e^{\tau B}Ae^{-\tau B}e^{\tau H}-e^{-\tau H}Ae^{\tau H}\big)\bigg)-I=\OO{t^2}.
\end{aligned}
\end{equation}
For the Lie-Trotter formula, these conditions can be verified by direct calculation, although such an approach becomes inefficient in general. Instead, we describe an indirect approach in \sec{theory_order} to compute order conditions for a general product formula.

Finally, we consider \emph{representations} of Trotter error that leverage the commutativity of operator summands. We discuss how to represent $\mathscr{M}_1(t)$ in detail, although it is straightforward to extend the analysis to $\mathscr{A}_1(t)$ and $\mathscr{E}_1(t)$ as well. To this end, we first consider the term $e^{-\tau H}e^{\tau B}Ae^{-\tau B}e^{\tau H}$, which contains two layers of conjugations of matrix exponentials. We apply the fundamental theorem of calculus to the first layer of conjugation and obtain
\begin{equation}
e^{\tau B}Ae^{-\tau B}=A+\int_{0}^{\tau}\mathrm{d}\tau_2\ e^{\tau_2 B}\big[B,A\big]e^{-\tau_2 B}.
\end{equation}
After cancellation, this gives
\begin{equation}
\mathscr{M}_1(t)=\expT\bigg(\int_{0}^{t}\mathrm{d}\tau\int_{0}^{\tau}\mathrm{d}\tau_2\ e^{-\tau H}e^{\tau_2 B}\big[B,A\big]e^{-\tau_2 B}e^{\tau H}\bigg)-I,
\end{equation}
which implies, through \cor{time_ordered_distance_bound}, that $\norm{\mathscr{M}_1(t)}=\cO{\norm{[B,A]}t^2}$ when $A$, $B$ are anti-Hermitian and $t\geq 0$. In the above derivation, it is important that we only expand the first layer of conjugation of exponentials, that we apply the fundamental theorem of calculus only once, and that we can cancel the terms $e^{-\tau H}Ae^{\tau H}$ in pairs. The validity of such an approach in general is guaranteed by the appropriate order condition, which we explain in detail in \sec{theory_rep}.

\subsection{Error types}
\label{sec:theory_type}
In this section, we discuss error types of a general product formula. In particular, we give explicit expressions for three different types of Trotter error: the additive error, the multiplicative error, and error that appears in the exponent of a time-ordered exponential (the ``exponentiated'' error). These types are equivalent for analyzing the complexity of simulating quantum dynamics and local observables, but the latter two types are more versatile for quantum Monte Carlo simulation.

Let $H=\sum_{\gamma=1}^{\Gamma}H_\gamma$ be an operator with $\Gamma$ summands. The ideal evolution under $H$ for time $t$ is given by $e^{tH}=e^{t\sum_{\gamma=1}^{\Gamma}H_\gamma}$, which we approximate by a general product formula $\mathscr{S}(t)=\prod_{\upsilon=1}^{\Upsilon}\prod_{\gamma=1}^{\Gamma}e^{ta_{(\upsilon,\gamma)}H_{\pi_{\upsilon}(\gamma)}}$. For convenience, we use the lexicographic order on a pair of tuples $(\upsilon,\gamma)$ and $(\upsilon',\gamma')$, defined as follows: we write $(\upsilon,\gamma)\succeq(\upsilon',\gamma')$ if $\upsilon> \upsilon'$, or if $\upsilon=\upsilon'$ and $\gamma\geq \gamma'$. We have $(\upsilon,\gamma)\succ(\upsilon',\gamma')$ if both $(\upsilon,\gamma)\succeq(\upsilon',\gamma')$ and $(\upsilon,\gamma)\neq(\upsilon',\gamma')$ hold. Notations $(\upsilon,\gamma)\preceq(\upsilon',\gamma')$ and $(\upsilon,\gamma)\prec(\upsilon',\gamma')$ are defined in a similar way, except that we reverse the directions of all the inequalities. We use $(\upsilon,\gamma)-1$ to represent the immediate predecessor of $(\upsilon,\gamma)$ with respect to the lexicographic order and $(\upsilon,\gamma)+1$ to denote the immediate successor.

For the additive Trotter error, we seek an operator-valued function $\mathscr{A}(t)$ such that $\mathscr{S}(t)=e^{tH}+\mathscr{A}(t)$. This can be achieved by constructing the differential equation $\frac{\mathrm{d}}{\mathrm{d}t}\mathscr{S}(t)=H\mathscr{S}(t)+\mathscr{R}(t)$ with initial condition $\mathscr{S}(0)=I$, followed by the use of the variation-of-parameters formula (\lem{td_Duhamel}). For the exponentiated type of Trotter error, we aim to construct an operator-valued function $\mathscr{E}(t)$ such that $\mathscr{S}(t)=\expT\big(\int_{0}^{t}\mathrm{d}\tau\big(H+\mathscr{E}(\tau)\big)\big)$. We find $\mathscr{E}(t)$ by differentiating the product formula $\mathscr{S}(t)$ and applying the fundamental theorem of time-ordered evolution (\lem{fte}). Finally, we obtain the multiplicative error by switching to the interaction picture using \lem{interaction_picture}. The derivation follows from a similar analysis as in \sec{theory_example} and is detailed in \append{type}.

\begin{restatable}[Types of Trotter error]{theorem}{thmtype}
	\label{thm:error_type}
	Let $H=\sum_{\gamma=1}^{\Gamma}H_\gamma$ be an operator with $\Gamma$ summands. The evolution under $H$ for time $t\in\R$ is given by $e^{tH}=e^{t\sum_{\gamma=1}^{\Gamma}H_\gamma}$, which we decompose using the product formula $\mathscr{S}(t)=\prod_{\upsilon=1}^{\Upsilon}\prod_{\gamma=1}^{\Gamma}e^{ta_{(\upsilon,\gamma)}H_{\pi_{\upsilon}(\gamma)}}$. Then,
	\begin{enumerate}
		\item Trotter error can be expressed in the additive form $\mathscr{S}(t)=e^{tH}+\int_{0}^{t}\mathrm{d}\tau\ e^{(t-\tau)H}\mathscr{S}(\tau)\mathscr{T}(\tau)$, where
		\begin{equation}
		\begin{aligned}
		\mathscr{T}(\tau)
		=&\sum_{(\upsilon,\gamma)}\prod_{(\upsilon',\gamma')\prec(\upsilon,\gamma)}^{\longrightarrow}e^{-\tau a_{(\upsilon',\gamma')}H_{\pi_{\upsilon'}(\gamma')}}\big(a_{(\upsilon,\gamma)}H_{\pi_{\upsilon}(\gamma)}\big)\prod_{(\upsilon',\gamma')\prec(\upsilon,\gamma)}^{\longleftarrow}e^{\tau a_{(\upsilon',\gamma')}H_{\pi_{\upsilon'}(\gamma')}}\\
		&-\prod_{(\upsilon',\gamma')}^{\longrightarrow}e^{-\tau a_{(\upsilon',\gamma')}H_{\pi_{\upsilon'}(\gamma')}}H\prod_{(\upsilon',\gamma')}^{\longleftarrow}e^{\tau a_{(\upsilon',\gamma')}H_{\pi_{\upsilon'}(\gamma')}};
		\end{aligned}
		\end{equation}
		\item Trotter error can be expressed in the exponentiated form $\mathscr{S}(t)=\expT\big(\int_{0}^{t}\mathrm{d}\tau\big(H+\mathscr{E}(\tau)\big)\big)$, where
		\begin{equation}
		\mathscr{E}(\tau)=\sum_{(\upsilon,\gamma)}\prod_{(\upsilon',\gamma')\succ(\upsilon,\gamma)}^{\longleftarrow}e^{\tau a_{(\upsilon',\gamma')}H_{\pi_{\upsilon'}(\gamma')}}\big(a_{(\upsilon,\gamma)}H_{\pi_{\upsilon}(\gamma)}\big)\prod_{(\upsilon',\gamma')\succ(\upsilon,\gamma)}^{\longrightarrow}e^{-\tau a_{(\upsilon',\gamma')}H_{\pi_{\upsilon'}(\gamma')}}-H;
		\label{eq:Etau}
		\end{equation}
		\item Trotter error can be expressed in the multiplicative form $\mathscr{S}(t)=e^{tH}(I+\mathscr{M}(t))$, where
		\begin{equation}
		\begin{aligned}
		\mathscr{M}(t)&=\expT\bigg(\int_{0}^{t}\mathrm{d}\tau\ e^{-\tau H}\mathscr{E}(\tau)e^{\tau H}\bigg)-I\\
		\end{aligned}
		\end{equation}
		with $\mathscr{E}(\tau)$ as above.
	\end{enumerate}
\end{restatable}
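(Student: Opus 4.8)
The plan is to obtain all three representations from a single differential equation for $\mathscr{S}(t)$ and then invoke, in turn, the three structural lemmas of \sec{prelim_expT}. First I would fix the lexicographic order on the pairs $(\upsilon,\gamma)$ and write the formula as $\mathscr{S}(t)=\prod_{(\upsilon,\gamma)}^{\longleftarrow}e^{ta_{(\upsilon,\gamma)}H_{\pi_{\upsilon}(\gamma)}}$, with the largest index leftmost. Differentiating with the product rule inserts, for each $(\upsilon,\gamma)$, one factor $a_{(\upsilon,\gamma)}H_{\pi_{\upsilon}(\gamma)}$ next to the matching exponential; since a matrix commutes with its own exponential, this factor may be slid so that it sits exactly between the block of exponentials with indices $\succ(\upsilon,\gamma)$ and the block with indices $\prec(\upsilon,\gamma)$. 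This small freedom is what lets the same $\frac{\mathrm{d}}{\mathrm{d}t}\mathscr{S}(t)$ be repackaged in two different ways.

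For the exponentiated form I would factor $\mathscr{S}(t)$ out on the right of every summand of $\frac{\mathrm{d}}{\mathrm{d}t}\mathscr{S}(t)$; the cofactor of the $(\upsilon,\gamma)$ term is then the conjugation $\bigl(\prod_{(\upsilon',\gamma')\succ(\upsilon,\gamma)}^{\longleftarrow}e^{ta_{(\upsilon',\gamma')}H_{\pi_{\upsilon'}(\gamma')}}\bigr)\,a_{(\upsilon,\gamma)}H_{\pi_{\upsilon}(\gamma)}\,\bigl(\prod_{(\upsilon',\gamma')\succ(\upsilon,\gamma)}^{\longrightarrow}e^{-ta_{(\upsilon',\gamma')}H_{\pi_{\upsilon'}(\gamma')}}\bigr)$, which is exactly the $(\upsilon,\gamma)$ summand of $H+\mathscr{E}(t)$ in \eqref{eq:Etau}. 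Hence $\frac{\mathrm{d}}{\mathrm{d}t}\mathscr{S}(t)=\bigl(H+\mathscr{E}(t)\bigr)\mathscr{S}(t)$ with $\mathscr{S}(0)=I$, and \lem{fte} gives item~2 immediately. Item~3 then follows by applying \lem{interaction_picture} with $\mathscr{A}(\tau)\equiv H$ (a constant, so its time-ordered exponential is the ordinary $e^{tH}$, with inverse $e^{-\tau H}$) and $\mathscr{B}(\tau)=\mathscr{E}(\tau)$: this rewrites $\expT\bigl(\int_{0}^{t}\mathrm{d}\tau\,(H+\mathscr{E}(\tau))\bigr)$ as $e^{tH}\,\expT\bigl(\int_{0}^{t}\mathrm{d}\tau\,e^{-\tau H}\mathscr{E}(\tau)e^{\tau H}\bigr)$, i.e.\ $\mathscr{S}(t)=e^{tH}(I+\mathscr{M}(t))$.

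For the additive form I would instead factor $\mathscr{S}(t)$ out on the left; after sliding $a_{(\upsilon,\gamma)}H_{\pi_{\upsilon}(\gamma)}$ past its own exponential as above, the $(\upsilon,\gamma)$ cofactor is $\bigl(\prod_{(\upsilon',\gamma')\prec(\upsilon,\gamma)}^{\longrightarrow}e^{-t a_{(\upsilon',\gamma')}H_{\pi_{\upsilon'}(\gamma')}}\bigr)\,a_{(\upsilon,\gamma)}H_{\pi_{\upsilon}(\gamma)}\,\bigl(\prod_{(\upsilon',\gamma')\prec(\upsilon,\gamma)}^{\longleftarrow}e^{t a_{(\upsilon',\gamma')}H_{\pi_{\upsilon'}(\gamma')}}\bigr)$, which is the $(\upsilon,\gamma)$ summand of $\mathscr{T}(t)$; writing the ideal drift as $H\mathscr{S}(t)=\mathscr{S}(t)\,\bigl(\mathscr{S}^{-1}(t)H\mathscr{S}(t)\bigr)$ supplies the subtracted term of $\mathscr{T}$, so $\frac{\mathrm{d}}{\mathrm{d}t}\mathscr{S}(t)=H\mathscr{S}(t)+\mathscr{S}(t)\mathscr{T}(t)$. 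Then \lem{td_Duhamel} with constant generator $H$, inhomogeneity $\mathscr{R}(\tau)=\mathscr{S}(\tau)\mathscr{T}(\tau)$, and initial value $\mathscr{S}(0)=I$ yields item~1. Continuity of $\mathscr{E}$ and $\mathscr{T}$ and invertibility of $\mathscr{S}(t)$---both immediate, since everything is a product of matrix exponentials---are all that is needed to invoke the lemmas. The only real difficulty I anticipate is bookkeeping: keeping the two product directions straight under the lexicographic order and verifying that ``pull $\mathscr{S}$ out on the right'' versus ``on the left'' reproduce exactly the sandwiched operators defining $\mathscr{E}(\tau)$ and $\mathscr{T}(\tau)$; once the index ranges match, the three identities fall out of \lem{fte}, \lem{interaction_picture}, and \lem{td_Duhamel}, respectively.
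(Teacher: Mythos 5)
Your proposal is correct and follows essentially the same route as the paper's proof in Appendix A: differentiate the product formula, factor $\mathscr{S}(t)$ out on the right (resp.\ left) to identify $\mathscr{E}$ (resp.\ $\mathscr{T}$), and then invoke \lem{fte}, \lem{interaction_picture}, and \lem{td_Duhamel} for the three forms. The index bookkeeping you describe (sliding $a_{(\upsilon,\gamma)}H_{\pi_\upsilon(\gamma)}$ past its own exponential and matching the $\succ$/$\prec$ product ranges) checks out against the stated expressions.
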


Note that the error operators $\mathscr{T}(\tau)$ and $\mathscr{E}(\tau)$ both consist of conjugations of matrix exponentials of the form $e^{\tau A_s}\cdots e^{\tau A_{2}}e^{\tau A_{1}}Be^{-\tau A_{1}}e^{-\tau A_{2}}\cdots e^{-\tau A_s}$. To bound the Trotter error, it thus suffices to analyze such conjugations of matrix exponentials. The previous work of Somma \cite{Somma16} expanded them into infinite series of nested commutators, which is favorable for systems with appropriate Lie-algebraic structures. An alternative approach of Childs and Su \cite{CS19} represented them as commutators nested with conjugations of matrix exponentials, which provides a tight analysis for geometrically local systems. Unfortunately, both approaches can be loose in general. Instead, we apply order conditions (\sec{theory_order}) and derive a new representation of Trotter error (\sec{theory_rep}) that provides a tight analysis for general systems.

\subsection{Order conditions}
\label{sec:theory_order}
In this section, we study the order conditions of Trotter error.
By order condition, we mean the rate at which a continuous operator-valued function $\mathscr{F}(\tau)$, defined for $\tau\in\R$, approaches zero in the limit $\tau\rightarrow 0$. Formally, we write $\mathscr{F}(\tau)=\OO{\tau^{p}}$ with nonnegative integer $p$ if there exist constants $c,t_0>0$, independent of $\tau$, such that $\norm{\mathscr{F}(\tau)}\leq c\abs{\tau}^p$ whenever $\abs{\tau}\leq t_0$.

Order conditions arise naturally in the analysis of Trotter error \cite{Suz91,WBHS10,Auzinger2014,AKT14}. Indeed, a $p$th-order product formula $\mathscr{S}(t)$ has a Taylor expansion that agrees with the ideal evolution $e^{tH}$ up to order $t^{p}$, which implies the order condition $\mathscr{S}(t)=e^{tH}+\OO{t^{p+1}}$ by definition. Our approach is to use this relation in the reverse direction: given a smooth operator-valued function $\mathscr{F}(\tau)$ satisfying the order condition $\mathscr{F}(\tau)=\OO{\tau^p}$, we conclude that $\mathscr{F}(\tau)$ has a Taylor expansion where terms with order $\tau^{p-1}$ or lower vanish. We make this argument more precise in \append{order}.

We can determine the order condition of an operator-valued function through either direct calculation or indirect derivation. To illustrate this, we consider decomposing $e^{tH}=e^{t(A+B)}$ using the first-order Lie-Trotter formula $\mathscr{S}_1(t)=e^{tB}e^{tA}$. We see from \sec{theory_example} that this decomposition has the additive Trotter error
\begin{equation}
	\mathscr{A}_1(t)=\int_{0}^{t}\mathrm{d}\tau\ e^{(t-\tau)H}\big(\mathscr{S}_1'(\tau)-H\mathscr{S}_1(\tau)\big)
	=\int_{0}^{t}\mathrm{d}\tau\ e^{(t-\tau)H}\big[e^{\tau B},A\big]e^{\tau A}.
\end{equation}
We know that $\mathscr{A}_1(t)$ has order condition $\mathscr{A}_1(t)=\OO{t^2}$, which follows directly from the fact that $\mathscr{A}_1(0)=\mathscr{A}_1'(0)=0$. On the other hand, an indirect argument would proceed as follows. We use the known order condition $\mathscr{S}_1(t)=e^{tH}+\OO{t^2}$ to conclude that $\mathscr{S}_1'(\tau)-H\mathscr{S}_1(\tau)=\OO{\tau}$. Multiplying the matrix exponential $e^{(t-\tau)H}=\OO{1}$ does not change the order condition, so we still have $e^{(t-\tau)H}\big(\mathscr{S}_1'(\tau)-H\mathscr{S}_1(\tau)\big)=\OO{\tau}$. A final integration of $\int_{0}^{t}\mathrm{d}\tau$ then gives the desired condition $\mathscr{A}_1(t)=\OO{t^2}$.

Although we obtain the same order condition through two different analyses, the direct approach becomes inefficient for analyzing Trotter error of a general high-order product formula. Instead, we use the indirect analysis to prove the following theorem on the order conditions of Trotter error (see \append{order} for proof details). In \sec{theory_rep}, we apply these conditions to cancel low-order Trotter error terms and represent higher-order ones as nested commutators of operator summands.

\begin{restatable}[Order conditions of Trotter error]{theorem}{thmorder}
	\label{thm:error_order_cond}
	Let $H$ be an operator, and let $\mathscr{S}(\tau)$, $\mathscr{T}(\tau)$, $\mathscr{E}(\tau)$, and $\mathscr{M}(\tau)$ be infinitely differentiable operator-valued functions defined for $\tau\in\R$, such that
	\begin{equation}
	\begin{aligned}
	\mathscr{S}(t)&=e^{tH}+\int_{0}^{t}\mathrm{d}\tau\ e^{(t-\tau)H}\mathscr{S}(\tau)\mathscr{T}(\tau),\\
	&=\expT\bigg(\int_{0}^{t}\mathrm{d}\tau\big(H+\mathscr{E}(\tau)\big)\bigg),\\
	&=e^{tH}(I+\mathscr{M}(t)).
	\end{aligned}
	\end{equation}
	For any nonnegative integer $p$, the following conditions are equivalent:
	\begin{enumerate}
		\item $\mathscr{S}(t)=e^{tH}+\OO{t^{p+1}}$;
		\item $\mathscr{T}(\tau)=\OO{\tau^p}$;
		\item $\mathscr{E}(\tau)=\OO{\tau^p}$; and
		\item $\mathscr{M}(t)=\OO{t^{p+1}}$.
	\end{enumerate}
\end{restatable}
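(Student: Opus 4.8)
The plan is to derive all three nontrivial equivalences by routing through condition (1), after first assembling a small calculus of order conditions for infinitely differentiable operator-valued functions — the facts I expect to establish in \append{order}. These are: \textbf{(a)} $\mathscr{F}(\tau)=\OO{\tau^p}$ iff $\mathscr{F}(0)=\mathscr{F}'(0)=\cdots=\mathscr{F}^{(p-1)}(0)=0$ (Taylor's theorem with integral remainder); \textbf{(b)} consequently $\mathscr{F}=\OO{\tau^p}$ with $p\ge1$ implies $\mathscr{F}'=\OO{\tau^{p-1}}$; \textbf{(c)} $\mathscr{F}=\OO{\tau^a}$ and $\mathscr{G}=\OO{\tau^b}$ imply $\mathscr{F}\mathscr{G}=\OO{\tau^{a+b}}$, so multiplication by a bounded smooth factor preserves the order; and \textbf{(d)} $\int_0^t\mathscr{F}(\tau)\,\mathrm{d}\tau=\OO{t^{p+1}}$ whenever $\mathscr{F}=\OO{\tau^p}$, uniformly for $t$ near the origin (this last one needs only the norm bound). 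I would also record that $\mathscr{S}(t)$ is invertible with smooth inverse — from the exponentiated representation and the determinant formula in \sec{prelim_expT}, $\det\mathscr{S}(t)\ne0$, so $\mathscr{S}^{-1}$ is smooth by Cramer's rule, and $\mathscr{S}(0)=I$ by evaluating any of the three relations at $t=0$. With (c), the equivalence (1)$\Leftrightarrow$(4) is then immediate from $\mathscr{M}(t)=e^{-tH}\big(\mathscr{S}(t)-e^{tH}\big)$ and $\mathscr{S}(t)-e^{tH}=e^{tH}\mathscr{M}(t)$, since $e^{\pm tH}=\OO{1}$ near the origin.

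For (1)$\Leftrightarrow$(2): the additive integral equation is the variation-of-parameters solution (\lem{td_Duhamel} with $\mathscr{R}(t)=\mathscr{S}(t)\mathscr{T}(t)$) of $\mathscr{S}'(t)=H\mathscr{S}(t)+\mathscr{S}(t)\mathscr{T}(t)$, so $\mathscr{T}(t)=\mathscr{S}^{-1}(t)\big(\mathscr{S}'(t)-H\mathscr{S}(t)\big)$. If (1) holds, write $\mathscr{S}(t)=e^{tH}+\mathscr{A}(t)$ with $\mathscr{A}=\OO{t^{p+1}}$; by (b), $\mathscr{A}'=\OO{t^p}$, hence $\mathscr{S}'(t)-H\mathscr{S}(t)=\mathscr{A}'(t)-H\mathscr{A}(t)=\OO{t^p}$, and multiplying by the bounded smooth $\mathscr{S}^{-1}(t)$ yields (2). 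Conversely, if (2) holds, then $e^{(t-\tau)H}\mathscr{S}(\tau)=\OO{1}$ uniformly for $t,\tau$ in a fixed neighborhood of the origin, so the integrand $e^{(t-\tau)H}\mathscr{S}(\tau)\mathscr{T}(\tau)$ is $\OO{\tau^p}$ uniformly, and (d) gives $\mathscr{S}(t)-e^{tH}=\OO{t^{p+1}}$.

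For (1)$\Leftrightarrow$(3): \lem{fte} applied to the exponentiated representation gives $\mathscr{E}(t)=\mathscr{S}'(t)\mathscr{S}^{-1}(t)-H$. If (1) holds, then $\mathscr{S}'(t)=He^{tH}+\OO{t^p}$ as above, while $\mathscr{S}^{-1}(t)=\big(I+e^{-tH}\mathscr{A}(t)\big)^{-1}e^{-tH}=e^{-tH}+\OO{t^{p+1}}$ via the Neumann series valid for small $t$; multiplying, $\mathscr{S}'(t)\mathscr{S}^{-1}(t)=H+\OO{t^p}$, so $\mathscr{E}(t)=\OO{t^p}$. Conversely, if (3) holds, apply the distance bound \cor{time_ordered_distance_bound} to $\mathscr{S}(t)=\expT\big(\int_0^t(H+\mathscr{E}(\tau))\,\mathrm{d}\tau\big)$ and $e^{tH}=\expT\big(\int_0^t H\,\mathrm{d}\tau\big)$: for $|t|$ near the origin the exponential prefactor is bounded by a constant, and $\big|\int_0^t\norm{\mathscr{E}(\tau)}\,\mathrm{d}\tau\big|=\OO{t^{p+1}}$ by (d), giving (1).

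The main obstacle is the order-condition calculus (a)--(d), not the bookkeeping with the representations. The delicate point is that $\OO{\tau^p}$ constrains norms, so to differentiate an order bound one must pass through the vanishing of Taylor coefficients at the origin; and in every converse direction one must keep the implicit constants uniform in $t$ over a neighborhood of the origin rather than merely pointwise, which is why the arguments above lean on ``uniform'' bounds, on (d), and on \cor{time_ordered_distance_bound} in its stated form. Once this calculus and the smoothness of $\mathscr{S}^{-1}$ are in hand, the six implications reduce to the algebraic identities displayed above.
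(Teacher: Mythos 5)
Your proposal is correct, and its overall architecture is the same as the paper's: build a small calculus of order conditions (your (a)--(d) are precisely \lem{order_cond_deriv} and the addition/multiplication/differentiation/integration rules of \prop{order_cond_rule}), note that $\mathscr{S}(t)=I+\OO{t}$ is invertible with smooth inverse near the origin, and route every equivalence through condition (1). The handling of (1)$\Leftrightarrow$(4) and of (2)$\Rightarrow$(1) is identical to the paper's. The two places where you genuinely diverge are minor but worth noting. For (1)$\Rightarrow$(2) the paper uses the ``if and only if'' form of the integration rule to peel off the integral and then divides by $e^{(t-\tau)H}\mathscr{S}(\tau)$, whereas you differentiate the integral equation to get $\mathscr{T}=\mathscr{S}^{-1}(\mathscr{S}'-H\mathscr{S})$ and apply the derivative rule; these are interchangeable. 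More substantively, for (1)$\Leftrightarrow$(3) the paper proves a standalone \emph{exponentiation rule} ($\mathscr{F}=\mathscr{G}+\OO{\tau^p}$ iff the corresponding time-ordered exponentials agree to $\OO{t^{p+1}}$), established by an induction on Taylor coefficients via the general Leibniz rule, and then reads off the equivalence in one line. You instead split the two directions: $\mathscr{E}=\mathscr{S}'\mathscr{S}^{-1}-H$ plus a Neumann-series estimate of $\mathscr{S}^{-1}$ for the forward direction, and \cor{time_ordered_distance_bound} for the converse. Your route is more elementary (no Leibniz-rule induction) and both directions are sound; the paper's route buys a reusable general rule that is symmetric in the two generators and does not require tracking the invertibility of $\mathscr{S}$ in this step. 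Either way the six implications close, so there is no gap.
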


\subsection{Error representations}
\label{sec:theory_rep}
For a product formula with a certain error type and order condition, we now represent its error in terms of nested commutators of the operator summands.
In particular, we give upper bounds on the additive and the multiplicative errors of $p$th-order product formulas in \thm{trotter_error_comm_scaling}.

Consider an operator $H=\sum_{\gamma=1}^{\Gamma}H_\gamma$ with $\Gamma$ summands. The ideal evolution generated by $H$ is $e^{tH}$, which we decompose using a $p$th-order product formula $\mathscr{S}(t)=\prod_{\upsilon=1}^{\Upsilon}\prod_{\gamma=1}^{\Gamma}e^{ta_{(\upsilon,\gamma)}H_{\pi_{\upsilon}(\gamma)}}$. We know from \thm{error_type} that the Trotter error can be expressed in the additive form $\mathscr{S}(t)=e^{tH}+\int_{0}^{t}\mathrm{d}\tau\ e^{(t-\tau)H}\mathscr{S}(\tau)\mathscr{T}(\tau)$, the multiplicative form $\mathscr{S}(t)=e^{tH}(I+\mathscr{M}(t))$, where $\mathscr{M}(t)=\expT\big(\int_{0}^{t}\mathrm{d}\tau\ e^{-\tau H}\mathscr{E}(\tau)e^{\tau H}\big)-I$, and the exponentiated form $\mathscr{S}(t)=\expT\big(\int_{0}^{t}\mathrm{d}\tau\big(H+\mathscr{E}(\tau)\big)\big)$. Furthermore, both $\mathscr{T}(\tau)$ and $\mathscr{E}(\tau)$ consist of conjugations of matrix exponentials and have order condition $\mathscr{T}(\tau),\ \mathscr{E}(\tau)\in \OO{\tau^p}$ (\thm{error_order_cond}).

We first consider the representation of a single conjugation of matrix exponentials
\begin{equation}
\label{eq:unitary_conjugation}
e^{\tau A_s}\cdots e^{\tau A_{2}}e^{\tau A_{1}}Be^{-\tau A_{1}}e^{-\tau A_{2}}\cdots e^{-\tau A_s},
\end{equation}
where $A_1,A_2,\ldots,A_s,B$ are operators and $\tau\in\R$. Our goal is to expand this conjugation into a finite series in the time variable $\tau$. We only keep track of those terms with order $\OO{\tau^p}$, because terms corresponding to $1,\tau,\ldots,\tau^{p-1}$ will vanish in the final representation of Trotter error due to the order condition. As mentioned before, such a conjugation was previously analyzed based on a naive application of the Taylor's theorem \cite{CS19} and an infinite-series expansion \cite{Somma16}. However, those results do not represent Trotter error as a finite number of commutators of operator summands and they only apply to special systems such as those with geometrical locality or suitable Lie-algebraic structure. Our new representation overcomes these limitations.

We begin with the innermost layer $e^{\tau A_{1}}Be^{-\tau A_{1}}$. Applying Taylor's theorem to order $p-1$ with integral form of the remainder, we have
\begin{equation}
\begin{aligned}
e^{\tau A_{1}}Be^{-\tau A_{1}}&=B+\big[A_1,B\big]\tau+\cdots+\underbrace{\big[A_1,\cdots,\big[A_1}_{p-1},B\big]\cdots\big]\frac{\tau^{p-1}}{(p-1)!}\\
&\quad+\int_{0}^{\tau}\mathrm{d}\tau_2\ e^{(\tau-\tau_2)A_1}\underbrace{\big[A_1,\cdots,\big[A_1}_{p},B\big]\cdots\big]e^{-(\tau-\tau_2)A_1}\frac{\tau_2^{p-1}}{(p-1)!}.
\end{aligned}
\end{equation}
Using the abbreviation $\ad_{A_1}(B)=\big[A_1,B\big]$, we rewrite
\begin{equation}
\begin{aligned}
e^{\tau A_{1}}Be^{-\tau A_{1}}&=B+\ad_{A_1}(B)\tau+\cdots+\ad_{A_1}^{p-1}(B)\frac{\tau^{p-1}}{(p-1)!}\\
&\quad+\int_{0}^{\tau}\mathrm{d}\tau_2\ e^{(\tau-\tau_2)A_1}\ad_{A_1}^{p}(B)e^{-(\tau-\tau_2)A_1}\frac{\tau_2^{p-1}}{(p-1)!}.
\end{aligned}
\end{equation}
By the multiplication rule and the integration rule of \prop{order_cond_rule}, the last term has order
\begin{equation}
\int_{0}^{\tau}\mathrm{d}\tau_2\ e^{(\tau-\tau_2)A_1}\ad_{A_1}^{p}(B)e^{-(\tau-\tau_2)A_1}\frac{\tau_2^{p-1}}{(p-1)!}
=\OO{\tau^p}.
\end{equation}
This term cannot be canceled by the order condition, and we keep it in our expansion. The remaining terms corresponding to $1,\tau,\ldots,\tau^{p-1}$ are substituted back to the original conjugation of matrix exponentials.

We now consider the next layer of conjugation. We apply Taylor's theorem to the operators $e^{\tau A_2}Be^{-\tau A_2}$, $e^{\tau A_2}\ad_{A_1}(B)e^{-\tau A_2}$, \ldots, $e^{\tau A_2}\ad_{A_1}^{p-1}(B)e^{-\tau A_2}$ to order $p-1$, $p-2$, \ldots, $0$, respectively, obtaining
\begin{equation}
\begin{aligned}
e^{\tau A_{2}}Be^{-\tau A_{2}}&=B+\cdots+\ad_{A_2}^{p-1}(B)\frac{\tau^{p-1}}{(p-1)!}
+\int_{0}^{\tau}\mathrm{d}\tau_2\ e^{(\tau-\tau_2)A_2}\ad_{A_2}^{p}(B)e^{-(\tau-\tau_2)A_2}\frac{\tau_2^{p-1}}{(p-1)!},\\
e^{\tau A_{2}}\ad_{A_1}(B)e^{-\tau A_{2}}&=\ad_{A_1}(B)+\cdots+\ad_{A_2}^{p-2}\ad_{A_1}(B)\frac{\tau^{p-2}}{(p-2)!}\\
&\quad +\int_{0}^{\tau}\mathrm{d}\tau_2\ e^{(\tau-\tau_2)A_2}\ad_{A_2}^{p-1}\ad_{A_1}(B)e^{-(\tau-\tau_2)A_2}\frac{\tau_2^{p-2}}{(p-2)!},\\
&\vdotswithin{=}\\
e^{\tau A_2}\ad_{A_1}^{p-1}(B)e^{-\tau A_2}&=\ad_{A_1}^{p-1}(B)
+\int_{0}^{\tau}\mathrm{d}\tau_2\ e^{(\tau-\tau_2)A_2}\ad_{A_2}\ad_{A_1}^{p-1}(B)e^{-(\tau-\tau_2)A_2}.
\end{aligned}
\end{equation}
Combining with the result from the first layer, the Taylor remainders in the above equation have order
\begin{equation}
\begin{aligned}
\int_{0}^{\tau}\mathrm{d}\tau_2\ e^{(\tau-\tau_2)A_2}\ad_{A_2}^{p}(B)e^{-(\tau-\tau_2)A_2}\frac{\tau_2^{p-1}}{(p-1)!}
&=\OO{\tau^p},\\
\int_{0}^{\tau}\mathrm{d}\tau_2\ e^{(\tau-\tau_2)A_2}\ad_{A_2}^{p-1}\ad_{A_1}(B)e^{-(\tau-\tau_2)A_2}\frac{\tau_2^{p-2}}{(p-2)!}\tau
&=\OO{\tau^p},\\
&\vdotswithin{=}\\
\int_{0}^{\tau}\mathrm{d}\tau_2\ e^{(\tau-\tau_2)A_2}\ad_{A_2}\ad_{A_1}^{p-1}(B)e^{-(\tau-\tau_2)A_2}\frac{\tau^{p-1}}{(p-1)!}
&=\OO{\tau^p}.
\end{aligned}
\end{equation}
We keep these terms in our expansion and substitute the remaining ones back to the original conjugation of matrix exponentials.

We repeat this analysis for all the remaining layers of the conjugation of matrix exponentials. In doing so, we keep track of those terms with order $\OO{\tau^p}$, obtaining
\begin{equation}
\begin{aligned}
&	e^{\tau A_s}\cdots e^{\tau A_{2}}e^{\tau A_{1}}Be^{-\tau A_{1}}e^{-\tau A_{2}}\cdots e^{-\tau A_s} \\
	&=C_0+C_1\tau+\cdots+C_{p-1}\tau^{p-1}\\
&\quad +\sum_{k=1}^{s}\sum_{\substack{q_1+\cdots+q_k=p\\q_k\neq 0}}
e^{\tau A_s}\cdots e^{\tau A_{k+1}}\\
&\qquad\qquad\qquad\qquad \cdot\int_{0}^{\tau}\mathrm{d}\tau_2\ e^{\tau_2 A_k}\ad_{A_k}^{q_k}\cdots\ad_{A_1}^{q_1}(B)e^{-\tau_2 A_k}\cdot\frac{(\tau-\tau_2)^{q_k-1}\tau^{q_1+\cdots+q_{k-1}}}{(q_k-1)!q_{k-1}!\cdots q_1!}\\
&\qquad\qquad\qquad\qquad \cdot e^{-\tau A_{k+1}}\cdots e^{-\tau A_s}
\end{aligned}
\end{equation}
for some operators $C_0,C_1,\ldots,C_{p-1}$.
Due to the order condition, terms of order $1,\tau,\ldots,\tau^{p-1}$ will vanish in our final representation of the Trotter error.

We now bound the spectral norm of those terms with order $\OO{\tau^p}$. By the triangle inequality, we have an upper bound of
\begin{equation}
\begin{aligned}
&\ \sum_{k=1}^{s}\sum_{\substack{q_1+\cdots+q_k=p\\q_k\neq 0}}
\int_{0}^{\abs{\tau}}\mathrm{d}\tau_2\ \frac{(\abs{\tau}-\tau_2)^{q_k-1}\abs{\tau}^{q_1+\cdots+q_{k-1}}}{(q_k-1)!q_{k-1}!\cdots q_1!}
\norm{\ad_{A_k}^{q_k}\cdots\ad_{A_1}^{q_1}(B)}
e^{2\abs{\tau} \sum_{l=1}^{s}\norm{A_l}}\\
=&\ \sum_{k=1}^{s}\sum_{\substack{q_1+\cdots+q_k=p\\q_k\neq 0}}
\binom{p}{q_1\ \cdots\ q_k}\frac{\abs{\tau}^p}{p!}
\norm{\ad_{A_k}^{q_k}\cdots\ad_{A_1}^{q_1}(B)}
e^{2\abs{\tau} \sum_{l=1}^{s}\norm{A_l}}\\
=&\ \sum_{q_1+\cdots+q_s=p}
\binom{p}{q_1\ \cdots\ q_s}\frac{\abs{\tau}^p}{p!}
\norm{\ad_{A_s}^{q_s}\cdots\ad_{A_1}^{q_1}(B)}
e^{2\abs{\tau} \sum_{l=1}^{s}\norm{A_l}}\\
=&\ \acomm\big(A_s,\ldots,A_1,B\big)\frac{\abs{\tau}^p}{p!}e^{2\abs{\tau} \sum_{l=1}^{s}\norm{A_l}},
\end{aligned}
\end{equation}
where
\begin{equation}
\acomm\big(A_s,\ldots,A_1,B\big):=\sum_{q_1+\cdots+q_s=p}\binom{p}{q_1\ \cdots\ q_s}\norm{\ad_{A_s}^{q_s}\cdots\ad_{A_1}^{q_1}(B)}.
\end{equation}
This bound holds for arbitrary operators $A_1,A_2,\ldots,A_s$. When these operators are anti-Hermitian, we can tighten the above analysis by evaluating the spectral norm of a matrix exponential as $1$. We have therefore established:
\begin{theorem}[Commutator expansion of a conjugation of matrix exponentials]
	\label{thm:comm_exp_conj}
	Let $A_1,A_2,\ldots,A_s$ and $B$ be operators. Then the conjugation $e^{\tau A_s}\cdots e^{\tau A_{2}}e^{\tau A_{1}}Be^{-\tau A_{1}}e^{-\tau A_{2}}\cdots e^{-\tau A_s}$ $(\tau\in\R)$ has the expansion
	\begin{equation}
	e^{\tau A_s}\cdots e^{\tau A_{2}}e^{\tau A_{1}}Be^{-\tau A_{1}}e^{-\tau A_{2}}\cdots e^{-\tau A_s}
	=C_0+C_1\tau+\cdots+C_{p-1}\tau^{p-1}+\mathscr{C}(\tau).
	\end{equation}
	Here, $C_0,\ldots,C_{p-1}$ are operators independent of $\tau$. The operator-valued function $\mathscr{C}(\tau)$ is given by
	\begin{equation}
	\begin{aligned}
	\mathscr{C}(\tau):=\sum_{k=1}^{s}\sum_{\substack{q_1+\cdots+q_k=p\\q_k\neq 0}}
	&e^{\tau A_s}\cdots e^{\tau A_{k+1}}\\
	&\cdot\int_{0}^{\tau}\mathrm{d}\tau_2\ e^{\tau_2 A_k}\ad_{A_k}^{q_k}\cdots\ad_{A_1}^{q_1}(B)e^{-\tau_2 A_k}\cdot\frac{(\tau-\tau_2)^{q_k-1}\tau^{q_1+\cdots+q_{k-1}}}{(q_k-1)!q_{k-1}!\cdots q_1!}\\
	&\cdot e^{-\tau A_{k+1}}\cdots e^{-\tau A_s}.
	\end{aligned}
	\end{equation}
	Furthermore, we have the spectral-norm bound
	\begin{equation}
	\norm{\mathscr{C}(\tau)}\leq\acomm\big(A_s,\ldots,A_1,B\big)\frac{\abs{\tau}^p}{p!}e^{2\abs{\tau} \sum_{k=1}^{s}\norm{A_k}}
	\end{equation}
	for general operators and
	\begin{equation}
	\norm{\mathscr{C}(\tau)}\leq\acomm\big(A_s,\ldots,A_1,B\big)\frac{\abs{\tau}^p}{p!}
	\end{equation}
	when $A_k$ ($k=1,\ldots,s$) are anti-Hermitian, where
	\begin{equation}
	\acomm\big(A_s,\ldots,A_1,B\big)=\sum_{q_1+\cdots+q_s=p}\binom{p}{q_1\ \cdots\ q_s}\norm{\ad_{A_s}^{q_s}\cdots\ad_{A_1}^{q_1}(B)}.
	\end{equation}
\end{theorem}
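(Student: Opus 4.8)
The plan is to establish the claimed identity by induction on the number of layers $s$, peeling the conjugations $e^{\tau A_k}(\cdot)\,e^{-\tau A_k}$ off one at a time from the innermost one outward. The only analytic input is Taylor's theorem with integral remainder applied to $f(\sigma)=e^{\sigma A}X e^{-\sigma A}$, whose derivatives are $f^{(n)}(\sigma)=e^{\sigma A}\ad_A^n(X)e^{-\sigma A}$; expanding to order $m-1$ gives
\begin{equation}
e^{\tau A}Xe^{-\tau A}=\sum_{n=0}^{m-1}\ad_A^n(X)\frac{\tau^n}{n!}+\int_0^{\tau}\mathrm{d}\tau_2\ e^{\tau_2 A}\ad_A^{m}(X)e^{-\tau_2 A}\,\frac{(\tau-\tau_2)^{m-1}}{(m-1)!}.
\end{equation}
The base case $s=1$ is this identity with $A=A_1$, $X=B$, $m=p$: the polynomial part supplies $C_0,\dots,C_{p-1}$ (explicitly $C_n=\ad_{A_1}^n(B)/n!$) and the integral is the lone $k=1$ term of $\mathscr{C}(\tau)$, which forces $q_1=p$.

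For the inductive step I would assume that $e^{\tau A_{s-1}}\cdots e^{\tau A_1}Be^{-\tau A_1}\cdots e^{-\tau A_{s-1}}$ has the stated form, where each coefficient $C_j$ is a sum of nested adjoint actions $\ad_{A_{s-1}}^{q_{s-1}}\cdots\ad_{A_1}^{q_1}(B)$ with $q_1+\cdots+q_{s-1}=j\le p-1$ and weight $1/(q_1!\cdots q_{s-1}!)$. Conjugating everything by $e^{\tau A_s}(\cdot)e^{-\tau A_s}$: every existing remainder term merely picks up the outer factor $e^{\tau A_s}(\cdot)e^{-\tau A_s}$ and stays a legitimate $k<s$ term; and each monomial $\frac{\tau^j}{q_1!\cdots q_{s-1}!}\ad_{A_{s-1}}^{q_{s-1}}\cdots\ad_{A_1}^{q_1}(B)$ is re-expanded by the identity above to order $p-1-j$, contributing monomials of degrees $j,\dots,p-1$ (absorbed into the updated $C_0,\dots,C_{p-1}$) plus the remainder $\frac{\tau^j}{q_1!\cdots q_{s-1}!}\int_0^{\tau}\mathrm{d}\tau_2\, e^{\tau_2 A_s}\ad_{A_s}^{p-j}\ad_{A_{s-1}}^{q_{s-1}}\cdots\ad_{A_1}^{q_1}(B)e^{-\tau_2 A_s}\frac{(\tau-\tau_2)^{p-j-1}}{(p-j-1)!}$, which is exactly the $k=s$ term with $q_s=p-j\ne 0$ and weight $\frac{(\tau-\tau_2)^{q_s-1}\tau^{q_1+\cdots+q_{s-1}}}{(q_s-1)!\,q_{s-1}!\cdots q_1!}$. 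Collecting everything reproduces the expansion for $s$ layers and pins down both the factorial weights and the placement of the surviving outer exponentials $e^{\tau A_s}\cdots e^{\tau A_{k+1}}$.

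For the norm bound I would apply the triangle inequality over the terms of $\mathscr{C}(\tau)$, bound each matrix exponential via $\norm{e^{\sigma A_l}}\le e^{\abs{\sigma}\norm{A_l}}$ (\lem{time_ordered_norm_bound}) — the $k$th family carries the outer factors $e^{\pm\tau A_{k+1}},\dots,e^{\pm\tau A_s}$ and the inner $e^{\pm\tau_2 A_k}$, all absorbed into the uniform $e^{2\abs{\tau}\sum_{l=1}^s\norm{A_l}}$ since $\abs{\tau_2}\le\abs{\tau}$ — and evaluate the elementary scalar integral $\int_0^{\abs{\tau}}(\abs{\tau}-\tau_2)^{q_k-1}\,\mathrm{d}\tau_2=\abs{\tau}^{q_k}/q_k$. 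This collapses the weight to $\frac{\abs{\tau}^p}{q_1!\cdots q_k!}=\binom{p}{q_1\ \cdots\ q_k}\frac{\abs{\tau}^p}{p!}$, giving $\norm{\mathscr{C}(\tau)}\le e^{2\abs{\tau}\sum_l\norm{A_l}}\frac{\abs{\tau}^p}{p!}\sum_{k=1}^s\sum_{q_1+\cdots+q_k=p,\,q_k\ne 0}\binom{p}{q_1\ \cdots\ q_k}\norm{\ad_{A_k}^{q_k}\cdots\ad_{A_1}^{q_1}(B)}$. Finally I would re-index compositions of $p$ by the position $k$ of their last nonzero part, using $\binom{p}{q_1\cdots q_k}=\binom{p}{q_1\cdots q_s}$ when $q_{k+1}=\cdots=q_s=0$ and $\ad_{A_s}^0\cdots\ad_{A_{k+1}}^0\ad_{A_k}^{q_k}\cdots\ad_{A_1}^{q_1}(B)=\ad_{A_k}^{q_k}\cdots\ad_{A_1}^{q_1}(B)$, turning the double sum into $\sum_{q_1+\cdots+q_s=p}\binom{p}{q_1\cdots q_s}\norm{\ad_{A_s}^{q_s}\cdots\ad_{A_1}^{q_1}(B)}=\acomm(A_s,\ldots,A_1,B)$; this yields the general bound, and the anti-Hermitian case is identical except that \lem{time_ordered_norm_bound} gives $\norm{e^{\sigma A_l}}=1$, removing the exponential prefactor. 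I expect the main obstacle to be entirely bookkeeping — verifying that the factorials, the ordering of the surviving outer exponentials, and the fact that every sub-$\tau^p$ contribution from every layer really lands in $C_0,\dots,C_{p-1}$ all track correctly through the induction — since the analytic ingredients (Taylor's theorem and submultiplicativity of the spectral norm) are routine.
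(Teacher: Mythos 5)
Your proposal is correct and follows essentially the same route as the paper: the paper also peels the conjugation layer by layer from the innermost exponential outward, Taylor-expanding each already-collected monomial of degree $j$ to order $p-1-j$ so that only $\OO{\tau^p}$ remainders survive, and then obtains the norm bound via the triangle inequality, the scalar integral $\int_0^{\abs{\tau}}(\abs{\tau}-\tau_2)^{q_k-1}\,\mathrm{d}\tau_2 = \abs{\tau}^{q_k}/q_k$ collapsing the weights to $\binom{p}{q_1\ \cdots\ q_k}\abs{\tau}^p/p!$, and the re-indexing of compositions by the position of their last nonzero part. Your only cosmetic difference is phrasing the layer-by-layer expansion as a formal induction on $s$ (with the strengthened hypothesis recording the explicit coefficients $C_j$), which the paper carries out iteratively; the content is identical.
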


We apply \thm{comm_exp_conj} to expand every conjugation of matrix exponentials of the error operators $\mathscr{T}(\tau)$ and $\mathscr{E}(\tau)$ into a finite series in $\tau$. After taking the linear combination, we obtain
\begin{equation}
\begin{aligned}
\mathscr{T}(\tau)&=T_0+T_1\tau+\cdots+T_{p-1}\tau^{p-1}+\mathscr{T}_p(\tau),\\
\mathscr{E}(\tau)&=E_0+E_1\tau+\cdots+E_{p-1}\tau^{p-1}+\mathscr{E}_p(\tau).
\end{aligned}
\end{equation}
The operator-valued functions $\mathscr{T}_p(\tau)$ and $\mathscr{E}_p(\tau)$ have order condition $\OO{\tau^p}$, whereas $T_0,\ldots,T_{p-1}$ and $E_0,\ldots,E_{p-1}$ are independent of $\tau$. By \lem{order_cond_deriv} and the order condition $\mathscr{T}(\tau),\ \mathscr{E}(\tau)\in \OO{\tau^p}$, we have
\begin{equation}
T_0=\cdots=T_{p-1}=E_0=\cdots=E_{p-1}=0,
\end{equation}
or equivalently,
\begin{equation}
\mathscr{T}(\tau)=\mathscr{T}_p(\tau),\qquad
\mathscr{E}(\tau)=\mathscr{E}_p(\tau).
\end{equation}
We then bound the spectral norm of $\mathscr{T}_p(\tau)$ and $\mathscr{E}_p(\tau)$ using \thm{comm_exp_conj}. This establishes the commutator scaling of Trotter error. We state the result below and leave the calculation details to \append{rep}.

\begin{restatable}[Trotter error with commutator scaling]{theorem}{thmcomm}
	\label{thm:trotter_error_comm_scaling}
	Let $H=\sum_{\gamma=1}^{\Gamma}H_\gamma$ be an operator consisting of $\Gamma$ summands and $t\geq 0$. Let $\mathscr{S}(t)=\prod_{\upsilon=1}^{\Upsilon}\prod_{\gamma=1}^{\Gamma}e^{ta_{(\upsilon,\gamma)}H_{\pi_{\upsilon}(\gamma)}}$ be a $p$th-order product formula. Define
	$\acommtilde=\sum_{\gamma_1,\gamma_2,\ldots,\gamma_{p+1}=1}^\Gamma\norm{\big[H_{\gamma_{p+1}},\cdots\big[H_{\gamma_2},H_{\gamma_1}\big]\big]}$. Then, the additive Trotter error and the multiplicative Trotter error, defined respectively by $\mathscr{S}(t)=e^{tH}+\mathscr{A}(t)$ and $\mathscr{S}(t)=e^{tH}(I+\mathscr{M}(t))$, can be asymptotically bounded as
	\begin{equation}
	\norm{\mathscr{A}(t)}=\cO{\acommtilde t^{p+1}e^{2t \Upsilon\sum_{\gamma=1}^{\Gamma}\norm{H_{\gamma}}}},\quad
	\norm{\mathscr{M}(t)}=\cO{\acommtilde t^{p+1}e^{2t \Upsilon\sum_{\gamma=1}^{\Gamma}\norm{H_{\gamma}}}}.
	\end{equation}
	Furthermore, if the $H_\gamma$ are anti-Hermitian, corresponding to physical Hamiltonians, we have
	\begin{equation}
	\norm{\mathscr{A}(t)}=\cO{\acommtilde t^{p+1}},\quad
	\norm{\mathscr{M}(t)}=\cO{\acommtilde t^{p+1}}.
	\end{equation}
\end{restatable}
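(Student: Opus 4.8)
The plan is to assemble the three structural ingredients already prepared in \sec{theory_rep}. By \thm{error_type}, writing $\mathscr{S}(t)=e^{tH}+\mathscr{A}(t)=e^{tH}(I+\mathscr{M}(t))$, we have $\mathscr{A}(t)=\int_{0}^{t}\mathrm{d}\tau\, e^{(t-\tau)H}\mathscr{S}(\tau)\mathscr{T}(\tau)$ and $\mathscr{M}(t)=\expT\big(\int_{0}^{t}\mathrm{d}\tau\, e^{-\tau H}\mathscr{E}(\tau)e^{\tau H}\big)-I$, where $\mathscr{T}(\tau)$ and $\mathscr{E}(\tau)$ are each a finite sum — over the pairs $(\upsilon,\gamma)$ together with one extra $H$-term — of conjugations $e^{\tau A_s}\cdots e^{\tau A_1}Be^{-\tau A_1}\cdots e^{-\tau A_s}$ in which every $A_k$ and $B$ is a scalar multiple, with coefficient at most $1$ in absolute value, of some $H_\gamma$, and in which $\sum_{k}\norm{A_k}\le\Upsilon\sum_{\gamma=1}^{\Gamma}\norm{H_\gamma}$. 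First I would apply \thm{comm_exp_conj} to each such conjugation and take the linear combination, obtaining $\mathscr{T}(\tau)=T_0+T_1\tau+\cdots+T_{p-1}\tau^{p-1}+\mathscr{T}_p(\tau)$ and $\mathscr{E}(\tau)=E_0+\cdots+E_{p-1}\tau^{p-1}+\mathscr{E}_p(\tau)$, where the remainders $\mathscr{T}_p(\tau),\mathscr{E}_p(\tau)=\OO{\tau^p}$ are the sums of the integral-form remainders $\mathscr{C}(\tau)$ and the $T_j,E_j$ are $\tau$-independent.

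Next I would eliminate the polynomial parts using the order conditions. Since $\mathscr{S}$ is a $p$th-order formula, \thm{error_order_cond} gives $\mathscr{T}(\tau),\mathscr{E}(\tau)=\OO{\tau^p}$; subtracting the $\OO{\tau^p}$ remainders shows that the polynomials $\sum_{j<p}T_j\tau^j$ and $\sum_{j<p}E_j\tau^j$, of degree at most $p-1$, are themselves $\OO{\tau^p}$ and hence identically zero (differentiate at $\tau=0$). Therefore $\mathscr{T}(\tau)=\mathscr{T}_p(\tau)$ and $\mathscr{E}(\tau)=\mathscr{E}_p(\tau)$ are pure remainders whose spectral norms can be bounded directly.

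The third step, which I expect to be the main obstacle, is the norm bound for $\mathscr{T}_p(\tau)$ and $\mathscr{E}_p(\tau)$. By the triangle inequality and the remainder estimate in \thm{comm_exp_conj}, each is at most $\frac{\abs{\tau}^p}{p!}\,e^{2\abs{\tau}\Upsilon\sum_{\gamma=1}^{\Gamma}\norm{H_\gamma}}$ times a sum of quantities $\acomm(A_s,\ldots,A_1,B)$, one per conjugation. Using multilinearity of $\ad$ to strip off the bounded scalar coefficients, every $\norm{\ad_{A_s}^{q_s}\cdots\ad_{A_1}^{q_1}(B)}$ with $q_1+\cdots+q_s=p$ is at most a single nested-commutator norm $\norm{\big[H_{\gamma_{p+1}},\cdots\big[H_{\gamma_2},H_{\gamma_1}\big]\big]}$ for an index tuple $(\gamma_1,\ldots,\gamma_{p+1})$ determined by the positions and the composition $(q_1,\ldots,q_s)$. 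The crux is a counting argument: when we sum over all conjugations and all compositions, the total multinomial-weighted multiplicity with which any fixed tuple $(\gamma_1,\ldots,\gamma_{p+1})$ occurs is bounded by a constant depending only on $\Upsilon$ and $p$, \emph{not} on $\Gamma$ or the operator norms — each tuple can be based at only $\le\Upsilon$ of the $(\upsilon,\gamma)$-terms, and realizing $(\gamma_2,\ldots,\gamma_{p+1})$ as a concatenation of constant runs assigned to an increasing sequence of stage positions leaves at most $2^{p-1}\Upsilon^{p}p!$ possibilities. Since $\acommtilde$ is exactly the unweighted sum of all such nested-commutator norms over all tuples, this yields $\norm{\mathscr{T}(\tau)},\norm{\mathscr{E}(\tau)}=\cO{\acommtilde\abs{\tau}^p e^{2\abs{\tau}\Upsilon\sum_{\gamma=1}^{\Gamma}\norm{H_\gamma}}}$, and $\cO{\acommtilde\abs{\tau}^p}$ when the $H_\gamma$ are anti-Hermitian, in which case all matrix-exponential norms in \thm{comm_exp_conj} equal $1$.

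Finally I would assemble the bounds. For the additive error, $\norm{\mathscr{A}(t)}\le\int_{0}^{t}\norm{e^{(t-\tau)H}}\,\norm{\mathscr{S}(\tau)}\,\norm{\mathscr{T}(\tau)}\,\mathrm{d}\tau$; bounding $\norm{e^{(t-\tau)H}}$ and $\norm{\mathscr{S}(\tau)}$ by $e^{(t-\tau)\sum_{\gamma}\norm{H_\gamma}}$ and $e^{\tau\Upsilon\sum_{\gamma}\norm{H_\gamma}}$ via submultiplicativity and \lem{time_ordered_norm_bound} (both equal $1$ in the anti-Hermitian case), the integral $\int_0^t\tau^p\,\mathrm{d}\tau$ produces the $t^{p+1}$ factor while the exponentials coalesce into a single exponential in $t\Upsilon\sum_{\gamma}\norm{H_\gamma}$. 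For the multiplicative error I would use $\mathscr{M}(t)=e^{-tH}\mathscr{A}(t)$, so $\norm{\mathscr{M}(t)}\le\norm{e^{-tH}}\norm{\mathscr{A}(t)}$ inherits the same bound and equals $\norm{\mathscr{A}(t)}$ when the $H_\gamma$ are anti-Hermitian; alternatively one can bound $\mathscr{M}(t)$ directly from its exponentiated form using \cor{time_ordered_distance_bound}. Besides the counting argument, the only real care is in tracking the exponential prefactor — the constant multiplying $t\Upsilon\sum_{\gamma}\norm{H_\gamma}$ in the exponent is not optimized and, crucially, disappears entirely in the physical anti-Hermitian setting; the routine details of this bookkeeping are deferred to \append{rep}.
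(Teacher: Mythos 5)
Your proposal follows essentially the same route as the paper's proof in \append{rep}: express the error via \thm{error_type}, kill the degree-$(p-1)$ polynomial part of $\mathscr{T}(\tau)$ and $\mathscr{E}(\tau)$ using the order condition from \thm{error_order_cond} together with \lem{order_cond_deriv}, bound the remainders via \thm{comm_exp_conj}, and finish with a counting argument showing that each nested-commutator norm $\norm{\big[H_{\gamma_{p+1}},\cdots\big[H_{\gamma_2},H_{\gamma_1}\big]\big]}$ appears with multiplicity bounded by a constant depending only on $\Upsilon$ and $p$ (the paper's count is $\Upsilon^p$ per base term times the $p!$ from the multinomial weights; your $2^{p-1}\Upsilon^p p!$ is a looser but equally admissible constant). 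The anti-Hermitian case goes through exactly as you describe.

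The one place your write-up does not deliver the theorem as literally stated is the exponential factor in the general (non-anti-Hermitian) bound. Applying the triangle inequality to $\norm{\mathscr{A}(t)}\le\int_0^t\norm{e^{(t-\tau)H}}\norm{\mathscr{S}(\tau)}\norm{\mathscr{T}(\tau)}\,\mathrm{d}\tau$ and using the $e^{2\abs{\tau}\sum_k\norm{A_k}}$ factor from \thm{comm_exp_conj} yields an exponent of roughly $4t\Upsilon\sum_\gamma\norm{H_\gamma}$, and your route $\mathscr{M}(t)=e^{-tH}\mathscr{A}(t)$ inflates it further; the theorem claims $e^{2t\Upsilon\sum_\gamma\norm{H_\gamma}}$, which is a strictly stronger statement. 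The paper obtains the constant $2$ by a second, more careful pass: instead of separating $\mathscr{S}(\tau)$ from $\mathscr{T}(\tau)$ by submultiplicativity, it writes out the full remainder $\mathscr{R}(\tau)=\mathscr{S}(\tau)\mathscr{T}(\tau)$ so that the conjugating exponentials $e^{-\tau a_{(\upsilon',\gamma')}H_{\pi_{\upsilon'}(\gamma')}}$ cancel in pairs against the matching factors of $\mathscr{S}(\tau)$ before any norms are taken, leaving only a single layer of exponentials contributing $e^{\tau\Upsilon\sum_\gamma\norm{H_\gamma}}$, which combines with $\norm{e^{(t-\tau)H}}$ to give the stated exponent. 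You flag the constant as ``not optimized,'' which is honest, but to prove the theorem verbatim you need this cancellation step (or must settle for the weaker exponent, which still suffices for every application in the paper since the anti-Hermitian bound is what is actually used downstream).
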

\begin{corollary}[Trotter number with commutator scaling]
	\label{cor:trotter_number_comm_scaling}
	Let $H=\sum_{\gamma=1}^{\Gamma}H_\gamma$ be an operator consisting of $\Gamma$ summands with $H_\gamma$ anti-Hermitian and $t\geq 0$. Let $\mathscr{S}(t)=\prod_{\upsilon=1}^{\Upsilon}\prod_{\gamma=1}^{\Gamma}e^{ta_{(\upsilon,\gamma)}H_{\pi_{\upsilon}(\gamma)}}$ be a $p$th-order product formula. Define
	$\acommtilde=\sum_{\gamma_1,\gamma_2,\ldots,\gamma_{p+1}=1}^\Gamma\norm{\big[H_{\gamma_{p+1}},\cdots\big[H_{\gamma_2},H_{\gamma_1}\big]\big]}$. Then, we have $\norm{\mathscr{S}^r(t/r)-e^{tH}}=\cO{\epsilon}$, provided that
	\begin{equation}
	r=\cO{\frac{\acommtilde^{1/p}t^{1+1/p}}{\epsilon^{1/p}}}.
	\end{equation}
\end{corollary}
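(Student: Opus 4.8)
The plan is to derive the corollary from \thm{trotter_error_comm_scaling} by the standard device of splitting the evolution into $r$ identical Trotter steps and accumulating the per-step error linearly. First I would write the telescoping identity
\[
\mathscr{S}^r(t/r)-e^{tH}=\sum_{j=0}^{r-1}\mathscr{S}^{r-1-j}(t/r)\bigl(\mathscr{S}(t/r)-e^{(t/r)H}\bigr)\bigl(e^{(t/r)H}\bigr)^{j},
\]
which follows because the summand telescopes. Taking spectral norms and using that, in the anti-Hermitian case, each factor $e^{(t/r)a_{(\upsilon,\gamma)}H_{\pi_{\upsilon}(\gamma)}}$ is unitary — so $\mathscr{S}(t/r)$ is a product of unitaries and hence $\norm{\mathscr{S}^{r-1-j}(t/r)}=1$, and likewise $\norm{e^{j(t/r)H}}=1$ — submultiplicativity and the triangle inequality give $\norm{\mathscr{S}^r(t/r)-e^{tH}}\le r\,\norm{\mathscr{S}(t/r)-e^{(t/r)H}}$.

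Next I would invoke the anti-Hermitian bound of \thm{trotter_error_comm_scaling} applied to a single step of duration $t/r$, namely $\norm{\mathscr{S}(t/r)-e^{(t/r)H}}=\cO{\acommtilde (t/r)^{p+1}}$. Multiplying by $r$ yields $\norm{\mathscr{S}^r(t/r)-e^{tH}}=\cO{\acommtilde t^{p+1}/r^{p}}$. To make the right-hand side at most $\cO{\epsilon}$ it suffices to choose $r$ so that $\acommtilde t^{p+1}/r^{p}=\cO{\epsilon}$, i.e.\ $r=\cO{(\acommtilde t^{p+1}/\epsilon)^{1/p}}=\cO{\acommtilde^{1/p}t^{1+1/p}/\epsilon^{1/p}}$, which is exactly the claimed bound. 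Since any product formula has order $p\ge 1$, the exponent $1/p$ is well defined. This mirrors verbatim the passage from \lem{trotter_error_one_norm_scaling} to \cor{trotter_number_one_norm_scaling}, with $\acommtilde$ playing the role of $\bigl(\sum_{\gamma}\norm{H_\gamma}\bigr)^{p+1}$.

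The step I expect to require the most care — though it is bookkeeping rather than a genuine obstacle, since \thm{trotter_error_comm_scaling} already does the substantive work — is tracking the hidden constants through the multivariate asymptotic conventions of \sec{prelim_notation}, so that the constant in the final bound on $r$ depends only on $p$, $\Upsilon$, and the fixed data of the formula, and is uniform over $t$, $\epsilon$, and the $H_\gamma$. One also checks that the per-step estimate of \thm{trotter_error_comm_scaling} is legitimately applied at the small step size $t/r$: in the anti-Hermitian case that bound carries no exponential prefactor and is valid at every step size, so no extra smallness assumption on $t/r$ is needed beyond the choice of $r$ above.
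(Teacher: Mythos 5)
Your proposal is correct and follows exactly the paper's route: subdivide into $r$ steps, use the telescoping identity with unitarity to get $\norm{\mathscr{S}^r(t/r)-e^{tH}}\le r\norm{\mathscr{S}(t/r)-e^{(t/r)H}}$, apply the anti-Hermitian case of \thm{trotter_error_comm_scaling} to each step, and solve $\acommtilde t^{p+1}/r^p = \cO{\epsilon}$ for $r$. This is precisely the same derivation the paper uses to pass from \lem{trotter_error_one_norm_scaling} to \cor{trotter_number_one_norm_scaling}, transplanted to the commutator bound.
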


For any $\delta>0$, we can choose $p$ sufficiently large so that $1/p<\delta$. For this choice of $p$, we have $r=\cO{\acommtilde^\delta t^{1+\delta}/\epsilon^\delta}$. Therefore, the Trotter number scales as $r=\acommtilde^{o(1)}t^{1+o(1)}$ if we simulate with constant accuracy. To obtain the asymptotic complexity of the product-formula algorithm, it thus suffices to compute the quantity $\acommtilde=\sum_{\gamma_1,\gamma_2,\ldots,\gamma_{p+1}}\norm{\big[H_{\gamma_{p+1}},\cdots\big[H_{\gamma_2},H_{\gamma_1}\big]\big]}$, which can often be done by induction. We illustrate this by presenting a host of applications of our bound to simulating quantum dynamics (\sec{app_dqs}), local observables (\sec{app_local_obs}), and quantum Monte Carlo methods (\sec{app_qmc}).

Note that we did not evaluate the constant prefactor of our bound in \thm{trotter_error_comm_scaling}. Indeed, our proof involves inequality zooming that suffices to establish the correct asymptotic scaling but is likely loose in practice. For practical implementation, it is better to use \thm{comm_exp_conj}, which gives a concrete expression for the error operator. A general methodology to obtain error bounds with small constant factors is described in \append{pf2k}. In \sec{prefactor_pf12}, we show that our bound reduces to previous bounds for the Lie-Trotter formula \cite{Huyghebaert_1990,Suzuki85} and the second-order Suzuki formula \cite{DT10,WBCHT14,Kivlichan19,Jahnke2000}, which are known to be tight up to an application of the triangle inequality. We further provide numerical evidence in \sec{prefactor_pf2k} suggesting that our bound has a small prefactor for higher-order formulas as well.

\section{Applications}
\label{sec:app}

Our main result on the commutator scaling of Trotter error (\thm{trotter_error_comm_scaling}) uncovers a host of speedups of the product-formula approach. In this section, we give improved product-formula algorithms for digital quantum simulation (\sec{app_dqs}), simulating local observables (\sec{app_local_obs}), and quantum Monte Carlo methods (\sec{app_qmc}). We show that these results can nearly match or even outperform the best previous results for simulating quantum systems.

\subsection{Applications to digital quantum simulation}
\label{sec:app_dqs}
We now present applications of our bound to digital quantum simulation, including simulations of second-quantized electronic structure, $k$-local Hamiltonians, rapidly decaying long-range and quasilocal interactions, and clustered Hamiltonians. Throughout this section, we let $H$ be Hermitian, $t\geq 0$ be nonnegative, and we consider the real-time evolution $e^{-itH}$.

\medskip
\noindent\textbf{Second-quantized electronic structure.}
Simulating electronic-structure Hamiltonians is one of the most widely studied applications of digital quantum simulation. An efficient solution of this problem could help design and engineer new pharmaceuticals, catalysts, and materials \cite{BWMMNC18}. Recent studies have focused on solving this problem using more advanced simulation algorithms. Here, we demonstrate the power of product formulas for simulating electronic-structure Hamiltonians.

We consider the second-quantized representation of the electronic-structure problem. In the plane-wave dual basis, the electronic-structure Hamiltonian has the form \cite[Eq.\ (8)]{BWMMNC18}
\begin{equation}
\label{eq:plane_wave_dual}
\begin{aligned}
H&=\underbrace{\frac{1}{2n}\sum_{j,k,\nu}\kappa_{\nu}^2\cos[\kappa_{\nu}\cdot r_{k-j}]A_{j}^\dagger A_{k}}_{T}\\
&\quad\underbrace{-\frac{4\pi}{\omega}\sum_{j,\iota,\nu\neq 0}\frac{\zeta_\iota\cos[\kappa_{\nu}\cdot(\widetilde{r}_\iota-r_j)]}{\kappa_{\nu}^2}N_{j}}_{U}
\underbrace{+\frac{2\pi}{\omega}\sum_{\substack{j\neq k\\\nu\neq 0}}\frac{\cos[\kappa_{\nu}\cdot r_{j-k}]}{\kappa_{\nu}^2}N_{j}N_{k}}_{V},
\end{aligned}
\end{equation}
where $j,k$ range over all $n$ orbitals and $\omega$ is the volume of the computational cell. Following the assumptions of \cite{BWMMNC18,LW18}, we consider the constant density case where $n/\omega=\cO{1}$. Here, $\kappa_{\nu}=2\pi\nu/\omega^{1/3}$ are $n$ vectors of plane-wave frequencies, where $\nu$ are three-dimensional vectors of integers with elements in $[-n^{1/3},n^{1/3}]$; $r_j$ are the positions of electrons; $\zeta_\iota$ are nuclear charges such that $\sum_{\iota}|\zeta_\iota|=\cO{n}$; and $\widetilde{r}_\iota$ are the nuclear coordinates. The operators $A_j^\dagger$ and $A_k$ are electronic creation and annihilation operators, and $N_{j}=A_{j}^\dagger A_{j}$ are the number operators. The potential terms $U$ and $V$ are already diagonalized in the plane-wave dual basis. To further diagonalize the kinetic term $T$, we may switch to the plane-wave basis, which is accomplished by the fermionic fast Fourier transform $\mathrm{FFFT}$ \cite[Eq.\ (10)]{BWMMNC18}. We have
\begin{equation}
\begin{aligned}
H&=\mathrm{FFFT}^\dagger\underbrace{\bigg(\frac{1}{2}\sum_{\nu}\kappa_{\nu}^2 N_{\nu}\bigg)}_{\widetilde{T}}\mathrm{FFFT}
+U+V.
\end{aligned}
\end{equation}

To simulate the dynamics of such a Hamiltonian for time $t$, the current fastest algorithms are qubitization \cite{LC16,BGBWMPFN18} with $\tildecO{n^3t}$ gate complexity and a small prefactor, and the interaction-picture algorithm \cite{LW18} with complexity $\tildecO{n^2t}$ and a large prefactor. We show that higher-order product formulas can perform the same simulation with gate complexity $n^{2+o(1)}t^{1+o(1)}$. For the special case of the second-order Suzuki formula, this confirms a recent observation of Kivlichan et al.\ from numerical calculation \cite{Kivlichan19}.

Using the plane-wave basis for the kinetic operator and the plane-wave dual basis for the potential operators, we have that all terms in $\widetilde{T}$ and $U+V$ commute with each other, respectively. Then, we can decompose $e^{-it\tilde{T}}$ and $e^{-it(U+V)}$ into products of elementary matrix exponentials without introducing additional error, giving the product formula
\begin{equation}
\label{eq:plane_wave_dual_pf}
\begin{aligned}
&\ e^{-ita_{(\Upsilon,2)}T}e^{-ita_{(\Upsilon,1)}(U+V)}
\cdots
e^{-ita_{(1,2)}T}e^{-ita_{(1,1)}(U+V)}\\
=&\ \mathrm{FFFT}^\dagger e^{-ita_{(\Upsilon,2)}\widetilde{T}}\mathrm{FFFT}e^{-ita_{(\Upsilon,1)}(U+V)}
\cdots
\mathrm{FFFT}^\dagger e^{-ita_{(1,2)}\widetilde{T}}\mathrm{FFFT}e^{-ita_{(1,1)}(U+V)}.
\end{aligned}
\end{equation}
For practical implementation, we need to further exponentiate spin operators using a fermionic encoding, such as the Jordan-Wigner encoding. However, these implementation details do not affect the analysis of Trotter error and will thus be ignored in our discussion. The fermionic fast Fourier transform and the exponentiation of $\widetilde{T}$, $U$, and $V$ can all be implemented using the Jordan-Wigner encoding with complexity $\widetilde{O}(n)$ \cite{Ferris14,LW18}.

We compute the norm of $[H_{\gamma_{p+1}},\cdots[H_{\gamma_2},H_{\gamma_1}]]$, $H_\gamma\in\{T,U,V\}$ by induction. We show in \append{electron} that
\begin{equation}
\acommtilde=\sum_{\gamma_1,\gamma_2,\ldots,\gamma_{p+1}}\norm{\big[H_{\gamma_{p+1}},\cdots\big[H_{\gamma_2},H_{\gamma_1}\big]\big]}
=\cO{n^{p+1}}.
\end{equation}
\thm{trotter_error_comm_scaling} and \cor{trotter_number_comm_scaling} then imply that a Trotter number of $r=\cO{(nt)^{1+1/p}/\epsilon^{1/p}}$ suffices to simulate with accuracy $\epsilon$. Choosing $p$ sufficiently large, letting $\epsilon$ be constant, and implementing each Trotter step as in \cite{Ferris14,LW18}, we have the gate complexity
\begin{equation}
n^{2+o(1)}t^{1+o(1)}
\end{equation}
for simulating plane-wave electronic structure in second quantization.

\medskip
\noindent\textbf{$k$-local Hamiltonians.}
A Hamiltonian is $k$-local if it can be expressed as a linear combination of terms, each of which acts nontrivially on at most $k=\cO{1}$ qubits.
Such Hamiltonians, especially $2$-local ones, are ubiquitous in physics.
The first explicit quantum simulation algorithm by Lloyd was specifically developed for simulating $k$-local Hamiltonians \cite{Llo96} and later work provided more advanced approaches based on the linear-combination-of-unitary technique \cite{FractionalQuery14,BCCKS14,BCK15,LC17,LC16,LW18}. Here, we give an improved product-formula algorithm that can be advantageous over previous simulation methods.

We consider a $k$-local Hamiltonian acting on $n$ qubits
\begin{equation}
\label{eq:local_ham}
	H=\sum_{{j}_1,\ldots,{j}_k}H_{{j}_1,\ldots,{j}_k},
\end{equation}
where each $H_{{j}_1,\ldots,{j}_k}$ acts nontrivially only on qubits ${j}_1,\ldots,{j}_k$. We say $H_{{j}_1,\ldots,{j}_k}$ has support $\{{j}_1,\ldots,{j}_k\}$, denoting
\begin{equation}
	\supp\big(H_{{j}_1,\ldots,{j}_k}\big):=\{{j}_1,\ldots,{j}_k\}.
\end{equation}
We may assume that the summands are unitaries up to scaling and can be implemented with constant cost, for otherwise we expand them further with respect to the Pauli operators. The fastest previous approach to simulating a general $k$-local Hamiltonian is the qubitization algorithm by Low and Chuang \cite{LC16}, which has gate complexity $\tildecO{n^k\norm{H}_1t}$ where $\norm{H}_1=\sum_{{j}_1,\ldots,{j}_k}\norm{H_{{j}_1,\ldots,{j}_k}}$.

To compare with the product-formula algorithm, we need to analyze the nested commutators $[H_{\gamma_{p+1}},\cdots[H_{\gamma_2},H_{\gamma_1}]]$, where each $H_\gamma$ is some local operator $H_{{j}_1,\ldots,{j}_k}$. In order for this commutator to be nonzero, every operator must have support that overlaps with the support of operators from the inner layers. Using this idea, we estimate that
\begin{equation}
	\acommtilde=\sum_{\gamma_1,\gamma_2,\ldots,\gamma_{p+1}}\norm{\big[H_{\gamma_{p+1}},\cdots\big[H_{\gamma_2},H_{\gamma_1}\big]\big]}
	=\cO{\vertiii{H}_1^p\norm{H}_1},\label{eq:alpha_comm_one_norm}
\end{equation}
where $\vertiii{H}_1=\max_l\max_{{j}_l}\sum_{\substack{{j}_1,\ldots,{j}_{l-1},{j}_{l+1},\ldots,{j}_{k}}}\norm{H_{{j}_1,\ldots,{j}_k}}$ is the induced $1$-norm. \thm{trotter_error_comm_scaling} and \cor{trotter_number_comm_scaling} then imply that a Trotter number of $r=\cO{\vertiii{H}_1\norm{H}_1^{1/p}t^{1+1/p}/\epsilon^{1/p}}$ suffices to simulate with accuracy $\epsilon$. Choosing $p$ sufficiently large, letting $\epsilon$ be constant, and implementing each Trotter step with $\Th{n^k}$ gates, we have the total gate complexity
\begin{equation}
n^k\vertiii{H}_1\norm{H}_1^{o(1)}t^{1+o(1)}
\end{equation}
for simulating a $k$-local Hamiltonian $H$. See \append{local} for more details.

We know from \sec{prelim_notation} that the norm inequality $\vertiii{H}_1\leq \norm{H}_1$ always holds. In fact, the gap between these two norms can be significant for many $k$-local Hamiltonians. As an example, we consider $n$-qubit power-law interactions $H = \sum_{\vec{i},\vec{j}\in \Lambda} H_{\vec{i},\vec j}$ with exponent $\alpha$ \cite{Tran18}, where $\Lambda\subseteq \mathbb R^d$ is a $d$-dimensional square lattice, $H_{\vec i,\vec j}$ is an operator supported on two sites $\vec i,\vec j\in \Lambda$, and
\begin{align}
\label{eq:power-law-def}
\norm{H_{\vec i,\vec j}}\leq
\begin{cases}
1, 					&\text{ if } \vec i = \vec j,\\
\frac{1}{\norm{\vec i-\vec j}_2^\alpha},\quad &\text{ if } \vec i\neq \vec j.
\end{cases}
\end{align}
Examples of such systems include those that interact via the Coulomb interactions ($\alpha = 1$), the dipole-dipole interactions ($\alpha = 3$), and the van der Waals interactions ($\alpha = 6$).
It is straightforward to upper bound the induced $1$-norm
\begin{equation}
	\vertiii{H}_1=
	\begin{cases}
	\cO{n^{1-\alpha/d}},\quad & \text{for }0\leq\alpha<d,\\
	\cO{\log n}, & \text{for }\alpha=d,\\
	\cO 1, & \text{for }\alpha>d,
	\end{cases}
\end{equation}
whereas the $1$-norm scales like
\begin{equation}
\norm{H}_1=
\begin{cases}
\cO{n^{2-\alpha/d}},\quad & \text{for }0\leq\alpha<d,\\
\cO{n\log n}, & \text{for }\alpha=d,\\
\cO n, & \text{for }\alpha>d.
\end{cases}
\end{equation}
Thus the product-formula algorithm has gate complexity
\begin{align}
\label{eq:power-law-pf}
g_\alpha
=\begin{cases}
n^{3-\frac{\alpha}{d}+o(1)} t^{1+o(1)} & \text{ for }0\leq \alpha<d,\\
n^{2+o(1)} t^{1+o(1)} & \text{ for }\alpha\geq d,
\end{cases}
\end{align}
which has better $n$-dependence than the qubitization approach \cite{LC16}. We give further calculation details in \append{power-law}.

\medskip
\noindent\textbf{Rapidly decaying power-law and quasilocal interactions.}
We now consider $d$-dimensional power-law interactions $1/\distance^\alpha$ with exponent $\alpha>2d$ and interactions that decay exponentially with distance.
Although these Hamiltonians can be simulated using algorithms for $k$-local Hamiltonians, more efficient methods exist that exploit the locality of the systems \cite{Tran18}. We show that product formulas can also leverage locality to provide an even faster simulation.

We first consider an $n$-qubit $d$-dimensional power-law Hamiltonian $H = \sum_{i,j\in \Lambda} H_{\vec{i},\vec j}$ with exponent $\alpha>2d$. Such a Hamiltonian represents a rapidly decaying long-range system that becomes nearest-neighbor interacting in the limit $\alpha\rightarrow\infty$. For $\alpha>2d$, the state-of-the-art simulation algorithm decomposes the evolution based on the Lieb-Robinson bound with gate complexity $\tildecO{(nt)^{1+2d/(\alpha-d)}}$ \cite{Tran18}.
We give an improved approach using product formulas which has gate complexity $(nt)^{1+d/(\alpha-d)+o(1)}$.

The idea of our approach is to simulate a truncated Hamiltonian $\widetilde H = \sum_{\norm{\vec i-\vec j}_2\leq \ell} H_{\vec i,\vec j}$ by taking only the terms $H_{\vec i,\vec j}$ where $\Norm{\vec i-\vec j}_2$ is not more than $\ell$, a parameter that we determine later. The resulting $\widetilde{H}$ is a $2$-local Hamiltonian with $1$-norm $\Norm{\widetilde{H}}_1=\cO{n}$ and induced $1$-norm $\Vertiii{\widetilde{H}}_1=\cO{1}$. \thm{trotter_error_comm_scaling} and \cor{trotter_number_comm_scaling} then imply that a Trotter number of $r=\cO{n^{1/p}t^{1+1/p}/\epsilon^{1/p}}$
suffices to simulate with accuracy $\epsilon$. Choosing $p$ sufficiently large, letting $\epsilon$ be constant, and implementing each Trotter step with $\cO{n\ell^d}$ gates, we have the total gate complexity
$\ell^d(nt)^{1+o(1)}$ for simulating $\widetilde{H}$.

We know from \cor{time_ordered_distance_bound} that the approximation of $\exp(-iHt)$ by $\exp(-i\widetilde H t)$ has error
\begin{align}
\norm{e^{-itH} - e^{-it\widetilde{H}}}  = \cO{\norm{H-\widetilde{H}}t},
\end{align}
where $\Norm{H-\widetilde{H}} = \cO{n/\ell^{\alpha-d}}$ for all $\alpha>2d$. To make this at most $\cO{\epsilon}$, we choose the cutoff $\ell = \Th{\left({nt}/{\epsilon}\right)^{1/(\alpha-d)}}$.
Note that we require $nt\geq \epsilon$ and $t\leq \epsilon n^{\alpha/d-2}$ so that $n^{1/d}\geq \ell \geq 1$. This implies the gate complexity
\begin{equation}
(nt)^{1+d/(\alpha-d)+o(1)}, \label{eq:power-law-gate-count-weak}
\end{equation}
which is better than the state-of-the-art algorithm based on Lieb-Robinson bounds \cite{Tran18}. We leave the calculation details to \append{power-law}.

We also consider interactions that decay exponentially with the distance $\distance$ as $e^{-\beta \distance}$:
\begin{equation}
\label{eq:quasi-local-def}
\norm{H_{\vec i,\vec j}}\leq e^{-\beta\norm{\vec{i}-\vec{j}}_2},
\end{equation}
where $\beta>0$ is a constant. Although such interactions are technically long range, their fast decay makes them quasilocal for most applications in physics. Our approach to simulating such a quasilocal system is similar to that for the rapidly decaying power-law Hamiltonian, except we choose the cutoff $\ell = \Theta(\log(nt/\epsilon))$, giving a product-formula algorithm with gate complexity
\begin{equation}
(nt)^{1+o(1)}.
\end{equation}
See \append{power-law} for further details.

Our result for quasilocal systems is asymptotically the same as a recent result for nearest-neighbor Hamiltonians \cite{CS19}. For rapidly decaying power-law systems, we reproduce the nearest-neighbor case \cite{CS19} in the limit $\alpha\rightarrow\infty$.

\medskip
\noindent\textbf{Clustered Hamiltonians.}
We now consider the application of our theory to simulating clustered Hamiltonians \cite{PHOW19}. Such systems appear naturally in the study of classical fragmentation methods and quantum mechanics/molecular mechanics methods for simulating large molecules. Peng, Harrow, Ozols, and Wu recently proposed a hybrid simulator for clustered Hamiltonians \cite{PHOW19}. Here, we show that the performance of their simulator can be significantly improved using our Trotter error bound.

Let $H$ be a Hamiltonian acting on $n$ qubits. Following the same setting as in \cite{PHOW19}, we assume that each term in $H$ acts on at most two qubits with spectral norm at most one, and each qubit interacts with at most a constant number $d'$ of other qubits. We further assume that the qubits are grouped into multiple parties and write
\begin{equation}
H=A+B=\sum_l H_l^{(1)}+\sum_l H_l^{(2)},\quad \forall l:\norm{H_l^{(1)}},\norm{H_l^{(2)}}\leq 1,
\end{equation}
where terms in $A$ act on qubits within a single party and terms in $B$ act between two different parties.

The key step in the approach of Peng et al.\ is to group the terms within each party in $A$ and simulate the resulting Hamiltonian. This is accomplished by applying product formulas to the decomposition
\begin{equation}
\label{eq:clustered_ham}
H=A+\sum_l H_l^{(2)}.
\end{equation}
Using the first-order Lie-Trotter formula, Ref.\ \cite{PHOW19} chooses the Trotter number
\begin{equation}
r=\cO{\frac{h_B^2 t^2}{\epsilon}}
\end{equation}
to ensure that the error of the decomposition is at most $\epsilon$, where $h_B=\sum_l\Norm{H_l^{(2)}}$ is the interaction strength. Here, we use \thm{trotter_error_comm_scaling} and \cor{trotter_number_comm_scaling} to show that it suffices to take
\begin{equation}
r=\cO{\frac{d'^{\frac{1+p}{2}} h_B^{\frac{1}{p}} t^{1+\frac{1}{p}}}{\epsilon^{\frac{1}{p}}}}=\cO{\frac{h_B^{1/p} t^{1+1/p}}{\epsilon^{1/p}}}
\end{equation}
using a $p$th-order product formula
\begin{equation}
\label{eq:clustered_pf}
\mathscr{S}(t)=e^{-ita_\Upsilon A}\prod_{l}e^{-ita_{(\Upsilon,l)}H_l^{(2)}}\cdots e^{-ita_1A}\prod_{l}e^{-ita_{(1,l)}H_l^{(2)}}.
\end{equation}
This improves the analysis of \cite{PHOW19} for the first-order formula and extends the result to higher-order cases. Details can be found in \append{cluster}.

The hybrid simulator of \cite{PHOW19} has runtime $2^{\cO{r\cdot\mathrm{cc}(g)}}$, where $r$ is the Trotter number and $\mathrm{cc}(g)$ is the contraction complexity of the interaction graph $g$ between the parties. Our improved choice of $r$ thus provides a dramatic improvement.

\subsection{Applications to simulating local observables}
\label{sec:app_local_obs}
In this section, we consider quantum simulation of local observables. Our goal is to simulate the time evolution $\scA(t): = e^{itH} A e^{-itH}$ of an observable $A$, where the support $\supp(A)$ can be enclosed in a $d$-dimensional ball of constant radius $\distance_0$ on a $d$-dimensional lattice $\Lambda\subseteq \mathbb R^d$. Throughout this section, we consider power-law interactions with exponent $\alpha>2d$ and we assume $t\geq 0$.

Although a local observable can be simulated by simulating the full dynamics as in \sec{app_dqs}, this is not the most efficient approach. Instead, we use product formulas to give an algorithm whose gate complexity is independent of the system size for a short-time evolution; this complexity is much smaller than the cost of full simulation. As a byproduct, we prove a Lieb-Robinson-type bound for power-law Hamiltonians that nearly matches a recent bound of Tran et al.\ \cite{Tran18}.

\medskip
\noindent\textbf{Locality of time-evolved observables.}
Our approach is to approximate the evolution $\scA(t) = e^{itH} A e^{-itH}$ of the local observable $A$ by $e^{it\Hlc}Ae^{-it\Hlc}$, where $\Hlc$ is a Hamiltonian supported within a light cone originating from $A$ at time $0$. Although this can be achieved using Lieb-Robinson bounds \cite{Tran18}, we give a direct construction using product formulas.

Without loss of generality, we assume that the Hamiltonian $H$ is supported on an infinite lattice.\footnote{For Hamiltonians that are supported on finite lattices, we simply add trivial terms supported outside the lattices.}
The idea behind our approach is as follows. We first truncate the original Hamiltonian to obtain $\Htrunc$. We group the terms of $\Htrunc$ into $d$-dimensional shells based on their distance to the observable and use a product formula $\Strunc(t)$ to approximate the evolution. Unlike in \sec{app_dqs}, we choose a specific ordering of the summands so that the majority of the terms in $\Strunc(t)$ can be commuted through the observable to cancel their counterparts in $\Strunc^\dag(t)$. We define the reduced product formula $\Sreduce(t)$ as in \fig{local-sim-demo} by collecting all the remaining terms in $\Strunc(t)$. This gives an accurate approximation to a short-time evolution. For larger times, we divide the evolution into $r$ Trotter steps and apply the above approximation within each step. We reverse this procedure within the light cone to obtain $\Slc(t)$, which simulates the desired Hamiltonian $\Hlc$. See \fig{local-sim-step} for a step-by-step illustration of this approach.

We consider a general observable $B$ and we assume that $\mathcal S(B)$---the support of $B$---is a $d$-dimensional ball of radius $y_0$ centered on the origin.
We analyze $B$ as opposed to the original observable $A$ so that our argument not only applies to the first Trotter step, but also to later steps where $A$ is evolved and its support is expanded.
We denote by $\dist(\vec i,\mathcal S(B)):=\inf_{\vec{j}\in\mathcal S(B)}\norm{\vec{i}-\vec{j}}_2$ the distance between $\vec{i}$ and $\mathcal{S}(B)$,
by $\mathcal B_{y} := \{\vec i\in \Lambda:\dist(\vec i,\mathcal S(B))\leq y\}$ a ball of radius $y+y_0$ centered on $\mathcal S(B)$,
and by $\Delta\mathcal B_{\gamma\ell} = \mathcal B_{\gamma\ell}\setminus\mathcal B_{(\gamma-1)\ell}$ the shell containing sites between distance $(\gamma-1)\ell$ and $\gamma\ell$ from $\mathcal S(B)$, where $\ell\geq 1$ is a parameter to be chosen later and $\gamma\in\mathbb N$ is a nonnegative integer---with the convention that $\mathcal B_{-\ell}=\varnothing$ so that $\Delta \mathcal B_0 = \mathcal B_0 = \mathcal S(B)$.
We illustrate the sets $\mathcal B_{\gamma\ell}$ and $\Delta\mathcal B_{\gamma\ell}$ for several values of $\gamma$ in \fig{local-sim-demo}.

\begin{figure}[t]
	\centering
	\includegraphics[width=0.75\textwidth]{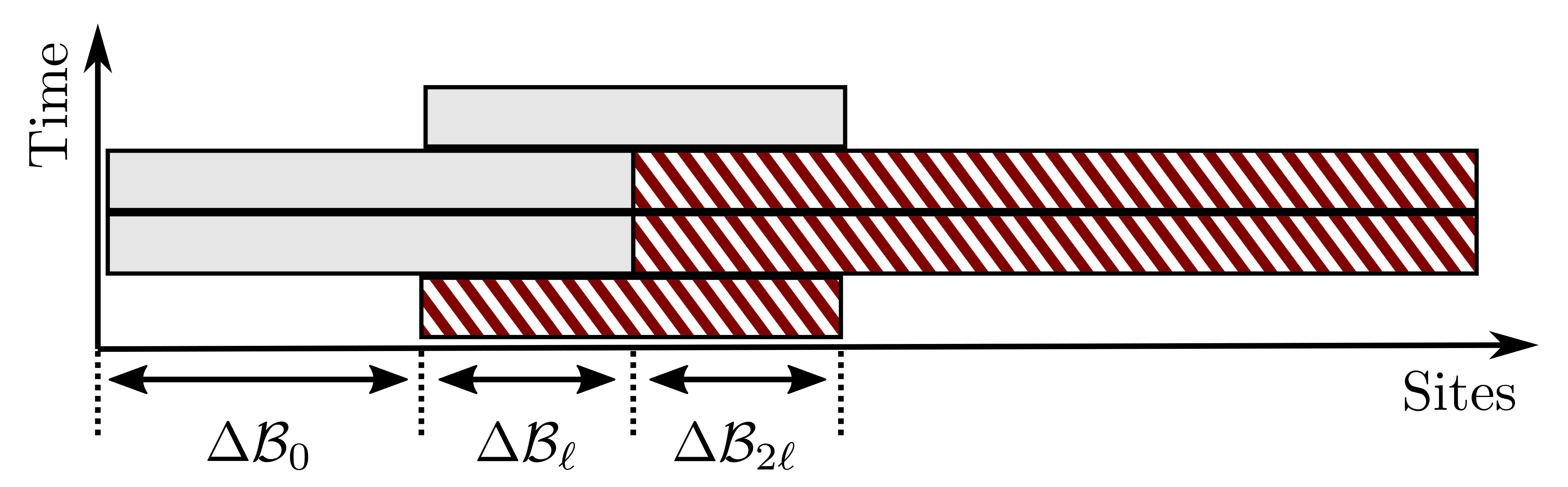}
	\caption{Demonstration of the second-order product formula for simulating the evolution of an observable $B$ supported on $\mathcal S(B) = \Delta\mathcal B_0$.
		Each rectangle represents a unitary supported on the sites covered by the width of the rectangle.
		The evolution unitary $e^{-itH}$ is decomposed using the second-order product formula into $\Upsilon = 2$ stages.
		Each stage is a sequence of $\Gamma = 3$ matrix exponentials generated by Hamiltonian terms supported on parts of the system.
		Some of these unitaries (red shaded rectangles) can subsequently be commuted through $B$ in the expression $\Strunc^\dag(t) B \Strunc(t)$ to cancel out with their Hermitian conjugates.
		As a result, the time-evolved version of $B$ can be effectively described by the remaining unitaries (light-gray rectangles).
	}
	\label{fig:local-sim-demo}
\end{figure}

Starting from the power-law Hamiltonian $H = \sum_{\vec i,\vec j\in\Lambda} H_{\vec i,\vec j}$, we group terms based on their distance to the observable $B$ and define
\begin{align}
&H_1 = \sum_{\vec i,\vec j\in \mathcal B_{\ell}} H_{\vec i,\vec j},\\
&H_\gamma = \sum_{\vec i,\vec j\in\Delta \mathcal B_{\gamma\ell}} H_{\vec i,\vec j}+ \sum_{\substack{
		\vec i\in\Delta \mathcal B_{(\gamma-1)\ell}\\
		\vec j\in\Delta \mathcal B_{\gamma\ell}}
} H_{\vec i,\vec j} \quad\text{ for }\gamma=2,\dots,\Gamma-1,\\
&H_{\Gamma} = \sum_{\vec i,\vec j\notin\mathcal B_{(\Gamma-2)\ell}} H_{\vec i,\vec j}
\end{align}
with constant $\Gamma$ to be chosen later. In this construction, all $H_\gamma$ with even $\gamma$ commute and all $H_\gamma$ with odd $\gamma$ commute.
We consider the truncated Hamiltonian
\begin{equation}
	\Htrunc = \sum_{\gamma=1}^{\Gamma} H_\gamma
\end{equation}
instead of $H$, which incurs a truncation error of
\begin{equation}
	\epsilon_1=\cO{\norm{e^{-itH}-e^{-it\Htrunc}}}=\cO{\norm{H-\Htrunc} t}= \cO{\frac{(y_0+\Gamma\ell)^{d-1}t}{\ell^{\alpha-d-1}}}.
\end{equation}
See \append{local-obs} for proof details.

Next, we simulate the evolution $e^{-it\Htrunc}$ using the $p$th-order product formula [See \cref{eq:pf} and \cref{fig:local-sim-demo}]:
\begin{align}
\Strunc(t) = \prod_{\upsilon=1}^{\Upsilon} \prod_{\gamma=1}^{\Gamma} e^{-i t a_{(\upsilon,\gamma) }H_{\pi_{\upsilon}(\gamma)}},\label{eq:loc_pf}
\end{align}
where we put additional constraints on the permutation $\pi_\nu$:
\begin{align}
\pi_{\upsilon}(1,2,3,4,5,6,\ldots) = \begin{cases}
(2,4,6,\dots,1,3,5,\dots),\quad& \text{ if } \upsilon \text{ is odd},\\
(1,3,5,\dots,2,4,6,\dots),& \text{ if } \upsilon \text{ is even}.
\end{cases}
\end{align}
Such a permutation can be realized using Suzuki's original construction~\cite{Suz91} and taking into account that $\commm{H_{2k},H_{2k'}}=0$ and $\commm{H_{2k+1},H_{2k'+1}}=0$ for all $k,k'$.
Using \thm{trotter_error_comm_scaling}, we show in \append{local-obs} that the error of approximating $e^{-it\Htrunc}$ by $\Strunc(t)$ is
\begin{align}
\epsilon_2 &= \norm{e^{-it\Htrunc}-\Strunc(t)} \nonumber\\
&= \cO{\sum_{\gamma_1,\dots,\gamma_{p+1}=1}^{\Gamma} \norm{\commm{H_{\gamma_{p+1}},\dots,\commm{H_{\gamma_2},H_{\gamma_1}}}}t^{p+1}}
=\cO{(y_0+\Gamma\ell)^{d-1}\ell t^{p+1}}.\label{eq:epsilon2}
\end{align}

Note that the Hamiltonian terms $H_{\gamma}$ (for $\gamma\geq 2$) commute with $B$.
Therefore, the exponentials in $\Strunc(t)$ corresponding to these terms can be commuted through $B$ to cancel with their counterparts in $\Strunc(t)^\dag$. By choosing the constant $\Gamma = \Upsilon+1$, we have
\begin{align}
\Strunc^\dag(t) B \Strunc(t) = \Sreduce^\dag (t) B \Sreduce(t),
\end{align}
where
\begin{align}
\Sreduce(t) = \prod_{\upsilon=1}^{\Upsilon} \prod_{\gamma=1}^{\upsilon} e^{-i t a_{(\upsilon,\pi_{\upsilon}^{-1}(\gamma)) }H_{\gamma}}.\label{eq:double_tilde_S}
\end{align}
We call  $\Sreduce(t)$  the reduced product formula. This approximates the evolution $e^{-itH}$ of local observable $B$ with error
\begin{equation}
\begin{aligned}
\norm{e^{itH} Be^{-itH}-\Sreduce^\dag(t) B \Sreduce(t)}
&\leq\norm{e^{itH} Be^{-itH}-e^{it\Htrunc} Be^{-it\Htrunc}}\\
&\quad +\norm{e^{it\Htrunc} Be^{-it\Htrunc}-\Strunc^\dagger(t)B\Strunc(t)}\\
&\quad +\norm{\Strunc^\dagger(t)B\Strunc(t)-\Sreduce^\dagger(t)B\Sreduce(t)}\\
&\leq 2\norm B (\epsilon_1 +\epsilon_2)+0\\
&= \cO{\norm{B}t(y_0+\Gamma\ell)^{d-1}\left(\frac{1}{\ell^{\alpha-d-1}} + \ell t^{p}\right)}.
\end{aligned}
\end{equation}

The above decomposition is accurate for a short-time evolution. For larger times, we divide the simulation into $r$ Trotter steps and apply this decomposition within each step. We analyze the error in a similar way as above, except that $B$ is defined by applying the reduced product formula to the observable $A$. Since the spectral norm is invariant under unitary transformations, we have $\norm{B}=\norm{A}=\cO{1}$. Another difference is that the support of the observable is expanded by $\Gamma\ell$ after each Trotter step; i.e., we set $y_0$ to be $\distance_0$, $\distance_0+\Gamma\ell$,\ldots, and $\distance_0+r\Gamma\ell$. Using the triangle inequality, we bound the error of the reduced product formula by
\begin{equation}
	\cO{t(\distance_0+r\Gamma\ell)^{d-1}\left(\frac{1}{\ell^{\alpha-d-1}} + \ell \frac{t^p}{r^{p}}\right)}.
\end{equation}

We now apply the above procedure in the reverse direction, but only to Hamiltonian terms within the light cone, incurring a truncation error at most $\epsilon_1$ and a Trotter error at most $\epsilon_2$. This replaces $\Sreduce(t)$ by $\Slc(t)$, the product formula that simulates the Hamiltonian $\Hlc$ whose terms have distance at most $r\Gamma\ell$ to the local observable $A$. See \fig{local-sim-step} for a step-by-step illustration of this approach. We analyze the error in a similar way as above, establishing the following result on evolving local observables.

\begin{figure}[t]
	\centering
		\includegraphics[width=0.95\textwidth]{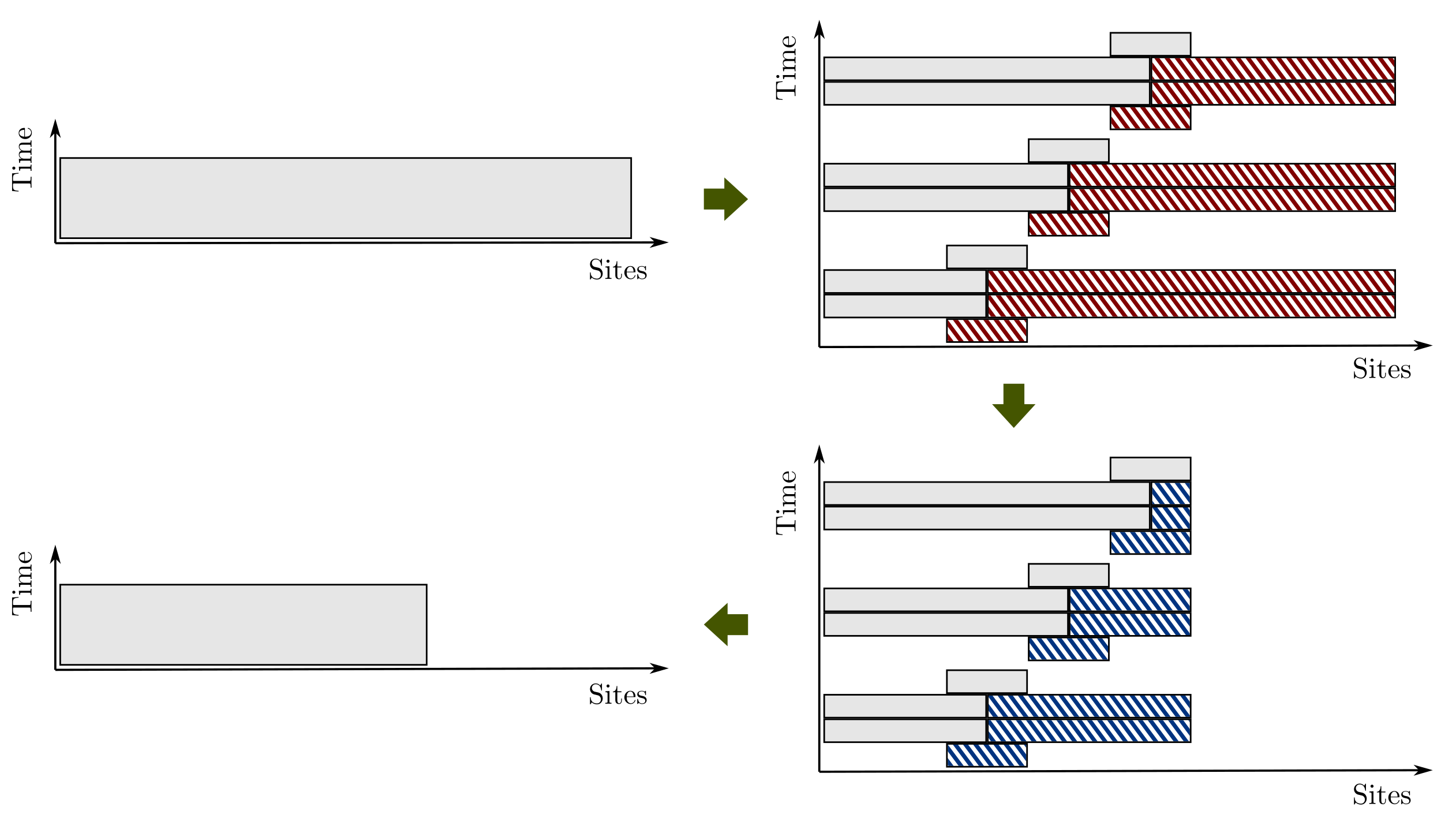}
	\caption{Construction of the Hamiltonian $\Hlc$ within the light cone such that $e^{itH} Ae^{-itH}\approx e^{it\Hlc} A e^{-it\Hlc}$. Each rectangle represents a unitary supported on the sites covered by the width of the rectangle. Specifically, the rectangle in the top-left panel represents the evolution $e^{-itH}$ that we want to decompose. We divide the evolution into $r$ steps. Within each step, we truncate the Hamiltonian to $\Htrunc$ and decompose its evolution using a product formula $\Strunc$. We commute certain matrix exponentials in $\Strunc$ (represented by red shaded rectangles) through the observable to cancel their counterpart, obtaining $\Sreduce$ in the top-right panel. We reverse this procedure within the light cone to construct $\Slc$ in the bottom-right panel, which approximates $e^{-it\Hlc}$ as illustrated in the bottom-left panel.}
	\label{fig:local-sim-step}
\end{figure}

\begin{restatable}[Product-formula decomposition of evolutions of local observables]{reprop}{proplocalobs}
	\label{prop:pf-local-obs}
	Let $\Lambda\subseteq \mathbb R^d$ be a $d$-dimensional square lattice. Let $H$ be a power-law Hamiltonian \eq{power-law-def} with exponent $\alpha>2d$ and $A$ be an observable with support enclosed in a $d$-dimensional ball of constant radius $x_0$. Construct the Hamiltonian $\Hlc$ as above using $p$th-order $\Upsilon$-stage product formulas $\Strunc(t)$, $\Sreduce(t)$, and $\Slc(t)$. Then, the support of $\Hlc$ has radius $\distance_0+r\Gamma\ell$ and
	\begin{equation}
	\label{eq:light_cone_ham}
	\begin{aligned}
	\norm{e^{itH} Ae^{-itH}-e^{it\Hlc} A e^{-it\Hlc}}
	&= \cO{t(\distance_0+r\Gamma\ell)^{d-1}\left(\frac{1}{\ell^{\alpha-d-1}} + \ell \frac{t^p}{r^{p}}\right)},
	\end{aligned}
	\end{equation}
	where the positive integer $\ell$ is a parameter and $\Gamma=\Upsilon+1$ is constant.
\end{restatable}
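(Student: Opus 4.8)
The plan is to analyze a single Trotter step for a generic observable $B$ supported on a ball of radius $y_0$, and then iterate the estimate over $r$ steps. For one step there are three ingredients. First, the truncation error: $H-\Htrunc$ consists precisely of the bonds $H_{\vec i,\vec j}$ omitted from every $H_\gamma$, and each such bond has one endpoint in one of the $\cO 1$ inner shells $\Delta\mathcal B_{\gamma\ell}$ and length at least $\ell$; since a shell holds $\cO{(y_0+\Gamma\ell)^{d-1}\ell}$ sites and $\sum_{\norm{\vec i-\vec j}_2\geq\ell}\norm{\vec i-\vec j}_2^{-\alpha}=\cO{\ell^{d-\alpha}}$ for $\alpha>d$, one gets $\norm{H-\Htrunc}=\cO{(y_0+\Gamma\ell)^{d-1}\ell^{-(\alpha-d-1)}}$, and \cref{cor:time_ordered_distance_bound} (anti-Hermitian case, applied to $-iH$ and $-i\Htrunc$) gives $\norm{e^{-itH}-e^{-it\Htrunc}}\leq\norm{H-\Htrunc}t=\epsilon_1$. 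Second, the product-formula error: by \cref{thm:trotter_error_comm_scaling} it suffices to bound $\acommtilde=\sum_{\gamma_1,\dots,\gamma_{p+1}}\norm{\commm{H_{\gamma_{p+1}},\dots,\commm{H_{\gamma_2},H_{\gamma_1}}}}$, and since $H_\gamma,H_{\gamma'}$ have overlapping support only when $\abs{\gamma-\gamma'}\leq 1$, the nonzero nested commutators come from the $\cO 1$ connected index chains, each built from operators supported on $\cO 1$ shells; this yields $\acommtilde=\cO{(y_0+\Gamma\ell)^{d-1}\ell}$ and hence $\epsilon_2=\cO{(y_0+\Gamma\ell)^{d-1}\ell\,t^{p+1}}$. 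Third, the exact cancellation: $H_\gamma$ commutes with $B$ for every $\gamma\geq 2$, and the permutations $\pi_\upsilon$ place the even- and odd-indexed blocks in a fixed nested order, so with the choice $\Gamma=\Upsilon+1$ every exponential $e^{-ita_{(\upsilon,\gamma)}H_{\pi_\upsilon(\gamma)}}$ of a block of index $\geq 2$ lying in the outer part of a stage of $\Strunc(t)$ can be commuted through $B$ and annihilated against its conjugate in $\Strunc^\dag(t)$; what survives is exactly $\Sreduce^\dag(t)B\Sreduce(t)$ with $\Sreduce$ as in \cref{eq:double_tilde_S}.

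Combining these with the elementary bound $\norm{U^\dag BU-V^\dag BV}\leq 2\norm B\,\norm{U-V}$ for unitary $U,V$ gives the one-step estimate $\norm{e^{itH}Be^{-itH}-\Sreduce^\dag(t)B\Sreduce(t)}\leq 2\norm B(\epsilon_1+\epsilon_2)$. I would then run this over $r$ steps with $t\mapsto t/r$: at step $s$ the relevant observable is the image $B_s$ of $A$ under $s-1$ reduced-formula steps, which has support radius at most $x_0+(s-1)\Gamma\ell\leq x_0+r\Gamma\ell$ and $\norm{B_s}=\norm A$ because conjugation by a unitary preserves the spectral norm. A telescoping triangle inequality then gives $\norm{e^{itH}Ae^{-itH}-(\Sreduce^r(t/r))^\dag A\,\Sreduce^r(t/r)}=\cO{t(x_0+r\Gamma\ell)^{d-1}(\ell^{-(\alpha-d-1)}+\ell\,t^p/r^p)}$, using $r\,\epsilon_1(t/r)=\cO{(x_0+r\Gamma\ell)^{d-1}t\,\ell^{-(\alpha-d-1)}}$ and $r\,\epsilon_2(t/r)=\cO{(x_0+r\Gamma\ell)^{d-1}\ell\,t^{p+1}/r^p}$.

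Finally, I would run the whole construction in reverse within the light cone: apply the same shell decomposition and reduced product formula to $\Hlc$, defined as the restriction of $H$ to bonds within distance $r\Gamma\ell$ of $A$. This produces $\Slc(t)$ for which $\Slc^\dag(t)A\,\Slc(t)$ coincides with $(\Sreduce^r(t/r))^\dag A\,\Sreduce^r(t/r)$ (the exponentials cancelled in both lie outside the light cone), while $\Slc(t)$ approximates $e^{-it\Hlc}$ with an error of the same form $\cO{r(\epsilon_1+\epsilon_2)}$ by the single-step analysis above. One last triangle inequality yields \cref{eq:light_cone_ham}, the support radius of $\Hlc$ being $x_0+r\Gamma\ell$ by construction.

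The main obstacle is the one-step bookkeeping: verifying that, under the prescribed permutations and the choice $\Gamma=\Upsilon+1$, the exponentials of the blocks $H_\gamma$ with $\gamma\geq 2$ cancel \emph{exactly} against their conjugates in $\Strunc^\dag(t)B\Strunc(t)$, leaving the clean expression \cref{eq:double_tilde_S} with no residual error. Somewhat less delicate but still careful are the two geometric estimates — the shell-and-tail sum producing the surface factor $(y_0+\Gamma\ell)^{d-1}$ in $\norm{H-\Htrunc}$, and the connected-chain count producing $\acommtilde=\cO{(y_0+\Gamma\ell)^{d-1}\ell}$ — which are precisely what make the final bound independent of the system size $n$. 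Everything else is triangle inequalities together with \cref{cor:time_ordered_distance_bound,thm:trotter_error_comm_scaling}; the detailed computations are carried out in \append{local-obs}.
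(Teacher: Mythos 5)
Your proposal is correct and follows essentially the same route as the paper: the same shell decomposition and truncation estimate, the same use of \thm{trotter_error_comm_scaling} with the shell-volume bound on $\acommtilde$, the same exact cancellation yielding $\Sreduce$, iteration over $r$ steps with the support growing by $\Gamma\ell$ per step, and the reversal within the light cone to define $\Hlc$ and $\Slc$. The details you defer (the cancellation bookkeeping under the prescribed permutations and the two geometric estimates) are exactly those worked out in \append{local-obs}.
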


\medskip
\noindent\textbf{Gate complexity of simulating local observables.}
We now analyze the gate complexity of simulating local observables using the decomposition in \prop{pf-local-obs}. Assuming the support $\supp(A)$ has constant radius $\distance_0 = \cO{1}$ and $\Gamma = \cO{1}$, we simplify the error bound in \eq{light_cone_ham} to
\begin{align}
\norm{e^{itH} Ae^{-itH}-e^{it\Hlc} A e^{-it\Hlc}}
=\cO{t(r\ell)^{d-1}\left(\frac{1}{\ell^{\alpha-d-1}}+\frac{\ell t^p}{r^{p}}\right)}.
\end{align}
To minimize the error, we choose the cutoff $\ell = \Th{\left(\frac{r}{t}\right)^{\frac{p}{\alpha-d}}}$, which is larger than $1$ provided $r\geq t$ (and recall that we assume $\alpha>2d$, so in particular, $\alpha > d$). With this choice of $\ell$, the error becomes
\begin{align}
\cO{tr^{d-1}\left(\frac{ t}{r}\right)^{\frac{p(\alpha-2d)}{\alpha-d}}}
= \cO{\frac{t^{\frac{p(\alpha-2d)+\alpha-d}{\alpha-d}}}{r^{\frac{p(\alpha-2d)-(\alpha-d)(d-1)}{\alpha-d}}}}.
\end{align}
We then choose an appropriate Trotter number $r$ as detailed in \append{local-obs} and find that
\begin{align}
g_\alpha
={
	t^{\left(1+d\frac{\alpha -d}{\alpha -2 d}\right) \left(1+\frac{d}{\alpha -d}\right)+o(1)}
}
\end{align}
gates suffice to simulate a local observable with constant accuracy.
The gate count is independent of the system size and thus less than the cost of simulating the full dynamics \eq{power-law-pf} when the system size is $n=\Om{t^{d(\alpha-d)/(\alpha-2d)}}$.
However, in contrast to the simulation of $e^{-itH}$ where the asymptotic error scaling is robust against the reordering of Hamiltonian terms, we obtain a smaller error for simulating $\scA(t)$ by defining product formulas with a special ordering that preserves the locality of the simulated system.

Additionally, in the limit $\alpha\rightarrow\infty$ which corresponds to nearest-neighbor interactions, we have the gate count
\begin{align}
g_\infty
=
{t^{d+1+o(1)}}.
\end{align}
This has a clear physical intuition:
it is (nearly) proportional to the space-time volume $t^{d+1}$ inside a linear light cone generated by the evolution.

\medskip
\noindent\textbf{Lieb-Robinson-type bound for power-law Hamiltonians.}
The Lieb-Robinson bounds---first derived for nearest-neighbor interactions~\cite{LR72} and subsequently generalized to power-law systems~\cite{HK06,NOS06,NS06,GFMG14,FGCG15,SWK15,Tran18,CL19}---have found numerous applications in physics, including designing new algorithms for quantum simulations~\cite{HHKL18,Tran18}.
They bound the speed at which a local disturbance spreads in quantum systems.
Here, we show that the decomposition of \prop{pf-local-obs} constructed using product formulas also implies a Lieb-Robinson-type bound for power-law Hamiltonians.

The subject of the Lieb-Robinson bounds is usually the commutator norm
\begin{align}
\mathcal C(t,\rho) = \norm{\commm{e^{itH} Ae^{-itH},B}},
\end{align}
where $A,B$ are two operators whose supports have distance
\begin{equation}
	\dist(\mathcal S(A),\mathcal S(B)) = \inf\limits_{\vec{i}\in \mathcal S(A),\vec{j}\in\mathcal S(B)}\norm{\vec{i}-\vec{j}}_2 = \rho
\end{equation}
and $e^{-itH}$ is the time evolution unitary generated by a power-law Hamiltonian $H$.
Our above discussion shows that $e^{itH} Ae^{-itH}$ is approximately $e^{it\Hlc} Ae^{-it\Hlc}$, which is supported on a ball of radius $\distance = \cO{r\ell} = \cO{r\left(r/t\right)^{\frac{p}{\alpha-d}}}$ centered on $\mathcal S(A)$.
By choosing $r = \Th{{\rho^{\frac{\alpha-d}{\alpha-d+p}}}{t^{\frac{p}{\alpha-d+p}}}}$ so that $\distance<\rho$, we make $e^{it\Hlc} Ae^{-it\Hlc}$ commute with $B$ and therefore $\mathcal C(t,\rho)$ is small. More precisely,
\begin{align}
\mathcal C(t,r)
=\cO{\frac{t^{\frac{p(\alpha-2d)+\alpha-d}{\alpha-d}}}{r^{\frac{p(\alpha-2d)-(\alpha-d)(d-1)}{\alpha-d}}}}
=\cO{\frac
	{t^{\frac{(p+1) (\alpha-d)}{\alpha -d+p}}}
	{\rho^{\frac{p(\alpha-2d)-(\alpha-d)(d-1)}{\alpha-d+p}}}}
={\frac
	{t^{\alpha-d+o(1)}}
	{\rho^{\alpha-2d+o(1)}}}.
\end{align}
Note that we have implicitly assumed that $\rho^{\frac{\alpha-d}{\alpha-d+p}}t^{\frac{p}{\alpha-d+p}}\geq 1$ so that we can choose $r\geq 1$.
The bound implies a light cone $t \gtrsim \rho^{\frac{\alpha-2d}{\alpha-d}+o(1)}$, which can be made arbitrarily close to the light cone $t \gtrsim \rho^{\frac{\alpha-2d}{\alpha-d}}$ of the recent bound in Ref.~\cite{Tran18} for all values of $d$.\footnote{More recent bounds~\cite{CL19,Kuwahara19} provide tighter light cones than in Tran et al.~\cite{Tran18} for $\alpha>2d+1$.
}

\subsection{Applications to quantum Monte Carlo simulation}
\label{sec:app_qmc}
We now apply our result to improving the performance of quantum Monte Carlo simulation. Here, the goal is to approximate certain properties of the Hamiltonian, such as the partition function, rather than simulating the full dynamics. We consider two specific systems: the transverse field Ising model of \cite{Bravyi15} and the ferromagnetic quantum spin systems of \cite{BG17}. For both simulations, the ideal evolution is decomposed using the second-order Suzuki formula and we show that such a decomposition can be made more efficient using our tightened analysis.

\medskip
\noindent\textbf{Transverse field Ising model.}
Consider the following $n$-qubit transverse field Ising model:
\begin{equation}
H=-A-B,\quad
A=\sum_{1\leq u<v\leq n}j_{u,v}Z_uZ_v,\quad
B=\sum_{1\leq u\leq n}h_uX_u.
\end{equation}
Here, $X_u$ and $Z_u$ are Pauli operators acting on the $u$th qubit, and $j_{u,v}\geq 0$ and $h_u\geq 0$ are nonnegative coefficients. Define $j:=\max\{j_{u,v},h_u\}$ to be the maximum norm of the interactions. Our goal is to approximate the partition function
\begin{equation}
\mathcal{Z}=\mathrm{Tr}\big(e^{-H}\big)
\end{equation}
up to a multiplicative error $0<\epsilon<1$.

Reference \cite{Bravyi15} solves this problem with an efficient classical algorithm. A key step in their algorithm is a decomposition of the evolution operator using the second-order Suzuki formula, so that
\begin{equation}
\mathcal{Z}'=\mathrm{Tr}\big(e^{\frac{1}{2r}A}e^{\frac{1}{r}B}e^{\frac{1}{2r}A}\big)^r
\lessapprox(1+\epsilon)\mathrm{Tr}\big(e^{-H}\big)=(1+\epsilon)\mathcal{Z}.
\end{equation}
However, their original analysis does not exploit the commutativity relation between $A$ and $B$, and can be improved by the techniques developed here.

Note that this is different from the usual setting of digital quantum simulation. Indeed, as the matrix exponentials in the product formula are no longer unitary, we will introduce an additional multiplicative factor when we apply \thm{trotter_error_comm_scaling}. Also, we need to estimate the multiplicative error as opposed to the additive error of the Trotter decomposition, which is addressed by the following lemma.

\begin{lemma}[Relative perturbation of eigenvalues {\cite[Theorem 2.1]{EI95} \cite[Theorem 5.4]{Ipsen98}}]
	\label{lem:relative}
	Let matrix $C$ be positive semidefinite and $D$ be nonsingular. Assume that the eigenvalues $\lambda_i(C)$ and $\lambda_i(D^\dagger CD)$ are ordered nonincreasingly. Then,
	\begin{equation}
	\lambda_i(D^\dagger CD)\leq\lambda_i(C)\norm{D^\dagger D}.
	\end{equation}
\end{lemma}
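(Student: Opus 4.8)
The plan is to combine the Courant--Fischer min-max characterization of eigenvalues with the change of variables $y = Dx$. Since $C$ is positive semidefinite, so is $D^\dagger C D$, hence all the eigenvalues under consideration are real and nonnegative; order them nonincreasingly as in the statement and write $n$ for the dimension. The min-over-subspaces form of Courant--Fischer gives
\begin{equation}
\lambda_i(D^\dagger C D) = \min_{\dim V = n-i+1}\ \max_{0\neq x\in V} \frac{(Dx)^\dagger C (Dx)}{x^\dagger x},
\end{equation}
where $V$ ranges over all subspaces of the indicated dimension.

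First I would substitute $x = D^{-1}y$, which is legitimate because $D$ is nonsingular; then $x^\dagger x = y^\dagger (DD^\dagger)^{-1} y$, and as $x$ runs over a subspace $V$ of dimension $n-i+1$, the vector $y$ runs over $W := DV$, again a subspace of dimension $n-i+1$, and every such $W$ arises this way. Hence
\begin{equation}
\lambda_i(D^\dagger C D) = \min_{\dim W = n-i+1}\ \max_{0\neq y\in W} \frac{y^\dagger C y}{y^\dagger (DD^\dagger)^{-1} y}.
\end{equation}
Next I would bound the denominator from below: the eigenvalues of $DD^\dagger$ are the squared singular values of $D$, so the smallest eigenvalue of $(DD^\dagger)^{-1}$ is $1/\norm{DD^\dagger} = 1/\norm{D^\dagger D}$, whence $y^\dagger (DD^\dagger)^{-1} y \geq \norm{D^\dagger D}^{-1}\, y^\dagger y$. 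Since $C\succeq 0$ makes the numerator nonnegative, enlarging the denominator can only shrink the quotient, so $\frac{y^\dagger C y}{y^\dagger (DD^\dagger)^{-1} y} \leq \norm{D^\dagger D}\,\frac{y^\dagger C y}{y^\dagger y}$. Taking $\max$ over $y\in W$, then $\min$ over $W$, and applying Courant--Fischer once more to $C$ yields $\lambda_i(D^\dagger C D) \leq \norm{D^\dagger D}\,\lambda_i(C)$.

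The argument is essentially routine; the only points needing care are using the correct (min over subspaces of dimension $n-i+1$) form of Courant--Fischer so that the change of variables acts cleanly both on the quotient and on the optimization domain, and checking that $W \mapsto D^{-1}W$ is a dimension-preserving bijection on subspaces, so that the outer $\min$ really is over the same family before and after the substitution. Neither is a genuine obstacle, so I expect the complete proof to take only a few lines.
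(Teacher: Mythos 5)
Your argument is correct and complete: the min-over-subspaces form of Courant--Fischer is stated correctly, the substitution $y=Dx$ is a dimension-preserving bijection on subspaces because $D$ is nonsingular, the identity $x^\dagger x = y^\dagger(DD^\dagger)^{-1}y$ is right, and the bound $y^\dagger (DD^\dagger)^{-1} y \geq \norm{D^\dagger D}^{-1} y^\dagger y$ together with $y^\dagger C y \geq 0$ legitimately yields the claimed inequality. Note that the paper itself gives no proof of this lemma---it is imported from Eisenstat--Ipsen \cite{EI95} and Ipsen \cite{Ipsen98}---so there is nothing internal to compare against; your Courant--Fischer argument is the standard proof of this Ostrowski-type relative perturbation bound and would serve as a self-contained substitute for the citation.
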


Let $A$ and $B$ be Hermitian matrices and consider the evolution $e^{t(A+B)}$ with $t\geq 0$. Our goal is to choose $r$ sufficiently large so that the eigenvalues are approximated as
\begin{equation}
\lambda_i\Big(\big(e^{\frac{t}{2r}A}e^{\frac{t}{r}B}e^{\frac{t}{2r}A}\big)^r\Big)\approx \lambda_i\Big(e^{t(A+B)}\Big)
\end{equation}
up to a small multiplicative error. We define
\begin{equation}
\begin{aligned}
U&:=e^{\frac{t}{r}(A+B)},\\
V&:=e^{\frac{t}{2r}A}e^{\frac{t}{r}B}e^{\frac{t}{2r}A},\\
W&:=\exp_{\mathcal{T}}\bigg(\int_{0}^{\frac{t}{r}}\mathrm{d}\tau\ e^{-\tau (A+B)}\bigg[e^{\frac{\tau}{2}A}Be^{-\frac{\tau}{2}A}-B
+e^{\frac{\tau}{2}A}e^{\tau B}\frac{A}{2}e^{-\tau B}e^{-\frac{\tau}{2}A}-\frac{A}{2}\bigg]e^{\tau (A+B)}\bigg).
\end{aligned}
\end{equation}
Then, both $U$ and $V$ are positive-semidefinite operators and we know from \thm{error_type} that $V=UW$. In \append{qmc}, we show that
\begin{equation}
\norm{W}\leq
\exp\bigg(\bigg(\frac{t^3}{8r^3}\norm{\big[A,\big[A,B\big]\big]}+\frac{t^3}{12r^3}\norm{\big[B,\big[B,A\big]\big]}\bigg)e^{4\frac{t}{r}(\norm{A}+\norm{B})}\bigg).
\end{equation}

Our goal is to bound the eigenvalues $\lambda_i\big(V^r\big)$ in terms of $\lambda_i\big(U^r\big)$. This can be done recursively as follows. We first replace the rightmost $V$ by $UW$ and the leftmost $V$ by $W^\dagger U$. Invoking \lem{relative}, we have
\begin{equation}
\lambda_i\big(V^r\big)
=\lambda_i\big(W^\dagger UV^{r-2}UW\big)
\leq\lambda_i\big(UV^{r-2}U\big)\norm{W}^2.
\end{equation}
By \cite[Theorem 1.3.22]{horn2012matrix},
\begin{equation}
\lambda_i\big(UV^{r-2}U\big)
=\lambda_i\big(V^{\frac{r}{2}-1}UUV^{\frac{r}{2}-1}\big).
\end{equation}
We now apply a similar procedure to obtain
\begin{equation}
\begin{aligned}
\lambda_i\big(V^{\frac{r}{2}-1}UUV^{\frac{r}{2}-1}\big)
&=\lambda_i\big(W^\dagger UV^{\frac{r}{2}-2}UUV^{\frac{r}{2}-2}UW\big)\\
&\leq\lambda_i\big(UV^{\frac{r}{2}-2}UUV^{\frac{r}{2}-2}U\big)\norm{W}^2\\
&=\lambda_i\big(V^{\frac{r}{4}-1}UUV^{\frac{r}{2}-2}UUV^{\frac{r}{4}-1}\big)\norm{W}^2\\
&\leq\lambda_i\big(UV^{\frac{r}{4}-2}UUV^{\frac{r}{2}-2}UUV^{\frac{r}{4}-2}U\big)\norm{W}^4\\
&=\lambda_i\big(V^{\frac{r}{4}-1}UUV^{\frac{r}{4}-2}UUV^{\frac{r}{4}-2}UUV^{\frac{r}{4}-1}\big)\norm{W}^4\\
&\leq\lambda_i\big(UV^{\frac{r}{4}-2}UUV^{\frac{r}{4}-2}UUV^{\frac{r}{4}-2}UUV^{\frac{r}{4}-2}U\big)\norm{W}^6.
\end{aligned}
\end{equation}
To ensure that this recursion is valid, we choose $r$ to be a power of $2$. Since any positive integer is between $2^m$ and $2^{m+1}$ for some $m\geq 0$, this choice only enlarges $r$ by a factor of at most $2$. Overall,
\begin{equation}
\lambda_i\big(V^r\big)
\leq\lambda_i\big(U^r\big)\norm{W}^r.
\end{equation}

We know that
\begin{equation}
\norm{W}^r\leq
\exp\bigg(\bigg(\frac{t^3}{8r^2}\norm{\big[A,\big[A,B\big]\big]}+\frac{t^3}{12r^2}\norm{\big[B,\big[B,A\big]\big]}\bigg)e^{4\frac{t}{r}(\norm{A}+\norm{B})}\bigg).
\end{equation}
We first choose
\begin{equation}
r\geq 4t\big(\norm{A}+\norm{B}\big)
\end{equation}
so that $e^{4\frac{t}{r}(\norm{A}+\norm{B})}\leq e<4$. We then set
\begin{equation}
r\geq\max\bigg\{\sqrt{\frac{t^3}{\epsilon}\norm{\big[A,\big[A,B\big]\big]}},\sqrt{\frac{2t^3}{3\epsilon}\norm{\big[B,\big[B,A\big]\big]}}\bigg\}
\end{equation}
so that both $\frac{t^3}{8r^2}\norm{\big[A,\big[A,B\big]\big]}$ and $\frac{t^3}{12r^2}\norm{\big[B,\big[B,A\big]\big]}$ are bounded by $\epsilon/8$. Therefore, we have $\norm{W}^r\leq e^\epsilon$ as long as $r$ is a power of $2$ satisfying
\begin{equation}
r\geq\max\bigg\{4t\big(\norm{A}+\norm{B}\big),\sqrt{\frac{t^3}{\epsilon}\norm{\big[A,\big[A,B\big]\big]}},\sqrt{\frac{2t^3}{3\epsilon}\norm{\big[B,\big[B,A\big]\big]}}\bigg\},
\end{equation}
which implies
\begin{equation}
\mathcal{Z}'
=\sum_{i}\lambda_i\big(V^r\big)
\leq\sum_{i}\lambda_i\big(U^r\big)e^\epsilon
\approx(1+\epsilon)\sum_{i}\lambda_i\big(U^r\big)
=(1+\epsilon)\mathcal{Z}
\end{equation}
assuming $\epsilon\ll 1$.

Following similar arguments, we can show that this choice of $r$ also gives a lower bound of $\mathcal{Z}'$ with $\mathcal{Z}'\gtrapprox(1-\epsilon)\mathcal{Z}$. Indeed, we can bound the eigenvalues $\lambda_i\big(U^r\big)$ in terms of $\lambda_i\big(V^r\big)$ using \lem{relative} and the relation $U=VW^{-1}$. Using \lem{time_ordered_norm_bound} and the fact that $W^{-1}$ is the reversal of the time-ordered exponential $W$, we have $\norm{W^{-1}}^r\leq e^\epsilon$ as well, giving $\sum_{i}\lambda_i\big(U^r\big)
\leq\sum_{i}\lambda_i\big(V^r\big)e^\epsilon$. The lower bound now follows since $1/e^{\epsilon}\approx 1/(1+\epsilon)\approx 1-\epsilon$ for $\epsilon\ll 1$. We have therefore approximated the partition function up to a multiplicative error $\epsilon$.

We now specialize our result to the transverse field Ising Hamiltonian with $t=1$. We find that
\begin{equation}
\norm{A}=\cO{n^2 j},\quad
\norm{B}=\cO{nj},\quad
\norm{\big[A,\big[A,B\big]\big]}=\cO{n^3j^3},\quad
\norm{\big[B,\big[B,A\big]\big]}=\cO{n^2j^3},
\end{equation}
which implies
\begin{equation}
r=\cO{n^2j+n^{3/2}j^{3/2}\epsilon^{-1/2}}.
\end{equation}
By \cite[p.\ 17]{Bravyi15}, this gives a fully polynomial randomized approximation scheme (FPRAS) with running time
\begin{equation}
\tildecO{n^{17}r^{14}\epsilon^{-2}}
=\tildecO{n^{45}j^{14}\epsilon^{-2}+n^{38}j^{21}\epsilon^{-9}},
\end{equation}
improving over the previous complexity of
\begin{equation}
\tildecO{n^{59}j^{21}\epsilon^{-9}}.
\end{equation}

\medskip
\noindent\textbf{Quantum ferromagnets.}
We now apply our technique to improve the Monte Carlo simulation of ferromagnetic quantum spin systems \cite{BG17}. Such systems are described by the $n$-qubit Hamiltonian
\begin{equation}
H=\sum_{1\leq u<v\leq n}\big({-}b_{uv}X_uX_v+c_{uv}Y_uY_v\big)+\sum_{u=1}^{n}d_u\big(I+Z_u\big),
\end{equation}
where $0\leq b_{uv}\leq 1$, $-b_{uv}\leq c_{uv}\leq b_{uv}$, and $-1\leq d_{uv}\leq 1$. It will be convenient to rewrite these Hamiltonians using the coefficients $p_{uv}=(b_{uv}-c_{uv})/2$ and $q_{uv}=(b_{uv}+c_{uv})/2$ as
\begin{equation}
H=\sum_{1\leq u<v\leq n}p_{uv}\big({-}X_uX_v-Y_uY_v\big)+\sum_{1\leq u<v\leq n}q_{uv}\big({-}X_uX_v+Y_uY_v\big)+\sum_{u=1}^{n}d_u\big(I+Z_u\big).
\end{equation}
Since $|c_{uv}|\leq b_{uv}\leq 1$, we have $p_{uv},q_{uv}\in[0,1]$.

Our goal is to approximate the partition function
\begin{equation}
\mathcal{Z}(\beta,H)=\mathrm{Tr}\big[e^{-\beta H}\big]
\end{equation}
for $\beta>0$. Following the setting of \cite{BG17}, we restrict ourselves to the $n$-qubit matchgate set
\begin{equation}
\label{eq:gate_set}
\bigg\{f_{u}\big(e^{\pm t}\big),g_{uv}(t),h_{uv}(t)\ \bigg|\ u,v=1,\ldots,n,\ u\neq v,\ 0<t<\frac{1}{2}\bigg\},
\end{equation}
where
\begin{equation}
f\big(e^{\pm t}\big)=
\begin{bmatrix}
e^{\pm t} & 0\\
0 & 1
\end{bmatrix}
,\qquad g(t)=
\begin{bmatrix}
1+t^2 & 0 & 0 & t\\
0 & 1 & 0 & 0\\
0 & 0 & 1 & 0\\
t & 0 & 0 & 1
\end{bmatrix}
,\qquad h(t)=
\begin{bmatrix}
1 & 0 & 0 & 0\\
0 & 1+t^2 & t & 0\\
0 & t & 1 & 0\\
0 & 0 & 0 & 1
\end{bmatrix}
\end{equation}
and the subscripts $u,v$ indicate the qubits on which the gates act nontrivially. The motivations for using these gates can be found in \cite{BG17} which we do not repeat here. These gates approximately implement the exponential of the Hamiltonian terms in the sense that
\begin{equation}
f_u\big(e^{\pm t}\big)=e^{\pm\frac{t}{2}(I+Z_u)},\qquad
g_{uv}(t)=e^{-\frac{t}{2}(-X_uX_v+Y_uY_v)+\OO{t^2}},\qquad
h_{uv}(t)=e^{-\frac{t}{2}(-X_uX_v-Y_uY_v)+\OO{t^2}}.
\end{equation}

We divide the evolution into $r$ steps and apply the second-order Suzuki formula within each step, obtaining
\begin{equation}
\begin{aligned}
e^{-\frac{\beta}{r}H}
&\approx\prod_{1\leq u\leq n}e^{-\frac{\beta}{2r}d_u(I+Z_u)}\prod_{1\leq u<v\leq n}e^{-\frac{\beta}{2r}q_{uv}({-}X_uX_v+Y_uY_v)}\prod_{1\leq u<v\leq n}e^{-\frac{\beta}{2r}p_{uv}({-}X_uX_v-Y_uY_v)}\\
&\quad\cdot\prod_{1\leq u<v\leq n}e^{-\frac{\beta}{2r}p_{uv}({-}X_uX_v-Y_uY_v)}\prod_{1\leq u<v\leq n}e^{-\frac{\beta}{2r}q_{uv}({-}X_uX_v+Y_uY_v)}\prod_{1\leq u\leq n}e^{-\frac{\beta}{2r}d_u(I+Z_u)}\\
&\approx\prod_{1\leq u\leq n}f_u\big(e^{-\frac{\beta}{r}d_u}\big)\prod_{1\leq u<v\leq n}g_{uv}\bigg(\frac{\beta}{r}q_{uv}\bigg)\prod_{1\leq u<v\leq n}h_{uv}\bigg(\frac{\beta}{r}p_{uv}\bigg)\\
&\quad\cdot\prod_{1\leq u<v\leq n}h_{uv}\bigg(\frac{\beta}{r}p_{uv}\bigg)\prod_{1\leq u<v\leq n}g_{uv}\bigg(\frac{\beta}{r}q_{uv}\bigg)\prod_{1\leq u\leq n}f_u\big(e^{-\frac{\beta}{r}d_u}\big).
\end{aligned}
\end{equation}
Here, we have two sources of error: the Trotter error and the error from using the gate set \eq{gate_set}. We choose
\begin{equation}
\label{eq:r_cond1}
r> 2\beta,
\end{equation}
so that we can implement the product formula using gates from \eq{gate_set} with parameters
\begin{equation}
-\frac{1}{2}<-\frac{\beta}{r}d_u<\frac{1}{2},\quad
0<\frac{\beta}{r}q_{uv}<\frac{1}{2},\quad
0<\frac{\beta}{r}p_{uv}<\frac{1}{2}.
\end{equation}

In \append{qmc}, we use the interaction picture (\lem{interaction_picture}) to show that
\begin{equation}
\begin{aligned}
&\prod_{1\leq u\leq n}f_u\big(e^{-\frac{\beta}{r}d_u}\big)\prod_{1\leq u<v\leq n}g_{uv}\bigg(\frac{\beta}{r}q_{uv}\bigg)\prod_{1\leq u<v\leq n}h_{uv}\bigg(\frac{\beta}{r}p_{uv}\bigg)\\
&\quad\cdot\prod_{1\leq u<v\leq n}h_{uv}\bigg(\frac{\beta}{r}p_{uv}\bigg)\prod_{1\leq u<v\leq n}g_{uv}\bigg(\frac{\beta}{r}q_{uv}\bigg)\prod_{1\leq u\leq n}f_u\big(e^{-\frac{\beta}{r}d_u}\big)\\
=&\ e^{-\frac{\beta}{r}H} U,
\end{aligned}
\end{equation}
where the operator $U$ has spectral norm bounded by
\begin{equation}
\norm{U}\leq
\exp\bigg(\frac{2n^2\beta^2}{r^2}e^{\frac{4n^2\beta}{r}}
+\frac{cn^4\beta^3}{r^3}e^{\frac{12n^2\beta}{r}}\bigg)
\end{equation}
for some constant $c>0$. The remaining analysis proceeds in a similar way as that of the transverse field Ising model. We find that each eigenvalue of
\begin{equation}
\begin{aligned}
\bigg[&\prod_{1\leq u\leq n}f_u\big(e^{-\frac{\beta}{r}d_u}\big)\prod_{1\leq u<v\leq n}g_{uv}\bigg(\frac{\beta}{r}q_{uv}\bigg)\prod_{1\leq u<v\leq n}h_{uv}\bigg(\frac{\beta}{r}p_{uv}\bigg)\\
&\quad\cdot\prod_{1\leq u<v\leq n}h_{uv}\bigg(\frac{\beta}{r}p_{uv}\bigg)\prod_{1\leq u<v\leq n}g_{uv}\bigg(\frac{\beta}{r}q_{uv}\bigg)\prod_{1\leq u\leq n}f_u\big(e^{-\frac{\beta}{r}d_u}\big)\bigg]^r
\end{aligned}
\end{equation}
approximates the corresponding eigenvalue of the ideal evolution $e^{-\beta H}$ with a multiplicative factor
\begin{equation}
\norm{U}^r\leq
\exp\bigg(\frac{2n^2\beta^2}{r}e^{\frac{4n^2\beta}{r}}
+\frac{cn^4\beta^3}{r^2}e^{\frac{12n^2\beta}{r}}\bigg).
\end{equation}
We first set
\begin{equation}
\label{eq:r_cond2}
r\geq 24n^2\beta,
\end{equation}
so that
\begin{equation}
\norm{U}^r\leq
\exp\bigg(\frac{4n^2\beta^2}{r}
+\frac{2cn^4\beta^3}{r^2}\bigg).
\end{equation}
We then choose
\begin{equation}
\label{eq:r_cond34}
r\geq\max\bigg\{\frac{8n^2\beta^2}{\epsilon},\frac{2\sqrt{c}n^2\beta^{3/2}}{\epsilon^{1/2}}\bigg\}
\end{equation}
to ensure that the multiplicative error is at most $\epsilon$. By \eq{r_cond1}, \eq{r_cond2}, and \eq{r_cond34},
\begin{equation}
r=\cO{\frac{n^2\ceil{\beta}^2}{\epsilon}},
\end{equation}
which gives the total gate complexity \cite[Supplementary p.\ 7]{BG17}
\begin{equation}
j:=4n^2r=\cO{\frac{(1+\beta^2)n^4}{\epsilon}}.
\end{equation}

The result of \cite[Theorem 2]{BG17} gave a Monte Carlo simulation algorithm for the ferromagnetic quantum spin systems. To improve that result, we also need to estimate the error of partial sequence of the product formula as in \cite[Eq.\ (13)]{BG17}. This can be done in a similar way as our above analysis. The resulting randomized approximation scheme has runtime
\begin{equation}
\tildecO{\frac{j^{23}}{\epsilon^2}}=\tildecO{\frac{n^{92}(1+\beta^{46})}{\epsilon^{25}}},
\end{equation}
which improves the runtime of the original Bravyi-Gosset algorithm
\begin{equation}
\tildecO{\frac{n^{115}(1+\beta^{46})}{\epsilon^{25}}}.
\end{equation}

\section{Error bounds with small prefactors}
\label{sec:prefactor}

We now derive Trotter error bounds with small prefactors. These bounds complement the above asymptotic analysis and can be used to optimize near-term implementations of quantum simulation. In \sec{prefactor_pf12}, we show that our analysis reproduces previous tight error bounds for the first- and second-order formulas~\cite{Huyghebaert_1990,DT10,Jahnke2000}. We then give numerical evidence in \sec{prefactor_pf2k} showing that our higher-order bounds are close to tight for certain nearest-neighbor interactions and power-law Hamiltonians. Throughout this section, we let $H$ be Hermitian, $t\in\R$, and we decompose the real-time evolution $e^{-itH}$.

\subsection{First- and second-order error bounds}
\label{sec:prefactor_pf12}
We derive error bounds for the first-order Lie-Trotter formula and second-order Suzuki formula following the idea of \cite{Suzuki85,Kivlichan19}. In this approach, we first analyze the Trotter error of decomposing the evolution of a two-term Hamiltonian. We then bootstrap the result to analyze general Hamiltonians with an arbitrary number of operator summands. The resulting bounds are nearly tight because they match the lowest-order term of the BCH expansion up to an application of the triangle inequality \cite{Huyghebaert_1990,Suzuki85,DT10,WBCHT14,Kivlichan19}.

Let $H=A+B$ be a two-term Hamiltonian. The evolution under $H$ for time $t\geq 0$ is given by $e^{-itH}=e^{-it(A+B)}$, which we decompose using the first-order Lie-Trotter formula $\mathscr{S}_1(t)=e^{-itB}e^{-itA}$. We first construct the differential equation
\begin{equation}
	\frac{\mathrm{d}}{\mathrm{d}t}\mathscr{S}_1(t)=-iH\mathscr{S}_1(t)+e^{-itB}\big(e^{itB}iAe^{-itB}-iA\big)e^{-itA}
\end{equation}
with initial condition $\mathscr{S}_1(0)=I$. Using the variation-of-parameters formula (\lem{td_Duhamel}),
\begin{equation}
	\mathscr{S}_1(t)=e^{-itH}+\int_{0}^{t}\mathrm{d}\tau_1\ e^{-i(t-\tau_1)H}e^{-i\tau_1B}\big(e^{i\tau_1B}iAe^{-i\tau_1B}-iA\big)e^{-i\tau_1A}.
\end{equation}
Using \thm{error_order_cond} or by direct calculation, we find the order condition $e^{i\tau_1B}iAe^{-i\tau_1B}-iA=\OO{\tau_1}$, which implies
\begin{equation}
	e^{i\tau_1B}iAe^{-i\tau_1B}-iA=\int_{0}^{\tau_1}\mathrm{d}\tau_2\ e^{i\tau_2B}\big[iB,iA\big]e^{-i\tau_2B}.
\end{equation}
Altogether, we have the representation
\begin{equation}
	\mathscr{S}_1(t)=e^{-itH}+\int_{0}^{t}\mathrm{d}\tau_1\int_{0}^{\tau_1}\mathrm{d}\tau_2\
	e^{-i(t-\tau_1)H}e^{-i\tau_1B}e^{i\tau_2B}\big[iB,iA\big]e^{-i\tau_2B}e^{-i\tau_1A}
\end{equation}
and the error bound for $t\geq 0$
\begin{equation}
	\norm{\mathscr{S}_1(t)-e^{-itH}}\leq\frac{t^2}{2}\norm{[B,A]}.
\end{equation}
We bootstrap this bound to analyze a general Hamiltonian $H=\sum_{\gamma}^{\Gamma}H_\gamma$. By the triangle inequality,
\begin{equation}
\begin{aligned}
	\norm{\prod_{\gamma=1}^{\Gamma}e^{-itH_\gamma}-e^{-it\sum_{\gamma=1}^{\Gamma}H_\gamma}}
	&\leq\sum_{\gamma_1=1}^{\Gamma}
	\norm{e^{-it\sum_{\gamma_2=\gamma_1+1}^{\Gamma}H_{\gamma_2}}\prod_{\gamma_2=1}^{\gamma_1}e^{-itH_{\gamma_2}}
		-e^{-it\sum_{\gamma_2=\gamma_1}^{\Gamma}H_{\gamma_2}}\prod_{\gamma_2=1}^{\gamma_1-1}e^{-itH_{\gamma_2}}}\\
	&=\sum_{\gamma_1=1}^{\Gamma}
	\norm{e^{-it\sum_{\gamma_2=\gamma_1+1}^{\Gamma}H_{\gamma_2}}e^{-itH_{\gamma_1}}
		-e^{-it\sum_{\gamma_2=\gamma_1}^{\Gamma}H_{\gamma_2}}}\\
	&\leq\frac{t^2}{2}\sum_{\gamma_1=1}^{\Gamma}\norm{\bigg[\sum_{\gamma_2=\gamma_1+1}^{\Gamma}H_{\gamma_2},H_{\gamma_1}\bigg]}.
\end{aligned}
\end{equation}
We have thus obtained:
\begin{proposition}[Tight error bound for the first-order Lie-Trotter formula]
	Let $H=\sum_{\gamma=1}^{\Gamma}H_\gamma$ be a Hamiltonian consisting of $\Gamma$ summands and $t\geq 0$. Let $\mathscr{S}_1(t)=\prod_{\gamma=1}^{\Gamma}e^{-itH_\gamma}$ be the first-order Lie-Trotter formula. Then, the additive Trotter error can be bounded as
	\begin{equation}
		\norm{\mathscr{S}_1(t)-e^{-itH}}\leq\frac{t^2}{2}\sum_{\gamma_1=1}^{\Gamma}\norm{\bigg[\sum_{\gamma_2=\gamma_1+1}^{\Gamma}H_{\gamma_2},H_{\gamma_1}\bigg]}.
	\end{equation}
\end{proposition}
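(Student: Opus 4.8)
The plan is to first establish the bound for a two-term Hamiltonian $H = A+B$ and then bootstrap to arbitrary $\Gamma$ by a telescoping (hybrid) argument, mirroring the structure used earlier in this section for the asymptotic analysis.

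For the two-term case with $\mathscr{S}_1(t) = e^{-itB}e^{-itA}$, I would differentiate $\mathscr{S}_1$ to obtain the inhomogeneous linear differential equation $\frac{\mathrm{d}}{\mathrm{d}t}\mathscr{S}_1(t) = -iH\,\mathscr{S}_1(t) + e^{-itB}\big(e^{itB}iAe^{-itB} - iA\big)e^{-itA}$ with $\mathscr{S}_1(0)=I$, and then apply the variation-of-parameters formula (\lem{td_Duhamel}) to write $\mathscr{S}_1(t) - e^{-itH}$ as a single time integral of $e^{-i(t-\tau_1)H}$ against this residual term. The key step is next: the inner factor $e^{i\tau_1 B}iAe^{-i\tau_1 B} - iA$ vanishes at $\tau_1=0$, so by the fundamental theorem of calculus (equivalently, the order-condition machinery of \thm{error_order_cond}, which here degenerates to a one-line computation) it equals $\int_0^{\tau_1}\mathrm{d}\tau_2\; e^{i\tau_2 B}[iB,iA]e^{-i\tau_2 B}$. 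Substituting this yields a double-integral representation of $\mathscr{S}_1(t) - e^{-itH}$ whose integrand is the commutator $[iB,iA]$ conjugated by unitaries; taking spectral norms, using unitary invariance and $\norm{[iB,iA]} = \norm{[B,A]}$, and evaluating $\int_0^t\mathrm{d}\tau_1\int_0^{\tau_1}\mathrm{d}\tau_2 = t^2/2$ gives $\norm{\mathscr{S}_1(t) - e^{-itH}} \le \frac{t^2}{2}\norm{[B,A]}$.

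To pass to general $\Gamma$, I would insert a telescoping sum indexed by $\gamma_1 = 1,\dots,\Gamma$ that interpolates, one factor at a time, between the fully decomposed product $\prod_{\gamma}e^{-itH_\gamma}$ and the exact exponential $e^{-it\sum_\gamma H_\gamma}$. Each summand of the telescope has the form $e^{-it\sum_{\gamma_2>\gamma_1}H_{\gamma_2}}\,e^{-itH_{\gamma_1}}\big(\prod_{\gamma_2<\gamma_1}e^{-itH_{\gamma_2}}\big) - e^{-it\sum_{\gamma_2\ge\gamma_1}H_{\gamma_2}}\big(\prod_{\gamma_2<\gamma_1}e^{-itH_{\gamma_2}}\big)$; since $\prod_{\gamma_2<\gamma_1}e^{-itH_{\gamma_2}}$ is unitary it drops out of the norm, and what remains is exactly the two-term estimate above with $A = H_{\gamma_1}$ and $B = \sum_{\gamma_2=\gamma_1+1}^{\Gamma}H_{\gamma_2}$. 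Applying that estimate to every term and then the triangle inequality over $\gamma_1$ produces the claimed bound.

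I do not expect a serious obstacle. The only place needing genuine care is the index bookkeeping in the telescoping sum and confirming that the factored-out pieces really are unitary, which uses Hermiticity of the $H_\gamma$ (and is where $t\ge 0$ plays no essential role beyond making $t^2$ manifestly the right power). The substantive-looking order-condition step reduces here to the fundamental theorem of calculus, so once the two-term representation is in hand the argument is essentially routine; the only point worth flagging is that the resulting bound matches the lowest-order Baker-Campbell-Hausdorff term after a single use of the triangle inequality, which is the sense in which it is tight.
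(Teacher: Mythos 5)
Your proposal is correct and follows essentially the same route as the paper: the variation-of-parameters representation for the two-term case, the fundamental-theorem-of-calculus (order-condition) rewriting of $e^{i\tau_1 B}iAe^{-i\tau_1 B}-iA$ as an integral of a conjugated commutator, and the telescoping sum with unitary factors dropping out to reduce the general case to the two-term estimate with $A=H_{\gamma_1}$ and $B=\sum_{\gamma_2>\gamma_1}H_{\gamma_2}$. No gaps.
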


A generalization of this analysis gives an error bound for the second-order Suzuki formula with a small prefactor. For the two-term case, our goal is to decompose the evolution $e^{-itH}=e^{-it(A+B)}$ using the product formula $\mathscr{S}_2(t)=e^{-i\frac{t}{2}A}e^{-itB}e^{-i\frac{t}{2}A}$. Using the variation-of-parameters formula (\lem{td_Duhamel}), we have
\begin{equation}
\mathscr{S}_2(t)=e^{-itH}+
\int_{0}^{t}\mathrm{d}\tau_1\ e^{-i(t-\tau_1)H}
e^{-i\frac{\tau_1}{2}A}\mathscr{T}_2(\tau_1)e^{-i\tau_1B}e^{-i\frac{\tau_1}{2}A},
\end{equation}
where
\begin{equation}
	\mathscr{T}_2(\tau_1)=e^{-i\tau_1B}\bigg(-i\frac{A}{2}\bigg)e^{i\tau_1B}+i\frac{A}{2}
	+e^{i\frac{\tau_1}{2}A}\big(iB\big)e^{-i\frac{\tau_1}{2}A}-iB.
\end{equation}
By \thm{error_order_cond} or a direct calculation, we find the order condition $\mathscr{T}_2(\tau_1)=\OO{\tau_1^2}$, which implies
\begin{equation}
	\mathscr{T}_2(\tau_1)=\int_{0}^{\tau_1}\mathrm{d}\tau_2\int_{0}^{\tau_2}\mathrm{d}\tau_3
	\bigg(e^{-i\tau_3B}\bigg[-iB,\bigg[-iB,-i\frac{A}{2}\bigg]\bigg]e^{i\tau_3B}
	+e^{i\frac{\tau_3}{2}A}\bigg[i\frac{A}{2},\bigg[i\frac{A}{2},iB\bigg]\bigg]e^{-i\frac{\tau_3}{2}A}\bigg).
\end{equation}
Altogether, we have the representation
\begin{equation}
\begin{aligned}
	\mathscr{S}_2(t)&=e^{-itH}+
	\int_{0}^{t}\mathrm{d}\tau_1\int_{0}^{\tau_1}\mathrm{d}\tau_2\int_{0}^{\tau_2}\mathrm{d}\tau_3\
	e^{-i(t-\tau_1)H}e^{-i\frac{\tau_1}{2}A}\\
	&\qquad\qquad\cdot\bigg(e^{-i\tau_3B}\bigg[-iB,\bigg[-iB,-i\frac{A}{2}\bigg]\bigg]e^{i\tau_3B}
	+e^{i\frac{\tau_3}{2}A}\bigg[i\frac{A}{2},\bigg[i\frac{A}{2},iB\bigg]\bigg]e^{-i\frac{\tau_3}{2}A}\bigg)
	e^{-i\tau_1B}e^{-i\frac{\tau_1}{2}A},
\end{aligned}
\end{equation}
and the error bound for $t\geq 0$
\begin{equation}
	\norm{\mathscr{S}_2(t)-e^{-itH}}\leq
	\frac{t^3}{12}\norm{[B,[B,A]]}
	+\frac{t^3}{24}\norm{[A,[A,B]]}.
\end{equation}
For a general Hamiltonian $H=\sum_{\gamma=1}^{\Gamma}H_\gamma$, we apply the triangle inequality to get
\begin{equation}
\begin{aligned}
	\norm{\prod_{\gamma=\Gamma}^{1}e^{-i\frac{t}{2}H_{\gamma}}\prod_{\gamma=1}^{\Gamma}e^{-i\frac{t}{2}H_{\gamma}}-e^{-it\sum_{\gamma=1}^{\Gamma}H_\gamma}}
	\leq&\sum_{\gamma_1=1}^{\Gamma} \bigg\Vert\prod_{\gamma_2=\gamma_1}^{1}e^{-i\frac{t}{2}H_{\gamma_2}}
		\cdot e^{-it\sum_{\gamma_2=\gamma_1+1}^{\Gamma}H_{\gamma_2}}
		\cdot\prod_{\gamma_2=1}^{\gamma_1}e^{-i\frac{t}{2}H_{\gamma_2}}\\
		&\qquad-\prod_{\gamma_2=\gamma_1-1}^{1}e^{-i\frac{t}{2}H_{\gamma_2}}
		\cdot e^{-it\sum_{\gamma_2=\gamma_1}^{\Gamma}H_{\gamma_2}}
		\cdot\prod_{\gamma_2=1}^{\gamma_1-1}e^{-i\frac{t}{2}H_{\gamma_2}}\bigg\Vert\\
	=&\sum_{\gamma_1=1}^{\Gamma}\norm{e^{-i\frac{t}{2}H_{\gamma_1}}
		e^{-it\sum_{\gamma_2=\gamma_1+1}^{\Gamma}H_{\gamma_2}}
		e^{-i\frac{t}{2}H_{\gamma_1}}
		-e^{-it\sum_{\gamma_2=\gamma_1}^{\Gamma}H_{\gamma_2}}}\\
	\leq&\ \frac{t^3}{12}\sum_{\gamma_1=1}^{\Gamma}\norm{\bigg[\sum_{\gamma_3=\gamma_1+1}^{\Gamma}H_{\gamma_3},\bigg[\sum_{\gamma_2=\gamma_1+1}^{\Gamma}H_{\gamma_2},H_{\gamma_1}\bigg]\bigg]}\\
	&+\frac{t^3}{24}\sum_{\gamma_1=1}^{\Gamma}\norm{\bigg[H_{\gamma_1},\bigg[H_{\gamma_1},\sum_{\gamma_2=\gamma_1+1}^{\Gamma}H_{\gamma_2}\bigg]\bigg]}.
\end{aligned}
\end{equation}

\begin{proposition}[Tight error bound for the second-order Suzuki formula]
	Let $H=\sum_{\gamma=1}^{\Gamma}H_\gamma$ be a Hamiltonian consisting of $\Gamma$ summands and $t\geq 0$. Let $\mathscr{S}_2(t)=\prod_{\gamma=\Gamma}^{1}e^{-i\frac{t}{2}H_{\gamma}}\prod_{\gamma=1}^{\Gamma}e^{-i\frac{t}{2}H_{\gamma}}$ be the second-order Suzuki formula. Then, the additive Trotter error can be bounded as
	\begin{equation}
	\begin{aligned}
	\norm{\mathscr{S}_2(t)-e^{-itH}}\leq&\
	\frac{t^3}{12}\sum_{\gamma_1=1}^{\Gamma}\norm{\bigg[\sum_{\gamma_3=\gamma_1+1}^{\Gamma}H_{\gamma_3},\bigg[\sum_{\gamma_2=\gamma_1+1}^{\Gamma}H_{\gamma_2},H_{\gamma_1}\bigg]\bigg]}\\
	&+\frac{t^3}{24}\sum_{\gamma_1=1}^{\Gamma}\norm{\bigg[H_{\gamma_1},\bigg[H_{\gamma_1},\sum_{\gamma_2=\gamma_1+1}^{\Gamma}H_{\gamma_2}\bigg]\bigg]}.
	\end{aligned}
	\end{equation}
\end{proposition}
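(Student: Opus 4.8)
The statement follows from the same two-stage machinery used just above for the first-order Lie--Trotter formula: first obtain a tight bound in the two-summand case $H=A+B$, then bootstrap to general $\Gamma$ by a telescoping argument plus the triangle inequality. For the two-term case I would start from $\mathscr{S}_2(t)=e^{-i\frac t2 A}e^{-itB}e^{-i\frac t2 A}$ and differentiate in $t$ to obtain the inhomogeneous linear equation $\frac{\mathrm d}{\mathrm dt}\mathscr{S}_2(t)=-iH\mathscr{S}_2(t)+e^{-i\frac{t}{2}A}\mathscr{T}_2(t)e^{-itB}e^{-i\frac t2 A}$ with $\mathscr{S}_2(0)=I$, where $\mathscr{T}_2(\tau)=e^{-i\tau B}\bigl(-i\tfrac A2\bigr)e^{i\tau B}+i\tfrac A2+e^{i\frac\tau2 A}(iB)e^{-i\frac\tau2 A}-iB$. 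Feeding this into the variation-of-parameters formula (\lem{td_Duhamel}) gives the integral representation $\mathscr{S}_2(t)=e^{-itH}+\int_0^t\mathrm d\tau_1\,e^{-i(t-\tau_1)H}e^{-i\frac{\tau_1}{2}A}\mathscr{T}_2(\tau_1)e^{-i\tau_1 B}e^{-i\frac{\tau_1}{2}A}$.

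The key step is then to use the order condition. Because $\mathscr{S}_2$ is second-order accurate, \thm{error_order_cond} (with $\mathscr{S}=\mathscr{S}_2$, $p=2$) gives $\mathscr{T}_2(\tau)=\OO{\tau^2}$; equivalently $\mathscr{T}_2(0)=\mathscr{T}_2'(0)=0$, which one can also confirm by a short differentiation. Applying the fundamental theorem of calculus twice to each of the two conjugation terms in $\mathscr{T}_2$, the vanishing of the zeroth and first Taylor coefficients lets me rewrite $\mathscr{T}_2(\tau_1)$ as $\int_0^{\tau_1}\mathrm d\tau_2\int_0^{\tau_2}\mathrm d\tau_3\,\bigl(e^{-i\tau_3 B}[-iB,[-iB,-i\tfrac A2]]e^{i\tau_3 B}+e^{i\frac{\tau_3}{2}A}[i\tfrac A2,[i\tfrac A2,iB]]e^{-i\frac{\tau_3}{2}A}\bigr)$. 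Substituting back, using that $A,B$ are Hermitian so every matrix exponential has unit spectral norm, and evaluating $\int_0^t\int_0^{\tau_1}\int_0^{\tau_2}\mathrm d\tau_3\,\mathrm d\tau_2\,\mathrm d\tau_1=t^3/6$, the triangle inequality yields $\norm{\mathscr{S}_2(t)-e^{-itH}}\leq\frac{t^3}{12}\norm{[B,[B,A]]}+\frac{t^3}{24}\norm{[A,[A,B]]}$; the $\frac1{12}$ versus $\frac1{24}$ split is exactly because the factor $-i\tfrac A2$ appears inside the first double commutator once and inside the second twice.

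To pass to a general Hamiltonian $H=\sum_\gamma H_\gamma$ I would telescope $\mathscr{S}_2(t)-e^{-itH}$ along the hybrid unitaries $P_{\gamma_1}=\bigl(e^{-i\frac t2 H_1}\cdots e^{-i\frac t2 H_{\gamma_1}}\bigr)\,e^{-it\sum_{\gamma_2>\gamma_1}H_{\gamma_2}}\,\bigl(e^{-i\frac t2 H_{\gamma_1}}\cdots e^{-i\frac t2 H_1}\bigr)$ for $\gamma_1=0,1,\dots,\Gamma$, noting $P_\Gamma=\mathscr{S}_2(t)$ and $P_0=e^{-itH}$. In each consecutive difference the symmetric outer factors cancel (they are unitary, hence spectral-norm preserving), and $\norm{P_{\gamma_1}-P_{\gamma_1-1}}$ reduces to precisely the two-term second-order error with $A=H_{\gamma_1}$ and $B=\sum_{\gamma_2>\gamma_1}H_{\gamma_2}$. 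Summing the two-term bound over $\gamma_1$ via the triangle inequality gives the claimed inequality.

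The main obstacle is the second step: one must verify carefully that $\mathscr{T}_2$ genuinely satisfies the second-order condition (so that no constant or linear-in-$\tau$ remainder survives and the pure double-integral-of-nested-commutators form is legitimate) and track the factors of $\tfrac12$ exactly, so as to land on the prefactors $\tfrac1{12}$ and $\tfrac1{24}$ rather than $\tfrac16$ of each double commutator. The telescoping step is essentially bookkeeping once one observes that in $\mathscr{S}_2$ the exponentials of the $H_\gamma$ with $\gamma<\gamma_1$ already appear symmetrically wrapped around the block that is to be recognized as a two-term second-order formula.
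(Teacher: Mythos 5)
Your proposal is correct and follows essentially the same route as the paper: the same variation-of-parameters representation with the same $\mathscr{T}_2$, the same use of the second-order condition to rewrite $\mathscr{T}_2$ as a double integral of nested commutators (yielding the $t^3/12$ and $t^3/24$ prefactors), and the same telescoping/triangle-inequality bootstrap reducing each term to the two-summand case with $A=H_{\gamma_1}$, $B=\sum_{\gamma_2>\gamma_1}H_{\gamma_2}$. The hybrid unitaries $P_{\gamma_1}$ you introduce are just an explicit rewriting of the telescoping sum the paper writes directly inside the norm.
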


\subsection{Higher-order error bounds}
\label{sec:prefactor_pf2k}
We now consider error bounds for higher-order product formulas. Compared to the first- and second-order cases, these formulas are harder to analyze due to their more complicated definitions. Nevertheless, higher-order formulas have better asymptotic scaling and can be surprisingly efficient even for simulating small systems, as observed in \cite{CMNRS18}.

We have analyzed the error of higher-order product formulas in \sec{theory}. That analysis is sufficient to establish the commutator scaling in \thm{trotter_error_comm_scaling}, but the resulting bounds have large prefactors. Here, we propose heuristic strategies to tighten the analysis and numerically benchmark our bounds for nearest-neighbor and power-law Hamiltonians. Our heuristics are specified in \append{pf2k}.

Although we do not have a rigorous proof of the tightness of our higher-order bounds, numerical evidence suggests that they are close to tight for various systems. We first consider simulating a one-dimensional Heisenberg model with a random magnetic field
\begin{align}
H=\sum_{j=1}^{n-1} \big(X_jX_{j+1}+Y_jY_{j+1}+Z_jZ_{j+1} + h_j Z_j\big)
\label{eq:heisenberg}
\end{align}
with coefficients $h_j \in [-1,1]$ chosen uniformly at random, where $X_j$, $Y_j$, and $Z_j$ are Pauli operators acting on the $j$th qubit. This system can be simulated to understand the transition between the many-body localized phase and the thermalized phase in condensed matter physics, although a classical simulation is only feasible when the system size is small \cite{LLA15}.

We classify the summands of the Hamiltonian into two groups and set
\begin{equation}
\label{eq:even-odd}
\begin{aligned}
A&=\sum_{j=1}^{\floor{\frac{n}{2}}}\big(X_{2j-1}X_{2j}+Y_{2j-1}Y_{2j}+Z_{2j-1}Z_{2j} + h_{2j-1} Z_{2j-1}\big),\\
B&=\sum_{j=1}^{\ceil{\frac{n}{2}}-1}\big(X_{2j}X_{2j+1}+Y_{2j}Y_{2j+1}+Z_{2j}Z_{2j+1} + h_{2j} Z_{2j}\big).
\end{aligned}
\end{equation}
Here, all the summands in $A$ (and $B$) commute with each other, so we can further decompose exponentials like $e^{-ita_kA}$ (and $e^{-itb_kB}$) without introducing error, giving a product formula with summands ordered in an \textit{even-odd} pattern \cite{CS19}. We also consider grouping Hamiltonian summands as
\begin{equation}
\label{eq:x-y-z}
H_1=\sum_{j=1}^{n-1}X_{j}X_{j+1},\qquad
H_2=\sum_{j=1}^{n-1}Y_{j}Y_{j+1},\qquad
H_3=\sum_{j=1}^{n-1}\big(Z_{j}Z_{j+1} + h_{j} Z_{j}\big),
\end{equation}
which we call the \textit{X-Y-Z} ordering \cite{CMNRS18}. Similar to the even-odd ordering, the summands in $H_1$, $H_2$, and $H_3$ commute with each other respectively, so the corresponding exponentials can also be decomposed without error. Note that our asymptotic bounds in \thm{trotter_error_comm_scaling} and \cor{trotter_number_comm_scaling} hold irrespective of the ordering of Hamiltonian summands, but the prefactors will depend on the choice of ordering. Our choice here maximizes the commutativity of the Hamiltonian.

Up to a difference on the boundary condition, Ref.\ \cite{CMNRS18} estimates the resource requirements of simulating the Heisenberg model using various quantum algorithms. They find that product formulas, especially the fourth-order and the sixth-order formulas, can outperform more recent quantum algorithms for simulating small instances of \eq{heisenberg}, although their best Trotter error bound is loose by several orders of magnitude. This is alleviated in \cite{CS19}, which gives a fourth-order bound that overestimates the gate complexity by about a factor of $17$. For a fair comparison, we numerically implement our approach to analyze the fourth-order formula $\mathscr{S}_4(t)$ as well; see \append{pf2k} for detailed derivations.

For the even-odd ordering, we need to compute all the nested commutators of $A$ and $B$. We do this by fixing one term $X_{2j-1}X_{2j}+Y_{2j-1}Y_{2j}+Z_{2j-1}Z_{2j} + h_{2j-1} Z_{2j-1}$ of $A$ in the innermost layer and simplifying all the outer terms using geometrical locality. We then apply the triangle inequality to analyze the summation of terms over $j=1,\ldots,\floor{\frac{n}{2}}$. We use a similar approach to analyze the \textit{X-Y-Z} ordering. This computes our error bounds for small $t$. To simulate for a longer time, we divide the evolution into $r$ Trotter steps and apply our bounds within each step. We seek the smallest Trotter number $r$ for which the estimated error is at most some desired $\epsilon$. This can be efficiently computed using a binary search as described in~\cite{CMNRS18}.

We compare our improved analysis with the best previous bounds \cite{CMNRS18,CS19} for simulating the Heisenberg model \eq{heisenberg}. Specifically, we consider the so-called analytic bound \cite[Proposition F.4]{CMNRS18}, which applies to both the even-odd and the \textit{X-Y-Z} ordering. The commutator bound of \cite[Theorem F.11]{CMNRS18} offers a slight improvement over the analytic bound, but its numerical implementation requires extensive classical computations and so we only compare the existing result for the \textit{X-Y-Z} ordering. Likewise, we compare the locality-based bound of \cite[Supplementary Material IV B]{CS19} only for the even-odd case, although it can exploit the geometrical locality of the \textit{X-Y-Z} ordering as well.

To understand how tight our bounds are, we also include the empirical Trotter number by directly computing the error $\norm{\big(\mathscr{S}_4(t/r)\big)^r-e^{-itH}}$ for $n=4,\ldots,12$ and extrapolating the results to larger systems. We choose the evolution time $t=n$ and set the simulation accuracy $\epsilon=10^{-3}$ as in~\cite{CMNRS18} and \cite{CS19}. For each system size, we generate five instances of Hamiltonians with random coefficients. Our results are plotted in \fig{pfbound}.

\begin{figure}[t]
	\centering
	\begin{subfigure}{.5\linewidth}
		\resizebox{.95\textwidth}{!}{
			\begin{tikzpicture}
			\begin{axis}[
			log x ticks with fixed point,
			xtick={10,100},
			xmode=log,
			ymode=log,
			xmin = 3,
			xmax = 300,
			ymin=10^1,
			ymax=10^9,
			width=12cm,
			ymajorgrids=true,
			yminorgrids=true,
			legend style={at={(0.02,0.98)},anchor=north west},
			xlabel={$n$},
			ylabel={$r$}, ylabel near ticks,
			legend cell align={left},
			clip=false
			]

			\addlegendimage{empty legend}
			\addlegendentry[yshift=0pt]{\hspace{-.4cm}
			\textbf{Even-odd}}

			\addplot[only marks,mark=square*,color=red] coordinates {
				(10,136921)
				(16,442697)
				(23,1095749)
				(32,2500049)
				(45,5859047)
				(64,14125505)
				(91,34037275)
				(128,79838219)
				(181,189775803)
				(256,451362181)
			};
			\addlegendentry{Analytic \cite{CMNRS18}}

			\addplot[only marks,mark=square*,color=blue,error bars/.cd,y dir=both,y explicit] coordinates {
				(10,2197.20000000000) +- (0,14.6696966567138)
				(16,5031.60000000000) +- (0,42.7176778395081)
				(23,9050.60000000000) +- (0,60.7107898153203)
				(32,15224.6000000000) +- (0,55.3741817095296)
				(45,25827.8000000000) +- (0,135.564375851475)
				(64,44428.8000000000) +- (0,196.851974844044)
				(91,75915.6000000000) +- (0,274.783187258609)
				(128,127334.200000000) +- (0,203.733158813189)
				(181,215319.400000000) +- (0,249.596274010651)
				(256,362547) +- (0,874.553886275740)
			};
			\addlegendentry{Locality \cite{CS19}}

			\addplot[only marks,mark=square*,color=bndclr,error bars/.cd,y dir=both,y explicit] coordinates {
				(10,645.400000000000) +- (0,7.92464510246358)
				(16,1347.80000000000) +- (0,13.2928552237659)
				(23,2357) +- (0,23.6008474424119)
				(32,3897.40000000000) +- (0,23.9645571626100)
				(45,6553.80000000000) +- (0,42.3284774117851)
				(64,11198.2000000000) +- (0,45.7733546946256)
				(91,19072) +- (0,92.0679097188592)
				(128,31873.8000000000) +- (0,104.979521812590)
				(181,53936.2000000000) +- (0,111.333732534214)
				(256,90564.4000000000) +- (0,269.900537235479)
			};
			\addlegendentry{Our bound}

			\addplot[only marks,mark=square*,color=empclr,error bars/.cd,y dir=both,y explicit] coordinates {
				(4,24.8000000000000) +- (0,1.64316767251550)
				(5,40.4000000000000) +- (0,1.51657508881031)
				(6,54.2000000000000) +- (0,1.09544511501033)
				(7,72.6000000000000) +- (0,1.34164078649987)
				(8,86.4000000000000) +- (0,2.60768096208106)
				(9,109.600000000000) +- (0,2.60768096208106)
				(10,126.400000000000) +- (0,1.94935886896179)
				(11,147.800000000000) +- (0,1.30384048104053)
				(12,168.800000000000) +- (0,4.76445169982864)
			};
			\addlegendentry{Empirical}

			\addplot[
			color = black,
			mark = none
			]	coordinates {
				( 3, 6.758225815471977e+03 )
				( 300, 6.704381736480477e+08 )
			}
			node[right,pos=1.01] {$r=\cO{n^{2.50}}$};

			\addplot[
			color = black,
			mark = none
			]	coordinates {
				( 3, 3.650696255838341e+02 )
				( 300, 4.830357154306989e+05 )
			}
			node[right,pos=1.01] {$r=\cO{n^{1.56}}$};

			\addplot[
			color = black,
			mark = none
			]	coordinates {
				( 3, 1.052947651247897e+02 )
				( 300, 1.167366537578390e+05 )
			}
			node[right,pos=1.01] {$r=\cO{n^{1.52}}$};

			\addplot[
			color = black,
			mark = none
			]	coordinates {
				( 3, 17.614419211410862 )
				( 300, 3.343574191738355e+04 )
			}
			node[right,pos=1.01] {$r=\cO{n^{1.64}}$};

			\end{axis}
			\end{tikzpicture}
		}
	\end{subfigure}%
	~
	\begin{subfigure}{.5\linewidth}
		\resizebox{.95\textwidth}{!}{
			\begin{tikzpicture}
			\begin{axis}[
			log x ticks with fixed point,
			xtick={10,100},
			xmode=log,
			ymode=log,
			xmin = 3,
			xmax = 300,
			ymin=10^1,
			ymax=10^9,
			width=12cm,
			ymajorgrids=true,
			yminorgrids=true,
			legend style={at={(0.02,0.98)},anchor=north west},
			xlabel={$n$},
			ylabel={$r$}, ylabel near ticks,
			legend cell align={left},
			clip=false
			]

			\addlegendimage{empty legend}
			\addlegendentry[yshift=0pt]{\hspace{-.4cm}
			\textbf{\emph{X-Y-Z}}}

			\addplot[only marks,mark=square*,color=red] coordinates {
				(10,136921)
				(16,442697)
				(23,1095749)
				(32,2500049)
				(45,5859047)
				(64,14125505)
				(91,34037275)
				(128,79838219)
				(181,189775803)
				(256,451362181)
			};
			\addlegendentry{Analytic \cite{CMNRS18}}

			\addplot[only marks,mark=square*,color=blue] coordinates {
				(10,12485)
				(16,37919)
				(23,88370)
				(32,189647)
				(45,415390)
				(64,931115)
				(91,2082798)
				(128,4543406)
				(181,10036070)
				(256,22204888)
			};
			\addlegendentry{Commutator \cite{CMNRS18}}

			\addplot[only marks,mark=square*,color=bndclr,error bars/.cd,y dir=both,y explicit] coordinates {
				(10,968.066) +- (0,13.9814)
				(16,2019.88) +- (0,24.6252)
				(23,3505.5) +- (0,51.4801)
				(32,5834.46) +- (0,49.5033)
				(45,9783.66) +- (0,89.9138)
				(64,16681.5) +- (0,125.156)
				(91,28435.3) +- (0,235.595)
				(128,47460.9) +- (0,142.931)
				(181,79989.5) +- (0,222.254)
				(256,134466.) +- (0,979.267)
			};
			\addlegendentry{Our bound}

			\addplot[only marks,mark=square*,color=empclr,error bars/.cd,y dir=both,y explicit] coordinates {
				(4.,31.) +- (0,2.23607)
				(5.,42.4) +- (0,1.34164)
				(6.,60.4) +- (0,1.81659)
				(7.,76.8) +- (0,3.70135)
				(8.,99.2) +- (0,4.96991)
				(9.,115.) +- (0,2.82843)
				(10.,134.4) +- (0,3.91152)
				(11.,158.2) +- (0,5.01996)
				(12.,177.8) +- (0,2.77489)
			};
			\addlegendentry{Empirical}

			\addplot[
			color = black,
			mark = none
			]	coordinates {
				( 3, 6.758225815471977e+03 )
				( 300, 6.704381736480477e+08 )
			}
			node[right,pos=1.01] {$r=\cO{n^{2.50}}$};

			\addplot[
			color = black,
			mark = none
			]	coordinates {
				( 3, 8.022182443606885e+02 )
				( 300, 3.238711096677811e+07 )
			}
			node[right,pos=1.01] {$r=\cO{n^{2.30}}$};

			\addplot[
			color = black,
			mark = none
			]	coordinates {
				( 3,158.384 )
				( 300,173191.0)
			}
			node[right,pos=1.01] {$r=\cO{n^{1.52}}$};

			\addplot[
			color = black,
			mark = none
			]	coordinates {
				( 3,19.392 )
				( 300,33109.1)
			}
			node[right,pos=1.01] {$r=\cO{n^{1.62}}$};

			\end{axis}
			\end{tikzpicture}
		}
	\end{subfigure}%
	\caption{ Comparison of $r$ for the Heisenberg model using the analytic bound \cite[Proposition F.4]{CMNRS18}, commutator bound \cite[Theorem F.11]{CMNRS18}, locality-based bound \cite[Supplementary Material IV B]{CS19}, and our new bounds \prop{pf4_bound_2term} and \prop{pf4_bound_3term}. Error bars are omitted as they are negligibly small on the plot. Straight lines show power-law fits to the data. Note that the exponent for the empirical data is based on brute-force simulations of small systems, and thus may not precisely capture the true asymptotic scaling due to finite-size effects. \label{fig:pfbound}}
\end{figure}
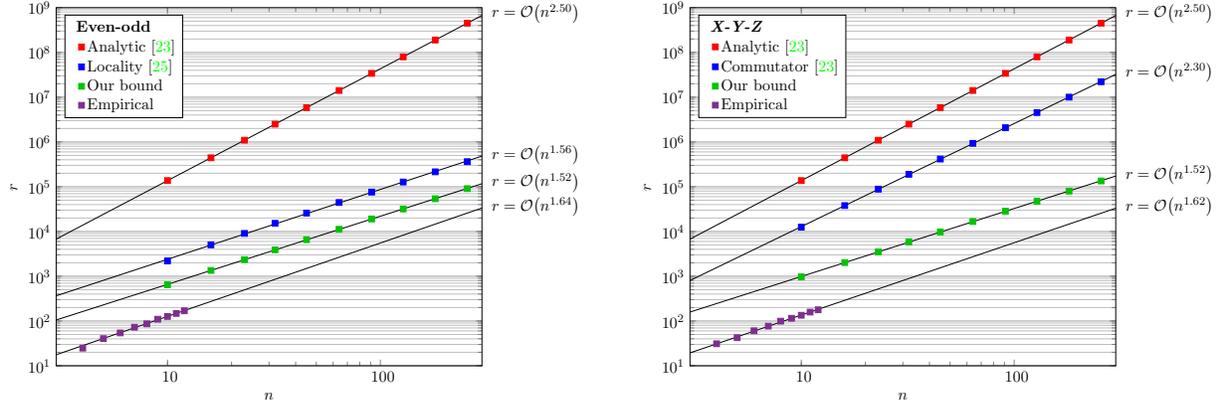

We find that the asymptotic scaling of our new bounds matches that of the empirical result up to finite-size effects and the prefactors are significantly tightened. At $n=10$, the Trotter number predicted by our bounds is loose only by a factor of $5.1$ for the even-odd ordering of terms and $7.2$ for the \textit{X-Y-Z} ordering. In comparison, the commutator bound of \cite{CMNRS18} only exploits the commutativity of the lowest-order term of the BCH series and is bottlenecked by the use of tail bounds. The previous bound \cite{CS19} based on geometrical locality is also uncompetitive since it cannot directly leverage the nested commutators of the Hamiltonian terms.

In addition to nearest-neighbor interactions, we also consider the simulation of a one-dimensional Heisenberg with power-law interactions:
\begin{align}
\label{eq:heisenberg-power-law}
	H = \sum_{j=1}^{n-1}\sum_{k = j+1}^{n} \frac{1}{\abs{j-k}^\alpha}(X_jX_k+Y_jY_k+Z_jZ_k) + \sum_{j=1}^{n-1} h_j Z_j,
\end{align}
where $h_j$ are again chosen randomly in $[-1,1]$ and $\alpha\geq 0$ is a constant.
Similarly to the case of nearest-neighbor interactions, we use the fourth-order product formula with \textit{X-Y-Z} ordering
\begin{equation}
\label{eq:x-y-z-power-law}
\begin{aligned}
H_1&=\sum_{j=1}^{n-1}\sum_{k = j+1}^{n} \frac{1}{\abs{j-k}^\alpha}X_jX_k,\\
H_2&=\sum_{j=1}^{n-1}\sum_{k = j+1}^{n} \frac{1}{\abs{j-k}^\alpha}Y_jY_k,\\
H_3&=\sum_{j=1}^{n-1}\sum_{k = j+1}^{n} \bigg(\frac{1}{\abs{j-k}^\alpha}Z_jZ_k+h_{j} Z_{j}\bigg)
\end{aligned}
\end{equation}
and compare the empirical Trotter number against that predicted by the best previous bound (\lem{trotter_error_one_norm_scaling}) and our new bound for simulating \eq{heisenberg-power-law} for time $t = n$ with accuracy $\epsilon = 10^{-3}$. We consider different values of $\alpha$, ranging from $\alpha = 0$ (strong power-law interactions) to $\alpha = 4$ (rapidly decaying power-law interactions).
We note that for $\alpha>2$, we simulate the evolution using the truncation algorithm [with the asymptotic gate complexity given by \cref{eq:power-law-gate-count-weak}], whereas for $\alpha\leq 2$, we simply simulate the entire Hamiltonian [with the asymptotic gate complexity in \cref{eq:power-law-pf}].

At $n=10$, the empirical Trotter numbers are $552\pm45$ ($\alpha = 0$) and $129\pm 6$ ($\alpha=4$), where the standard deviation is obtained by averaging over five instances of the random field $h_j$.
Meanwhile, our bound gives $5609\pm 3$ ($\alpha = 0$) and $885\pm 32$ ($\alpha=4$)---about 10.2 and 6.9 times looser compared to the empirical values respectively.
Our bound for power-law interactions with small $\alpha$ performs slightly worse than for the nearest-neighbor interactions, partly due to the fact that the triangle inequality is invoked more often for the power-law interactions.

We note that the number of interaction terms in a long-range interacting Hamiltonian scales as $n^2$, making it difficult to compute our bound exactly at large $n$.
Instead, we further upper bound the norm of the nested commutator using triangle inequalities and estimate this upper bound using a counting argument similarly to \cref{eq:power-law-pf}.
In \fig{pfbound-power-law}, we plot the empirical Trotter numbers against this counting bound for different values of $n$ at $\alpha = 0$ and $\alpha = 4$.
The figure shows that even our counting bound is tighter than the previous estimates at both values of $\alpha$. We leave a thorough study of the efficient numerical implementation of our bound as a subject for future work.

In addition, we plot in \fig{pfbound-power-law-exponents} the scaling exponents of the Trotter numbers as functions of $n$ at different values of $\alpha\in[0,4]$.
While the scaling exponent of the analytic bound in Ref.~\cite{CMNRS18} is loose and independent of $\alpha$, our bound appears to correctly capture the scaling of Trotter number at all values of $\alpha$.
In particular, it shows that the scaling exponent decreases with $\alpha$---indicating fewer resources needed to simulate faster decaying interactions---and approaches the value for simulating nearest-neighbor interactions at large $\alpha$.

\begin{figure}[t]
	\centering
	\begin{subfigure}{.5\linewidth}
		\resizebox{.95\textwidth}{!}{
			\begin{tikzpicture}
			\begin{axis}[
			log x ticks with fixed point,
			xtick={10,100},
			xmode=log,
			ymode=log,
			xmin = 3,
			xmax = 300,
			ymin=10^1,
			ymax=10^12,
			width=12cm,
			ymajorgrids=true,
			yminorgrids=true,
			legend style={at={(0.02,0.98)},anchor=north west},
			xlabel={$n$},
			ylabel={$r$}, ylabel near ticks,
			legend cell align={left},
			clip=false
			]

			\addlegendimage{empty legend}
			\addlegendentry[yshift=0pt]{\hspace{-.4cm}
			\textbf{Power-law $\alpha = 0$}}

			\addplot[only marks,mark=square*,color=red] coordinates {
				(10,683492)
				(11,980607)
				(16,4040834)
				(23,15871513)
				(32,55007474)
				(45,198192734)
				(64,744293885)
				(91,2790519047)
				(128,10041413052)
				(181,36845915463)
				(256,135278265598)
			};
			\addlegendentry{Analytic \cite{CMNRS18}}

			\addplot[only marks,mark=square*,color=blue,error bars/.cd,y dir=both,y explicit] coordinates {
				(10.,85000) +- (0,2060.96009180188)
				(11.,120468.800000000) +- (0,3347.21962231342)
				(16.,496058.600000000) +- (0,5220.44474159051)
				(23.,1923301.20000000) +- (0,3520.03234360141)
				(32.,6681200.20000000) +- (0,27385.6615804694)
				(45.,23859805.2000000) +- (0,47387.3234388692)
				(64.,89508455.8000000) +- (0,142973.959654197)
				(91.,335501843.200000) +- (0,233017.926575832)
				(128.,1205555294.40000) +- (0,622074.416221886)
				(181.,4419630596.80000) +- (0,1379160.76385558)
				(256.,16220417081.0000) +- (0,1888521.59401435)
			};
			\addlegendentry{$1$-norm (\lem{trotter_error_one_norm_scaling})}

			\addplot[only marks,mark=square*,color=looseclr,error bars/.cd,y dir=both,y explicit] coordinates {
				(10.,28778.3) +- (0,0)
				(11.,38215.2) +- (0,0)
				(16.,114348.) +- (0,0)
				(23.,323591.) +- (0,0)
				(32.,823998.) +- (0,0)
				(45.,2145001.) +- (0,0)
				(64.,5729269.) +- (0,0)
				(91.,15228883.) +- (0,0)
				(128.,39173729.) +- (0,0)
				(181.,102052601.) +- (0,0)
				(256.,265651944.) +- (0,0)
			};
			\addlegendentry{Counting bound [\cref{eq:power-law-pf}]}

			\addplot[only marks,mark=square*,color=bndclr,error bars/.cd,y dir=both,y explicit] coordinates {
				(5.,732.884) +- (0,20.4298)
				(6.,1342.21) +- (0,8.25211)
				(7.,2040.18) +- (0,14.7983)
				(8.,3070.58) +- (0,14.2236)
				(9.,4190.92) +- (0,35.5577)
				(10.,5609.49) +- (0,2.58875)
				(11.,7310.94) +- (0,18.5933)
			};
			\addlegendentry{Our bound}

			\addplot[only marks,mark=square*,color=empclr,error bars/.cd,y dir=both,y explicit] coordinates {
				(5.,72.4) +- (0,6.50385)
				(6.,129.4) +- (0,11.6103)
				(7.,199.4) +- (0,13.8492)
				(8.,297.6) +- (0,25.6281)
				(9.,400.) +- (0,28.3373)
				(10.,552.6) +- (0,45.0533)
				(11.,717.2) +- (0,59.0271)
			};
			\addlegendentry{Empirical}

			\addplot[
			color = black,
			mark = none
			]	coordinates {
				( 3, 7.446008953978934e+03 )
				( 300, 2.472033862130378e+11 )
			}
			node[right,pos=1.01] {$r=\cO{n^{3.75}}$};

			\addplot[
			color = black,
			mark = none
			]	coordinates {
				( 3, 9.270537091830635e+02 )
				( 300, 2.941819043177504e+10 )
			}
			node[right,pos=1.01] {$r=\cO{n^{3.75}}$};

			\addplot[
			color = black,
			mark = none
			]	coordinates {
				( 3, 17.0912 )
				( 300, 1.01249e+7 )
			}
			node[right,pos=1.01] {$r=\cO{n^{2.89}}$};

			\addplot[
			color = black,
			mark = none
			]	coordinates {
				( 3, 174.861 )
				( 300, 1.05318e+8 )
			}
			node[right,pos=1.01] {$r=\cO{n^{2.84}}$};

			\addplot[
			color = black,
			mark = none
			]	coordinates {
				( 3, 924.864 )
				( 300, 4.48112e+8 )
			}
			node[right,pos=1.01] {$r=\cO{n^{2.84}}$};

			\end{axis}
			\end{tikzpicture}
		}
	\end{subfigure}%
	~
	\begin{subfigure}{.5\linewidth}
		\resizebox{.95\textwidth}{!}{
			\begin{tikzpicture}
			\begin{axis}[
			log x ticks with fixed point,
			xtick={10,100},
			xmode=log,
			ymode=log,
			xmin = 3,
			xmax = 300,
			ymin=10^1,
			ymax=10^12,
			width=12cm,
			ymajorgrids=true,
			yminorgrids=true,
			legend style={at={(0.02,0.98)},anchor=north west},
			xlabel={$n$},
			ylabel={$r$}, ylabel near ticks,
			legend cell align={left},
			clip=false
			]

			\addlegendimage{empty legend}
			\addlegendentry[yshift=0pt]{\hspace{-.4cm}
				\textbf{Power-law $\alpha = 4$}}

			\addplot[only marks,mark=square*,color=red] coordinates {
				(10,683492)
				(11,980607)
				(16,4040834)
				(23,15871513)
				(32,55007474)
				(45,198192734)
				(64,744293885)
				(91,2790519047)
				(128,10041413052)
				(181,36845915463)
				(256,135278265598)
			};
			\addlegendentry{Analytic \cite{CMNRS18}}

			\addplot[only marks,mark=square*,color=blue,error bars/.cd,y dir=both,y explicit] coordinates {
				(10.,18913.6000000000) +- (0,680.995447855564)
				(11.,21784.6000000000) +- (0,1753.39450780479)
				(16.,61246.2000000000) +- (0,3556.32348078743)
				(23.,147230.200000000) +- (0,12054.4694491296)
				(32.,348796.200000000) +- (0,8317.50745115386)
				(45.,818638.800000000) +- (0,31711.0553072584)
				(64.,2049576) +- (0,50987.8100775077)
				(91.,4928611.60000000) +- (0,137565.090452847)
				(128.,11664013.2000000) +- (0,114699.377464309)
				(181.,27831397.2000000) +- (0,287070.041750615)
				(256.,66314375.6000000) +- (0,887453.412779679)
			};
			\addlegendentry{$1$-norm (\lem{trotter_error_one_norm_scaling})}

			\addplot[only marks,mark=square*,color=looseclr,error bars/.cd,y dir=both,y explicit] coordinates {
				(10.,1782.61) +- (0,0)
				(11.,2055.5) +- (0,0)
				(16.,3605.38) +- (0,0)
				(23.,6226.75) +- (0,0)
				(32.,10179.8) +- (0,0)
				(45.,17025.6) +- (0,0)
				(64.,28830.1) +- (0,0)
				(91.,48927.9) +- (0,0)
				(128.,81561.6) +- (0,0)
				(181.,137012.) +- (0,0)
				(256.,230528.) +- (0,0)
			};
			\addlegendentry{Counting bound [\cref{eq:power-law-pf}]}

			\addplot[only marks,mark=square*,color=bndclr,error bars/.cd,y dir=both,y explicit] coordinates {
				(5.,278.005) +- (0,13.2403)
				(6.,385.801) +- (0,11.5145)
				(7.,505.685) +- (0,22.7519)
				(8.,619.989) +- (0,11.2582)
				(9.,752.517) +- (0,22.4373)
				(10.,885.729) +- (0,32.2218)
				(11.,1011.49) +- (0,12.2703)
			};
			\addlegendentry{Our bound}

			\addplot[only marks,mark=square*,color=empclr,error bars/.cd,y dir=both,y explicit] coordinates {
				(5.,41.2) +- (0,2.77489)
				(6.,56.4) +- (0,0.547723)
				(7.,71.8) +- (0,2.16795)
				(8.,91.8) +- (0,6.37966)
				(9.,111.2) +- (0,5.21536)
				(10.,129.) +- (0,6.48074)
				(11.,151.2) +- (0,7.59605)
			};
			\addlegendentry{Empirical}

			\addplot[
			color = black,
			mark = none
			]	coordinates {
				( 3, 7.446008953978934e+03 )
				( 300, 2.472033862130378e+11 )
			}
			node[right,pos=1.01] {$r=\cO{n^{3.75}}$};

			\addplot[
			color = black,
			mark = none
			]	coordinates {
				( 3, 8.609301558039709e+02 )
				( 300, 1.005245258332178e+08 )
			}
			node[right,pos=1.01] {$r=\cO{n^{2.53}}$};

			\addplot[
			color = black,
			mark = none
			]	coordinates {
				( 3,17.8835 )
				( 300,35705.9)
			}
			node[right,pos=1.01] {$r=\cO{n^{1.65}}$};

			\addplot[
			color = black,
			mark = none
			]	coordinates {
				( 3,123.296 )
				( 300,232629.)
			}
			node[right,pos=1.01] {$r=\cO{n^{1.64}}$};

			\addplot[
			color = black,
			mark = none
			]	coordinates {
				( 3,293.059 )
				( 300,292487.)
			}
			node[above right,pos=1.01] {$r=\cO{n^{1.5}}$};

			\end{axis}
			\end{tikzpicture}
		}
	\end{subfigure}%
	\caption{ Comparison of $r$ for the power-law Heisenberg model using the analytic bound \cite[Proposition F.4]{CMNRS18}, $1$-norm bound \lem{trotter_error_one_norm_scaling}, a bound from counting argument \eq{power-law-pf}, and our bound \prop{pf4_bound_3term}. Error bars are omitted as they are negligibly small on the plot. Straight lines show power-law fits to the data. Note that the exponent for the empirical data is based on brute-force simulations of small systems, and thus may not precisely capture the true asymptotic scaling due to finite-size effects. \label{fig:pfbound-power-law}}
\end{figure}


\begin{figure}[t]
	\centering
	\begin{subfigure}{.525\linewidth}
		\resizebox{.95\textwidth}{!}{
			\begin{tikzpicture}
			\begin{axis}[
			xtick={0,1,2,3,4},
			xmode=normal,
			ymode=normal,
			xmin = 0,
			xmax = 4,
			ymin=1.25,
			ymax=4,
			width=12cm,
			ymajorgrids=true,
			yminorgrids=true,
			legend style={at={(0.98,0.78)},anchor=north east},
			xlabel={$\alpha$},
			ylabel={Scaling exponents of the Trotter numbers}, ylabel near ticks,
			legend cell align={left},
			clip=false
			]

			\addlegendimage{empty legend}
			\addlegendentry[yshift=0pt]{\hspace{-.4cm}
			\textbf{Power-law $\alpha = 0$}}

			\addplot[
			color = red,
			style = dashdotted,
			mark = none
			]	coordinates {
				( 0,3.76)
				( 4,3.76)
			};
			\addlegendentry{Analytic~\cite{CMNRS18}}

			\addplot[only marks,mark=square*,color=bndclr,error bars/.cd,y dir=both,y explicit] coordinates {
				(0.,2.86454) +- (0,0)
				(0.1,2.74736) +- (0,0)
				(0.2,2.63258) +- (0,0)
				(0.3,2.52044) +- (0,0)
				(0.4,2.41193) +- (0,0)
				(0.5,2.30751) +- (0,0)
				(0.6,2.20833) +- (0,0)
				(0.7,2.11507) +- (0,0)
				(0.8,2.02843) +- (0,0)
				(0.9,1.9495) +- (0,0)
				(1.,1.87862) +- (0,0)
				(1.1,1.81584) +- (0,0)
				(1.2,1.76143) +- (0,0)
				(1.3,1.71496) +- (0,0)
				(1.4,1.67556) +- (0,0)
				(1.5,1.64311) +- (0,0)
				(1.6,1.61611) +- (0,0)
				(1.7,1.59376) +- (0,0)
				(1.8,1.57611) +- (0,0)
				(1.9,1.56152) +- (0,0)
				(2.,1.55039) +- (0,0)
				(2.1,1.54059) +- (0,0)
				(2.2,1.5331) +- (0,0)
				(2.3,1.52696) +- (0,0)
				(2.4,1.52249) +- (0,0)
				(2.5,1.51837) +- (0,0)
				(2.6,1.51536) +- (0,0)
				(2.7,1.51307) +- (0,0)
				(2.8,1.51073) +- (0,0)
				(2.9,1.5095) +- (0,0)
				(3.,1.50837) +- (0,0)
				(3.1,1.5074) +- (0,0)
				(3.2,1.5059) +- (0,0)
				(3.3,1.5058) +- (0,0)
				(3.4,1.50472) +- (0,0)
				(3.5,1.50427) +- (0,0)
				(3.6,1.50467) +- (0,0)
				(3.7,1.50454) +- (0,0)
				(3.8,1.50337) +- (0,0)
				(3.9,1.50358) +- (0,0)
				(4.,1.50321) +- (0,0)
			};
			\addlegendentry{Our Bound}

			\addplot[only marks,mark=square*,color=empclr,error bars/.cd,y dir=both,y explicit] coordinates {
				(0.,2.88694) +- (0,0.115278)
				(0.1,2.69508) +- (0,0.108969)
				(0.2,2.6383) +- (0,0.0868982)
				(0.3,2.43989) +- (0,0.0579215)
				(0.4,2.4013) +- (0,0.151287)
				(0.5,2.31783) +- (0,0.0672562)
				(0.6,2.27619) +- (0,0.0763851)
				(0.7,2.2184) +- (0,0.0567153)
				(0.8,2.20326) +- (0,0.120509)
				(0.9,2.0731) +- (0,0.0832915)
				(1.,2.04945) +- (0,0.0960933)
				(1.1,1.97028) +- (0,0.0280553)
				(1.2,1.93393) +- (0,0.0905912)
				(1.3,1.87117) +- (0,0.0689264)
				(1.4,1.7873) +- (0,0.0388334)
				(1.5,1.78035) +- (0,0.0795774)
				(1.6,1.73421) +- (0,0.0346347)
				(1.7,1.71697) +- (0,0.0788095)
				(1.8,1.67486) +- (0,0.0374551)
				(1.9,1.68339) +- (0,0.0453289)
				(2.,1.59375) +- (0,0.0196314)
				(2.1,1.55765) +- (0,0.0338484)
				(2.2,1.57606) +- (0,0.106149)
				(2.3,1.60516) +- (0,0.0638376)
				(2.4,1.55936) +- (0,0.0811354)
				(2.5,1.55992) +- (0,0.0548177)
				(2.6,1.54867) +- (0,0.0290067)
				(2.7,1.61791) +- (0,0.0736941)
				(2.8,1.59342) +- (0,0.0533627)
				(2.9,1.58904) +- (0,0.117669)
				(3.,1.5534) +- (0,0.0574723)
				(3.1,1.55667) +- (0,0.106389)
				(3.2,1.6106) +- (0,0.0699574)
				(3.3,1.60269) +- (0,0.0470481)
				(3.4,1.62264) +- (0,0.0237255)
				(3.5,1.65281) +- (0,0.0436646)
				(3.6,1.63904) +- (0,0.0341563)
				(3.7,1.58655) +- (0,0.0679165)
				(3.8,1.66013) +- (0,0.0703377)
				(3.9,1.57454) +- (0,0.058028)
				(4.,1.65013) +- (0,0.0718613)
			};
			\addlegendentry{Empirical}

			\addplot[
			color = blue,
			style = dashed,
			mark = none
			]	coordinates {
				( 0,1.5)
				( 4,1.5)
			};
			\addlegendentry{$\alpha\rightarrow\infty$ limit}
			\end{axis}
			\end{tikzpicture}
		}
	\end{subfigure}%
	~

	\caption{ Comparison of the empirical scaling exponents of the Trotter numbers (purple squares) against our bound (green squares) as functions of the system size at different values of the power-law exponent $\alpha$.
	The error bars of the empirical values represent the standard deviation of the fitted exponents (See \fig{pfbound-power-law}) over five instances of the random field $h_j$.
	The bound is derived from the counting argument in \cref{eq:alpha_comm_one_norm} and therefore has no standard deviation.
	We attribute the systematic difference between our bound and the empirical values to the fact that we only compute the empirical Trotter numbers up to $n = 11$, which may not capture the precise asymptotic scaling in the large-$n$ limit.
	We also include the scaling exponent of the analytic bound in Ref.~\cite{CMNRS18} (red dash-dotted line) as well as the theoretical exponent in the limit $\alpha\rightarrow\infty$, i.e. nearest-neighbor interactions (blue dashed line), for references.
	 \label{fig:pfbound-power-law-exponents}}
\end{figure}
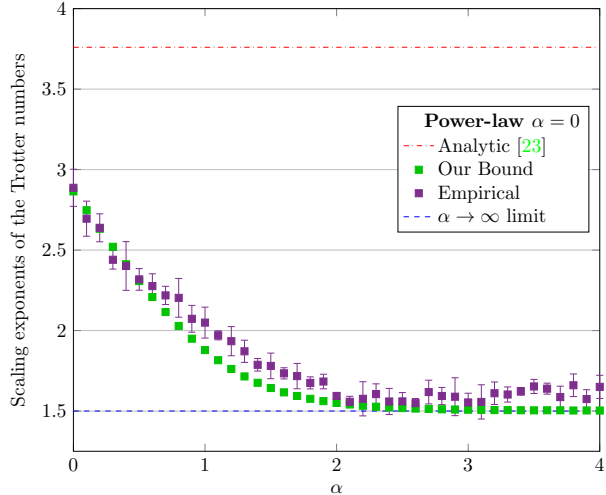


\section{Discussion}
\label{sec:discussion}

We have developed a general theory of Trotter error and identified a host of applications to simulating quantum dynamics, local observables, and quantum Monte Carlo methods. We work with arbitrary finite-dimensional operators as opposed to anti-Hermitian ones, which makes our theory applicable to both real- and imaginary-time evolutions. We consider Trotter error of various types, including additive error, multiplicative error, and error that appears in the exponent. For each type, we apply the correct order condition to cancel lower-order terms, and represent higher-order ones as explicit nested commutators. The list of applications presented herein is not intended to be exhaustive and we believe our techniques can uncover more speedups of the product-formula algorithm that were previously unknown \cite{XSSS20,Shaw2020quantumalgorithms}.

Compared to the analysis of other simulation algorithms such as the truncated Taylor-series algorithm~\cite{BCCKS14} and the qubitization approach~\cite{LC16}, the derivation of our Trotter error theory is considerably more involved. However, the resulting error bounds are succinct and easy to evaluate.  \thm{trotter_error_comm_scaling} shows that Trotter error incurred by decomposing the evolution generated by $H=\sum_{\gamma=1}^{\Gamma}H_\gamma$ depends asymptotically on the quantity $\acommtilde=\sum_{\gamma_1,\gamma_2,\ldots,\gamma_{p+1}}\norm{\big[H_{\gamma_{p+1}},\cdots\big[H_{\gamma_2},H_{\gamma_1}\big]\big]}$, which can be computed by induction as for the second-quantized plane-wave electronic structure, $k$-local Hamiltonians, rapidly decaying interacted systems, clustered Hamiltonians, transverse field Ising model, and quantum ferromagnetic spin systems. We further show how to improve the analysis to find error bounds with small constant prefactors. Numerical simulation suggests that our higher-order error bounds are close to tight for systems with nearest-neighbor and power-law interactions, and we hope future work can explore their tightness for other systems.

Our result shows that high-order product formulas can be advantageous for simulating many physical systems. Interestingly, we can often achieve this advantage without using a formula of very large order. For $d$-dimensional power-law interactions with exponent $\alpha>2d$, we have shown that the $p$th-order product-formula algorithm has gate complexity $\cO{(nt)^{1+d/(\alpha-d)+1/p}}$, whereas the state-of-the-art Lieb-Robinson-based approach requires $\tildecO{(nt)^{1+2d/(\alpha-d)}}$ gates. Product formulas can thus scale better if $p\geq (\alpha-d)/d$, which is small for various physical systems such as the dipole-dipole interactions ($\alpha=3$) and the van der Waals interactions ($\alpha=6$). For other systems such as nearest-neighbor interactions and electronic-structure Hamiltonians, product formulas do not exactly match the state-of-the-art result in terms of the asymptotic scaling, but they are still advantageous for simulating systems of small sizes \cite{CMNRS18,Kivlichan19}.

The complexity of the product-formula approach is determined by both the Trotter number (or Trotter error) and the cost per Trotter step. A naive implementation of each Trotter step exponentiates all the terms in the Hamiltonian, which has a cost that scales with the total number of terms. However, this worst-case complexity can be avoided by truncating the original Hamiltonian, as we have demonstrated in the simulation of rapidly decaying power-law Hamiltonians. Recent studies have proposed other techniques for implementing Trotter steps \cite{WBCHT14,Kivlichan19,AA17,KMWGACB18,CBC20}. Those techniques can be applied in combination with our Trotter error analysis to further speed up the product-formula algorithm.

We have restricted our attention to the evolutions generated by time-independent operators. In the more general case, we have an operator-valued function $\mathscr{H}(\tau)=\sum_{\gamma=1}^{\Gamma}\mathscr{H}_\gamma(\tau)$ and our goal is to simulate the time-ordered evolution $\expT\big(\int_{0}^{t}\mathrm{d}\tau\sum_{\gamma=1}^{\Gamma}\mathscr{H}_\gamma(\tau)\big)$ \cite{WBHS10,PQSV11,FractionalQuery14,BCCKS14,Kieferova18,LW18,BCSWW19}. Under certain smoothness assumptions, Reference \cite{WBHS10} shows that this evolution can be simulated using product formulas, although their analysis does not exploit the commutativity of operator summands. We believe our approach can be extended to give improved analysis for time-dependent Hamiltonian simulation, but we leave a detailed study for future work \cite{AFL20}.

Previous work considered several generalized product formulas, such as ones based on the divide-and-conquer construction \cite{HP18}, the randomized construction \cite{COS18,OWC19}, and the linear-combination-of-unitaries construction \cite{LKW19,FSKKE21,childs2012hamiltonian}. The common underlying idea is to approximate the ideal evolution to $p$th order using formulas of order $q_k$, where $q_k\leq p$. Our theory can be applied to represent the $q_k$th-order Trotter error in terms of nested commutators, thus improving the analyses of \cite{HP18,COS18,OWC19,LKW19,FSKKE21,childs2012hamiltonian}. This leads to a better understanding of these generalized formulas and justifies their potential utility in quantum simulation.

Several other questions related to our theory deserve further investigation. For example, the spectral-norm error bound computed here would be overly pessimistic if we simulate with a low-energy initial state. It would then be beneficial to change the error metric to avoid the worst-case error propagation \cite{SS20,SHC20,AFL20}. Our analysis has also assumed an operator decomposition $H=\sum_{\gamma=1}^{\Gamma}H_\gamma$ given a prior, but one may instead seek an alternative decomposition to maximize the commutativity of operator summands. We focus on the error analysis within each Trotter step and apply the triangle inequality across different steps, which may be improved upon as hinted in \cite{HHZ19,SOEHHHZ19,TSCT21,HLMJH20}. Finally, we have considered an idealized setting, and we hope future work could take the effect of noise into account \cite{MYB19}.

Product formulas provide arguably the most straightforward approach to simulating quantum systems. This approach is empirically advantageous and is often the method of choice for near-term demonstration of quantum simulation. Despite their experimental success, the error scaling of product formulas was poorly understood and, prior to our work, their advantage was only rigorously analyzed for a restricted collection of systems. The theory developed here represents progress toward a precise characterization of Trotter error, which we hope will bridge the gap between theoretical investigation and experimental realization of quantum simulation.

\section*{Acknowledgements}
Y.S.\ thanks Ryan Babbush, Fernando G.S.L.\ Brand\~{a}o, Xun Gao, Jens Eisert, and Brian Swingle for inspiring discussions. He also received useful feedback during his visits to the Institute for Quantum Computing, Center for Computation and Technology, and Lawrence Berkeley National Laboratory. We thank Daochen Wang, Lin Lin, ‪Dong An‬, and anonymous referees for their comments on an earlier draft.

This work was supported in part by the U.S.\ Department of Energy (DOE), Office of Science, Office of Advanced Scientific Computing Research, Quantum Algorithms Teams and Accelerated Research in Quantum Computing programs; by the Army Research Office (MURI Grant No.\ W911NF-16-1-0349); and by the National Science Foundation (Grant No.\ CCF-1813814). In addition, Y.S.\ is supported by the Google Ph.D.\ Fellowship program, the National Science Foundation RAISE-TAQS 1839204 and Amazon Web Services, AWS Quantum Program. The Institute for Quantum Information and Matter is an NSF Physics Frontiers Center PHY-1733907. M.C.T.\ also acknowledges funding by the DOE BES QIS program (Award No.\ DE-SC0019449), the NSF PFCQC program, the DOE ASCR Quantum Testbed Pathfinder program (Award No.\ DE-SC0019040), AFOSR, ARL CDQI, and NSF PFC at JQI. N.W.\ is supported in part by a Google Quantum Research Award. S.C.Z.\ acknowledges support from NSF CAREER Grant No.\ CCF-1845125.
\appendix
\section{Error types}
\label{append:type}

In this appendix, we consider different types of Trotter error for a general product formula introduced in \sec{theory_type}. In particular, we will prove \thm{error_type} that gives explicit expressions for three types of Trotter error: the additive error, the multiplicative error, and the error that appears in the exponent of a time-ordered exponential. These types are equivalent for analyzing the complexity of simulating quantum dynamics and local observables, but the latter two are more versatile for quantum Monte Carlo simulation.

Let $H=\sum_{\gamma=1}^{\Gamma}H_\gamma$ be an operator with $\Gamma$ summands. We decompose the evolution $e^{tH}=e^{t\sum_{\gamma=1}^{\Gamma}H_\gamma}$ using a general product formula $\mathscr{S}(t)=\prod_{\upsilon=1}^{\Upsilon}\prod_{\gamma=1}^{\Gamma}e^{ta_{(\upsilon,\gamma)}H_{\pi_{\upsilon}(\gamma)}}$. We impose the lexicographical order on the tuples $(\upsilon,\gamma)$ as in \sec{theory_type}, so that
\begin{align}
\mathscr{S}(t) = \prod_{(\upsilon,\gamma)}^{\longleftarrow}e^{ta_{(\upsilon,\gamma)}H_{\pi_{\upsilon}(\gamma)}}.
\end{align}

To compute the additive error, we construct the differential equation
\begin{equation}
\frac{\mathrm{d}}{\mathrm{d}t}\mathscr{S}(t)=H\mathscr{S}(t)+\mathscr{R}(t),
\end{equation}
with initial condition $\mathscr{S}(0)=I$, where
\begin{equation}
\begin{aligned}
\mathscr{R}(t)
:=&\sum_{(\upsilon,\gamma)}\prod_{(\upsilon',\gamma')\succ(\upsilon,\gamma)}^{\longleftarrow}e^{ta_{(\upsilon',\gamma')}H_{\pi_{\upsilon'}(\gamma')}}\big(a_{(\upsilon,\gamma)}H_{\pi_{\upsilon}(\gamma)}\big)\prod_{(\upsilon',\gamma')\preceq(\upsilon,\gamma)}^{\longleftarrow}e^{ta_{(\upsilon',\gamma')}H_{\pi_{\upsilon'}(\gamma')}}\\
&-H\prod_{(\upsilon',\gamma')}^{\longleftarrow}e^{ta_{(\upsilon',\gamma')}H_{\pi_{\upsilon'}(\gamma')}}.
\end{aligned}
\end{equation}
By the variation-of-parameters formula (\lem{td_Duhamel}), $\mathscr{S}(t)-e^{tH}=\int_{0}^{t}\mathrm{d}\tau\ e^{(t-\tau)H}\mathscr{R}(\tau)$, so we obtain the additive error
\begin{equation}
\mathscr{A}(t):=\int_{0}^{t}\mathrm{d}\tau\ e^{(t-\tau)H}\mathscr{R}(\tau).
\end{equation}
This suffices if our purpose is to only compute the additive error operator. However, for the later discussion in \append{rep}, it is convenient to further rewrite
\begin{equation}
\mathscr{A}(t)=\int_{0}^{t}\mathrm{d}\tau\ e^{(t-\tau)H}\mathscr{S}(\tau)\mathscr{T}(\tau),
\end{equation}
where
\begin{equation}
\begin{aligned}
\mathscr{T}(\tau)
:=&\sum_{(\upsilon,\gamma)}\prod_{(\upsilon',\gamma')\prec(\upsilon,\gamma)}^{\longrightarrow}e^{-\tau a_{(\upsilon',\gamma')}H_{\pi_{\upsilon'}(\gamma')}}\big(a_{(\upsilon,\gamma)}H_{\pi_{\upsilon}(\gamma)}\big)\prod_{(\upsilon',\gamma')\prec(\upsilon,\gamma)}^{\longleftarrow}e^{\tau a_{(\upsilon',\gamma')}H_{\pi_{\upsilon'}(\gamma')}}\\
&-\prod_{(\upsilon',\gamma')}^{\longrightarrow}e^{-\tau a_{(\upsilon',\gamma')}H_{\pi_{\upsilon'}(\gamma')}}H\prod_{(\upsilon',\gamma')}^{\longleftarrow}e^{\tau a_{(\upsilon',\gamma')}H_{\pi_{\upsilon'}(\gamma')}}.
\end{aligned}
\end{equation}
Note that we have rewritten part of the error operator as a linear combination of conjugation of matrix exponentials. In \append{rep}, we apply the correct order condition to further represent it as nested commutators of the operator summands $H_\gamma$.

For the exponentiated type of Trotter error, we aim to construct an operator-valued function $\mathscr{E}(t)$ such that
\begin{equation}
\mathscr{S}(t)=\expT\bigg(\int_{0}^{t}\mathrm{d}\tau\big(H+\mathscr{E}(\tau)\big)\bigg).
\end{equation}
To do this, we differentiate the product formula $\mathscr{S}(t)$ and obtain
\begin{equation}
\begin{aligned}
\frac{\mathrm{d}}{\mathrm{d}t}\mathscr{S}(t)
&=\sum_{(\upsilon,\gamma)}\prod_{(\upsilon',\gamma')\succ(\upsilon,\gamma)}^{\longleftarrow}e^{ta_{(\upsilon',\gamma')}H_{\pi_{\upsilon'}(\gamma')}}\big(a_{(\upsilon,\gamma)}H_{\pi_{\upsilon}(\gamma)}\big)\prod_{(\upsilon',\gamma')\preceq(\upsilon,\gamma)}^{\longleftarrow}e^{ta_{(\upsilon',\gamma')}H_{\pi_{\upsilon'}(\gamma')}}\\
&=\mathscr{F}(t)\mathscr{S}(t),
\end{aligned}
\end{equation}
where
\begin{equation}
\mathscr{F}(t):=\sum_{(\upsilon,\gamma)}\prod_{(\upsilon',\gamma')\succ(\upsilon,\gamma)}^{\longleftarrow}e^{ta_{(\upsilon',\gamma')}H_{\pi_{\upsilon'}(\gamma')}}\big(a_{(\upsilon,\gamma)}H_{\pi_{\upsilon}(\gamma)}\big)\prod_{(\upsilon',\gamma')\succ(\upsilon,\gamma)}^{\longrightarrow}e^{-ta_{(\upsilon',\gamma')}H_{\pi_{\upsilon'}(\gamma')}}.
\end{equation}
Applying the fundamental theorem of time-ordered evolution (\lem{fte}), we have
\begin{equation}
\mathscr{S}(t)=\expT\bigg(\int_{0}^{t}\mathrm{d}\tau\ \mathscr{F}(\tau)\bigg),
\end{equation}
which gives the exponentiated error
\begin{equation}
\mathscr{E}(t):=\mathscr{F}(t)-H.
\end{equation}

From the exponentiated type of Trotter error, we can obtain the multiplicative error by switching to the interaction picture. Specifically, we apply \lem{interaction_picture} and get
\begin{equation}
\mathscr{S}(t)=\expT\bigg(\int_{0}^{t}\mathrm{d}\tau \big(H+\mathscr{E}(\tau)\big)\bigg)
=e^{tH}\expT\bigg(\int_{0}^{t}\mathrm{d}\tau\ e^{-\tau H}\mathscr{E}(\tau)e^{\tau H}\bigg).
\end{equation}
Then, the operator-valued function
\begin{equation}
\mathscr{M}(t):=\expT\bigg(\int_{0}^{t}\mathrm{d}\tau\ e^{-\tau H}\mathscr{E}(\tau)e^{\tau H}\bigg)-I
\end{equation}
is the multiplicative error of the product formula. We have thus established \thm{error_type}, which we restate below.

\thmtype*

\section{Order conditions}
\label{append:order}

In this appendix, we continue the discussion in \sec{theory_order} about order conditions of Trotter error. We show how to use these conditions to cancel low-order terms of the Taylor series. Toward the end of this section, we establish \thm{error_order_cond} that gives order conditions for the additive, multiplicative, and exponentiated Trotter error. We apply these conditions to prove the main result on the commutator scaling of Trotter error.

Recall from \sec{theory_order} that the order condition of an operator-valued function $\mathscr{F}(\tau)$ represents the rate at which $\mathscr{F}(\tau)$ approaches zero when $\tau\rightarrow0$. Formally, given a continuous operator-valued function $\mathscr{F}(\tau)$ defined on $\R$, we write $\mathscr{F}(\tau)=O(\tau^{p})$ with nonnegative integer $p$ if there exist constants $c,t_0>0$, independent of $\tau$, such that $\norm{\mathscr{F}(\tau)}\leq c\abs{\tau}^p$ whenever $\abs{\tau}\leq t_0$. To verify this, it suffices to check that the limit
\begin{equation}
	\lim\limits_{\tau\rightarrow 0}\frac{\norm{\mathscr{F}(\tau)}}{\abs{\tau}^{p}}
\end{equation}
exists.

As aforementioned, our approach uses the order condition $\mathscr{F}(\tau)=\OO{\tau^p}$ to argue that terms with order $1,\tau,\ldots,\tau^{p-1}$ vanish in the Taylor series of $\mathscr{F}(\tau)$. This argument is rigorized in \cite[Lemma 6]{CS19}, which we restate and prove for completeness.

\begin{lemma}[Derivative condition]
	\label{lem:order_cond_deriv}
	Any continuous operator-valued function $\mathscr{F}(\tau)$ defined on $\R$ satisfies the order condition
	\begin{equation}
		\mathscr{F}(\tau)=\OO{1}.
	\end{equation}
	Furthermore, if $\mathscr{F}(\tau)$ has $p$ continuous derivatives for some positive integer $p$, then the following two conditions are equivalent:
	\begin{enumerate}
		\item $\mathscr{F}(\tau)=O(\tau^{p})$; and
		\item $\mathscr{F}(0)=\mathscr{F}'(0)=\cdots=\mathscr{F}^{(p-1)}(0)=0$.
	\end{enumerate}
\end{lemma}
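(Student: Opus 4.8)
The plan is to prove the two statements separately, relying on Taylor's theorem with the integral form of the remainder. For the first claim, note that $\mathscr{F}$ is continuous, hence $\norm{\mathscr{F}(\tau)}$ is continuous, hence bounded on the compact interval $[-1,1]$ by some constant $c$; then $\norm{\mathscr{F}(\tau)}\leq c = c\abs{\tau}^0$ for $\abs{\tau}\leq 1$, which is exactly $\mathscr{F}(\tau)=\OO{1}$ in the near-zero asymptotic sense defined in \sec{prelim_notation}.

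For the equivalence, I would prove both directions. For $(2)\Rightarrow(1)$: assuming $\mathscr{F}(0)=\cdots=\mathscr{F}^{(p-1)}(0)=0$, apply Taylor's theorem to order $p-1$ with the integral remainder,
\begin{equation}
\mathscr{F}(\tau)=\sum_{k=0}^{p-1}\frac{\mathscr{F}^{(k)}(0)}{k!}\tau^k+\frac{1}{(p-1)!}\int_{0}^{\tau}\mathrm{d}s\ (\tau-s)^{p-1}\mathscr{F}^{(p)}(s).
\end{equation}
The sum vanishes by hypothesis, and since $\mathscr{F}^{(p)}$ is continuous it is bounded by some $c'$ on $[-1,1]$, so $\norm{\mathscr{F}(\tau)}\leq \frac{c'}{(p-1)!}\big|\int_{0}^{\tau}(\abs{\tau}-s)^{p-1}\,\mathrm{d}s\big| = \frac{c'}{p!}\abs{\tau}^p$ for $\abs{\tau}\leq 1$, giving $\mathscr{F}(\tau)=\OO{\tau^p}$.

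For the converse $(1)\Rightarrow(2)$: suppose $\mathscr{F}(\tau)=\OO{\tau^p}$, so $\norm{\mathscr{F}(\tau)}\leq c\abs{\tau}^p$ for $\abs{\tau}\leq t_0$. I would argue by induction on $k=0,1,\ldots,p-1$ that $\mathscr{F}^{(k)}(0)=0$. The base case $k=0$ follows by taking $\tau\to 0$. For the inductive step, having shown $\mathscr{F}(0)=\cdots=\mathscr{F}^{(k-1)}(0)=0$, the same Taylor expansion to order $k-1$ with integral remainder gives $\mathscr{F}(\tau)=\frac{1}{(k-1)!}\int_{0}^{\tau}\mathrm{d}s\ (\tau-s)^{k-1}\mathscr{F}^{(k)}(s)$; dividing by $\tau^k$ and using continuity of $\mathscr{F}^{(k)}$, the right side tends to $\mathscr{F}^{(k)}(0)/k!$ as $\tau\to 0$, while the left side tends to $0$ since $\norm{\mathscr{F}(\tau)/\tau^k}\leq c\abs{\tau}^{p-k}\to 0$ for $k<p$. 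Hence $\mathscr{F}^{(k)}(0)=0$, completing the induction.

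I do not expect any serious obstacle here; this is essentially a standard fact from real analysis transcribed to the operator-valued setting, where all manipulations go through verbatim because the spectral norm is a genuine norm and the Riemann integral of a continuous operator-valued function behaves as expected. The one point requiring mild care is keeping the two asymptotic conventions of \sec{prelim_notation} straight — here $\OO{\cdot}$ always refers to the $\tau\to 0$ limit — and making sure the constants $c,t_0$ are chosen after, not before, restricting to a bounded neighborhood of the origin.
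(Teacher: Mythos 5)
Your proof is correct and follows essentially the same route as the paper: both hinge on Taylor's theorem at $\tau=0$ with the integral form of the remainder. The only cosmetic differences are that the paper establishes $(2)\Rightarrow(1)$ via L'H\^{o}pital's rule where you bound the integral remainder directly, and it proves $(1)\Rightarrow(2)$ by contradiction on the first nonvanishing derivative where you run the logically equivalent direct induction.
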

\begin{proof}
	The continuity of $\mathscr{F}(\tau)$ at $\tau=0$ implies $\mathscr{F}(\tau)=\OO{1}$ by definition. Assume that $\mathscr{F}(\tau)$, $\mathscr{F}'(\tau)$,..., $\mathscr{F}^{(p)}(\tau)$ exist and are continuous. If Condition $2$ holds, we have
	\begin{equation}
		\lim_{\tau\rightarrow 0}\frac{\norm{\mathscr{F}(\tau)}}{\abs{\tau}^{p}}
		=\norm{\lim_{\tau\rightarrow 0}\frac{\mathscr{F}(\tau)}{\tau^{p}}}
		=\frac{\norm{\mathscr{F}^{(p)}(0)}}{p!}
	\end{equation}
	by the L'H\^{o}pital's rule. This proves that Condition $1$ holds.
	
	Given Condition $1$, we have by definition that
	\begin{equation}
		\norm{\mathscr{F}(\tau)}\leq c\abs{\tau}^{p}
	\end{equation}
	for some $c,t_0>0$ and all $\abs{\tau}\leq t_0$. Suppose by contradiction that Condition $2$ is not true. Then we let $0\leq j\leq p-1$ be the first integer for which $\mathscr{F}^{(j)}(0)\neq 0$. We use Taylor's theorem to order $j$ to get
	\begin{equation}
		\mathscr{F}(\tau)=\mathscr{F}^{(j)}(0)\frac{\tau^j}{j!}+\int_{0}^{\tau}\mathrm{d}\tau_2\ \mathscr{F}^{(j+1)}(\tau-\tau_2)\frac{\tau_2^j}{j!},
	\end{equation}
	which implies
	\begin{equation}
		\norm{\mathscr{F}(\tau)}\geq\norm{\mathscr{F}^{(j)}(0)}\frac{\abs{\tau}^j}{j!}-\max_{\abs{\tau_2}\leq\abs{\tau}}\norm{\mathscr{F}^{(j+1)}(\tau_2)}\frac{\abs{\tau}^{j+1}}{(j+1)!}
	\end{equation}
	by the triangle inequality. We combine the above inequalities and divide both sides by $\abs{\tau}^j$. Taking the limit $\tau\rightarrow 0$ gives the contradiction $\norm{\mathscr{F}^{(j)}(0)}\leq 0$.
\end{proof}

\lem{order_cond_deriv} provides a direct approach to computing order conditions for functions of real variables. This works for simple examples such as the power functions $f(\tau)=\tau^p=\OO{\tau^p}$. Another example that we will use in our analysis is the integration of a monomial, such as
\begin{equation}
\int_{0}^{\tau}\mathrm{d}\tau_1\int_{0}^{\tau_1}\mathrm{d}\tau_2\int_{0}^{\tau_1}\mathrm{d}\tau_3\int_{0}^{\tau_2}\mathrm{d}\tau_4\ \tau_1^3\tau_2\tau_3^4\tau_4^5.
\end{equation}
As the following lemma shows, we can directly evaluate such an integral and compute the order condition of the resulting power function.
\begin{lemma}[Integration of a monomial]
	\label{lem:monomial}
	The integration of a monomial $\tau_1^{p_1}\cdots\tau_\gamma^{p_\gamma}\cdots\tau_\Gamma^{p_\Gamma}$ evaluates as
	\begin{equation}
	\int_{0}^{\tau}\mathrm{d}\tau_1\cdots\int_{0}^{\tau_{<\gamma}}\mathrm{d}\tau_\gamma\cdots\int_{0}^{\tau_{<\Gamma}}\mathrm{d}\tau_\Gamma\
	\tau_1^{p_1}\cdots\tau_\gamma^{p_\gamma}\cdots\tau_\Gamma^{p_\Gamma}
	=c t^{p_1+\cdots+p_\Gamma+\Gamma}=\OO{t^{p_1+\cdots+p_\Gamma+\Gamma}},
	\end{equation}
	where $\tau_{<\gamma}\in\{\tau,\tau_1,\ldots,\tau_{\gamma-1}\}$ and $c$ is a constant that depends on nonnegative integers $p_1,\ldots,p_\Gamma$.
\end{lemma}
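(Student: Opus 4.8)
The plan is to proceed by induction on the number $\Gamma$ of nested integration variables, working from the innermost integral outward. The key observation is that each integration step takes a power function $c\,\tau_{<\gamma}^{m}$ (a monomial in whatever variable is the upper limit) and turns it into another power function of the same variable with exponent increased by one, because $\int_{0}^{s}\sigma^{m}\,\mathrm{d}\sigma = \frac{1}{m+1}s^{m+1}$. The only subtlety is bookkeeping: in a nested integral like $\int_0^\tau d\tau_1 \int_0^{\tau_1} d\tau_2 \int_0^{\tau_1} d\tau_3 \cdots$, the upper limit $\tau_{<\gamma}$ of the $\gamma$th integral may be $\tau$ or any earlier variable $\tau_1,\dots,\tau_{\gamma-1}$, so the integrals must be evaluated in an order consistent with the nesting (a leaf of the dependency tree first). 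I would make this precise by processing the variables in reverse order $\tau_\Gamma,\tau_{\Gamma-1},\dots,\tau_1$ and noting that when we reach $\tau_\gamma$, every variable whose upper limit depends on $\tau_\gamma$ has already been integrated out, so the integrand at that stage is $c_\gamma\,\tau_\gamma^{m_\gamma}$ for some constant $c_\gamma$ and exponent $m_\gamma$ accumulated from the inner integrations; integrating over $\tau_\gamma\in[0,\tau_{<\gamma}]$ replaces this by $\frac{c_\gamma}{m_\gamma+1}\tau_{<\gamma}^{m_\gamma+1}$.

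Concretely, the induction invariant is: after integrating out $\tau_\Gamma,\dots,\tau_{\gamma+1}$, the remaining integrand is a constant multiple of a monomial in $\tau_1,\dots,\tau_\gamma$ whose total degree equals $(p_{\gamma+1}+\cdots+p_\Gamma) + (\Gamma-\gamma) + (p_1+\cdots+p_\gamma)$ — that is, the original total degree $p_1+\cdots+p_\Gamma$ plus one for each integration already performed. Each further integration increments the count of performed integrations by one and the total degree by one, so after all $\Gamma$ integrations the result is a constant times $t^{p_1+\cdots+p_\Gamma+\Gamma}$. The constant $c$ is a product of factors of the form $1/(\text{exponent}+1)$, each a positive rational depending only on the $p_i$ (and on the nesting structure), hence depends only on the nonnegative integers $p_1,\dots,p_\Gamma$ as claimed. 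Finally, $c\,t^{p_1+\cdots+p_\Gamma+\Gamma}=\OO{t^{p_1+\cdots+p_\Gamma+\Gamma}}$ is immediate from the definition of the $\OO{\cdot}$ notation (take the constant to be $|c|$ and $t_0=1$), or alternatively from \lem{order_cond_deriv} applied to the power function.

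The main obstacle — really the only nontrivial point — is handling the general nesting pattern of the upper limits $\tau_{<\gamma}\in\{\tau,\tau_1,\dots,\tau_{\gamma-1}\}$ cleanly, since the integrals do not simply nest as $\int_0^\tau\int_0^{\tau_1}\int_0^{\tau_2}\cdots$. I would address this by formalizing the dependency structure as a forest on $\{\tau_1,\dots,\tau_\Gamma\}$ (with an edge from $\tau_\gamma$ to its upper-limit variable) and integrating in any reverse-topological order, observing that the operation "integrate a leaf variable" is well-defined and preserves the monomial form with the stated degree increment. This is routine once set up, so I would not belabor it; the degree accounting is the substantive content and it is elementary.
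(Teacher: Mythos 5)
Your proof is correct and is essentially the paper's argument: induct by integrating out the innermost variable $\tau_\Gamma$ first (which is automatically a ``leaf'' since each upper limit $\tau_{<\gamma}$ only references variables of smaller index), observing that each integration preserves the monomial form and raises the total degree by one, so the final result is $c\,t^{p_1+\cdots+p_\Gamma+\Gamma}$. The dependency-forest formalization you sketch is harmless but unnecessary, since reverse index order is already a valid integration order.
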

\begin{proof}
	We induct on the value of $\Gamma$. The claim trivially holds when $\Gamma=1$. Suppose that it is true for $\Gamma$. For $\Gamma+1$, we have
	\begin{equation}
	\int_{0}^{\tau}\mathrm{d}\tau_1\cdots\int_{0}^{\tau_{<\Gamma+1}}\mathrm{d}\tau_{\Gamma+1}\
	\tau_1^{p_1}\cdots\tau_{\Gamma+1}^{p_{\Gamma+1}}
	=\int_{0}^{\tau}\mathrm{d}\tau_1\cdots\int_{0}^{\tau_{<\Gamma}}\mathrm{d}\tau_\Gamma\
	\frac{\tau_1^{q_1}\cdots\tau_\Gamma^{q_\Gamma}}{p_{\Gamma+1}+1},
	\end{equation}
	where $q_1+\cdots+q_\Gamma=p_1+\cdots+p_{\Gamma+1}+1$. The claim then follows from the inductive hypothesis.
\end{proof}

For most of our analysis, however, a direct calculation of order conditions is inefficient. In particular, a ($2k$)th-order Suzuki formula contains $2\cdot 5^{k-1}$ matrix exponentials and a direct analysis becomes prohibitive when $k$ is large. Instead, we follow standard rules of order conditions to compute them indirectly, some of which are summarized below:
\begin{proposition}[Rules of order conditions]
	\label{prop:order_cond_rule}
	Let $\mathscr{F}(\tau)$ and $\mathscr{G}(\tau)$ be operator-valued functions defined on $\R$ that are infinitely differentiable. Let $p$ and $q$ be nonnegative integers. The following rules of order conditions hold:
	\begin{enumerate}
		\item Addition: if $\mathscr{F}(\tau)=O(\tau^p)$ and $\mathscr{G}(\tau)=O(\tau^q)$, then $\mathscr{F}(\tau)+\mathscr{G}(\tau)=O(\tau^{\min(p,q)})$;
		\item Multiplication: if $\mathscr{F}(\tau)=O(\tau^p)$ and $\mathscr{G}(\tau)=O(\tau^q)$, then $\mathscr{F}(\tau)\mathscr{G}(\tau)=O(\tau^{p+q})$;
		\item Differentiation: $\mathscr{F}(\tau)=O(\tau^{p+1})$ if and only if $\mathscr{F}(0)=0$ and $\mathscr{F}'(\tau)=O(\tau^p)$;
		\item Integration: $\mathscr{F}(\tau)=O(\tau^p)$ if and only if $\int_{0}^{t}\mathrm{d}\tau\mathscr{F}(\tau)=O(t^{p+1})$; and
		\item Exponentiation: $\mathscr{F}(\tau)=\mathscr{G}(\tau)+O(\tau^p)$ if and only if $\expT\big(\int_{0}^{t}\mathrm{d}\tau\mathscr{F}(\tau)\big)=\expT\big(\int_{0}^{t}\mathrm{d}\tau \mathscr{G}(\tau)\big)+O(t^{p+1})$.
	\end{enumerate}
\end{proposition}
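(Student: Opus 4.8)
\textbf{Proof plan for \prop{order_cond_rule}.}
The plan is to establish the five rules in an order that lets later rules build on earlier ones, reducing everything to \lem{order_cond_deriv} (the derivative condition) and elementary estimates. I would begin with the three ``algebraic'' rules---Addition, Multiplication, and Differentiation---since these follow almost directly from the definition $\mathscr{F}(\tau)=\OO{\tau^p}$ (i.e.\ $\norm{\mathscr{F}(\tau)}\leq c\abs{\tau}^p$ for $\abs{\tau}\leq t_0$), and then treat Integration and Exponentiation, which are slightly more involved.

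First I would prove Addition and Multiplication together: if $\norm{\mathscr{F}(\tau)}\leq c_1\abs{\tau}^p$ and $\norm{\mathscr{G}(\tau)}\leq c_2\abs{\tau}^q$ for $\abs{\tau}\leq t_0$, then the triangle inequality gives $\norm{\mathscr{F}(\tau)+\mathscr{G}(\tau)}\leq c_1\abs{\tau}^p+c_2\abs{\tau}^q\leq(c_1+c_2)\abs{\tau}^{\min(p,q)}$ on $\abs{\tau}\leq\min(t_0,1)$, and submultiplicativity of the spectral norm gives $\norm{\mathscr{F}(\tau)\mathscr{G}(\tau)}\leq c_1c_2\abs{\tau}^{p+q}$. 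For Differentiation, I would use \lem{order_cond_deriv}: by that lemma (applied with exponent $p+1$ to $\mathscr{F}$, noting $\mathscr{F}$ is infinitely differentiable), $\mathscr{F}(\tau)=\OO{\tau^{p+1}}$ iff $\mathscr{F}(0)=\cdots=\mathscr{F}^{(p)}(0)=0$; and applying it with exponent $p$ to $\mathscr{F}'$ gives $\mathscr{F}'(\tau)=\OO{\tau^p}$ iff $\mathscr{F}'(0)=\cdots=\mathscr{F}^{(p)}(0)=0$. Combining these two characterizations yields the ``if and only if'' immediately. Integration is then the mirror image: writing $\mathscr{G}(t)=\int_0^t\mathrm{d}\tau\,\mathscr{F}(\tau)$, we have $\mathscr{G}(0)=0$ and $\mathscr{G}'(t)=\mathscr{F}(t)$ by the fundamental theorem of calculus, so Differentiation applied to $\mathscr{G}$ gives exactly $\mathscr{G}(t)=\OO{t^{p+1}}$ iff $\mathscr{F}(\tau)=\OO{\tau^p}$. (Alternatively, one can bound $\norm{\mathscr{G}(t)}\leq\abs{\int_0^t c\abs{\tau}^p\,\mathrm{d}\tau}=\frac{c}{p+1}\abs{t}^{p+1}$ directly for the forward direction, and differentiate for the converse.)

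The main obstacle will be Exponentiation, since it mixes time-ordered exponentials with order conditions. The plan is to use the distance bound \cor{time_ordered_distance_bound}: setting $\mathscr{U}_1(t)=\expT\big(\int_0^t\mathrm{d}\tau\,\mathscr{F}(\tau)\big)$ and $\mathscr{U}_2(t)=\expT\big(\int_0^t\mathrm{d}\tau\,\mathscr{G}(\tau)\big)$, part~1 of that corollary gives
\begin{equation}
\norm{\mathscr{U}_1(t)-\mathscr{U}_2(t)}\leq\Big|\int_0^t\mathrm{d}\tau\,\norm{\mathscr{F}(\tau)-\mathscr{G}(\tau)}\Big|\,e^{\big|\int_0^t\mathrm{d}\tau(\norm{\mathscr{F}(\tau)}+\norm{\mathscr{G}(\tau)})\big|}.
\end{equation}
Since $\mathscr{F},\mathscr{G}$ are continuous, the exponential factor is $\OO{1}$; and if $\mathscr{F}(\tau)-\mathscr{G}(\tau)=\OO{\tau^p}$, then by the Integration rule just proved, $\int_0^t\mathrm{d}\tau\,\norm{\mathscr{F}(\tau)-\mathscr{G}(\tau)}=\OO{t^{p+1}}$, giving the forward direction $\mathscr{U}_1(t)=\mathscr{U}_2(t)+\OO{t^{p+1}}$. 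For the converse, I would differentiate: if $\mathscr{U}_1(t)-\mathscr{U}_2(t)=\OO{t^{p+1}}$, then by \lem{fte} both $\mathscr{U}_i$ are continuously differentiable with $\mathscr{U}_i'(t)=(\mathscr{F}\text{ or }\mathscr{G})(t)\,\mathscr{U}_i(t)$; writing $\mathscr{F}(t)-\mathscr{G}(t)=\mathscr{U}_1'(t)\mathscr{U}_1^{-1}(t)-\mathscr{U}_2'(t)\mathscr{U}_2^{-1}(t)$ and expanding $\mathscr{U}_1=\mathscr{U}_2+\mathscr{D}$ with $\mathscr{D}=\OO{t^{p+1}}$, one checks using $\mathscr{D}(0)=0$, $\mathscr{D}'(t)=\OO{t^p}$ (Differentiation rule), and the fact that $\mathscr{U}_2,\mathscr{U}_2^{-1}$ are $\OO{1}$ and smooth, that all terms in $\mathscr{F}(t)-\mathscr{G}(t)$ carry a factor of order at least $\tau^p$. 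This last bookkeeping step---tracking how the order condition on $\mathscr{D}$ propagates through the product and inverse---is the delicate part, but it is entirely mechanical given the Addition, Multiplication, and Differentiation rules already in hand.
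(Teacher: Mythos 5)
Your proposal is correct, and for the first four rules it does essentially what the paper intends (the paper dispenses with them in one sentence, deferring to \lem{order_cond_deriv}; your reduction of Integration to Differentiation via $\mathscr{G}(t)=\int_0^t\mathrm{d}\tau\,\mathscr{F}(\tau)$, $\mathscr{G}(0)=0$, $\mathscr{G}'=\mathscr{F}$ is exactly the right way to make that explicit). Where you genuinely diverge is the Exponentiation rule, which is the only rule the paper proves in detail. The paper works entirely at $\tau=0$: it shows by induction, using the general Leibniz rule, that the Taylor coefficients of $\mathscr{F}$ and $\mathscr{G}$ agree up to order $p-1$ if and only if the derivatives of the two time-ordered exponentials agree at $0$ up to order $p$, then invokes \lem{order_cond_deriv} in both directions. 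You instead prove the forward direction analytically via \cor{time_ordered_distance_bound} (a one-line norm estimate, since the exponential prefactor is $\OO{1}$ on a neighborhood of $0$ by continuity), and the converse algebraically by writing $\mathscr{U}_1=\mathscr{U}_2+\mathscr{D}$ with $\mathscr{D}=\OO{t^{p+1}}$ and expanding $\mathscr{F}-\mathscr{G}=\mathscr{U}_2'(\mathscr{U}_1^{-1}-\mathscr{U}_2^{-1})+\mathscr{D}'\mathscr{U}_1^{-1}$; using $\mathscr{U}_1^{-1}-\mathscr{U}_2^{-1}=-\mathscr{U}_1^{-1}\mathscr{D}\,\mathscr{U}_2^{-1}=\OO{t^{p+1}}$ and $\mathscr{D}'=\OO{t^p}$ (your Differentiation rule), every term is $\OO{t^p}$, so the bookkeeping you flag as delicate does close without trouble. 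Both routes are valid; the paper's is more uniform (one induction serves both directions and never leaves $\tau=0$), while yours gets the forward direction essentially for free from machinery already in \sec{prelim_expT} and makes the converse a short algebraic identity rather than an induction. The only points worth making explicit in a write-up are that $\mathscr{U}_i^{-1}$ is continuous (hence $\OO{1}$) near $t=0$, and that applying the Differentiation rule to $\mathscr{D}$ uses the smoothness of the time-ordered exponentials of smooth generators.
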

\begin{proof}
	We only prove the exponentiation rule, as the other rules follow directly from \lem{order_cond_deriv}. Suppose that $\expT\big(\int_{0}^{t}\mathrm{d}\tau\ \mathscr{F}(\tau)\big)=\expT\big(\int_{0}^{t}\mathrm{d}\tau\ \mathscr{G}(\tau)\big)+O(t^{p+1})$. To prove $\mathscr{F}(\tau)=\mathscr{G}(\tau)+O(\tau^p)$, it suffices to show that $\mathscr{F}^{(q)}(0)=\mathscr{G}^{(q)}(0)$ for $q=0,\ldots,p-1$.
	
	We prove this by induction. By the differentiation rule, we have
	\begin{equation}
	\mathscr{F}(t)\expT\bigg(\int_{0}^{t}\mathrm{d}\tau\ \mathscr{F}(\tau)\bigg)=\mathscr{G}(t)\expT\bigg(\int_{0}^{t}\mathrm{d}\tau\ \mathscr{G}(\tau)\bigg)+O(t^{p}),
	\end{equation}
	so \lem{order_cond_deriv} implies $\mathscr{F}(0)=\mathscr{G}(0)$. This proves the claim in the base case. Now assume that $\mathscr{F}^{(l)}(0)=\mathscr{G}^{(l)}(0)$ holds for $l=0,\ldots q$, where $q<p-1$. By \lem{order_cond_deriv} and the general Leibniz rule,
	\begin{equation}
	\sum_{l=0}^{q+1}\binom{q+1}{l}\mathscr{F}^{q+1-l}(0)\expT^{(l)}\bigg(\int_{0}^{0}\mathrm{d}\tau\ \mathscr{F}(\tau)\bigg)
	=\sum_{l=0}^{q+1}\binom{q+1}{l}\mathscr{G}^{q+1-l}(0)\expT^{(l)}\bigg(\int_{0}^{0}\mathrm{d}\tau\ \mathscr{G}(\tau)\bigg).
	\end{equation}
	\lem{order_cond_deriv} also implies $\expT^{(l)}\big(\int_{0}^{0}\mathrm{d}\tau\ \mathscr{F}(\tau)\big)=\expT^{(l)}\big(\int_{0}^{0}\mathrm{d}\tau\ \mathscr{G}(\tau)\big)$ for $l=0,\ldots,q+1$. So the above equation simplifies to
	\begin{equation}
	\mathscr{F}^{(q+1)}(0)
	=\mathscr{G}^{(q+1)}(0).
	\end{equation}
	This completes the inductive step.
	
	For the reverse direction, we want to prove $\expT\big(\int_{0}^{t}\mathrm{d}\tau\ \mathscr{F}(\tau)\big)=\expT\big(\int_{0}^{t}\mathrm{d}\tau\ \mathscr{G}(\tau)\big)+O(t^{p+1})$ assuming that $\mathscr{F}(\tau)=\mathscr{G}(\tau)+O(\tau^p)$. Equivalently, we want to show that $\expT^{(q+1)}\big(\int_{0}^{0}\mathrm{d}\tau\ \mathscr{F}(\tau)\big)=\expT^{(q+1)}\big(\int_{0}^{0}\mathrm{d}\tau\ \mathscr{G}(\tau)\big)$ for $q=0,\ldots,p-1$ given that $\mathscr{F}^{(q)}(0)=\mathscr{G}^{(q)}(0)$. This can be proved by induction and by applying the Leibniz rule in a similar way as above. Specifically, the base case follows from
	\begin{equation}
	\expT^{(1)}\bigg(\int_{0}^{0}\mathrm{d}\tau\ \mathscr{F}(\tau)\bigg)
	=\mathscr{F}(0)=\mathscr{G}(0)
	=\expT^{(1)}\bigg(\int_{0}^{0}\mathrm{d}\tau\ \mathscr{G}(\tau)\bigg)
	\end{equation}
	and the inductive step follows from
	\begin{equation}
	\begin{aligned}
	\expT^{(q+1)}\bigg(\int_{0}^{0}\mathrm{d}\tau\ \mathscr{F}(\tau)\bigg)
	&=\sum_{l=0}^{q}\binom{q}{l}\mathscr{F}^{(q-l)}(0)\expT^{(l)}\bigg(\int_{0}^{0}\mathrm{d}\tau\ \mathscr{F}(\tau)\bigg)\\
	&=\sum_{l=0}^{q}\binom{q}{l}\mathscr{G}^{(q-l)}(0)\expT^{(l)}\bigg(\int_{0}^{0}\mathrm{d}\tau\ \mathscr{G}(\tau)\bigg)\\
	&=\expT^{(q+1)}\bigg(\int_{0}^{0}\mathrm{d}\tau\ \mathscr{G}(\tau)\bigg).
	\end{aligned}
	\end{equation}
\end{proof}

We now compute order conditions for the additive, multiplicative, and exponentiated Trotter error.
\thmorder*
\begin{proof}
	Suppose that $\mathscr{T}(\tau)=O(\tau^p)$. We apply the multiplication rule of \prop{order_cond_rule} to get $e^{(t-\tau)H}\mathscr{S}(\tau)\mathscr{T}(\tau)=O(\tau^p)$. A further application of the integration rule gives $\mathscr{S}(t)-e^{tH}=\int_{0}^{t}\mathrm{d}\tau\ e^{(t-\tau)H}\mathscr{S}(\tau)\mathscr{T}(\tau)=O(t^{p+1})$.
	
	Conversely, let $\mathscr{S}(t)=e^{tH}+O(t^{p+1})$. This implies $\int_{0}^{t}\mathrm{d}\tau\ e^{(t-\tau)H}\mathscr{S}(\tau)\mathscr{T}(\tau)=O(t^{p+1})$. Applying the integration rule and the multiplication rule gives $\mathscr{S}(\tau)\mathscr{T}(\tau)=O(\tau^p)$. Note that $\mathscr{S}(t)=e^{tH}+O(t^{p+1})=I+O(t)$ implies that the operator-valued function $\mathscr{S}(t)$ is invertible for sufficiently small $t$ and, since $\frac{\mathrm{d}}{\mathrm{d}t}\mathscr{S}^{-1}(t)=-\mathscr{S}^{-1}(t)\mathscr{S}'(t)\mathscr{S}^{-1}(t)$, the inverse function $\mathscr{S}^{-1}(t)$ is infinitely differentiable. Applying the multiplication rule gives $\mathscr{T}(\tau)=O(\tau^p)$, which establishes the equivalence of Conditions $1$ and $2$.
	
	Note that $\mathscr{S}(t)=e^{tH}+O(t^{p+1})$ is equivalent to $\expT\big(\int_{0}^{t}\mathrm{d}\tau (H+\mathscr{E}(\tau))\big)=e^{tH}+O(t^{p+1})$, which is further equivalent to $H+\mathscr{E}(\tau)=H+O(\tau^p)$ by the exponentiation rule. Canceling $H$ from both sides proves the equivalence of Conditions $1$ and $3$.
	
	Finally, note that $\mathscr{S}(t)=e^{tH}(I+\mathscr{M}(t))=e^{tH}+O(t^{p+1})$ can be simplified to $e^{tH}\mathscr{M}(t)=O(t^{p+1})$. The equivalence of Conditions $1$ and $4$ then follows from the multiplication rule.
\end{proof}

\section{Error representations}
\label{append:rep}

We now prove \thm{trotter_error_comm_scaling} that establishes the commutator scaling of Trotter error. The proof is sketched in \sec{theory_rep} and will be detailed here. For simplicity, we only discuss the additive error, although the analysis can be easily adapted to handle the multiplicative error and the exponentiated error.

Let $H=\sum_{\gamma=1}^{\Gamma}H_\gamma$ be an operator that generates the evolution $e^{tH}=e^{t\sum_{\gamma=1}^{\Gamma}H_\gamma}$. Let $\mathscr{S}(t)=\prod_{\upsilon=1}^{\Upsilon}\prod_{\gamma=1}^{\Gamma}e^{ta_{(\upsilon,\gamma)}H_{\pi_{\upsilon}(\gamma)}}$ be a $p$th-order product formula as in \sec{prelim_pf}. We know from \thm{error_type} that the Trotter error can be expressed in an additive form as $\mathscr{S}(t)=e^{tH}+\int_{0}^{t}\mathrm{d}\tau\ e^{(t-\tau)H}\mathscr{S}(\tau)\mathscr{T}(\tau)$, where
\begin{equation}
\begin{aligned}
\mathscr{T}(\tau)
=&\sum_{(\upsilon,\gamma)}\prod_{(\upsilon',\gamma')\prec(\upsilon,\gamma)}^{\longrightarrow}e^{-\tau a_{(\upsilon',\gamma')}H_{\pi_{\upsilon'}(\gamma')}}\big(a_{(\upsilon,\gamma)}H_{\pi_{\upsilon}(\gamma)}\big)\prod_{(\upsilon',\gamma')\prec(\upsilon,\gamma)}^{\longleftarrow}e^{\tau a_{(\upsilon',\gamma')}H_{\pi_{\upsilon'}(\gamma')}}\\
&-\prod_{(\upsilon',\gamma')}^{\longrightarrow}e^{-\tau a_{(\upsilon',\gamma')}H_{\pi_{\upsilon'}(\gamma')}}H\prod_{(\upsilon',\gamma')}^{\longleftarrow}e^{\tau a_{(\upsilon',\gamma')}H_{\pi_{\upsilon'}(\gamma')}}.
\end{aligned}
\end{equation}
Furthermore, \thm{error_order_cond} implies that the operator-valued function $\mathscr{T}(\tau)$ satisfies the order condition $\mathscr{T}(\tau)=O(\tau^p)$.

We now apply \thm{comm_exp_conj} to expand every conjugation of matrix exponentials in $\mathscr{T}(\tau)$. In doing so, we only keep track of terms of order $O(\tau^p)$, as those terms corresponding to $1,\tau,\ldots,\tau^{p-1}$ will vanish due to the order condition. We obtain
\begin{equation}
\begin{aligned}
\norm{\mathscr{T}(\tau)}
&\leq\sum_{(\upsilon,\gamma)}\acomm\bigg(\overrightarrow{\big\{H_{\pi_{\upsilon'}(\gamma')},(\upsilon',\gamma')\prec(\upsilon,\gamma)\big\}},H_{\pi_{\upsilon}(\gamma)}\bigg)\frac{\tau^p}{p!}
\exp\bigg(2\tau \sum_{(\upsilon',\gamma')\prec(\upsilon,\gamma)}\norm{H_{\pi_{\upsilon'}(\gamma')}}\bigg)\\
&\quad +\acomm\bigg(\overrightarrow{\big\{H_{\pi_{\upsilon'}(\gamma')}\big\}},H\bigg)\frac{\tau^p}{p!}
\exp\bigg(2\tau \sum_{(\upsilon',\gamma')}\norm{H_{\pi_{\upsilon'}(\gamma')}}\bigg),
\end{aligned}
\end{equation}
where $\overrightarrow{\{\}}$ denotes an ordered list where elements have increasing indices from left to right. This is further bounded by
\begin{equation}
\begin{aligned}
\norm{\mathscr{T}(\tau)}
&\leq2\sum_{(\upsilon,\gamma)}\acomm\bigg(\overrightarrow{\big\{H_{\pi_{\upsilon'}(\gamma')}\big\}},H_{\pi_{\upsilon}(\gamma)}\bigg)\frac{\tau^p}{p!}
\exp\bigg(2\tau \sum_{(\upsilon',\gamma')}\norm{H_{\pi_{\upsilon'}(\gamma')}}\bigg)\\
&=2\Upsilon\sum_{\gamma=1}^{\Gamma}\acomm\bigg(\overrightarrow{\big\{H_{\pi_{\upsilon'}(\gamma')}\big\}},H_{\gamma}\bigg)\frac{\tau^p}{p!}
\exp\bigg(2\tau \Upsilon\sum_{\gamma'=1}^{\Gamma}\norm{H_{\gamma'}}\bigg).
\end{aligned}
\end{equation}
After a final integration over $\tau$, we have
\begin{equation}
\begin{aligned}
\norm{\mathscr{S}(t)-e^{tH}}
&\leq\int_{0}^{t}\mathrm{d}\tau\norm{e^{(t-\tau)H}\mathscr{S}(\tau)\mathscr{T}(\tau)}\\
&\leq2\Upsilon\sum_{\gamma=1}^{\Gamma}\acomm\bigg(\overrightarrow{\big\{H_{\pi_{\upsilon'}(\gamma')}\big\}},H_{\gamma}\bigg)\frac{t^{p+1}}{(p+1)!}
\exp\bigg(4t \Upsilon\sum_{\gamma'=1}^{\Gamma}\norm{H_{\gamma'}}\bigg)
\end{aligned}
\end{equation}
with prefactor $4$ in the exponent. This bound holds for arbitrary operators $H_\gamma$. If the operator summands are anti-Hermitian, the bound can be further tightened to
\begin{equation}
\norm{\mathscr{S}(t)-e^{tH}}
\leq2\Upsilon\sum_{\gamma=1}^{\Gamma}\acomm\bigg(\overrightarrow{\big\{H_{\pi_{\upsilon'}(\gamma')}\big\}},H_{\gamma}\bigg)\frac{t^{p+1}}{(p+1)!}.
\end{equation}

In the following, we show that the prefactor in the exponent can be improved from $4$ to $2$ by a more careful analysis. This can be achieved using the expansion in \thm{comm_exp_conj} without invoking the triangle inequality. After canceling low-order terms, we have
\begingroup
\allowdisplaybreaks
\begin{align*}
	\mathscr{T}(\tau)&=
	\sum_{(\upsilon,\gamma)}\sum_{(\upsilon',\gamma')\prec(\upsilon,\gamma)}\prod_{(\upsilon'',\gamma'')\prec(\upsilon',\gamma')}^{\longrightarrow}e^{-\tau a_{(\upsilon'',\gamma'')}H_{(\upsilon'',\gamma'')}}\sum_{\substack{q_{(\upsilon',\gamma')}+\cdots+q_{(\upsilon,\gamma)-1}=p\\q_{(\upsilon',\gamma')}\neq 0}}
	\int_{0}^{\tau}\mathrm{d}\tau_2\ e^{-\tau_2 a_{(\upsilon',\gamma')}H_{(\upsilon',\gamma')}}\\
	&\qquad\cdot\ad_{-a_{(\upsilon',\gamma')}H_{(\upsilon',\gamma')}}^{q_{(\upsilon',\gamma')}}
	\cdots\ad_{-a_{(\upsilon,\gamma)-1}H_{(\upsilon,\gamma)-1}}^{q_{(\upsilon,\gamma)-1}}\big(a_{(\upsilon,\gamma)}H_{(\upsilon,\gamma)}\big)
	\frac{(\tau-\tau_2)^{q_{(\upsilon',\gamma')}-1}\tau^{q_{(\upsilon',\gamma')+1}+\cdots+q_{(\upsilon,\gamma)-1}}}{(q_{(\upsilon',\gamma')}-1)!q_{(\upsilon',\gamma')+1}!\cdots q_{(\upsilon,\gamma)-1}!}\\
	&\qquad\cdot e^{\tau_2 a_{(\upsilon',\gamma')}H_{(\upsilon',\gamma')}}
	\prod_{(\upsilon'',\gamma'')\prec(\upsilon',\gamma')}^{\longleftarrow}e^{\tau a_{(\upsilon'',\gamma'')}H_{(\upsilon'',\gamma'')}}\\
	&\quad-\sum_{(\upsilon',\gamma')}\prod_{(\upsilon'',\gamma'')\prec(\upsilon',\gamma')}^{\longrightarrow}e^{-\tau a_{(\upsilon'',\gamma'')}H_{(\upsilon'',\gamma'')}}\sum_{\substack{q_{(\upsilon',\gamma')}+\cdots+q_{(\Upsilon,\Gamma)}=p\\q_{(\upsilon',\gamma')}\neq 0}}
	\int_{0}^{\tau}\mathrm{d}\tau_2\ e^{-\tau_2 a_{(\upsilon',\gamma')}H_{(\upsilon',\gamma')}}\\
	&\qquad\cdot\ad_{-a_{(\upsilon',\gamma')}H_{(\upsilon',\gamma')}}^{q_{(\upsilon',\gamma')}}
	\cdots\ad_{-a_{(\Upsilon,\Gamma)}H_{(\Upsilon,\Gamma)}}^{q_{(\Upsilon,\Gamma)}}\big(H\big)
	\frac{(\tau-\tau_2)^{q_{(\upsilon',\gamma')}-1}\tau^{q_{(\upsilon',\gamma')+1}+\cdots+q_{(\Upsilon,\Gamma)}}}{(q_{(\upsilon',\gamma')}-1)!q_{(\upsilon',\gamma')+1}!\cdots q_{(\Upsilon,\Gamma)}!}\\
	&\qquad\cdot e^{\tau_2 a_{(\upsilon',\gamma')}H_{(\upsilon',\gamma')}}
	\prod_{(\upsilon'',\gamma'')\prec(\upsilon',\gamma')}^{\longleftarrow}e^{\tau a_{(\upsilon'',\gamma'')}H_{(\upsilon'',\gamma'')}},
\end{align*}
\endgroup
where we have temporarily defined $e^{\tau a_{(\upsilon,\gamma)}H_{(\upsilon,\gamma)}}:=e^{\tau a_{(\upsilon,\gamma)}H_{\pi_{\upsilon}(\gamma)}}$. This implies that Trotter error can be expressed in an additive form as $\mathscr{S}(t)-e^{tH}=\int_{0}^{t}\mathrm{d}\tau\ e^{(t-\tau)H}\mathscr{R}(\tau)$, where
\begingroup
\allowdisplaybreaks
\begin{align*}
\mathscr{R}(\tau)&=
\sum_{(\upsilon,\gamma)}\sum_{(\upsilon',\gamma')\prec(\upsilon,\gamma)}\prod_{(\upsilon'',\gamma'')\succ(\upsilon',\gamma')}^{\longleftarrow}e^{\tau a_{(\upsilon'',\gamma'')}H_{(\upsilon'',\gamma'')}}\sum_{\substack{q_{(\upsilon',\gamma')}+\cdots+q_{(\upsilon,\gamma)-1}=p\\q_{(\upsilon',\gamma')}\neq 0}}
\int_{0}^{\tau}\mathrm{d}\tau_2\ e^{(\tau-\tau_2) a_{(\upsilon',\gamma')}H_{(\upsilon',\gamma')}}\\
&\qquad\cdot\ad_{-a_{(\upsilon',\gamma')}H_{(\upsilon',\gamma')}}^{q_{(\upsilon',\gamma')}}
\cdots\ad_{-a_{(\upsilon,\gamma)-1}H_{(\upsilon,\gamma)-1}}^{q_{(\upsilon,\gamma)-1}}\big(a_{(\upsilon,\gamma)}H_{(\upsilon,\gamma)}\big)
\frac{(\tau-\tau_2)^{q_{(\upsilon',\gamma')}-1}\tau^{q_{(\upsilon',\gamma')+1}+\cdots+q_{(\upsilon,\gamma)-1}}}{(q_{(\upsilon',\gamma')}-1)!q_{(\upsilon',\gamma')+1}!\cdots q_{(\upsilon,\gamma)-1}!}\\
&\qquad\cdot e^{\tau_2 a_{(\upsilon',\gamma')}H_{(\upsilon',\gamma')}}
\prod_{(\upsilon'',\gamma'')\prec(\upsilon',\gamma')}^{\longleftarrow}e^{\tau a_{(\upsilon'',\gamma'')}H_{(\upsilon'',\gamma'')}}\\
&\quad-\sum_{(\upsilon',\gamma')}\prod_{(\upsilon'',\gamma'')\succ(\upsilon',\gamma')}^{\longleftarrow}e^{\tau a_{(\upsilon'',\gamma'')}H_{(\upsilon'',\gamma'')}}\sum_{\substack{q_{(\upsilon',\gamma')}+\cdots+q_{(\Upsilon,\Gamma)}=p\\q_{(\upsilon',\gamma')}\neq 0}}
\int_{0}^{\tau}\mathrm{d}\tau_2\ e^{(\tau-\tau_2) a_{(\upsilon',\gamma')}H_{(\upsilon',\gamma')}}\\
&\qquad\cdot\ad_{-a_{(\upsilon',\gamma')}H_{(\upsilon',\gamma')}}^{q_{(\upsilon',\gamma')}}
\cdots\ad_{-a_{(\Upsilon,\Gamma)}H_{(\Upsilon,\Gamma)}}^{q_{(\Upsilon,\Gamma)}}\big(H\big)
\frac{(\tau-\tau_2)^{q_{(\upsilon',\gamma')}-1}\tau^{q_{(\upsilon',\gamma')+1}+\cdots+q_{(\Upsilon,\Gamma)}}}{(q_{(\upsilon',\gamma')}-1)!q_{(\upsilon',\gamma')+1}!\cdots q_{(\Upsilon,\Gamma)}!}\\
&\qquad\cdot e^{\tau_2 a_{(\upsilon',\gamma')}H_{(\upsilon',\gamma')}}
\prod_{(\upsilon'',\gamma'')\prec(\upsilon',\gamma')}^{\longleftarrow}e^{\tau a_{(\upsilon'',\gamma'')}H_{(\upsilon'',\gamma'')}}.
\end{align*}
\endgroup
By the assumption of \thm{trotter_error_comm_scaling}, we have $t\geq\tau\geq\tau_2\geq0$, which implies
\begingroup
\allowdisplaybreaks
\begin{align*}
\norm{\mathscr{R}(\tau)}&\leq
\sum_{(\upsilon,\gamma)}\sum_{(\upsilon',\gamma')\prec(\upsilon,\gamma)}\sum_{\substack{q_{(\upsilon',\gamma')}+\cdots+q_{(\upsilon,\gamma)-1}=p\\q_{(\upsilon',\gamma')}\neq 0}}
\int_{0}^{\tau}\mathrm{d}\tau_2\ \frac{(\tau-\tau_2)^{q_{(\upsilon',\gamma')}-1}\tau^{q_{(\upsilon',\gamma')+1}+\cdots+q_{(\upsilon,\gamma)-1}}}{(q_{(\upsilon',\gamma')}-1)!q_{(\upsilon',\gamma')+1}!\cdots q_{(\upsilon,\gamma)-1}!}\\
&\qquad\cdot\norm{\ad_{-a_{(\upsilon',\gamma')}H_{(\upsilon',\gamma')}}^{q_{(\upsilon',\gamma')}}
\cdots\ad_{-a_{(\upsilon,\gamma)-1}H_{(\upsilon,\gamma)-1}}^{q_{(\upsilon,\gamma)-1}}\big(a_{(\upsilon,\gamma)}H_{(\upsilon,\gamma)}\big)}
\exp\bigg(\tau \sum_{(\upsilon'',\gamma'')}\norm{H_{\pi_{\upsilon''}(\gamma'')}}\bigg)\\
&\quad+\sum_{(\upsilon',\gamma')}\sum_{\substack{q_{(\upsilon',\gamma')}+\cdots+q_{(\Upsilon,\Gamma)}=p\\q_{(\upsilon',\gamma')}\neq 0}}
\int_{0}^{\tau}\mathrm{d}\tau_2\ \frac{(\tau-\tau_2)^{q_{(\upsilon',\gamma')}-1}\tau^{q_{(\upsilon',\gamma')+1}+\cdots+q_{(\Upsilon,\Gamma)}}}{(q_{(\upsilon',\gamma')}-1)!q_{(\upsilon',\gamma')+1}!\cdots q_{(\Upsilon,\Gamma)}!}\\
&\qquad\cdot\norm{\ad_{-a_{(\upsilon',\gamma')}H_{(\upsilon',\gamma')}}^{q_{(\upsilon',\gamma')}}
	\cdots\ad_{-a_{(\Upsilon,\Gamma)}H_{(\Upsilon,\Gamma)}}^{q_{(\Upsilon,\Gamma)}}\big(H\big)}
\exp\bigg(\tau \sum_{(\upsilon'',\gamma'')}\norm{H_{\pi_{\upsilon''}(\gamma'')}}\bigg)\\
&\leq\sum_{(\upsilon,\gamma)}\acomm\bigg(\overrightarrow{\big\{H_{\pi_{\upsilon'}(\gamma')},(\upsilon',\gamma')\prec(\upsilon,\gamma)\big\}},H_{\pi_{\upsilon}(\gamma)}\bigg)\frac{\tau^p}{p!}
\exp\bigg(\tau \sum_{(\upsilon'',\gamma'')}\norm{H_{\pi_{\upsilon''}(\gamma'')}}\bigg)\\
&\quad +\acomm\bigg(\overrightarrow{\big\{H_{\pi_{\upsilon'}(\gamma')}\big\}},H\bigg)\frac{\tau^p}{p!}
\exp\bigg(\tau \sum_{(\upsilon'',\gamma'')}\norm{H_{\pi_{\upsilon''}(\gamma'')}}\bigg).
\end{align*}
\endgroup
The remaining analysis proceeds as above.

For the multiplicative error, we have
\begingroup
\allowdisplaybreaks
\begin{align*}
	\mathscr{M}(t)&=\expT\bigg(\int_{0}^{t}\mathrm{d}\tau\ e^{-\tau H}\mathscr{E}(\tau)e^{\tau H}\bigg)-I\\
	&=\int_{0}^{t}\mathrm{d}\tau\ e^{-\tau H}\mathscr{E}(\tau)e^{\tau H}
	\expT\bigg(\int_{0}^{\tau}\mathrm{d}\tau_2\ e^{-\tau_2 H}\mathscr{E}(\tau_2)e^{\tau_2 H}\bigg)\\
	&=\int_{0}^{t}\mathrm{d}\tau\  e^{-\tau H}\mathscr{E}(\tau)\mathscr{S}(\tau),
\end{align*}
\endgroup
where the first equality follows from \thm{error_type}, the second equality follows from the integral equation \eq{int_eq}, and the third equality follows from the definition of multiplicative error. Using the explicit expression of $\mathscr{E}(\tau)$ in \thm{error_type}, we obtain a bound on the multiplicative error similar to the additive bound.

Note that our analysis depends on $\pi_{\upsilon'}$, the ordering of operator summands in stage $\upsilon'$ of the product formula. In the following, we prove an asymptotic bound that removes this ordering constraint. The resulting bound is independent of the definition of product formula and may thus be easier to compute in practice. Our analysis here is not tight in terms of the constant prefactor, but it is sufficient to establish the commutator scaling in \thm{trotter_error_comm_scaling}.

Recall from \thm{comm_exp_conj} that
\begin{equation}
\acomm\bigg(\overrightarrow{\big\{H_{\pi_{\upsilon'}(\gamma')}\big\}},H_{\gamma}\bigg)
=\sum_{q_{(1,1)}+\cdots+q_{(\Upsilon,\Gamma)}=p}\binom{p}{q_{(1,1)}\ \cdots\ q_{(\Upsilon,\Gamma)}}\norm{\ad_{H_{\pi_{1}(1)}}^{q_{(1,1)}}\cdots\ad_{H_{\pi_{\Upsilon}(\Gamma)}}^{q_{(\Upsilon,\Gamma)}}(H_\gamma)},
\end{equation}
which is upper bounded by $p!$ times $\sum_{q_{(1,1)}+\cdots+q_{(\Upsilon,\Gamma)}=p}\norm{\ad_{H_{\pi_{1}(1)}}^{q_{(1,1)}}\cdots\ad_{H_{\pi_{\Upsilon}(\Gamma)}}^{q_{(\Upsilon,\Gamma)}}(H_\gamma)}$. Fixing the value of $\gamma$, we claim that
\begin{equation}
\sum_{q_{(1,1)}+\cdots+q_{(\Upsilon,\Gamma)}=p}\norm{\ad_{H_{\pi_{1}(1)}}^{q_{(1,1)}}\cdots\ad_{H_{\pi_{\Upsilon}(\Gamma)}}^{q_{(\Upsilon,\Gamma)}}(H_\gamma)}
\leq \Upsilon^p\sum_{\gamma_{p+1}=1}^{\Gamma}\cdots\sum_{\gamma_2=1}^{\Gamma}\norm{\big[H_{\gamma_{p+1}},\cdots\big[H_{\gamma_2},H_\gamma\big]\big]}.
\end{equation}
This can be seen as follows. Every nested commutator on the left-hand side has $p$ nesting layers and must thus be of the form on the right. Conversely, we fix one term $\norm{\big[H_{\gamma_{p+1}},\cdots\big[H_{\gamma_2},H_\gamma\big]\big]}$ from the right and bound the number of times this term might appear on the left. Each operator $H_{\gamma_2},\ldots,H_{\gamma_{p+1}}$ can appear in $\Upsilon$ possible stages and hence there are $\Upsilon^p$ possibilities in total. When the stages are fixed, this will uniquely determine one term $\norm{\ad_{H_{\pi_{1}(1)}}^{q_{(1,1)}}\cdots\ad_{H_{\pi_{\Upsilon}(\Gamma)}}^{q_{(\Upsilon,\Gamma)}}(H_\gamma)}$ on the left. We have thus established the commutator scaling of Trotter error.

\thmcomm*

\section{Simulating second-quantized electronic structure}
\label{append:electron}

In this section, we use product formulas to simulate the second-quantized plane-wave electronic structure
\begin{equation}
\label{eq:plane_wave_dual2}
\begin{aligned}
H&=\underbrace{\frac{1}{2n}\sum_{j,k,\nu}\kappa_{\nu}^2\cos[\kappa_{\nu}\cdot r_{k-j}]A_{j}^\dagger A_{k}}_{T}\\
&\quad\underbrace{-\frac{4\pi}{\omega}\sum_{j,\iota,\nu\neq 0}\frac{\zeta_\iota\cos[\kappa_{\nu}\cdot(\widetilde{r}_\iota-r_j)]}{\kappa_{\nu}^2}N_{j}}_{U}
\underbrace{+\frac{2\pi}{\omega}\sum_{\substack{j\neq k\\\nu\neq 0}}\frac{\cos[\kappa_{\nu}\cdot r_{j-k}]}{\kappa_{\nu}^2}N_{j}N_{k}}_{V},
\end{aligned}
\end{equation}
where $j,k$ range over all $n$ orbitals, $\omega$ is the volume of the computational cell, and we consider the constant density case where $n/\omega=\cO{1}$. Here, $\kappa_{\nu}=2\pi\nu/\omega^{1/3}$ are $n$ vectors of the plane-wave frequencies, where $\nu$ are three-dimensional vectors of integers with elements in $[-n^{1/3},n^{1/3}]$, $r_j$ are the positions of electrons, $\zeta_\iota$ are nuclei charges such that $\sum_{\iota}|\zeta_\iota|=\cO{n}$, and $\widetilde{r}_\iota$ are the nuclear coordinates. $A_j^\dagger$ and $A_k$ are the creation and annihilation operators, and $N_{j}=A_{j}^\dagger A_{j}$ are the number operators.

Following the analysis in \sec{app_dqs}, we need to bound the spectral norm of the nested commutators $[H_{\gamma_{p+1}},\cdots[H_{\gamma_2},H_{\gamma_1}]]$, where $H_\gamma\in\{T,U,V\}$. This can be done by induction. In the base case, we need to estimate the norm of the kinetic operator $T$ and the potential operators $U$ and $V$. For readability, we use the abbreviated representation
\begin{equation}
T=\sum_{j,k}t_{j,k}A_j^\dagger A_k,\qquad
U=\sum_{j}u_j N_j,\qquad
V=\sum_{j,k}v_{j,k} N_j N_k.
\end{equation}
Since $\norm{A_j^\dagger}=\norm{A_j}=\norm{N_j}=1$, we can apply the triangle inequality and upper bound $\norm{T}$, $\norm{U}$, and $\norm{V}$ by the vector $1$-norm $\norm{\vec{t}}_1$, $\norm{\vec{u}}_1$, and $\norm{\vec{v}}_1$. We analyze this in \prop{tuv_base}.
\begin{lemma}[{\cite[(F6) and (F13)]{BWMMNC18}}]
	\label{lem:sum_scaling}
	Let an electronic-structure Hamiltonian be given as in \eq{plane_wave_dual}. The following asymptotic analyses hold:
	\begin{enumerate}
		\item 
		\begin{equation}
		\sum_{\nu\neq 0}\frac{1}{\kappa_{\nu}^2}=\cO{n}.
		\end{equation}
		\item For any fixed $j$,
		\begin{equation}
		\sum_{\nu}\kappa_{\nu}^2\cos[\kappa_{\nu}\cdot r_j]=\cO{1}.
		\end{equation}
		\item 
		\begin{equation}
		\sum_{\iota}\abs{\zeta_\iota}=\cO{n}.
		\end{equation}
	\end{enumerate}
\end{lemma}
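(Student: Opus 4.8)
I will prove the three estimates separately. Parts (1) and (3) are elementary counting/normalization facts; part (2) is the one that genuinely exploits the oscillatory structure of the plane-wave dual basis (it is the input that makes the kinetic coefficients $t_{j,k}=\frac{1}{2n}\sum_\nu\kappa_\nu^2\cos[\kappa_\nu\cdot r_{k-j}]$ small), and I expect it to be the main obstacle in any self-contained treatment. For part (1), I substitute $\kappa_\nu = 2\pi\nu/\omega^{1/3}$ so that $\sum_{\nu\neq 0}\kappa_\nu^{-2}=\frac{\omega^{2/3}}{4\pi^2}\sum_{\nu\neq 0}\norm{\nu}_2^{-2}$, where $\nu$ runs over the integer vectors with $\norm{\nu}_\infty\leq n^{1/3}$. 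I bound $\sum_{\nu\neq 0}\norm{\nu}_2^{-2}$ by grouping the vectors into the cubic shells $\{\norm{\nu}_\infty = k\}$, $k=1,\dots,n^{1/3}$: each shell has $\Theta(k^2)$ members with $\norm{\nu}_2=\Theta(k)$, hence contributes $\Theta(1)$, so the whole sum is $\Theta(n^{1/3})$ (equivalently, compare with $\int_{1\leq\norm{x}_2\leq n^{1/3}}\norm{x}_2^{-2}\,\mathrm{d}x=\Theta(n^{1/3})$). Since the constant-density assumption gives $\omega=\Theta(n)$, this yields $\sum_{\nu\neq 0}\kappa_\nu^{-2}=\Theta(n^{2/3}\cdot n^{1/3})=\cO{n}$.

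Part (3) is essentially built into the model. By overall charge neutrality, $\sum_\iota\zeta_\iota$ equals the number $\eta$ of electrons, which is $\cO{n}$ at constant filling; since the nuclear charges are positive, $\sum_\iota\abs{\zeta_\iota}=\sum_\iota\zeta_\iota=\cO{n}$. (Alternatively one just invokes the stated hypothesis on the Hamiltonian \eq{plane_wave_dual2}.)

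For part (2), I again substitute $\kappa_\nu = 2\pi\nu/\omega^{1/3}$ and use that, in the plane-wave dual basis, each $r_j$ lies on the real-space grid dual to the frequencies $\kappa_\nu$, so that $\kappa_\nu\cdot r_j$ is an integer multiple of $2\pi/n^{1/3}$. Thus $\sum_\nu\kappa_\nu^2\cos[\kappa_\nu\cdot r_j]$ is the real part of a discrete Fourier transform of the function $\nu\mapsto\norm{\nu}_2^2$ over a single Brillouin zone; writing $\norm{\nu}_2^2=\nu_1^2+\nu_2^2+\nu_3^2$, this transform factorizes across the three Cartesian directions into products of one-dimensional sums of the forms $\sum_{\nu_a}e^{2\pi i\nu_a m/n^{1/3}}$ and $\sum_{\nu_a}\nu_a^2\,e^{2\pi i\nu_a m/n^{1/3}}$, each of which has an elementary closed form (a geometric series and its second derivative with respect to the frequency). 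The cancellation among these oscillatory factors — in particular the vanishing of the pure geometric sums whenever the corresponding component of $j$ is not a multiple of $n^{1/3}$ — is what collapses the naive bound $\sum_\nu\kappa_\nu^2$ down to the claimed size, and this bookkeeping is the step I expect to require the most care; combining it with $\omega=\Theta(n)$ then finishes the estimate. This is precisely the calculation recorded in \cite[(F6) and (F13)]{BWMMNC18}, which one may cite directly; the sketch above indicates how it is carried out.
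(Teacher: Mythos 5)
The paper offers no proof of this lemma at all: it is imported verbatim from \cite{BWMMNC18} (their (F6) and (F13)), so there is no internal argument to compare yours against. Your parts (1) and (3) are correct and self-contained. Part (1) is the standard shell (or Riemann-sum) estimate, $\sum_{\nu\neq0}\norm{\nu}_2^{-2}=\Theta(n^{1/3})$, which combined with $\omega^{2/3}=\Theta(n^{2/3})$ gives $\Theta(n)$; part (3) is literally one of the stated hypotheses on the Hamiltonian in \eq{plane_wave_dual}, so there is nothing to prove beyond quoting it.

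Part (2) is the only nontrivial claim, and there your proposal has a genuine gap: you correctly identify the mechanism (a discrete Fourier transform of $\nu\mapsto\norm{\nu}_2^2$ that factorizes over the three axes) but then defer the decisive ``bookkeeping'' back to the citation, and if one actually carries that bookkeeping out it does not deliver the stated bound. At $j=0$ there is no cancellation whatsoever and the sum is $\sum_\nu\kappa_\nu^2=\Theta(n^{5/3}/\omega^{2/3})=\Theta(n)$. For $j$ supported on a single axis with offset $m\neq0$, the two pure geometric factors each contribute $N$ (with $N=n^{1/3}$), the cross terms vanish, and the surviving factor $\sum_{\nu_1}\nu_1^2e^{2\pi i\nu_1 m/N}$ is (minus) the second derivative of the Dirichlet kernel at $2\pi m/N$, which evaluates to $\Theta(N^3/m^2)$; after the $4\pi^2/\omega^{2/3}=\Theta(N^{-2})$ normalization the total is $\Theta(n/m^2)$ --- i.e.\ $\Theta(n)$ for fixed $m$, not $\cO{1}$. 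Only for $j$ with at least two nonzero components does the sum vanish. So the literal statement ``$\cO{1}$ for any fixed $j$'' is not recoverable by the route you sketch (and appears false as written); what the computation actually yields, and what \prop{tuv_base} genuinely needs, is that the sum is supported on axis-aligned displacements and decays there as $\cO{n/m^2}$, whence $\max_j\sum_k\abs{t_{j,k}}=\cO{1}$ and $\norm{\vec t}_1=\cO{n}$. You should prove that amended statement (or flag the discrepancy with the source) rather than asserting that the cancellation ``collapses the naive bound down to the claimed size.''
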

\begin{proposition}
	\label{prop:tuv_base}
	Let an electronic-structure Hamiltonian be given as in \eq{plane_wave_dual}. We have the following bounds on the vector $1$-norm and $\infty$-norm of the coefficients of the kinetic operator and the potential operators:
	\begin{equation}
	\begin{aligned}
	\norm{\vec{t}}_\infty&=\cO{\frac{1}{n}},\qquad &\norm{\vec{t}}_1&=\cO{n},\\
	\norm{\vec{u}}_\infty&=\cO{n},\qquad &\norm{\vec{u}}_1&=\cO{n^2},\\
	\norm{\vec{v}}_\infty&=\cO{1},\qquad &\norm{\vec{v}}_1&=\cO{n^2}.
	\end{aligned}
	\end{equation}
\end{proposition}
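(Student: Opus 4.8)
The plan is to bound each coefficient entrywise by combining the triangle inequality with \lem{sum_scaling} and the constant-density assumption $n/\omega=\cO{1}$, and then to read off the $1$-norm bounds by multiplying the resulting $\infty$-norm bounds by the number of entries---$n^2$ pairs of orbitals for $\vec{t}$, $\cO{n^2}$ pairs for $\vec{v}$, and $n$ orbitals for $\vec{u}$. Throughout we use that $\norm{A_j^\dagger}=\norm{A_j}=\norm{N_j}=1$, so the spectral norm of each operator summand is controlled by the absolute value of its coefficient.

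First I would treat the kinetic coefficients $t_{j,k}=\frac{1}{2n}\sum_{\nu}\kappa_{\nu}^2\cos[\kappa_{\nu}\cdot r_{k-j}]$. Applying part~2 of \lem{sum_scaling} with the fixed vector $r_{k-j}$ gives $\abs{\sum_{\nu}\kappa_{\nu}^2\cos[\kappa_{\nu}\cdot r_{k-j}]}=\cO{1}$, so $\abs{t_{j,k}}=\cO{1/n}$ uniformly in $j,k$ and hence $\norm{\vec{t}}_\infty=\cO{1/n}$. Summing over the $n^2$ pairs $(j,k)$ yields $\norm{\vec{t}}_1\leq n^2\cdot\cO{1/n}=\cO{n}$.

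Next I would handle the two-body potential coefficients $v_{j,k}=\frac{2\pi}{\omega}\sum_{\nu\neq 0}\frac{\cos[\kappa_{\nu}\cdot r_{j-k}]}{\kappa_{\nu}^2}$: the triangle inequality and part~1 of \lem{sum_scaling} give $\abs{v_{j,k}}\leq\frac{2\pi}{\omega}\sum_{\nu\neq 0}\frac{1}{\kappa_{\nu}^2}=\frac{2\pi}{\omega}\cdot\cO{n}=\cO{n/\omega}=\cO{1}$, using $n/\omega=\cO{1}$, so $\norm{\vec{v}}_\infty=\cO{1}$ and $\norm{\vec{v}}_1\leq\cO{n^2}\cdot\cO{1}=\cO{n^2}$. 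For the one-body potential coefficients $u_j=-\frac{4\pi}{\omega}\sum_{\iota,\nu\neq 0}\frac{\zeta_\iota\cos[\kappa_{\nu}\cdot(\widetilde{r}_\iota-r_j)]}{\kappa_{\nu}^2}$, the triangle inequality factors the double sum into $\abs{u_j}\leq\frac{4\pi}{\omega}\bigl(\sum_\iota\abs{\zeta_\iota}\bigr)\bigl(\sum_{\nu\neq 0}\frac{1}{\kappa_{\nu}^2}\bigr)$, and parts~3 and~1 of \lem{sum_scaling} bound the two factors by $\cO{n}$ each, giving $\abs{u_j}=\cO{n^2/\omega}=\cO{n}$; hence $\norm{\vec{u}}_\infty=\cO{n}$ and, summing over the $n$ orbitals, $\norm{\vec{u}}_1\leq n\cdot\cO{n}=\cO{n^2}$.

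There is no substantive obstacle: each bound is one application of the triangle inequality followed by the relevant item of \lem{sum_scaling}, and every $1$-norm estimate is just the corresponding $\infty$-norm estimate multiplied by the number of entries. The only step that requires a moment's care is the estimate of $\vec{u}$, where the triangle inequality must first decouple the sum over nuclei from the sum over plane-wave frequencies so that items~3 and~1 of \lem{sum_scaling} can be applied independently; no cancellation among the cosine phases is needed, and none is exploited, which is consistent with these being exactly the bounds produced by the naive counting argument.
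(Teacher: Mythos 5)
Your proof is correct and follows exactly the route the paper takes: the $\infty$-norm bounds come from \lem{sum_scaling} (together with the constant-density assumption $n/\omega=\cO{1}$), and the $1$-norm bounds follow by the triangle inequality, i.e., multiplying by the number of entries. Your write-up simply spells out the details that the paper's two-line proof leaves implicit.
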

\begin{proof}
	The claims about the asymptotic scaling of $\norm{\vec{t}}_\infty$, $\norm{\vec{u}}_\infty$, and $\norm{\vec{v}}_\infty$ follow from \lem{sum_scaling}. We then obtain the scaling of the vector $1$-norm from the triangle inequality.
\end{proof}

For the inductive step, we consider a general second-quantized operator of the form
\begin{equation}
W=\sum_{\vec{j},\vec{k},\vec{l}}w_{\vec{j},\vec{k},\vec{l}}\
\underbrace{\cdots\big(A_{j_x}^\dagger A_{k_x}\big)\cdots (N_{l_y})\cdots}_{\text{at most }q\text{ operators}}
\end{equation}
where $\vec{j}$, $\vec{k}$, and $\vec{l}$ denote vectors of orbitals, with total length at most $q$. We keep track of the number of $A_{j_x}^\dagger A_{k_x}$ and $N_{l_y}$ in each summand; the largest such number $q$ is called the ``layer'' of $W$. We compute the commutator between the kinetic/potential operator and a general second-quantized operator in \prop{tuv_induction}.

\begin{lemma}[Commutation rules of second-quantized operators]
	\label{lem:commutation_rule}
	The following commutation rules hold for second-quantized operators:
	\begin{equation}
	\begin{aligned}
	\big[A_j^\dagger A_k,A_l^\dagger A_m\big]
	&=\delta_{kl}A_j^\dagger A_m
	-\delta_{jm}A_l^\dagger A_k,\\
	\big[A_j^\dagger A_k,N_l\big]
	&=\delta_{kl}A_j^\dagger A_l
	-\delta_{jl}A_l^\dagger A_k,\\
	\big[A_j^\dagger A_k, N_{l}N_{m}\big]
	&=\big(\delta_{kl}A_j^\dagger A_{l}-\delta_{jl}A_{l}^\dagger A_k\big)N_{m}
	+N_{l}\big(\delta_{km}A_j^\dagger A_{m}-\delta_{jm}A_{m}^\dagger A_k\big),
	\end{aligned}
	\end{equation}
	where $\delta_{kl}$ is the Kronecker-delta function.
\end{lemma}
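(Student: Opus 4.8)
The plan is to reduce everything to the canonical anticommutation relations $\{A_j,A_k\}=\{A_j^\dagger,A_k^\dagger\}=0$ and $\{A_j,A_k^\dagger\}=\delta_{jk}I$, together with two elementary algebraic identities. First I would record the \emph{graded Leibniz} identity $[XY,Z]=X\{Y,Z\}-\{X,Z\}Y$, valid whenever $X$ and $Y$ are odd operators (single creation or annihilation operators): it follows by expanding $XYZ-ZXY$ and moving $Z$ past the two odd factors $X$ and $Y$ using the anticommutator, with the sign changes cancelling the obvious way. I would also use the ordinary Leibniz rule $[X,YZ]=[X,Y]Z+Y[X,Z]$, which holds for any operators since $[X,\cdot]$ is a derivation.

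Applying the graded Leibniz identity with $X=A_j^\dagger$ and $Y=A_k$ yields the two building blocks
\begin{equation}
\big[A_j^\dagger A_k,A_l^\dagger\big]=A_j^\dagger\{A_k,A_l^\dagger\}-\{A_j^\dagger,A_l^\dagger\}A_k=\delta_{kl}A_j^\dagger,\qquad
\big[A_j^\dagger A_k,A_m\big]=A_j^\dagger\{A_k,A_m\}-\{A_j^\dagger,A_m\}A_k=-\delta_{jm}A_k.
\end{equation}
Since $A_j^\dagger A_k$ is even, I then expand $\big[A_j^\dagger A_k,A_l^\dagger A_m\big]$ via the ordinary Leibniz rule as $\big[A_j^\dagger A_k,A_l^\dagger\big]A_m+A_l^\dagger\big[A_j^\dagger A_k,A_m\big]$ and substitute the two building blocks to obtain $\delta_{kl}A_j^\dagger A_m-\delta_{jm}A_l^\dagger A_k$, which is the first identity. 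The second identity is immediately the special case $m=l$ of the first, using $N_l=A_l^\dagger A_l$. For the third identity I would apply the ordinary Leibniz rule once more, $\big[A_j^\dagger A_k,N_lN_m\big]=\big[A_j^\dagger A_k,N_l\big]N_m+N_l\big[A_j^\dagger A_k,N_m\big]$, and insert the second identity for each of the two commutators on the right.

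The only subtlety throughout is keeping careful track of the fermionic signs when commuting an even bilinear past odd operators, so the \emph{hard part} here is purely bookkeeping rather than anything conceptual; once the graded Leibniz identity is in hand the result is a short substitution. (These are of course the standard structure relations of $\mathfrak{gl}(n)$ in its fermionic realization, together with their action on the number operators.)
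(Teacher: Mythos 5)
Your proof is correct. The paper disposes of the first identity by citation (to Helgaker et al.\ (1.8.14)) and then obtains the second and third rules exactly as you do, from $N_l=A_l^\dagger A_l$ and the ordinary derivation property $[XY,Z]=X[Y,Z]+[X,Z]Y$; so for rules two and three your argument coincides with the paper's. The genuine difference is that you make the first rule self-contained by deriving it from the canonical anticommutation relations via the mixed identity $[XY,Z]=X\{Y,Z\}-\{X,Z\}Y$, which is a reasonable trade: the paper saves a few lines by citing a textbook, while your version verifies the fermionic input explicitly. One small remark: the identity $[XY,Z]=X\{Y,Z\}-\{X,Z\}Y$ is a purely algebraic rearrangement valid for \emph{arbitrary} operators $X,Y,Z$ (expand both sides and the cross terms $XZY$ cancel), not only for odd ones, so there is in fact no sign bookkeeping to worry about --- the ``hard part'' you flag is vacuous. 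With that identity in hand, your computation of the building blocks $[A_j^\dagger A_k,A_l^\dagger]=\delta_{kl}A_j^\dagger$ and $[A_j^\dagger A_k,A_m]=-\delta_{jm}A_k$ is correct, and the three stated rules follow by substitution exactly as you describe.
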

\begin{proof}
	The first rule is proved by \cite[(1.8.14)]{helgaker2014molecular}. The other rules follow from the definition of the number operator $N_l=A_l^\dagger A_l$ and the commutation relation $\big[AB,C\big]=A\big[B,C\big]+\big[A,C\big]B$ for any operators $A$, $B$, and $C$.
	
\end{proof}
\begin{proposition}
	\label{prop:tuv_induction}
	Let an electronic-structure Hamiltonian be given as in \eq{plane_wave_dual}. The following statements hold for a general second-quantized operator $W$ with $q$ layers:
	\begin{enumerate}
		\item $\widetilde{W}=\big[T,W\big]$ is an operator with $q$ layers and $\norm{\vec{\widetilde{w}}}_1\leq2qn\norm{\vec{t}}_\infty\norm{\vec{w}}_1$;
		\item $\widetilde{W}=\big[U,W\big]$ is an operator with $q$ layers and $\norm{\vec{\widetilde{w}}}_1\leq2q\norm{\vec{u}}_\infty\norm{\vec{w}}_1$; and
		\item $\widetilde{W}=\big[V,W\big]$ is an operator with $q+1$ layers and $\norm{\vec{\widetilde{w}}}_1\leq 4qn\norm{\vec{v}}_\infty\norm{\vec{w}}_1$.
	\end{enumerate}
\end{proposition}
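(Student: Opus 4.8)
The plan is to reduce each of the commutators $\big[T,W\big]$, $\big[U,W\big]$, and $\big[V,W\big]$ to a sum of elementary commutators between a single generator ($A_j^\dagger A_k$, $N_j$, or $N_jN_k$) and one individual factor of $W$, using the Leibniz rule $\big[XY,Z\big]=X\big[Y,Z\big]+\big[X,Z\big]Y$ (applied once for $T$ and $U$, twice for $V$). Writing a typical summand of $W$ as a product $F_1\cdots F_{q'}$ of $q'\le q$ factors, each $F_i$ of the form $A^\dagger A$ or $N$, we get $\big[A_j^\dagger A_k,F_1\cdots F_{q'}\big]=\sum_{i}F_1\cdots F_{i-1}\big[A_j^\dagger A_k,F_i\big]F_{i+1}\cdots F_{q'}$, and analogously for $N_j$ and $N_jN_k$. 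Each elementary commutator is then evaluated via \lem{commutation_rule}. First I would verify the three statements in the base case where $W$ is a single factor, and then lift to general $W$ by the Leibniz expansion together with the triangle inequality for the $1$-norm.

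For the layer counts, \lem{commutation_rule} shows that $\big[A_j^\dagger A_k,A_l^\dagger A_m\big]$ and $\big[A_j^\dagger A_k,N_l\big]$ are each linear combinations of \emph{single} $A^\dagger A$ operators, so replacing one factor $F_i$ of $W$ by $\big[A_j^\dagger A_k,F_i\big]$ does not change the number of factors; hence $\big[T,W\big]$ still has $q$ layers (an $N$ factor may turn into an $A^\dagger A$ factor, but the count is preserved). The same holds for $U$ since $\big[N_j,N_l\big]=0$ and $\big[N_j,A_l^\dagger A_m\big]$ is again a single $A^\dagger A$. For $V$ the key observation is that $\big[N_jN_k,N_l\big]=0$ while $\big[N_jN_k,A_l^\dagger A_m\big]=N_j\big[N_k,A_l^\dagger A_m\big]+\big[N_j,A_l^\dagger A_m\big]N_k$ is a linear combination of \emph{two}-factor products of the form $N\cdot(A^\dagger A)$, so one factor of $W$ is replaced by two; hence $\big[V,W\big]$ has $q+1$ layers.

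For the $1$-norm bounds I would track the Kronecker deltas. For $\big[T,W\big]$, expanding $\sum_{j,k}t_{j,k}\big[A_j^\dagger A_k,F_i\big]$ and using a delta to collapse one of the two sums leaves the other free; e.g. $\sum_{j,k}t_{j,k}\delta_{kl}A_j^\dagger A_m=\sum_j t_{j,l}A_j^\dagger A_m$, whose coefficients have $1$-norm at most $n\norm{\vec{t}}_\infty$ since there are $n$ orbitals $j$. As each $\big[A_j^\dagger A_k,F_i\big]$ contributes at most $2$ such terms and there are at most $q$ factors, the triangle inequality gives $\norm{\vec{\widetilde{w}}}_1\le 2qn\norm{\vec{t}}_\infty\norm{\vec{w}}_1$. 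For $\big[U,W\big]$ the decisive difference is that both deltas in $\big[N_j,A_l^\dagger A_m\big]=\delta_{lj}A_j^\dagger A_m-\delta_{mj}A_l^\dagger A_j$ pin the \emph{single} index $j$, so no free sum—and hence no factor $n$—appears; only the (at most $q$) $A^\dagger A$-type factors contribute, $2$ terms each, giving $\norm{\vec{\widetilde{w}}}_1\le 2q\norm{\vec{u}}_\infty\norm{\vec{w}}_1$. For $\big[V,W\big]$, each of the (at most $q$) $A^\dagger A$-factors contributes $4$ terms, and in each term a delta pins one of $j,k$ while the other remains free, costing a factor $n\norm{\vec{v}}_\infty$; this yields $\norm{\vec{\widetilde{w}}}_1\le 4qn\norm{\vec{v}}_\infty\norm{\vec{w}}_1$.

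The main obstacle is the bookkeeping: one must correctly identify which of the generator's summation indices survive the Kronecker deltas, so as to pick up the factor $n$ in the $T$ and $V$ estimates but \emph{not} in the $U$ estimate, and one must keep straight that in the $U$ and $V$ cases only the $A^\dagger A$-type factors of $W$ contribute (the $N$'s commuting), whereas in the $T$ case both factor types contribute. Getting the layer count exactly right—in particular recognizing that it is precisely the $N_jN_k$ commutator that raises $q$ to $q+1$—is the other point requiring care; the remaining manipulations are routine applications of the Leibniz rule, \lem{commutation_rule}, and the triangle inequality.
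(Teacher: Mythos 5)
Your proposal is correct and follows essentially the same route as the paper's proof: expand $T$, $U$, $V$ in terms of their elementary generators, distribute the commutator over the factors of each summand of $W$ via the Leibniz rule, evaluate the elementary commutators with \lem{commutation_rule}, and track which Kronecker delta collapses which summation index to decide where the factor of $n$ appears. Your accounting of the layer counts (in particular that only $\big[N_jN_k,A_l^\dagger A_m\big]$ raises $q$ to $q+1$) and of the factor counts ($2$ terms per elementary commutator for $T$ and $U$, $4$ for $V$) matches the paper's argument exactly.
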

\begin{proof}
	For Statement $1$, we have
	\begin{equation}
	\begin{aligned}
	\widetilde{W}=\big[T,W\big]
	&=\bigg[\sum_{\alpha,\beta}t_{\alpha,\beta}A_\alpha^\dagger A_\beta,
	\sum_{\vec{j},\vec{k},\vec{l}}w_{\vec{j},\vec{k},\vec{l}}
	\cdots\big(A_{j_x}^\dagger A_{k_x}\big)\cdots (N_{l_y})\cdots\bigg]\\
	&=\sum_{\alpha,\beta}\sum_{\vec{j},\vec{k},\vec{l}}
	t_{\alpha,\beta}w_{\vec{j},\vec{k},\vec{l}}
	\bigg[A_\alpha^\dagger A_\beta,\ \cdots\big(A_{j_x}^\dagger A_{k_x}\big)\cdots (N_{l_y})\cdots\bigg].
	\end{aligned}
	\end{equation}
	Performing the commutation sequentially, it suffices to consider
	\begin{equation}
	\begin{aligned}
	&\cdots\bigg[A_\alpha^\dagger A_\beta,A_{j_x}^\dagger A_{k_x}\bigg]\cdots (N_{l_y})\cdots\\
	&\cdots\big(A_{j_x}^\dagger A_{k_x}\big)\cdots\bigg[A_\alpha^\dagger A_\beta,N_{l_y}\bigg]\cdots\\
	\end{aligned}
	\end{equation}
	For fixed $\alpha$, $\beta$, $\vec{j}$, $\vec{k}$, $\vec{l}$, there are at most $q$ such commutators.
	
	For the first type of commutator, we have from \lem{commutation_rule} that
	\begin{equation}
	\big[A_\alpha^\dagger A_\beta,A_{j_x}^\dagger A_{k_x}\big]
	=\delta_{\beta,j_x}A_\alpha^\dagger A_{k_x}
	-\delta_{\alpha,k_x}A_{j_x}^\dagger A_\beta.
	\end{equation}
	Without loss of generality, consider the first term; its contribution to $\norm{\vec{\widetilde{w}}}_1$ is at most
	\begin{equation}
	\sum_{\alpha,\beta}\sum_{\vec{j},\vec{k},\vec{l}}
	\delta_{\beta,j_x}\abs{t_{\alpha,\beta}w_{\vec{j},\vec{k},\vec{l}}}
	=\sum_{\alpha,\vec{j},\vec{k},\vec{l}}
	\abs{t_{\alpha,j_x}w_{\vec{j},\vec{k},\vec{l}}}
	\leq n\norm{\vec{t}}_\infty\norm{\vec{w}}_1.
	\end{equation}
	Similarly, we use \lem{commutation_rule} to analyze the second type of commutator
	\begin{equation}
	\big[A_\alpha^\dagger A_\beta,N_{l_y}\big]
	=\delta_{l_y,\beta}A_\alpha^\dagger A_\beta
	-\delta_{l_y,\alpha}A_\alpha^\dagger A_\beta
	\end{equation}
	and find its contribution to $\norm{\vec{\widetilde{w}}}_1$ as
	\begin{equation}
	\sum_{\alpha,\beta}\sum_{\vec{j},\vec{k},\vec{l}}
	\delta_{l_y,\beta}\abs{t_{\alpha,\beta}w_{\vec{j},\vec{k},\vec{l}}}
	=\sum_{\alpha,\vec{j},\vec{k},\vec{l}}
	\abs{t_{\alpha,l_y}w_{\vec{j},\vec{k},\vec{l}}}
	\leq n\norm{\vec{t}}_\infty\norm{\vec{w}}_1.
	\end{equation}
	
	For Statement $2$, we have
	\begin{equation}
	\begin{aligned}
	\widetilde{W}=\big[U,W\big]
	&=\bigg[\sum_{\alpha}u_\alpha N_\alpha,
	\sum_{\vec{j},\vec{k},\vec{l}}w_{\vec{j},\vec{k},\vec{l}}
	\cdots\big(A_{j_x}^\dagger A_{j_x}\big)\cdots (N_{l_y})\cdots\bigg]\\
	&=\sum_{\alpha}\sum_{\vec{j},\vec{k},\vec{l}}
	u_\alpha w_{\vec{j},\vec{k},\vec{l}}
	\bigg[N_\alpha,\ \cdots\big(A_{j_x}^\dagger A_{k_x}\big)\cdots (N_{l_y})\cdots\bigg].
	\end{aligned}
	\end{equation}
	Performing the commutation sequentially, it suffices to consider
	\begin{equation}
	\begin{aligned}
	&\cdots\bigg[N_\alpha,A_{j_x}^\dagger A_{k_x}\bigg]\cdots (N_{l_y})\cdots\\
	\end{aligned}
	\end{equation}
	For fixed $\alpha$, $\vec{j}$, $\vec{k}$, $\vec{l}$, there are at most $q$ such commutators. We use \lem{commutation_rule} again to get
	\begin{equation}
	\big[N_\alpha,A_{j_x}^\dagger A_{k_x}\big]
	=\delta_{\alpha,j_x}A_{j_x}^\dagger A_{k_x}
	-\delta_{\alpha,k_x}A_{j_x}^\dagger A_{k_x}
	\end{equation}
	and find its contribution to $\norm{\vec{\widetilde{w}}}_1$ as
	\begin{equation}
	\sum_{\alpha}\sum_{\vec{j},\vec{k},\vec{l}}
	\delta_{\alpha,j_x}\abs{u_\alpha w_{\vec{j},\vec{k},\vec{l}}}
	=\sum_{\vec{j},\vec{k},\vec{l}}
	\abs{u_{j_x}w_{\vec{j},\vec{k},\vec{l}}}
	\leq \norm{\vec{u}}_\infty\norm{\vec{w}}_1.
	\end{equation}
	
	For Statement $3$, we have
	\begin{equation}
	\begin{aligned}
	\widetilde{W}=\big[V,W\big]
	&=\bigg[\sum_{\alpha,\beta}v_{\alpha,\beta} N_\alpha N_\beta,
	\sum_{\vec{j},\vec{k},\vec{l}}w_{\vec{j},\vec{k},\vec{l}}
	\cdots\big(A_{j_x}^\dagger A_{k_x}\big)\cdots (N_{l_y})\cdots\bigg]\\
	&=\sum_{\alpha,\beta}\sum_{\vec{j},\vec{k},\vec{l}}
	v_{\alpha,\beta} w_{\vec{j},\vec{k},\vec{l}}
	\bigg[N_\alpha N_\beta,\ \cdots\big(A_{j_x}^\dagger A_{k_x}\big)\cdots (N_{l_y})\cdots\bigg].
	\end{aligned}
	\end{equation}
	Performing the commutation sequentially, it suffices to consider
	\begin{equation}
	\begin{aligned}
	&\cdots\bigg[N_\alpha N_\beta,A_{j_x}^\dagger A_{k_x}\bigg]\cdots (N_{l_y})\cdots\\
	\end{aligned}
	\end{equation}
	For fixed $\alpha$, $\beta$, $\vec{j}$, $\vec{k}$, $\vec{l}$, there are at most $q$ such commutators. Using \lem{commutation_rule}, we have
	\begin{equation}
	\big[N_{\alpha}N_{\beta},A_{j_x}^\dagger A_{k_x}\big]
	=\big(\delta_{\alpha,j_x}A_{j_x}^\dagger A_{k_x}-\delta_{\alpha,k_x}A_{j_x}^\dagger A_{k_x}\big)N_{\beta}
	+N_{\alpha}\big(\delta_{\beta,j_x}A_{j_x}^\dagger A_{k_x}-\delta_{\beta,k_x}A_{j_x}^\dagger A_{k_x}\big).
	\end{equation}
	Without loss of generality, consider the first term; its contribution to $\norm{\vec{\widetilde{w}}}_1$ is at most
	\begin{equation}
	\sum_{\alpha,\beta}\sum_{\vec{j},\vec{k},\vec{l}}
	\delta_{\alpha,j_x}\abs{v_{\alpha,\beta}w_{\vec{j},\vec{k},\vec{l}}}
	=\sum_{\beta,\vec{j},\vec{k},\vec{l}}
	\abs{v_{j_x,\beta}w_{\vec{j},\vec{k},\vec{l}}}
	\leq n\norm{\vec{v}}_\infty\norm{\vec{w}}_1.
	\end{equation}
\end{proof}

\begin{theorem}[Product-formula simulation of second-quantized plane-wave electronic structure]
	\label{thm:pf-electron}
	Let $H=T+U+V$ be a second-quantized plane-wave electronic-structure Hamiltonian with $n$ orbitals \eq{plane_wave_dual}. Let $\mathscr{S}(t)$ be a $p$th-order product formula as in \eq{plane_wave_dual_pf}. Then, the Trotter error has the scaling
	\begin{equation}
	\norm{\mathscr{S}(t)-e^{-itH}}
	=\cO{(nt)^{p+1}}.
	\end{equation}
	To simulate with accuracy $\epsilon$, it thus suffices to choose a Trotter number of
	\begin{equation}
	r=\cO{\frac{(nt)^{1+1/p}}{\epsilon^{1/p}}}.
	\end{equation}
	Choosing $p$ sufficiently large, letting $\epsilon$ be constant, and implementing each Trotter step as in \cite{Ferris14,LW18}, we have the gate complexity
	\begin{equation}
	n^{2+o(1)}t^{1+o(1)}.
	\end{equation}
\end{theorem}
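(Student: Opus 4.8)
The plan is to reduce the theorem to the single combinatorial estimate $\acommtilde=\sum_{\gamma_1,\dots,\gamma_{p+1}}\norm{\big[H_{\gamma_{p+1}},\cdots\big[H_{\gamma_2},H_{\gamma_1}\big]\big]}=\cO{n^{p+1}}$, where each $H_{\gamma}$ ranges over $\{T,U,V\}$, and then to invoke the machinery already in place. Since $U$ and $V$ commute, the nested commutators arising for the two-group decomposition $H=T+(U+V)$ in \eq{plane_wave_dual_pf} expand into $\cO{1}$ nested commutators of $T$, $U$, and $V$, so this is exactly the quantity needed. Given the bound, \thm{trotter_error_comm_scaling} (in its anti-Hermitian form, applied to $-iT,-iU,-iV$) yields $\norm{\mathscr{S}(t)-e^{-itH}}=\cO{(nt)^{p+1}}$, \cor{trotter_number_comm_scaling} gives $r=\cO{(nt)^{1+1/p}/\epsilon^{1/p}}$, and choosing $p$ large while holding $\epsilon$ constant turns this into $r=(nt)^{1+o(1)}$. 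Multiplying by the $\widetilde{O}(n)$ cost of each Trotter step---the $\cO{1}$ exponentials of $T$ are implemented by conjugating a diagonal exponential of $\widetilde T$ with the fermionic FFT, and those of $U+V$ are already diagonal in the plane-wave dual basis, all through a Jordan--Wigner encoding \cite{Ferris14,LW18}---produces the gate complexity $n^{2+o(1)}t^{1+o(1)}$.

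It thus remains to bound each nested commutator $\norm{\big[H_{\gamma_{p+1}},\cdots\big[H_{\gamma_2},H_{\gamma_1}\big]\big]}$ by $\cO{n^{p+1}}$, which I would do by induction on the nesting depth, tracking for each intermediate operator $W=\sum w_{\vec j,\vec k,\vec l}\cdots(A_{j_x}^\dagger A_{k_x})\cdots(N_{l_y})\cdots$ two quantities: its \emph{layer} $q$ (the largest number of $A^\dagger A$ and $N$ factors in any monomial) and the coefficient $1$-norm $\norm{\vec w}_1$; the spectral norm is then at most $\norm{\vec w}_1$ since every such monomial has spectral norm $\le 1$. The base case is \prop{tuv_base}: \lem{sum_scaling} gives $\norm{\vec t}_\infty=\cO{1/n}$, $\norm{\vec t}_1=\cO{n}$, $\norm{\vec u}_\infty=\cO{n}$, $\norm{\vec u}_1=\cO{n^2}$, $\norm{\vec v}_\infty=\cO{1}$, and $\norm{\vec v}_1=\cO{n^2}$. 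For the inductive step I would prove \prop{tuv_induction} from the canonical commutation relations \lem{commutation_rule}: commuting with $T$ leaves the layer at $q$ and multiplies $\norm{\vec w}_1$ by at most $2qn\norm{\vec t}_\infty=\cO{q}$; commuting with $U$ leaves the layer at $q$ and multiplies by at most $2q\norm{\vec u}_\infty=\cO{qn}$; commuting with $V$ raises the layer to $q+1$ and multiplies by at most $4qn\norm{\vec v}_\infty=\cO{qn}$. Because the full commutator involves only $p+1=\cO{1}$ operators, $q$ stays bounded throughout, so each $\ad_T$ contributes only a constant factor while each $\ad_U$ or $\ad_V$ contributes a factor $\cO{n}$.

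The delicate part is to improve the resulting naive bound of $\cO{n^{p+2}}$ to the claimed $\cO{n^{p+1}}$, and this is where the structure of the Hamiltonian must be used. The observation is that $U$ and $V$ are simultaneously diagonal in the number basis and therefore commute with each other, so a nonzero nested commutator cannot have two factors from $\{U,V\}$ at its innermost positions: either $H_{\gamma_1}=T$, which contributes $\norm{\vec t}_1=\cO{n}$ and leaves $p$ outer commutators each costing at most $\cO{n}$, for a total $\cO{n^{p+1}}$; or $H_{\gamma_1}\in\{U,V\}$, which contributes $\cO{n^2}$, but then $H_{\gamma_2}$ is forced to be $T$ (else $[H_{\gamma_2},H_{\gamma_1}]=0$), which contributes only a constant factor, leaving $p-1$ outer commutators each costing $\cO{n}$, again $\cO{n^{p+1}}$. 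Carrying out this case analysis, together with checking that incrementing the layer via $\ad_V$ never breaks the accounting (the layer remains $\cO{1}$ since $p$ is fixed), finishes the proof. I expect the main obstacle to be formulating \prop{tuv_induction} precisely enough---in particular ensuring that the ``constants'' hidden in the per-commutator factors genuinely do not depend on $n$---so that this inductive bookkeeping closes.
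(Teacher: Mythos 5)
Your proposal is correct and follows essentially the same route as the paper's proof in \append{electron}: reduce to the estimate $\acommtilde=\cO{n^{p+1}}$, establish it by induction on the nesting depth using the base norms of \prop{tuv_base} and the per-commutator factors of \prop{tuv_induction} (with $\ad_T$ costing $\cO{1}$ and $\ad_U,\ad_V$ costing $\cO{n}$ each), and then invoke \thm{trotter_error_comm_scaling}, \cor{trotter_number_comm_scaling}, and the $\widetilde{O}(n)$ per-step implementation. Your explicit case analysis on the innermost operators (using $[U,V]=0$ to force a $T$ into one of the two innermost slots) is exactly the observation underlying the paper's $\cO{n^2}$ base case, just spelled out more carefully.
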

\begin{proof}
	We compute the scaling of the spectral norm of
	\begin{equation}
	W=\big[H_{\gamma_{p+1}},\cdots\big[H_{\gamma_2},H_{\gamma_1}\big]\big],
	\end{equation}
	by induction, where $H_\gamma\in\{T,U,V\}$. In the base case where $p=1$, we have from \prop{tuv_base} and \prop{tuv_induction} that the coefficients of $W$ have $1$-norm in $\cO{n^2}$, which implies $\norm{W}=\cO{n^2}$. For the inductive step, suppose that $W=\big[H_{\gamma_{p+1}},\cdots\big[H_{\gamma_2},H_{\gamma_1}\big]\big]$ is a second-quantized operator whose coefficients have vector $1$-norm in $\cO{n^p}$. Then \prop{tuv_induction} implies that $\big[T,W\big]$, $\big[U,W\big]$, and $\big[V,W\big]$ are second-quantized operators and their coefficients have $1$-norm in $\cO{n^{p+1}}$. This proves that
	\begin{equation}
	\acommtilde=\sum_{\gamma_1,\gamma_2,\ldots,\gamma_{p+1}}\norm{\big[H_{\gamma_{p+1}},\cdots\big[H_{\gamma_2},H_{\gamma_1}\big]\big]}
	=\cO{n^{p+1}}.
	\end{equation}
	The theorem then follows from \thm{trotter_error_comm_scaling} and \cor{trotter_number_comm_scaling}.
\end{proof}

\section{Simulating \texorpdfstring{$k$}{k}-local Hamiltonians}
\label{append:local}

In this section, we consider simulating $k$-local Hamiltonians using product formulas.

Recall from \sec{app_dqs} that a $k$-local Hamiltonian on $n$ qubits can be expressed as
\begin{equation}
H=\sum_{{j}_1,\ldots,{j}_k}H_{{j}_1,\ldots,{j}_k},
\end{equation}
where each $H_{{j}_1,\ldots,{j}_k}$ acts nontrivially only on qubits labeled by ${j}_1,\ldots,{j}_k$. Our goal is to analyze the nested commutators
\begin{equation}
	\sum_{\gamma_1,\dots,\gamma_{p+1}=1}^{\Gamma}
	\norm{\commm{H_{\gamma_{p+1}},\dots,\commm{H_{\gamma_2},H_{\gamma_1}}}},
\end{equation}
where $\Gamma=n^k$ and $H_{\gamma_j}$ are local operators $H_{{j}_1,\ldots,{j}_k}$. We then bound the Trotter error and the complexity of the product-formula algorithm using \thm{trotter_error_comm_scaling} and \cor{trotter_number_comm_scaling}.

We claim that the operator
\begin{align}
W_{\gamma_1,\dots,\gamma_{p+1}}
\equiv\commm{H_{\gamma_{p+1}},\dots,\commm{H_{\gamma_2},H_{\gamma_1}}}
\end{align}
is supported on at most $k+p(k-1)$ qubits and $\sum_{\gamma_1,\dots,\gamma_{p+1}=1}^{\Gamma}\norm{W_{\gamma_1,\dots,\gamma_{p+1}}} = \cO{\vertiii{H}_1^p\norm{H}_1} $, where we have used the $1$-norm $\norm{H}_1=\sum_{{j}_1,\ldots,{j}_k}\norm{H_{{j}_1,\ldots,{j}_k}}$ and the induced $1$-norm $\vertiii{H}_1=\max_l\max_{{j}_l}\sum_{\substack{{j}_1,\ldots,{j}_{l-1},{j}_{l+1},\ldots,{j}_{k}}}\norm{H_{{j}_1,\ldots,{j}_k}}$. We prove this claim by induction on $p$. For $p=1$, the commutator $W_{\gamma_1,\gamma_2}=\commm{H_{\gamma_2},H_{\gamma_1}}$ takes the form $\big[H_{{j}_1,\ldots,{j}_k},H_{{i}_1,\ldots,{i}_k}\big]$, which is nonzero only when there exist $l,m=1,\ldots,k$ such that ${j}_l={i}_m$. It then follows that $W_{\gamma_1,\gamma_2}$ is supported on at most $2k-1$ qubits and that
\begin{equation}
	\sum_{\substack{{j}_1,\ldots,{j}_k,\\{i}_1,\ldots,{i}_k}}\norm{\commm{H_{{j}_1,\ldots,{j}_k},H_{{i}_1,\ldots,{i}_k}}}
	\leq 2k^2\max_l\max_{{j}_l}\sum_{\substack{{j}_1,\ldots,{j}_{l-1},\\{j}_{l+1},\ldots,{j}_{k}}}\norm{H_{{j}_1,\ldots,{j}_k}}
	\sum_{{i}_1,\ldots,{i}_k}\norm{H_{{i}_1,\ldots,{i}_k}}
	=\cO{\vertiii{H}_1\norm{H}_1},
\end{equation}
which proves the claim for $p=1$.

Suppose that the claim holds up to $p-1$. Following a similar argument, we have
\begin{equation}
\begin{aligned}
\sum_{\gamma_1,\dots,\gamma_{p+1}=1}^{\Gamma}\norm{W_{\gamma_1,\dots,\gamma_{p+1}}} 
&=\sum_{{j}_1,\ldots,{j}_k}\sum_{\gamma_1,\dots,\gamma_{p}=1}^{\Gamma}\norm{ \commm{H_{{j}_1,\ldots,{j}_k},W_{\gamma_1,\dots,\gamma_p}}}\\
&\leq 2k\big(k+(p-1)(k-1)\big)\max_l\max_{{j}_l}\sum_{\substack{{j}_1,\ldots,{j}_{l-1},\\{j}_{l+1},\ldots,{j}_{k}}}\norm{H_{{j}_1,\ldots,{j}_k}}
\sum_{\gamma_1,\dots,\gamma_{p}=1}^{\Gamma}\norm{W_{\gamma_1,\dots,\gamma_{p}}}\\
&=2k\big(k+(p-1)(k-1)\big)\vertiii{H}_1\cdot\cO{\vertiii{H}_1^{p-1}\norm{H}_1}\\
&=\cO{\vertiii{H}_1^p\norm{H}_1}.
\end{aligned}
\end{equation}
Since the support of $H_{{j}_1,\ldots,{j}_k}$ and $W_{\gamma_1,\dots,\gamma_p}$ overlaps, the operator $W_{\gamma_1,\dots,\gamma_{p+1}}$ acts nontrivially on at most $k+p(k-1)$ qubits. This completes the induction.

\begin{theorem}[Product-formula simulation of $k$-local Hamiltonians]
	\label{thm:pf-local}
	Let $H$ be a $k$-local Hamiltonian on $n$ qubits \eq{local_ham}. Let $\mathscr{S}(t)$ be a $p$th-order product formula. Then, the Trotter error has the scaling
	\begin{equation}
	\norm{\mathscr{S}(t)-e^{-itH}}
	=\cO{\vertiii{H}_1^p\norm{H}_1}.
	\end{equation}
	To simulate with accuracy $\epsilon$, it thus suffices to choose a Trotter number of
	\begin{equation}
	r=\cO{\frac{\vertiii{H}_1\norm{H}_1^{1/p}t^{1+1/p}}{\epsilon^{1/p}}}.
	\end{equation}
	Choosing $p$ sufficiently large, letting $\epsilon$ be constant, and implementing each Trotter step using $\cO{n^k}$ gates, we have the gate complexity
	\begin{equation}
	n^k\vertiii{H}_1\norm{H}_1^{o(1)}t^{1+o(1)}.
	\end{equation}
\end{theorem}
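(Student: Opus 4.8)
The plan is to assemble the theorem from three ingredients that are already available: the nested-commutator counting bound established in the discussion immediately preceding the statement, the commutator-scaling error bound \thm{trotter_error_comm_scaling}, and \cor{trotter_number_comm_scaling}, together with a routine count of gates per Trotter step. First I would record that the counting argument above shows
\begin{equation*}
\acommtilde=\sum_{\gamma_1,\ldots,\gamma_{p+1}=1}^{\Gamma}\norm{\big[H_{\gamma_{p+1}},\cdots\big[H_{\gamma_2},H_{\gamma_1}\big]\big]}=\cO{\vertiii{H}_1^p\norm{H}_1},
\end{equation*}
where the indices run over the $\Gamma=n^k$ local terms $H_{{j}_1,\ldots,{j}_k}$. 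Since the summands appearing in the real-time evolution $e^{-itH}$ are the anti-Hermitian operators $-iH_{{j}_1,\ldots,{j}_k}$, the anti-Hermitian case of \thm{trotter_error_comm_scaling} applies with this value of $\acommtilde$, yielding $\norm{\mathscr{S}(t)-e^{-itH}}=\cO{\vertiii{H}_1^p\norm{H}_1\,t^{p+1}}$; rescaling the operators by the (constant, $\leq 1$) coefficients $a_{(\upsilon,\gamma)}$ does not change the asymptotic form, as such factors are absorbed into the $\cO{\cdot}$.

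Next I would read off the Trotter number from \cor{trotter_number_comm_scaling}: substituting $\acommtilde=\cO{\vertiii{H}_1^p\norm{H}_1}$ into $r=\cO{\acommtilde^{1/p}t^{1+1/p}/\epsilon^{1/p}}$ and using $(\vertiii{H}_1^p\norm{H}_1)^{1/p}=\vertiii{H}_1\norm{H}_1^{1/p}$ gives
\begin{equation*}
r=\cO{\frac{\vertiii{H}_1\norm{H}_1^{1/p}t^{1+1/p}}{\epsilon^{1/p}}}.
\end{equation*}
Then, exactly as in the remark following \cor{trotter_number_comm_scaling}, for any $\delta>0$ one picks $p$ with $1/p<\delta$ and holds $\epsilon$ constant, so that $r=\vertiii{H}_1\norm{H}_1^{o(1)}t^{1+o(1)}$.

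For the gate complexity I would argue that a single Trotter step costs $\cO{n^k}$ elementary gates: each $H_{{j}_1,\ldots,{j}_k}$ acts on $k=\cO{1}$ qubits and, after expanding in the $\cO{1}$ many $k$-qubit Pauli operators (which we have assumed are unitary up to scaling), each exponential $e^{-ta_{(\upsilon,\gamma)}H_{{j}_1,\ldots,{j}_k}}$ is implementable with $\cO{1}$ gates; the product formula \eq{pf} applies $\Upsilon\Gamma=\cO{n^k}$ such exponentials per step, since $\Upsilon=\cO{1}$. Multiplying the number of steps $r$ by this per-step cost yields the claimed total gate complexity $n^k\vertiii{H}_1\norm{H}_1^{o(1)}t^{1+o(1)}$.

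I do not expect a genuine obstacle here, as the technical content lives entirely in \thm{trotter_error_comm_scaling} and in the counting induction on $p$ that precedes the statement. The only points needing a little care are (i) the simplification $(\vertiii{H}_1^p\norm{H}_1)^{1/p}=\vertiii{H}_1\norm{H}_1^{1/p}$, which is precisely why the induced-$1$-norm dependence stays linear while the $1$-norm dependence degrades only to an $o(1)$ exponent, and (ii) observing that although the nested commutator $W_{\gamma_1,\ldots,\gamma_{p+1}}$ is supported on up to $k+p(k-1)$ qubits, this support growth enters only the error bound and not the per-step gate count, since each Trotter step still exponentiates the original $\cO{1}$-local terms $H_{{j}_1,\ldots,{j}_k}$ rather than the commutators.
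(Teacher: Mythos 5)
Your proposal is correct and follows essentially the same route as the paper: the induction preceding the theorem gives $\acommtilde=\cO{\vertiii{H}_1^p\norm{H}_1}$, and the three claims then follow by plugging this into \thm{trotter_error_comm_scaling} and \cor{trotter_number_comm_scaling} and multiplying by the $\cO{n^k}$ per-step cost. Your inclusion of the $t^{p+1}$ factor in the error bound and your remark that the commutator support growth affects only the error analysis, not the per-step gate count, are both correct and slightly more careful than the paper's own statement.
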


\section{Simulating power-law interactions}
\label{append:power-law}

In this section, we analyze the performance of product formulas for simulating power-law interactions (\sec{app_dqs}). Let $\Lambda\subseteq \mathbb R^d$ be an $n$-qubit $d$-dimensional square lattice. We say that $H$ is a power-law Hamiltonian on $\Lambda$ with an exponent $\alpha$ if it can be written as
\begin{align}
H = \sum_{\vec{i},\vec{j}\in \Lambda} H_{\vec{i},\vec j},
\end{align}
where $H_{\vec i,\vec j}$ is an operator that acts nontrivially only on two qubits $\vec i,\vec j\in \Lambda$ and
\begin{align}
\norm{H_{\vec i,\vec j}}\leq 
\begin{cases}
1, 					&\text{ if } \vec i = \vec j,\\
\frac{1}{\norm{\vec i-\vec j}_2^\alpha},\quad &\text{ if } \vec i\neq \vec j,
\end{cases}
\end{align}
where $\Vert{\vec i - \vec j}\Vert_2$ is the Euclidean distance between $\vec i$ and $\vec j$ on the lattice.

Our analysis uses the following lemma.
\begin{lemma}
	\label{lem:power-law-sum}
	Given an $n$-qubit $d$-dimensional square lattice $\Lambda\subseteq \mathbb R^d$, it holds that
	\begin{equation}
	\label{eq:lamdef2}
	\sum_{\vec j\in\Lambda\setminus \{\vec 0\}} \frac{1}{\big\Vert\vec j\big\Vert_2^{\alpha}} = 
	\begin{cases}
	\cO{n^{1-\alpha/d}},\quad & \text{for }0\leq\alpha<d,\\
	\cO{\log n}, & \text{for }\alpha=d,\\
	\cO 1, & \text{for }\alpha>d.
	\end{cases}
	\end{equation}
	Furthermore, for $\alpha>d$ and $x>0$, we have
	\begin{equation}
	\sum_{\substack{\vec j\in\Lambda,\norm{\vec{j}}_2\geq x}}\frac{1}{\big\Vert\vec j\big\Vert_2^{\alpha}}
	=\cO{\frac{1}{x^{\alpha-d}}}.\label{eq:lamdef3}
	\end{equation}
\end{lemma}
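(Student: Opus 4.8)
The plan is to prove both bounds of \lem{power-law-sum} by converting the lattice sums into integrals over $\R^d$ and then evaluating those integrals in polar-type coordinates. First I would handle \eq{lamdef2}. Without loss of generality I would assume $\Lambda$ is (contained in) a $d$-dimensional cube of side length $L = \Th{n^{1/d}}$. The function $\vec j\mapsto \norm{\vec j}_2^{-\alpha}$ is monotonically decreasing in $\norm{\vec j}_2$, so I can group lattice points into dyadic (or unit-width) spherical shells: the number of lattice points $\vec j$ with $\norm{\vec j}_2\in[m,m+1)$ is $\cO{m^{d-1}}$, and each contributes $\cO{1/m^\alpha}$. Summing over $m=1,\dots,\cO{L}$ gives
\begin{equation}
\sum_{\vec j\in\Lambda\setminus\{\vec 0\}}\frac{1}{\norm{\vec j}_2^\alpha}
= \cO{\sum_{m=1}^{\cO{n^{1/d}}} m^{d-1-\alpha}}.
\end{equation}
Then I would split into the three cases by comparing the exponent $d-1-\alpha$ to $-1$: for $\alpha<d$ the sum is dominated by its largest term and scales as $\cO{(n^{1/d})^{d-\alpha}} = \cO{n^{1-\alpha/d}}$; for $\alpha=d$ it is a harmonic sum giving $\cO{\log n}$; for $\alpha>d$ it converges to $\cO{1}$ uniformly in $n$. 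This is a standard integral-comparison argument and I would keep it brief.

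For \eq{lamdef3}, the approach is essentially the tail of the same computation but now taken over the full infinite lattice (which only makes the sum larger) restricted to $\norm{\vec j}_2\ge x$. Again grouping into shells of unit width indexed by $m\ge x$, the number of lattice points in shell $m$ is $\cO{m^{d-1}}$ and each contributes $\cO{m^{-\alpha}}$, so
\begin{equation}
\sum_{\substack{\vec j\in\Lambda,\ \norm{\vec j}_2\ge x}}\frac{1}{\norm{\vec j}_2^\alpha}
= \cO{\sum_{m\ge x} m^{d-1-\alpha}} = \cO{\int_x^\infty y^{d-1-\alpha}\,\mathrm{d}y} = \cO{\frac{1}{x^{\alpha-d}}},
\end{equation}
where the integral converges precisely because $\alpha>d$. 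A minor technical point is that for $x<1$ one should just note the bound is trivially $\cO{1}=\cO{1/x^{\alpha-d}}$ up to constants, or restrict attention to $x\ge 1$ as is the case in all applications; I would mention this in one sentence.

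The only mild subtlety — and the closest thing to an obstacle — is making the shell-counting bound ``the number of lattice points $\vec j\in\Z^d$ with $\norm{\vec j}_2\in[m,m+1)$ is $\cO{m^{d-1}}$'' rigorous with a constant depending only on $d$. This follows from a volume comparison: such points lie in the annulus of inner radius $m-\sqrt d$ and outer radius $m+1+\sqrt d$, whose volume is $\cO{m^{d-1}}$ for $m\ge 1$, and distinct lattice points are centers of disjoint unit cubes contained in a slightly fattened annulus. I would state this as a one-line lemma or simply inline it. Everything else is routine geometric-series / $p$-series estimation, so I would not belabor it. Finally I would remark that the first part of \lem{power-law-sum} with $\alpha$ replaced by suitable shifted exponents is exactly what feeds the induced-$1$-norm and $1$-norm computations for power-law Hamiltonians in \append{local}, and the second part is what controls the truncation error $\norm{H-\widetilde H}=\cO{n/\ell^{\alpha-d}}$ used in \sec{app_dqs}.
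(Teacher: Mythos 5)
Your proposal is correct and follows essentially the same route as the paper, which simply cites Ref.~\cite{Tran18} and notes that the sums are controlled by Riemann-sum comparison with the integrals $\int_{\norm{\vec j}_2\geq 1}\mathrm{d}^d\vec j/\norm{\vec j}_2^\alpha$ and $\int_{\norm{\vec j}_2\geq x}\mathrm{d}^d\vec j/\norm{\vec j}_2^\alpha$; your unit-shell counting with the volume-comparison justification is just a self-contained way of making that comparison rigorous, and your case analysis and the remark about $x<1$ are both sound.
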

\begin{proof}
	Ref.~\cite{Tran18} provides a detailed proof of the lemma, which follows from rewriting the left-hand side of \cref{eq:lamdef2} as a Riemann sum of the $d$-dimensional integral 
		$\int_{\norm{\vec j}_2\geq 1} {\text{d}^d\vec j}/{\Vert{\vec j}\Vert_2} $. 
	Evaluating the integral gives the right-hand side of \cref{eq:lamdef2}.
	Similarly, \cref{eq:lamdef3} follows from evaluating the integral $\int_{\norm{\vec j}_2\geq x} {\text{d}^d\vec j}/{\Vert{\vec j}\Vert_2}\propto \frac{1}{x^{\alpha-d}}$.
\end{proof}

\begin{theorem}[Product-formula simulation of power-law interactions]
	\label{thm:pf-power-law}
	Let $\Lambda\subseteq \mathbb R^d$ be an $n$-qubit $d$-dimensional square lattice and $H$ be a power-law Hamiltonian \eq{power-law-def} with exponent $\alpha$. Let $\mathscr{S}(t)$ be a $p$th-order product formula. Then, the Trotter error has the scaling
	\begin{equation}
	\norm{\mathscr{S}(t)-e^{-itH}}=
	\begin{cases}
	\cO{n^{1+(p+1)(1-\alpha/d)} t^{p+1}},\quad & \text{ for }0\leq \alpha<d,\\
	\cO{n(\log n)^{p+1} t^{p+1}}, & \text{ for }\alpha=d,\\
	\cO{nt^{p+1}}, & \text{ for } \alpha>d.
	\end{cases}
	\end{equation}
	To simulate with accuracy $\epsilon$, it thus suffices to choose a Trotter number of
	\begin{equation}
	r=
	\begin{cases}
	\cO{n^{1-\frac{\alpha}{d}+\frac{1}{p}\left(2-\frac{\alpha}{d}\right)} t^{1+\frac{1}{p}}/\epsilon^{\frac{1}{p}}},\quad & \text{ for }0\leq \alpha<d,\\
	\cO{n^{\frac{1}{p}}(\log n)^{1+\frac{1}{p}} t^{1+\frac{1}{p}}/\epsilon^{\frac{1}{p}}}, & \text{ for }\alpha=d,\\
	\cO{n^{\frac{1}{p}} t^{1+\frac{1}{p}}/\epsilon^{\frac{1}{p}}}, & \text{ for } \alpha>d.
	\end{cases}
	\end{equation}
	Choosing $p$ sufficiently large, letting $\epsilon$ be constant, and implementing each Trotter step using $\cO{n^2}$ gates, we have the gate complexity
	\begin{align}
	g_\alpha 
	=\begin{cases}
	n^{3-\frac{\alpha}{d}+o(1)} t^{1+o(1)},\quad & \text{ for }0\leq \alpha<d,\\
	n^{2+o(1)} t^{1+o(1)}, & \text{ for }\alpha\geq d.
	\end{cases}
	\end{align}
\end{theorem}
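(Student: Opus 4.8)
The plan is to reduce this theorem to the commutator-scaling bound of \thm{trotter_error_comm_scaling} and its corollary \cor{trotter_number_comm_scaling}, so that the only substantive work is to estimate the quantity
\[
\acommtilde=\sum_{\gamma_1,\ldots,\gamma_{p+1}}\norm{\big[H_{\gamma_{p+1}},\cdots\big[H_{\gamma_2},H_{\gamma_1}\big]\big]}
\]
for a power-law Hamiltonian $H=\sum_{\vec i,\vec j\in\Lambda}H_{\vec i,\vec j}$. Since each summand $H_{\vec i,\vec j}$ acts on at most two sites, $H$ is a $2$-local Hamiltonian, so I would first invoke the estimate $\acommtilde=\cO{\vertiii{H}_1^{p}\norm{H}_1}$ already established for $k$-local Hamiltonians in \append{local} (\thm{pf-local} with $k=2$); alternatively, one can rerun the short induction there directly, showing that $[H_{\gamma_{p+1}},\cdots[H_{\gamma_2},H_{\gamma_1}]]$ is supported on at most $p+2$ sites and that its contribution, summed over $\gamma_1,\ldots,\gamma_{p+1}$, acquires one factor of $\vertiii{H}_1$ per added commutator layer on top of the base factor $\norm{H}_1$ (each added layer forces a shared site, which is what the induced $1$-norm counts).

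Next I would compute the two relevant norms for the power-law case using \lem{power-law-sum}. For the induced $1$-norm, $\vertiii{H}_1=\max_{\vec i}\sum_{\vec j}\norm{H_{\vec i,\vec j}}\le 1+\sum_{\vec j\neq\vec 0}\norm{\vec j}_2^{-\alpha}$, which is $\cO{n^{1-\alpha/d}}$, $\cO{\log n}$, or $\cO 1$ in the regimes $0\le\alpha<d$, $\alpha=d$, $\alpha>d$, respectively; for the $1$-norm, $\norm{H}_1\le n+n\sum_{\vec j\neq\vec 0}\norm{\vec j}_2^{-\alpha}$, which is $\cO{n^{2-\alpha/d}}$, $\cO{n\log n}$, or $\cO n$ in the same three regimes. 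Substituting these into $\acommtilde=\cO{\vertiii{H}_1^{p}\norm{H}_1}$ and simplifying the exponent (using $p(1-\alpha/d)+2-\alpha/d=1+(p+1)(1-\alpha/d)$) yields the claimed three-case bound on $\acommtilde$. Then \thm{trotter_error_comm_scaling} in its anti-Hermitian form, applied with the physical generators $-iH_{\vec i,\vec j}$, gives the stated Trotter error $\norm{\mathscr{S}(t)-e^{-itH}}$, and \cor{trotter_number_comm_scaling} gives the Trotter number $r=\cO{\acommtilde^{1/p}t^{1+1/p}/\epsilon^{1/p}}$; inserting each case of $\acommtilde$ and taking the $p$th root reproduces the three displayed formulas for $r$.

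Finally, for the gate complexity I would bound the cost of a single Trotter step by $\cO{n^2}$ — the number of two-site terms, each exponentiated at constant cost after expanding into Pauli rotations — multiply by $r$, fix $\epsilon$ to be a constant, and let $p\to\infty$ so that $1/p=o(1)$; the factors $n^{1/p}$, $(\log n)^{1+1/p}$, and $t^{1/p}$ then collapse into $n^{o(1)}$ and $t^{o(1)}$, producing $n^{3-\alpha/d+o(1)}t^{1+o(1)}$ for $0\le\alpha<d$ and $n^{2+o(1)}t^{1+o(1)}$ for $\alpha\ge d$. I expect the only delicate point to be the bookkeeping at the boundary $\alpha=d$, where the logarithmic factors must be carried correctly through the commutator induction and then absorbed into the $n^{o(1)}$; the rest is a routine substitution into results already proved in the excerpt.
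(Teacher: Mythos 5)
Your proposal is correct and follows essentially the same route as the paper: the paper's proof also computes $\vertiii{H}_1$ and $\norm{H}_1$ for the three regimes of $\alpha$ via \lem{power-law-sum} and then invokes the $k$-local result (\thm{pf-local} with $k=2$), which itself rests on \thm{trotter_error_comm_scaling} and \cor{trotter_number_comm_scaling}. Your exponent bookkeeping, including the identity $p(1-\alpha/d)+2-\alpha/d=1+(p+1)(1-\alpha/d)$ and the absorption of the logarithms at $\alpha=d$ into $n^{o(1)}$, matches the paper's calculation.
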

\begin{proof}
	Given a power-law Hamiltonian $H$ with exponent $\alpha$, we use \lem{power-law-sum} to compute the scaling of its induced $1$-norm
	\begin{equation}
	\label{eq:power-law-vertiii}
	\vertiii{H}_1\leq
	1+\max_{\vec i}\sum_{\vec j\neq \vec{i}} \frac{1}{\big\Vert\vec{i}-\vec j\big\Vert_2^{\alpha}}=
	\begin{cases}
	\cO{n^{1-\alpha/d}},\quad & \text{for }0\leq\alpha<d,\\
	\cO{\log n}, & \text{for }\alpha=d,\\
	\cO 1, & \text{for }\alpha>d,
	\end{cases}
	\end{equation}
	and $1$-norm
	\begin{equation}
	\label{eq:power-law-norm}
	\norm{H}_1\leq
	\sum_{\vec i}\bigg(1+\sum_{\vec j\neq \vec{i}}\frac{1}{\big\Vert\vec{i}-\vec j\big\Vert_2^{\alpha}}\bigg)=
	\begin{cases}
	\cO{n^{2-\alpha/d}},\quad & \text{for }0\leq\alpha<d,\\
	\cO{n\log n}, & \text{for }\alpha=d,\\
	\cO n, & \text{for }\alpha>d.
	\end{cases}
	\end{equation}
	The claim then follows from \thm{pf-local}.
\end{proof}

As mentioned in \sec{app_dqs}, the performance of product formulas can be further improved for rapidly decaying power-law interactions (\thm{pf-weak-power-law}) and quasilocal interactions (\thm{pf-quasi-local}).
\begin{theorem}[Product-formula simulation of rapidly decaying power-law interactions]
	\label{thm:pf-weak-power-law}
	Let $\Lambda\subseteq \mathbb R^d$ be an $n$-qubit $d$-dimensional square lattice and $H$ be a power-law Hamiltonian \eq{power-law-def} with exponent $\alpha>d$. Let $\mathscr{S}(t)$ be a $p$th-order product formula for $\widetilde{H}$, the truncated Hamiltonian where summands acting on sites with distance larger than $\ell$ are removed. Then, the Trotter error has the scaling
	\begin{equation}
	\norm{\mathscr{S}(t)-e^{-it\widetilde H}}=\cO{nt^{p+1}}.
	\end{equation}
	To simulate with accuracy $\epsilon$, it thus suffices to choose the cutoff $\ell =  \Th{\left({nt}/{\epsilon}\right)^{1/(\alpha-d)}}$ and a Trotter number of
	\begin{equation}
	r=\cO{n^{\frac{1}{p}} t^{1+\frac{1}{p}}/\epsilon^{\frac{1}{p}}}.
	\end{equation}
	Choosing $p$ sufficiently large, letting $\epsilon$ be constant, and implementing each Trotter step using $\cO{n\ell^d}$ gates, we have the gate complexity
	\begin{align}
	g_\alpha 
	=(nt)^{1+\frac{d}{\alpha-d}+o(1)}.
	\end{align}
\end{theorem}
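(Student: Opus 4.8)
The plan is to reduce everything to the $k$-local bound of \thm{pf-local} applied to the truncated Hamiltonian $\widetilde H$, to control the truncation error with \cor{time_ordered_distance_bound} and \lem{power-law-sum}, and then to balance the two error contributions by tuning the cutoff $\ell$ and the Trotter number $r$.

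First I would pin down the norms of $\widetilde H=\sum_{\norm{\vec i-\vec j}_2\le\ell}H_{\vec i,\vec j}$. Since only pairs at distance at most $\ell$ survive, $\widetilde H$ is $2$-local; using \eq{power-law-def} together with the first part of \lem{power-law-sum} (valid because $\alpha>d$), its induced $1$-norm is $\vertiii{\widetilde H}_1\le 1+\max_{\vec i}\sum_{\vec j\ne\vec i}\norm{\vec i-\vec j}_2^{-\alpha}=\cO 1$ and its $1$-norm is $\norm{\widetilde H}_1\le\sum_{\vec i}\big(1+\sum_{\vec j\ne\vec i}\norm{\vec i-\vec j}_2^{-\alpha}\big)=\cO n$. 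Feeding these into the commutator estimate \eq{alpha_comm_one_norm} of \thm{pf-local} gives $\acommtilde=\cO{\vertiii{\widetilde H}_1^p\norm{\widetilde H}_1}=\cO n$, and then \thm{trotter_error_comm_scaling} (the anti-Hermitian case, since we are decomposing $e^{-it\widetilde H}$) yields $\norm{\mathscr{S}(t)-e^{-it\widetilde H}}=\cO{\acommtilde\,t^{p+1}}=\cO{nt^{p+1}}$, which is the first claimed bound.

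Next I would bound the truncation error: by \cor{time_ordered_distance_bound}, $\norm{e^{-itH}-e^{-it\widetilde H}}\le\norm{H-\widetilde H}\,t$, and since $H-\widetilde H=\sum_{\norm{\vec i-\vec j}_2>\ell}H_{\vec i,\vec j}$, the triangle inequality together with the tail estimate \eq{lamdef3} of \lem{power-law-sum} gives $\norm{H-\widetilde H}\le\sum_{\vec i}\sum_{\vec j:\,\norm{\vec i-\vec j}_2>\ell}\norm{\vec i-\vec j}_2^{-\alpha}=\cO{n/\ell^{\alpha-d}}$, so $\norm{e^{-itH}-e^{-it\widetilde H}}=\cO{nt/\ell^{\alpha-d}}$. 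Dividing the evolution of $\widetilde H$ into $r$ Trotter steps, \cor{trotter_number_comm_scaling} with $\acommtilde=\cO n$ shows the $r$-step Trotter error is $\cO{nt^{p+1}/r^p}$, so $r=\cO{n^{1/p}t^{1+1/p}/\epsilon^{1/p}}$ makes it $\cO\epsilon$; choosing $\ell=\Th{(nt/\epsilon)^{1/(\alpha-d)}}$ makes the truncation error $\cO\epsilon$ as well, and one more triangle inequality gives total error $\cO\epsilon$. For the gate count, each site has $\cO{\ell^d}$ neighbors within distance $\ell$, so $\widetilde H$ has $\cO{n\ell^d}$ two-local summands, each exponentiable in $\cO 1$ gates; a single $\Upsilon$-stage Trotter step thus costs $\cO{n\ell^d}$ gates, and the total is $r\cdot\cO{n\ell^d}=\cO{n^{1+1/p+d/(\alpha-d)}\,t^{1+1/p+d/(\alpha-d)}\,\epsilon^{-1/p-d/(\alpha-d)}}$; holding $\epsilon$ constant and letting $p\to\infty$ (so $1/p\to 0$, with $\Upsilon$ a constant depending only on $p$) collapses this to $(nt)^{1+d/(\alpha-d)+o(1)}$.

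The one place that needs care is the joint parameter choice in the last step: one must check the regime $1\le\ell\le n^{1/d}$ in which the cutoff is meaningful (equivalently $nt\gtrsim\epsilon$ and $t\lesssim\epsilon\,n^{(\alpha-2d)/d}$), outside of which $\widetilde H=H$ or the truncation is vacuous. The conceptual crux, and the reason the bound beats the crude estimate, is that \thm{pf-local} keeps $\acommtilde$ at $\cO n$ via the induced $1$-norm: a plain $1$-norm bound on the nested commutators would give $\acommtilde=\cO{n^{p+1}}$ and hence $r=\cO{n^{1+1/p}t^{1+1/p}}$, spoiling the $n$-dependence. Everything else is a direct application of results already established in the excerpt.
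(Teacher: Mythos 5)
Your proposal is correct and follows essentially the same route as the paper: bound $\vertiii{\widetilde H}_1=\cO{1}$ and $\norm{\widetilde H}_1=\cO{n}$ via \lem{power-law-sum}, apply the $k$-local commutator bound of \thm{pf-local} to get $\acommtilde=\cO{n}$, control the truncation error with \cor{time_ordered_distance_bound} and the tail estimate, and balance $\ell$ and $r$. Your added remarks on the regime $1\le\ell\le n^{1/d}$ and on why the induced $1$-norm (rather than the plain $1$-norm) is the crux match the discussion the paper gives in the main text.
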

\begin{proof}
	We use \lem{power-law-sum} to bound the distance between the original and the truncated Hamiltonian
	\begin{equation}
	\norm{H-\widetilde{H}}\leq
	\sum_{\vec i}\sum_{\norm{\vec j-\vec{i}}_2> \ell} \frac{1}{\big\Vert\vec{i}-\vec j\big\Vert_2^{\alpha}}=\cO{\frac{n}{\ell^{\alpha-d}}}.
	\end{equation}
	We choose a cutoff value $\ell =  \Th{\left({nt}/{\epsilon}\right)^{1/(\alpha-d)}}$ and \cor{time_ordered_distance_bound} implies that the truncation error is at most
	\begin{align}
	\norm{e^{-itH} - e^{-it\widetilde{H}}}  \leq \norm{H-\widetilde{H}}t=\cO{\frac{nt}{\ell^{\alpha-d}}}=\cO{\epsilon}.
	\end{align}
	The theorem is then proved in a similar way as \thm{pf-power-law}.
\end{proof}
\begin{theorem}[Product-formula simulation of quasilocal interactions]
	\label{thm:pf-quasi-local}
	Let $\Lambda\subseteq \mathbb R^d$ be an $n$-qubit $d$-dimensional square lattice and $H$ be a quasilocal Hamiltonian \eq{quasi-local-def} with constant $\beta>0$. Let $\mathscr{S}(t)$ be a $p$th-order product formula for $\widetilde{H}$, the truncated Hamiltonian where summands acting on sites with distance larger than $\ell$ are removed. Then, the Trotter error has the scaling
	\begin{equation}
	\norm{\mathscr{S}(t)-e^{-it\widetilde H}}=\cO{nt^{p+1}}.
	\end{equation}
	To simulate with accuracy $\epsilon$, it thus suffices to choose the cutoff $\ell = \Theta(\log(nt/\epsilon))$ and a Trotter number of
	\begin{equation}
	r=\cO{n^{\frac{1}{p}} t^{1+\frac{1}{p}}/\epsilon^{\frac{1}{p}}}.
	\end{equation}
	Choosing $p$ sufficiently large, letting $\epsilon$ be constant, and implementing each Trotter step using $\cO{n\ell^d}$ gates, we have the gate complexity
	\begin{align}
	g_\beta 
	=(nt)^{1+o(1)}.
	\end{align}
\end{theorem}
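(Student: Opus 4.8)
The plan is to follow the same two-step template used for rapidly decaying power-law interactions in \thm{pf-weak-power-law}: first control the error incurred by truncating $H$ to the range-$\ell$ Hamiltonian $\widetilde H$, and then invoke the $k$-local machinery of \thm{pf-local} to bound the Trotter error of $\mathscr{S}(t)$ approximating $e^{-it\widetilde H}$, finally combining the two contributions by the triangle inequality and optimizing the parameters $\ell$ and $r$.

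For the truncation step I would first establish a quasilocal analogue of \lem{power-law-sum}: for fixed $\beta>0$ and $x\geq 0$,
\begin{equation}
\sum_{\vec j\in\Lambda,\ \norm{\vec j}_2\geq x}e^{-\beta\norm{\vec j}_2}=\cO{x^{d-1}e^{-\beta x}},\qquad
\sum_{\vec j\in\Lambda\setminus\{\vec 0\}}e^{-\beta\norm{\vec j}_2}=\cO 1,
\end{equation}
proved exactly as in \cite{Tran18} by comparing the sum to the radial integral $\int_{\norm{\vec j}_2\geq x}e^{-\beta\norm{\vec j}_2}\,\mathrm d^d\vec j\propto\int_x^\infty\rho^{d-1}e^{-\beta\rho}\,\mathrm d\rho=\cO{x^{d-1}e^{-\beta x}}$. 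Summing over the $n$ sites gives $\norm{H-\widetilde H}=\cO{n\ell^{d-1}e^{-\beta\ell}}$, so by \cor{time_ordered_distance_bound} the truncation error is $\norm{e^{-itH}-e^{-it\widetilde H}}\leq\norm{H-\widetilde H}t=\cO{n\ell^{d-1}e^{-\beta\ell}t}$; choosing $\ell=c\log(nt/\epsilon)$ with $c$ a large enough constant (so that the exponential $e^{-\beta\ell}=(nt/\epsilon)^{-\beta c}$ beats both the $\mathrm{polylog}$ prefactor $\ell^{d-1}$ and the leading $nt$ factor) makes this $\cO\epsilon$.

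For the Trotter step, $\widetilde H$ is a $2$-local Hamiltonian and the second estimate above yields $\norm{\widetilde H}_1=\cO n$ and $\vertiii{\widetilde H}_1=\cO 1$, in analogy with \eq{power-law-norm} and \eq{power-law-vertiii}; hence \thm{pf-local} gives $\norm{\mathscr{S}(t)-e^{-it\widetilde H}}=\cO{\vertiii{\widetilde H}_1^p\norm{\widetilde H}_1t^{p+1}}=\cO{nt^{p+1}}$, and \cor{trotter_number_comm_scaling} shows $r=\cO{n^{1/p}t^{1+1/p}/\epsilon^{1/p}}$ suffices to keep the Trotter error $\cO\epsilon$. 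Each step exponentiates the $\cO{n\ell^d}$ surviving terms at $\cO 1$ gate cost each, so the total gate count is $r\cdot\cO{n\ell^d}=\cO{n^{1+1/p}t^{1+1/p}\ell^d/\epsilon^{1/p}}$; with $\epsilon$ constant, $\ell^d=\mathrm{polylog}(nt)$, and $p$ taken large enough that $1/p<\delta$ for any desired $\delta$, this is $(nt)^{1+o(1)}$. I expect the only genuinely delicate point to be this parameter balancing in the truncation step---verifying that the logarithmic cutoff simultaneously forces $n\ell^{d-1}e^{-\beta\ell}t=\cO\epsilon$ while keeping $\ell^d$ a subpolynomial factor in the final complexity---since everything else is a direct reuse of \thm{pf-local}, \cor{trotter_number_comm_scaling}, and \cor{time_ordered_distance_bound}.
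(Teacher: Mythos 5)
Your proposal is correct and follows essentially the same route as the paper: truncate to range $\ell=\Theta(\log(nt/\epsilon))$, bound the truncation error via the exponential tail sum and \cor{time_ordered_distance_bound}, and then reuse the $2$-local analysis ($\Vert\widetilde H\Vert_1=\cO{n}$, $\Vertiii{\widetilde H}_1=\cO{1}$) exactly as in \thm{pf-weak-power-law}. The paper's own proof is terser (it simply asserts the tail sum is $\cO{\epsilon}$ and defers the rest to the power-law case), so your explicit verification of the integral comparison and the parameter balancing is a faithful expansion rather than a deviation.
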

\begin{proof}
	We choose $\ell = \Theta(\log(nt/\epsilon))$ so that the truncation error is at most
	\begin{equation}
	\norm{e^{-itH} - e^{-it\widetilde{H}}}
	\leq \norm{H-\widetilde{H}}t
	\leq\sum_{\vec i}\sum_{\norm{\vec j-\vec{i}}_2> \ell}e^{-\beta\norm{\vec{i}-\vec{j}}_2}\cdot t
	=\cO{\epsilon}.
	\end{equation}
	The remaining analysis proceeds in a similar way as in \thm{pf-weak-power-law}.
\end{proof}

\section{Simulating clustered Hamiltonians}
\label{append:cluster}

We continue the analysis in \sec{app_dqs} of the hybrid algorithm for simulating clustered Hamiltonians \cite{PHOW19}. An essential step of this algorithm is to decompose the Hamiltonian into parties using product formulas. We show that our Trotter error bound implies a more efficient decomposition and thereby gives a faster simulation of clustered Hamiltonians.

Let $H$ be an $n$-qubit Hamiltonian. Assume that each term in $H$ acts on at most two qubits with spectral norm at most one, and each qubit is interacted with at most a constant number $d'$ of qubits. We further group the qubits into multiple parties and write
\begin{equation}
H=A+B=\sum_l H_l^{(1)}+\sum_l H_l^{(2)},\quad \forall l:\norm{H_l^{(1)}},\norm{H_l^{(2)}}\leq 1,
\end{equation}
where terms in $A$ act on qubits within a single party and terms in $B$ act between two different parties.

The hybrid algorithm of \cite{PHOW19} applies the first-order Lie-Trotter formula to decompose the Hamiltonian $H=A+\sum_l H_l^{(2)}$. Their analysis shows that a Trotter number of
\begin{equation}
r=\cO{\frac{h_B^2 t^2}{\epsilon}}
\end{equation}
suffices to achieve error at most $\epsilon$, where $h_B=\sum_l\norm{H_l^{(2)}}$ is the interaction strength. Here, we show that it suffices to take
\begin{equation}
r=\cO{\frac{d'^{\frac{1+p}{2}} h_B^{\frac{1}{p}} t^{1+\frac{1}{p}}}{\epsilon^{\frac{1}{p}}}}=\cO{\frac{h_B^{1/p} t^{1+1/p}}{\epsilon^{1/p}}}
\end{equation}
using a $p$th-order product formula. This improves the analysis of \cite{PHOW19} for $p=1$ and extends their result to higher-order cases.

In light of \thm{trotter_error_comm_scaling}, we need to compute
\begin{equation}
\sum_{\gamma_1,\gamma_2,\ldots,\gamma_{p+1}}\norm{\big[H_{\gamma_{p+1}},\cdots\big[H_{\gamma_2},H_{\gamma_1}\big]\cdots\big]},
\end{equation}
where each $H_\gamma$ is either $H_l^{(2)}$ or $A$. Since $\big[A,A\big]=0$ and $\big[H_{\gamma},A\big]=-\big[A,H_{\gamma}\big]$, we may without loss of generality assume that $H_{\gamma_1}=H_{l_1}^{(2)}$, i.e.,
\begin{equation}
\sum_{\gamma_1,\gamma_2,\ldots,\gamma_{p+1}}\norm{\big[H_{\gamma_{p+1}},\cdots\big[H_{\gamma_2},H_{\gamma_1}\big]\cdots\big]}
=\sum_{l_1,\gamma_2,\ldots,\gamma_{p+1}}\norm{\big[H_{\gamma_{p+1}},\cdots\big[H_{\gamma_2},H_{l_1}^{(2)}\big]\cdots\big]}.
\end{equation}
We now replace each $A$ by $\sum_l H_l^{(1)}$ and apply the triangle inequality to get
\begin{equation}
\sum_{l_1,\gamma_2,\ldots,\gamma_{p+1}}\norm{\big[H_{\gamma_{p+1}},\cdots\big[H_{\gamma_2},H_{l_1}^{(2)}\big]\cdots\big]}
\leq\sum_{l_1,l_2,\ldots,l_{p+1}}\norm{\big[K_{l_{p+1}},\cdots\big[K_{l_2},H_{l_1}^{(2)}\big]\cdots\big]},
\end{equation}
where each $K_l$ is either $H_l^{(1)}$ or $H_l^{(2)}$. Since each qubit supports at most $d'$ terms and each term acts on at most two qubits,
\begin{equation}
\sum_{l_1,l_2,\ldots,l_{p+1}}\norm{\big[K_{l_{p+1}},\cdots\big[K_{l_2},H_{l_1}^{(2)}\big]\cdots\big]}
=\cO{d'^{p}\cdots d'^2d'\sum_{l_1}\norm{H_{l_1}^{(2)}}}=\cO{d'^{\frac{(1+p)p}{2}}h_B}.
\end{equation}
We have thus established:
\begin{theorem}[Product-formula decomposition of evolutions of clustered Hamiltonians]
	\label{thm:pf-cluster}
	Let $H=A+\sum_l H_l^{(2)}$ be a clustered Hamiltonian as in \eq{clustered_ham}, where each qubit is interacted with at most a constant number $d'$ of qubits and the interaction strength is $h_B=\sum_l\norm{H_l^{(2)}}$. Let $\mathscr{S}(t)$ be a $p$th-order product formula as in \eq{clustered_pf}. Then, the Trotter error has the scaling
	\begin{equation}
	\norm{\mathscr{S}(t)-e^{-itH}}
	=\cO{d'^{\frac{(1+p)p}{2}}h_Bt^{p+1}}.
	\end{equation}
	To decompose with accuracy $\epsilon$, it thus suffices to choose a Trotter number of
	\begin{equation}
	r=\cO{\frac{h_B^{1/p} t^{1+1/p}}{\epsilon^{1/p}}}.
	\end{equation}
	Choosing $p$ sufficiently large, we have
	\begin{equation}
	r=\cO{\frac{h_B^{o(1)} t^{1+o(1)}}{\epsilon^{o(1)}}}.
	\end{equation}
\end{theorem}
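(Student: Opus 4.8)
The plan is to obtain \thm{pf-cluster} as a direct corollary of \thm{trotter_error_comm_scaling} and \cor{trotter_number_comm_scaling}, so that the whole argument reduces to estimating the commutator quantity $\acommtilde$ for the particular decomposition \eq{clustered_ham}, whose summands are the single block operator $A$ together with the inter-party terms $H_l^{(2)}$. After multiplying by $-i$, these summands are anti-Hermitian and generate $e^{-itH}$, so \thm{trotter_error_comm_scaling} yields $\norm{\mathscr{S}(t)-e^{-itH}}=\cO{\acommtilde\, t^{p+1}}$ with $\acommtilde=\sum_{\gamma_1,\dots,\gamma_{p+1}}\norm{[H_{\gamma_{p+1}},\cdots[H_{\gamma_2},H_{\gamma_1}]]}$, the sum ranging over $\gamma_i\in\{A\}\cup\{H_l^{(2)}\}_l$ (the $-i$ factors cancel inside the nested commutator norms). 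Hence it suffices to prove $\acommtilde=\cO{d'^{(1+p)p/2}h_B}$.

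First I would reduce to nested commutators whose innermost operator is one of the elementary inter-party terms. A nested commutator vanishes if its two innermost operators are both $A$ (since $[A,A]=0$), and if its innermost operator is $A$ while the next slot holds some $H_{l_2}^{(2)}$, then antisymmetry of the commutator rewrites it, up to a sign, as a nested commutator with $H_{l_2}^{(2)}$ innermost and $A$ in the next slot. This gives a norm-preserving correspondence showing $\acommtilde$ is bounded by a constant times $\sum_{l_1,\gamma_2,\dots,\gamma_{p+1}}\norm{[H_{\gamma_{p+1}},\cdots[H_{\gamma_2},H_{l_1}^{(2)}]]}$. Next I would expand every remaining occurrence of $A$ as $\sum_l H_l^{(1)}$ and apply the triangle inequality, reducing to $\sum_{l_1,\dots,l_{p+1}}\norm{[K_{l_{p+1}},\cdots[K_{l_2},H_{l_1}^{(2)}]]}$, where each $K_l$ is one of the elementary two-qubit terms (of $H^{(1)}$- or $H^{(2)}$-type).

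The main work, and the step I expect to be the crux, is the combinatorial counting of this last sum, in the same spirit as the $k$-local argument of \append{local}. By induction on the number of nesting layers, the $j$-fold nested commutator $[K_{l_{j+1}},\cdots[K_{l_2},H_{l_1}^{(2)}]]$ is supported on $\cO{j}$ qubits (each new layer enlarges the support by at most one qubit, since every term is $2$-local and must overlap the support of the inner part for the commutator to be nonzero), and it is nonzero only when $K_{l_{j+1}}$ acts nontrivially on at least one of those qubits; since each qubit supports only $\cO{d'}$ terms of $H$, there are at most $\cO{d'\,j}$ admissible choices of $l_{j+1}$ at layer $j$. Iterating over the $p$ layers with $l_1$ fixed, and summing $\norm{H_{l_1}^{(2)}}$ over $l_1$ to produce $h_B$, yields $\acommtilde=\cO{d'^{(1+p)p/2}h_B}$, the generous power of $d'$ absorbing the combinatorial constants.

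Feeding this estimate into \thm{trotter_error_comm_scaling} gives the stated error bound $\cO{d'^{(1+p)p/2}h_B t^{p+1}}$. Dividing the evolution into $r$ Trotter steps makes the accumulated error $\cO{d'^{(1+p)p/2}h_B t^{p+1}/r^p}$, so requiring this to be at most $\epsilon$ and using that $d'$ is constant gives $r=\cO{d'^{(1+p)/2}h_B^{1/p}t^{1+1/p}/\epsilon^{1/p}}=\cO{h_B^{1/p}t^{1+1/p}/\epsilon^{1/p}}$, as in \cor{trotter_number_comm_scaling}. Finally, choosing $p$ arbitrarily large replaces each exponent $1/p$ by $o(1)$, giving $r=\cO{h_B^{o(1)}t^{1+o(1)}/\epsilon^{o(1)}}$ and completing the proof.
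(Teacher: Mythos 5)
Your proposal is correct and follows essentially the same route as the paper's proof in Appendix H: reduce to $\acommtilde$ via \thm{trotter_error_comm_scaling}, use $[A,A]=0$ and antisymmetry to place an $H_{l_1}^{(2)}$ term innermost, expand the remaining $A$'s into $\sum_l H_l^{(1)}$, and count the nonvanishing nested commutators using $2$-locality and the degree bound $d'$. Your layer-by-layer count of $\cO{d'j}$ admissible terms is in fact slightly sharper than the paper's $d'^j$ per layer, but both land within the stated $\cO{d'^{(1+p)p/2}h_B}$.
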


The hybrid simulator of \cite{PHOW19} has runtime $2^{O(r\cdot\cc(g))}$, where $\cc(g)$ is the contraction complexity of the interaction graph $g$ between the parties. \thm{pf-cluster} thus gives a hybrid simulator with complexity $2^{\cO{h_B^{o(1)} t^{1+o(1)} \cc(g)/\epsilon^{o(1)}}}$, dramatically improving the previous result of $2^{\cO{h_B^2t^2 \cc(g)/\epsilon}}$ \cite{PHOW19}.

\section{Simulating local observables}
\label{append:local-obs}

In this section, we analyze the performance of product formulas for simulating local observables. Following \sec{app_local_obs}, we consider a power-law Hamiltonian $H = \sum_{\vec i,\vec j\in\Lambda} H_{\vec i\vec j}$ on an $n$-qubit $d$-dimensional lattice $\Lambda\subseteq \mathbb R^d$ with exponent $\alpha>2d$. Our goal is to simulate the time evolution $\scA(t) = e^{itH} A e^{-itH}$ of a local observable $A$ with support $\supp(A)$ enclosed in a $d$-dimensional ball of constant radius $x_0$.

As mentioned in \sec{app_local_obs}, our approach is to construct a Hamiltonian $\Hlc$ whose support has radius independent of the system size. To this end, we consider a general observable $B$ and assume that $\mathcal S(B)$---the support of $B$---is a $d$-dimensional ball of radius $y_0$ centered on the origin. We define
\begin{align}
&H_1 = \sum_{\vec i,\vec j\in \mathcal B_{\ell}} H_{\vec i,\vec j},\\
&H_\gamma = \sum_{\vec i,\vec j\in\Delta \mathcal B_{\gamma\ell}} H_{\vec i,\vec j}+ \sum_{\substack{
		\vec i\in\Delta \mathcal B_{(\gamma-1)\ell}\\
		\vec j\in\Delta \mathcal B_{\gamma\ell}}
} H_{\vec i,\vec j} \quad\text{ for }\gamma=2,\dots,\Gamma-1,\\
&H_{\Gamma} = \sum_{\vec i,\vec j\notin\mathcal B_{(\Gamma-2)\ell}} H_{\vec i,\vec j},
\end{align}
where $\mathcal B_{y} = \{\vec i\in \Lambda:\inf_{\vec{j}\in\mathcal S(B)}\norm{\vec{i}-\vec{j}}_2\leq y\}$ is a ball of radius $y+y_0$ centered on $\mathcal S(B)$, $\Delta\mathcal B_{\gamma\ell} = \mathcal B_{\gamma\ell}\setminus\mathcal B_{(\gamma-1)\ell}$ is the shell containing sites between distance $(\gamma-1)\ell$ and $\gamma\ell$ from $\mathcal S(B)$, and $\ell$, $\Gamma$ are positive integers to be chosen later. We then define the truncated Hamiltonian $\Htrunc = \sum_{\gamma=1}^{\Gamma} H_\gamma$. We analyze the truncation error in the lemma below.

\begin{lemma}
	\label{lem:norm_delta_H}
	Let $\Lambda\subseteq \mathbb R^d$ be a $d$-dimensional square lattice of $n$ qubits. Let $H$ be a power-law Hamiltonian with exponent $\alpha>d$ and $B$ be an observable with support enclosed in a $d$-dimensional ball of radius $y_0$. Let $\Htrunc = \sum_{\gamma=1}^{\Gamma} H_\gamma$ be the truncated Hamiltonian as defined above. Assuming $\Gamma = \cO{1}$, we have
	\begin{align}
	\norm {H-\Htrunc} = \cO{\frac{(y_0+\Gamma\ell)^{d-1}}{\ell^{\alpha-d-1}}}.
	\end{align}
\end{lemma}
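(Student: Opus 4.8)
\textbf{Proof proposal for \lem{norm_delta_H}.}

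The plan is to write $H - \Htrunc$ explicitly as a sum of Hamiltonian terms $H_{\vec i,\vec j}$ and then bound its spectral norm by the $1$-norm via the triangle inequality, reducing the problem to a counting/summation estimate over the lattice. First I would identify exactly which terms survive in $H - \Htrunc$. Comparing the definitions, $\Htrunc = \sum_{\gamma=1}^{\Gamma} H_\gamma$ collects all terms $H_{\vec i,\vec j}$ with both endpoints inside $\mathcal B_{(\Gamma-1)\ell}$ (through $H_1,\dots,H_{\Gamma-1}$), plus all terms with both endpoints outside $\mathcal B_{(\Gamma-2)\ell}$ (through $H_\Gamma$), plus the ``cross-shell'' terms with one endpoint in $\Delta\mathcal B_{(\gamma-1)\ell}$ and the other in $\Delta\mathcal B_{\gamma\ell}$. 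The terms that are missing are precisely those connecting a site deep inside (at distance $\le (\Gamma-2)\ell$ from $\mathcal S(B)$, in fact more carefully inside $\mathcal B_{(\Gamma-3)\ell}$ or a non-adjacent shell) to a site far outside, i.e.\ pairs $\vec i, \vec j$ whose shell indices differ by at least $2$. The key geometric point is that any such missing term has $\norm{\vec i - \vec j}_2 \ge \ell$, since the two sites lie in shells separated by at least one full shell of width $\ell$.

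Next I would bound $\norm{H - \Htrunc} \le \sum_{(\vec i,\vec j)\text{ missing}} \norm{H_{\vec i,\vec j}} \le \sum_{(\vec i,\vec j)\text{ missing}} \norm{\vec i - \vec j}_2^{-\alpha}$ using \eq{power-law-def}. To evaluate this, I would organize the sum by fixing the endpoint $\vec i$ that lies in the ``inner'' region. For the missing terms, one endpoint must lie within distance $\cO{\Gamma\ell} = \cO{y_0 + \Gamma\ell}$ of $\mathcal S(B)$ (a deep-interior site), while the other is at distance $\ge \ell$ from it; the number of admissible inner sites $\vec i$ scales with the volume/surface of the truncation region. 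Actually the cleaner bookkeeping: fix the inner site $\vec i$ (there are $\cO{(y_0 + \Gamma\ell)^d}$ of them, but we will get a better bound), and for each, sum $\sum_{\norm{\vec j - \vec i}_2 \ge \ell} \norm{\vec i - \vec j}_2^{-\alpha} = \cO{\ell^{-(\alpha - d)}}$ by \eq{lamdef3} of \lem{power-law-sum}. This gives a naive bound $\cO{(y_0+\Gamma\ell)^d \ell^{-(\alpha-d)}}$, which is weaker than claimed by a factor $(y_0 + \Gamma\ell)/\ell$. To recover the stated bound $\cO{(y_0+\Gamma\ell)^{d-1}\ell^{-(\alpha-d-1)}}$, I would instead sum more carefully over the \emph{difference} vector: write the missing pairs as $(\vec i, \vec i + \vec r)$ with $\norm{\vec r}_2 \ge \ell$, and for each fixed $\vec r$ count the number of valid inner sites $\vec i$. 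A term $H_{\vec i, \vec i + \vec r}$ is missing only if the segment from $\vec i$ to $\vec i + \vec r$ ``straddles'' the shell structure in a non-adjacent way, which forces $\vec i$ to lie in a band of width $\cO{\norm{\vec r}_2}$ near the boundary sphere of radius roughly between $y_0$ and $y_0 + \Gamma\ell$; the number of such $\vec i$ is $\cO{(y_0 + \Gamma\ell)^{d-1}\norm{\vec r}_2}$ (surface area times band width). Summing, $\norm{H-\Htrunc} = \cO{(y_0+\Gamma\ell)^{d-1}} \sum_{\norm{\vec r}_2 \ge \ell} \norm{\vec r}_2 \cdot \norm{\vec r}_2^{-\alpha} = \cO{(y_0+\Gamma\ell)^{d-1}} \sum_{\norm{\vec r}_2 \ge \ell} \norm{\vec r}_2^{-(\alpha-1)}$, and applying \eq{lamdef3} with exponent $\alpha - 1 > d$ (valid since $\alpha > 2d \ge d+1$ for $d\ge 1$) yields $\cO{(y_0+\Gamma\ell)^{d-1}\ell^{-(\alpha-1-d)}} = \cO{(y_0+\Gamma\ell)^{d-1}\ell^{-(\alpha-d-1)}}$, as desired.

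The main obstacle I anticipate is the geometric counting argument: making precise the claim that a missing term $H_{\vec i, \vec i+\vec r}$ forces $\vec i$ into a band of width $\cO{\norm{\vec r}_2}$ around a sphere of radius $\cO{y_0 + \Gamma\ell}$. This requires carefully tracking which shell indices $\vec i$ and $\vec i + \vec r$ fall into, observing that ``missing'' means the indices differ by $\ge 2$ (so the pair appears in neither the within-shell nor the adjacent-cross-shell sums) OR that at least one endpoint is beyond $\mathcal B_{(\Gamma-2)\ell}$ while the other is strictly inside $\mathcal B_{(\Gamma-3)\ell}$ — and checking that in all cases the endpoints are separated by at least a shell width $\ell$, while the inner endpoint is confined to within $\cO{\Gamma\ell + y_0}$ of $\mathcal S(B)$ and, for a fixed separation $\vec r$, to a band of thickness $O(\norm{\vec r}_2)$. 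Since $\Gamma = \cO{1}$, the radius $y_0 + \Gamma\ell$ absorbs all the $\Gamma$-dependence into the constant and the length scale $\ell$. Once the counting is pinned down, the remaining steps are routine applications of the triangle inequality and \lem{power-law-sum}.
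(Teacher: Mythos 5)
Your proposal is correct and follows the same route as the paper: write $H-\Htrunc$ as the sum over pairs whose shell indices differ by at least two (hence separated by more than $\ell$), apply the triangle inequality, and bound the resulting double sum by a boundary-area factor $(y_0+\Gamma\ell)^{d-1}$ times the tail $\ell^{-(\alpha-d-1)}$, summed over the $\cO{1}$ inner shells. The only difference is that the paper obtains the key estimate $\sum_{\vec i\in \mathcal B_{\gamma\ell}}\sum_{\vec j\notin \mathcal B_{(\gamma+1)\ell}}\norm{H_{\vec i,\vec j}}=\cO{(y_0+\Gamma\ell)^{d-1}/\ell^{\alpha-d-1}}$ by citing \cite[Lemma 9]{Tran18}, whereas you re-derive it from scratch via the difference-vector counting (for fixed $\vec r$ the inner endpoint is confined to a band of width $\cO{\norm{\vec r}_2}$ near the shell boundary), which is a valid self-contained substitute and correctly notes that the resulting tail sum requires $\alpha-1>d$, guaranteed in the regime $\alpha>2d$ where the lemma is actually used.
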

\begin{proof}
	We expand $H-\Htrunc$ as
	\begin{align}
	H-\Htrunc
	= \sum_{\gamma = 0}^{\Gamma-2} \sum_{\vec{i}\in \Delta  \mathcal B_{\gamma \ell}}\sum_{\nu \geq \gamma+2} \sum_{\vec{j}\in \Delta \mathcal B_{\nu \ell}} H_{\vec{i},\vec{j}}.
	\end{align}
	Applying the triangle inequality, we have
	\begin{align}
	\norm{H-\Htrunc}
	&\leq \sum_{\gamma = 0}^{\Gamma-2} \sum_{\vec i\in \Delta  \mathcal B_{\gamma \ell}}\sum_{\nu \geq \gamma+2} \sum_{\vec j\in \Delta \mathcal B_{\nu \ell}} \norm{H_{\vec i,\vec j}}\\
	&\leq \sum_{\gamma = 0}^{\Gamma-2} \sum_{\vec i\in \mathcal B_{\gamma\ell}}\sum_{\vec j\notin  \mathcal B_{(\gamma+1) \ell}} \norm{H_{\vec i,\vec j}}\\
	&\leq \sum_{\gamma = 0}^{\Gamma-2} \frac{(y_0+\Gamma\ell)^{d-1}}{\ell^{\alpha-d-1}}
	= \cO{\frac{(y_0+\Gamma\ell)^{d-1}}{\ell^{\alpha-d-1}}},
	\end{align}
	where the third inequality follows from \cite[Lemma 9]{Tran18} (see also \cite[the derivation of Eq.(A1)]{Tran18}, with $A$ and $C$ being $\mathcal B_{\gamma\ell}$ and the complement of $\mathcal B_{(\gamma+1)\ell}$ respectively).
	The factor $(y_0+\Gamma\ell)^{d-1}$ estimates the boundary area of $\mathcal B_{\gamma\ell}$.
	This establishes the claimed scaling of the truncation error.
\end{proof}

Next, we simulate the evolution $e^{-it\Htrunc}$ using the $p$th-order product formula
\begin{align}
\Strunc(t) = \prod_{\upsilon=1}^{\Upsilon} \prod_{\gamma=1}^{\Gamma} e^{-i t a_{(\upsilon,\gamma) }H_{\pi_{\upsilon}(\gamma)}},
\end{align}
where we put additional constraints on the permutation function $\pi_\nu$:
\begin{align}
\pi_{\upsilon}(1,2,3,4,5,6,\ldots) = \begin{cases}
(2,4,6,\dots,1,3,5,\dots),\quad& \text{ if } \upsilon \text{ is odd},\\
(1,3,5,\dots,2,4,6,\dots),& \text{ if } \upsilon \text{ is even}.
\end{cases}
\end{align}
By \thm{trotter_error_comm_scaling}, the Trotter error of approximating $e^{-it\Htrunc}$ by $\Strunc(t)$ depends on
\begin{align}
\sum_{\gamma_1,\dots,\gamma_{p+1}=1}^{\Gamma} \norm{\commm{H_{\gamma_{p+1}},\dots,\commm{H_{\gamma_2},H_{\gamma_1}}}},
\end{align}
which we analyze in the following lemma.

\begin{lemma}
	\label{lem:norm_nested_comm}
	Let $\Lambda\subseteq \mathbb R^d$ be a $d$-dimensional square lattice of $n$ qubits. Let $H$ be a power-law Hamiltonian with exponent $\alpha>d$ and $B$ be an observable with support enclosed in a $d$-dimensional ball of radius $y_0$. Let $\Htrunc = \sum_{\gamma=1}^{\Gamma} H_\gamma$ be the truncated Hamiltonian as defined above. Assuming $\Gamma = \cO{1}$, we have
	\begin{align}
	\sum_{\gamma_1,\dots,\gamma_{p+1}=1}^{\Gamma} \norm{\commm{H_{\gamma_{p+1}},\dots,\commm{H_{\gamma_2},H_{\gamma_1}}}}
	=\cO{(y_0+\Gamma\ell)^{d-1}\ell}.
	\end{align}
\end{lemma}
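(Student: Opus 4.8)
The plan is to bound the nested commutator sum by a counting argument analogous to the one used for $k$-local Hamiltonians in \append{local}, but now keeping careful track of the geometric decay of the power-law interactions. The key structural facts are: (i) each $H_\gamma$ is supported on the union of two consecutive shells $\Delta\mathcal{B}_{(\gamma-1)\ell}$ and $\Delta\mathcal{B}_{\gamma\ell}$ (with $H_1$ supported on $\mathcal{B}_\ell$ and $H_\Gamma$ on the complement of $\mathcal{B}_{(\Gamma-2)\ell}$), so the supports of $H_\gamma$ and $H_{\gamma'}$ overlap only when $|\gamma-\gamma'|\le 1$; and (ii) since $\Gamma=\cO{1}$, the number of tuples $(\gamma_1,\ldots,\gamma_{p+1})$ is $\Gamma^{p+1}=\cO{1}$, so it suffices to bound $\norm{\commm{H_{\gamma_{p+1}},\dots,\commm{H_{\gamma_2},H_{\gamma_1}}}}$ for each fixed tuple by $\cO{(y_0+\Gamma\ell)^{d-1}\ell}$.

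First I would fix a tuple and expand each $H_{\gamma_m}$ as a sum over pairs of sites $\sum_{\vec{i},\vec{j}} H_{\vec{i},\vec{j}}$ restricted to the appropriate shells. Expanding the nested commutator fully gives a sum of nested commutators of individual two-site terms $H_{\vec{i}_m,\vec{j}_m}$, and such a term is nonzero only if the supports form a connected chain: the pair of sites of each inner term must share a site (or be adjacent via the lattice structure of the commutator) with the next. Using the triangle inequality, $\norm{\commm{H_{\gamma_{p+1}},\dots,\commm{H_{\gamma_2},H_{\gamma_1}}}}$ is bounded by a sum over site-labelled chains of products $\prod_m \norm{H_{\vec{i}_m,\vec{j}_m}}$, each factor of which is at most $1/\norm{\vec{i}_m-\vec{j}_m}_2^\alpha$ (or $1$ on the diagonal). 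The innermost term $H_{\vec{i}_1,\vec{j}_1}$ has one site in the shell $\Delta\mathcal{B}_{\gamma_1\ell}$ (this is the crucial observation that ties the bound to $B$'s geometry); summing $\norm{H_{\vec{i}_1,\vec{j}_1}}$ over the other site, then summing each subsequent commutator's free site against the power-law weight, invokes \lem{power-law-sum} (specifically \cref{eq:lamdef2} with $\alpha>d$, which gives a convergent $\cO{1}$ sum for each "outward" summation). This leaves only the sum over the anchoring site of $H_{\gamma_1}$, which ranges over a shell of thickness $\ell$ at distance $\cO{\Gamma\ell+y_0}$ from the origin; the number of lattice points there is $\cO{(y_0+\Gamma\ell)^{d-1}\ell}$ — boundary area times thickness — yielding the claimed bound.

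The main obstacle will be handling the bookkeeping of which site is "anchored" versus "summed freely" as one peels off commutator layers, and in particular making sure that the power-law sums performed at each layer are genuinely the convergent ones (sum over a free site with the weight $1/x^\alpha$, $\alpha>d$) rather than an unbounded sum over the whole lattice. The chain-connectivity structure is what guarantees this: at each layer there is exactly one new free site constrained to lie within distance of an already-fixed site, so the relevant sum is $\sum_{\vec{k}\in\Lambda\setminus\{\vec 0\}} 1/\norm{\vec k}_2^\alpha=\cO{1}$; one must also absorb the $\cO{1}$ combinatorial factors counting the number of ways to form chains through the at most two overlapping shells of each $H_{\gamma_m}$. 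A secondary technical point is the diagonal terms $H_{\vec i,\vec i}$ with norm $\le 1$: these contribute a factor of the number of sites in the relevant shell rather than a power-law sum, but since at most one site per term is "free" and all shells except the anchoring one have their free site summed against a convergent weight after the chain constraint, the diagonal contributions are absorbed into the same $\cO{(y_0+\Gamma\ell)^{d-1}\ell}$ bound. I would organize the argument as an induction on $p$ mirroring \append{local}: assuming $\sum_{\gamma_1,\dots,\gamma_p}\norm{W_{\gamma_1,\dots,\gamma_p}}=\cO{(y_0+\Gamma\ell)^{d-1}\ell}$ and that each $W_{\gamma_1,\dots,\gamma_p}$ is supported within $\cO{p\ell}$ of $\mathcal{S}(B)$, the step $W_{\gamma_1,\dots,\gamma_{p+1}}=\commm{H_{\gamma_{p+1}},W_{\gamma_1,\dots,\gamma_p}}$ multiplies the bound by $\cO{\vertiii{H}_1}=\cO{1}$ (using \cref{eq:power-law-vertiii} for $\alpha>d$) while only extending the support by $\cO{\ell}$, closing the induction.
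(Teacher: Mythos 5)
Your overall strategy is the same as the paper's: anchor the sum at the innermost term, whose free site ranges over a shell of volume $\cO{(y_0+\Gamma\ell)^{d-1}\ell}$, and absorb each outer commutator layer into a factor $\cO{\vertiii{H}_1}=\cO{1}$ via the convergent power-law sums (the paper does exactly this by reducing to the $k$-local argument of \thm{pf-local}). However, there is a genuine gap at the anchoring step. You assert that the innermost term $H_{\vec i_1,\vec j_1}$ always "has one site in the shell $\Delta\mathcal B_{\gamma_1\ell}$," but this is false for the two boundary indices: $H_1$ is supported on the full ball $\mathcal B_\ell$, whose volume is $\cO{(y_0+\ell)^d}$ and exceeds $(y_0+\Gamma\ell)^{d-1}\ell$ when $y_0\gg\ell$ (which is exactly the regime of later Trotter steps, where $y_0=\distance_0+r\Gamma\ell$); and $H_\Gamma$ is supported on the entire exterior of $\mathcal B_{(\Gamma-2)\ell}$, which contains $\Theta(n)$ sites. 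Anchoring at $\gamma_1\in\{1,\Gamma\}$ therefore yields a bound of order $(y_0+\ell)^d$ or even $n$, not the claimed shell volume, and your induction has the same defect at its base case, since $\sum_{\gamma_1}\norm{H_{\gamma_1}}=\cO{n}$.

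The missing idea is the observation the paper uses to dispose of these tuples: $\commm{H_{\gamma_2},H_{\gamma_1}}=0$ whenever both $\gamma_1,\gamma_2\in\{1,\Gamma\}$, because $H_1$ and $H_\Gamma$ have disjoint supports (separated by at least one shell) and each operator commutes with itself. Hence for every nonvanishing tuple at least one of $\gamma_1,\gamma_2$ satisfies $1<\gamma<\Gamma$, and by antisymmetry of the innermost commutator one may relabel so that the anchor index $\gamma_1$ is such a middle index. Only then is the anchoring site genuinely confined to $\Delta\mathcal B_{\gamma_1\ell}$, a shell of boundary area $\cO{(y_0+\Gamma\ell)^{d-1}}$ and thickness $\ell$, and the rest of your argument goes through. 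With this case analysis inserted, your proof coincides with the paper's.
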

\begin{proof}
	For convenience, we define
	\begin{align}
	\mathcal{S}_{1}&=\left\{(\vec i,\vec j): \vec i,\vec j\in \mathcal B_{\ell}\right\},\\
	\mathcal{S}_{\gamma}&=\left\{(\vec i,\vec j):\vec i\in \Delta \mathcal B_{(\gamma-1)\ell}\cup\Delta \mathcal B_{\gamma\ell},\ \vec j\in\Delta \mathcal B_{\gamma\ell}\right\}
	\quad \text{ for }\gamma=2,\dots,\Gamma-1,\\
	\mathcal{S}_{\Gamma}&=\left\{(\vec i,\vec j):\vec i,\vec j\notin\mathcal B_{(\Gamma-2)\ell}\right\},
	\end{align}
	so that $H_\gamma=\sum_{(\vec i,\vec j)\in\mathcal{S}_\gamma} H_{\vec{i},\vec{j}}$ for $\gamma=1,\dots,\Gamma$. Our goal is to analyze
	\begin{equation}
		\sum_{\gamma_1,\dots,\gamma_{p+1}=1}^{\Gamma} \bigg\Vert\bigg[\sum_{(\vec{i}_{p+1},\vec{j}_{p+1})\in\mathcal{S}_{\gamma_{p+1}}} H_{\vec{i}_{p+1},\vec{j}_{p+1}}
				,\dots,\bigg[\sum_{(\vec{i}_2,\vec{j}_2)\in\mathcal{S}_{\gamma_2}} H_{\vec{i}_2,\vec{j}_2}
					,\sum_{(\vec{i}_1,\vec{j}_1)\in\mathcal{S}_{\gamma_1}} H_{\vec{i}_1,\vec{j}_1}\bigg]\bigg]\bigg\Vert.
	\end{equation}
	Note that at least one of $\gamma_1,\gamma_2$ must be different from $1,\Gamma$; otherwise, $\commm{H_{\gamma_1},H_{\gamma_2}}= 0$. Therefore, we may assume $1<\gamma_1< \Gamma$ without loss of generality and bound the norm of commutators as
	\begin{equation}
		\sum_{\gamma_1=2}^{\Gamma-1}\sum_{\gamma_2,\dots,\gamma_{p+1}=1}^{\Gamma}
		\sum_{\substack{(\vec{i}_1,\vec{j}_1)\in\mathcal{S}_{\gamma_1},(\vec{i}_2,\vec{j}_2)\in\mathcal{S}_{\gamma_2},\\\ldots,(\vec{i}_{p+1},\vec{j}_{p+1})\in\mathcal{S}_{\gamma_{p+1}}}}
		\big\Vert\big[H_{\vec{i}_{p+1},\vec{j}_{p+1}}
		,\dots,\big[ H_{\vec{i}_2,\vec{j}_2}
		,H_{\vec{i}_1,\vec{j}_1}\big]\big]\big\Vert.
	\end{equation}
	By a similar argument as in \thm{pf-local} and \thm{pf-power-law}, we find the upper bound
	\begin{equation}
		\cO{\sum_{\gamma_1=2}^{\Gamma-1}\sum_{(\vec{i}_1,\vec{j}_1)\in\mathcal{S}_{\gamma_1}}\vertiii{H}_1^{p}\norm{H_{\vec{i}_1,\vec{j}_1}}}
		=\cO{\sum_{\gamma_1=2}^{\Gamma-1}\sum_{\vec{j}_1\in\Delta \mathcal B_{\gamma_1\ell}}\vertiii{H}_1^{p+1}}
		=\cO{(y_0+\Gamma\ell)^{d-1}\ell},
	\end{equation}
	where we use the fact that $\vertiii{H}_1=\cO{1}$ \eq{power-law-vertiii} and upper bound the volume of $\Delta\mathcal B_{\gamma_1\ell}$ by $\cO{(y_0+\Gamma\ell)^{d-1}\ell}$---the product of its boundary area $\cO{(y_0+\Gamma\ell)^{d-1}}$ with its thickness $\ell$.
\end{proof}

Using the fact that product formulas can preserve the locality of the simulated system, we commute the matrix exponentials in $\Strunc(t)$ through $B$ to cancel with their counterpart in $\Strunc^\dag(t)$. By choosing $\Gamma = \Upsilon+1$, we have
\begin{align}
\Strunc^\dag(t) B \Strunc(t) = \Sreduce^\dag (t) B \Sreduce(t),
\end{align}
where
\begin{align}
\Sreduce(t) = \prod_{\upsilon=1}^{\Upsilon} \prod_{\gamma=1}^{\upsilon} e^{-i t a_{(\upsilon,\pi_{\upsilon}^{-1}(\gamma)) }H_{\gamma}}
\end{align}
is the reduced product formula of $\scA(t)$.
This gives a decomposition of the evolution of local observable $B$ with error
\begin{equation}
\begin{aligned}
\norm{e^{itH} Be^{-itH}-\Sreduce^\dag(t) B \Sreduce(t)}
&= \cO{\norm{B}t(y_0+\Gamma\ell)^{d-1}\left(\frac{1}{\ell^{\alpha-d-1}} + \ell t^{p}\right)}.
\end{aligned}
\end{equation}
The remaining analysis proceeds in the same way as in \sec{app_local_obs}.
We have then proved:
\proplocalobs*

Assuming $\distance_0 = \cO{1}$ and $\Gamma=\Upsilon = \cO{1}$, we have
\begin{equation}
\begin{aligned}
\norm{e^{itH} Ae^{-itH}-e^{it\Hlc} A e^{-it\Hlc}}
&=\cO{t(r\ell)^{d-1}\left(\frac{1}{\ell^{\alpha-d-1}}+\frac{\ell t^p}{r^{p}}\right)}
\end{aligned}
\end{equation}
where $\Hlc$ is supported on a ball of radius $\distance = \distance_0 + r\Gamma \ell = \cO{r\ell}$. To minimize the error, we choose $\ell = \Th{\left(\frac{r}{t}\right)^{\frac{p}{\alpha-d}}}\geq 1$, which is possible if $r\geq t$ and $\alpha>2d$.
With this choice of $\ell$, the error becomes
\begin{align}
\norm{e^{itH} Ae^{-itH}-e^{it\Hlc} A e^{-it\Hlc}}
= \cO{\frac{t^{\frac{p(\alpha-2d)+\alpha-d}{\alpha-d}}}{r^{\frac{p(\alpha-2d)-(\alpha-d)(d-1)}{\alpha-d}}}}
.\label{eq:tot_error_4}
\end{align}
To ensure that the error is at most $\epsilon$, we choose
\begin{align}
r
= \Th{\frac
	{t^{\frac{p(\alpha-2d)+\alpha-d}{{p(\alpha-2d)-(\alpha-d)(d-1)}}}}
	{\epsilon^{\frac{\alpha-d}{p(\alpha-2d)-(\alpha-d)(d-1)}}}}.\label{eq:loc_ob_r_choice}
\end{align}
Note that $r$ can be made to be greater than $1$ for large times if the exponent of $t$ in the above equation is positive, i.e. we require (assuming $\alpha>2d$)
\begin{align}
\alpha > \frac{2d - \frac{d(d-1)}{p}}{1 - \frac{d-1}{p}}
\Leftrightarrow p > \frac{(\alpha-d)(d-1)}{\alpha-2d}.
\end{align}
In addition, the choice of $r$ above is also consistent with the condition $r \geq t$ because
\begin{align}
\frac{p(\alpha-2d)+\alpha-d}{{p(\alpha-2d)-(\alpha-d)(d-1)}}>1.
\end{align}

Recall that $e^{-it\Hlc}$ is an evolution supported on a ball of radius $\distance = \cO{r\ell}$. Invoking \thm{pf-weak-power-law}, we obtain the gate count
\begin{align}
g_\alpha = \cO{{(\distance^dt)^{1+\frac{1}{p}+\frac{d}{\alpha-d}}}}
&=\cO{
	{t^{\frac{(\alpha  (p+1)-d) (\alpha  (d p+p+1)-(d+2) d p-d)}{p (\alpha-d) \left(\alpha +d^2-d (\alpha +2 p+1)+\alpha  p\right)}}}
}
\end{align}
for simulating local observable $A$ with constant accuracy, which simplifies to
\begin{align}
g_\alpha
&=
{t^{\frac
		{\alpha (\alpha  (d+1)-(d+2) d )}
		{(\alpha-d) (\alpha-2d)}+o(1)}}
={
	t^{\left(1+d\frac{\alpha -d}{\alpha -2 d}\right) \left(1+\frac{d}{\alpha -d}\right)+o(1)}
}
\end{align}
in the large $p$ limit. The remaining analysis proceeds as in \sec{app_local_obs}.

\section{Quantum Monte Carlo simulation}
\label{append:qmc}

In this section, we apply our Trotter error bound to improve the performance of quantum Monte Carlo simulation. This analysis is sketched in \sec{app_qmc} and detailed here.

We first consider simulating an $n$-qubit transverse field Ising Hamiltonian \cite{Bravyi15}
\begin{equation}
H=-A-B,\quad
A=\sum_{1\leq u<v\leq n}j_{u,v}Z_uZ_v,\quad
B=\sum_{1\leq u\leq n}h_uX_u,
\end{equation}
where $X_u$ and $Z_u$ are Pauli operators acting on the $u$th qubit, and $j_{u,v}\geq 0$ and $h_u\geq 0$ are nonnegative coefficients. Our goal is to approximate the partition function $\mathcal{Z}=\mathrm{Tr}\big(e^{-H}\big)$ up to multiplicative error $0<\epsilon<1$.

A key step in the algorithm of \cite{Bravyi15} is to decompose the evolution operator using the second-order Suzuki formula. Note that all the summands in $A$ (or $B$) commute with each other, so no error is introduced when the evolution under $A$ (or $B$) is further decomposed into elementary exponentials. It thus suffices to analyze the Trotter error of approximating $e^{t(A+B)}$ by $e^{\frac{t}{2}A}e^{tB}e^{\frac{t}{2}A}$ for time $t>0$. To this end, we define
\begin{equation}
\begin{aligned}
U&:=e^{t(A+B)},\\
V&:=e^{\frac{t}{2}A}e^{tB}e^{\frac{t}{2}A},\\
W&:=\exp_{\mathcal{T}}\bigg(\int_{0}^{t}\mathrm{d}\tau\ e^{-\tau (A+B)}\bigg[e^{\frac{\tau}{2}A}Be^{-\frac{\tau}{2}A}-B
+e^{\frac{\tau}{2}A}e^{\tau B}\frac{A}{2}e^{-\tau B}e^{-\frac{\tau}{2}A}-\frac{A}{2}\bigg]e^{\tau (A+B)}\bigg)
\end{aligned}
\end{equation}
so that \thm{error_type} implies $V=UW$. To analyze the operator $W$, we further compute
\begin{align}
	e^{\frac{\tau}{2}A}Be^{-\frac{\tau}{2}A}-B
	&=\bigg[\frac{A}{2},B\bigg]\tau
	+\int_{0}^{\tau}\mathrm{d}\tau_2\int_{0}^{\tau_2}\mathrm{d}\tau_3\ e^{\frac{\tau_3}{2}A}\bigg[\frac{A}{2},\bigg[\frac{A}{2},B\bigg]\bigg]e^{-\frac{\tau_3}{2}A},\\
	e^{\frac{\tau}{2}A}e^{\tau B}\frac{A}{2}e^{-\tau B}e^{-\frac{\tau}{2}A}-\frac{A}{2}
	&=e^{\frac{\tau}{2}A}\bigg(\bigg[B,\frac{A}{2}\bigg]\tau
	+\int_{0}^{\tau}\mathrm{d}\tau_2\int_{0}^{\tau_2}\mathrm{d}\tau_3\ e^{\tau_3 B}\bigg[B,\bigg[B,\frac{A}{2}\bigg]\bigg]e^{-\tau_3 B}\bigg)e^{-\frac{\tau}{2}A} \nonumber\\
	&=\bigg[B,\frac{A}{2}\bigg]\tau
	+\tau\int_{0}^{\tau}\mathrm{d}\tau_2\ e^{\frac{\tau_2}{2}A}\bigg[\frac{A}{2},\bigg[B,\frac{A}{2}\bigg]\bigg]e^{-\frac{\tau_2}{2}A} \nonumber\\
	&\quad +\int_{0}^{\tau}\mathrm{d}\tau_2\int_{0}^{\tau_2}\mathrm{d}\tau_3\ e^{\frac{\tau}{2}A}e^{\tau_3 B}\bigg[B,\bigg[B,\frac{A}{2}\bigg]\bigg]e^{-\tau_3 B}e^{-\frac{\tau}{2}A}.
\end{align}
By \lem{time_ordered_norm_bound}, we have
\begin{equation}
\begin{aligned}
	\norm{W}
	\leq\exp\bigg(
	e^{2t\norm{H}+t\norm{A}}\frac{t^3}{24}\norm{[A,[A,B]]}
	+e^{2t\norm{H}+t\norm{A}}\frac{t^3}{12}\norm{[A,[A,B]]}
	+e^{2t\norm{H}+t\norm{A}+2t\norm{B}}\frac{t^3}{12}\norm{[B,[B,A]]}
	\bigg).
\end{aligned}
\end{equation}
This bound is tighter than the previous result of \cite[Lemma 3]{Bravyi15} in that it exploits the commutativity of operator summands. For the transverse field Ising model, this leads to an asymptotic improvement on the performance of Monte Carlo simulation. The remaining analysis proceeds as in \sec{app_qmc}.

We also consider simulating the ferromagnetic quantum spin systems
\begin{equation}
H=\sum_{1\leq u<v\leq n}p_{uv}\big({-}X_uX_v-Y_uY_v\big)+\sum_{1\leq u<v\leq n}q_{uv}\big({-}X_uX_v+Y_uY_v\big)+\sum_{u=1}^{n}d_u\big(I+Z_u\big),
\end{equation}
where $p_{uv},q_{uv}\in[0,1]$. Our goal is to approximate the partition function $\mathcal{Z}(\beta,H)=\mathrm{Tr}\big[e^{-\beta H}\big]$ for $\beta>0$. Following \cite{BG17}, we restrict ourselves to the $n$-qubit (nonunitary) gate set
\begin{equation}
\bigg\{f_{u}\big(e^{\pm t}\big),g_{uv}(t),h_{uv}(t)\ \bigg|\ u,v=1,\ldots,n,\ u\neq v,\ 0<t<\frac{1}{2}\bigg\},
\end{equation}
where
\begin{equation}
f\big(e^{\pm t}\big)=
\begin{bmatrix}
e^{\pm t} & 0\\
0 & 1
\end{bmatrix}
,\qquad g(t)=
\begin{bmatrix}
1+t^2 & 0 & 0 & t\\
0 & 1 & 0 & 0\\
0 & 0 & 1 & 0\\
t & 0 & 0 & 1
\end{bmatrix}
,\qquad h(t)=
\begin{bmatrix}
1 & 0 & 0 & 0\\
0 & 1+t^2 & t & 0\\
0 & t & 1 & 0\\
0 & 0 & 0 & 1
\end{bmatrix}
\end{equation}
and the subscripts $u,v$ indicate the qubits on which the gates act nontrivially. These gates approximate the exponentials of terms of the original Hamiltonian. Specifically, we represent the gates as
\begin{equation}
f_u\big(e^{\pm t}\big)=e^{\pm\frac{t}{2}F_u},\quad g_{uv}(t)=e^{-\frac{t}{2}\widetilde{\mathscr{G}}_{uv}(t)},\quad h_{uv}(t)=e^{-\frac{t}{2}\widetilde{\mathscr{H}}_{uv}(t)},
\end{equation}
where $0<t<1/2$ and
\begin{equation}
\label{eq:perturb_term}
F_{u}=(I+Z_u),\quad
\widetilde{\mathscr{G}}_{uv}(t)=(-X_uX_v+Y_uY_v)-\frac{2}{t}\mathscr{G}_{uv}(t),\quad
\widetilde{\mathscr{H}}_{uv}(t)=(-X_uX_v-Y_uY_v)-\frac{2}{t}\mathscr{H}_{uv}(t).
\end{equation}
By \cite[Proposition 1]{BG17}, we have $\norm{\mathscr{G}_{uv}(t)}\leq t^2$ and $\norm{\mathscr{H}_{uv}(t)}\leq t^2$.

We divide the evolution into $r$ steps. We choose $r> 2\beta$ so that we can implement the product formula using gates from \eq{gate_set} with parameters
\begin{equation}
-\frac{1}{2}<-\frac{\beta}{r}d_u<\frac{1}{2},\quad
0<\frac{\beta}{r}q_{uv}<\frac{1}{2},\quad
0<\frac{\beta}{r}p_{uv}<\frac{1}{2}.
\end{equation}
Consider the gate sequence
\begin{equation}
\begin{aligned}
&\prod_{1\leq u\leq n}f_u\big(e^{-\frac{\beta}{r}d_u}\big)\prod_{1\leq u<v\leq n}g_{uv}\bigg(\frac{\beta}{r}q_{uv}\bigg)\prod_{1\leq u<v\leq n}h_{uv}\bigg(\frac{\beta}{r}p_{uv}\bigg)\\
&\quad\cdot\prod_{1\leq u<v\leq n}h_{uv}\bigg(\frac{\beta}{r}p_{uv}\bigg)\prod_{1\leq u<v\leq n}g_{uv}\bigg(\frac{\beta}{r}q_{uv}\bigg)\prod_{1\leq u\leq n}f_u\big(e^{-\frac{\beta}{r}d_u}\big)\\
=&\prod_{1\leq u\leq n}e^{-\frac{\beta}{2r}d_uF_u}\prod_{1\leq u<v\leq n}e^{-\frac{\beta}{2r}q_{uv}\widetilde{\mathscr{G}}_{uv}(\frac{\beta}{r}q_{uv})}\prod_{1\leq u<v\leq n}e^{-\frac{\beta}{2r}p_{uv}\widetilde{\mathscr{H}}_{uv}(\frac{\beta}{r}p_{uv})}\\
&\quad\cdot\prod_{1\leq u<v\leq n}e^{-\frac{\beta}{2r}p_{uv}\widetilde{\mathscr{H}}_{uv}(\frac{\beta}{r}p_{uv})}\prod_{1\leq u<v\leq n}e^{-\frac{\beta}{2r}q_{uv}\widetilde{\mathscr{G}}_{uv}(\frac{\beta}{r}q_{uv})}\prod_{1\leq u\leq n}e^{-\frac{\beta}{2r}d_uF_u}\\
=&\ \exp\bigg(-\frac{\beta}{r}\bigg(\sum_{1\leq u<v\leq n}p_{uv}\widetilde{\mathscr{H}}_{uv}\bigg(\frac{\beta}{r}p_{uv}\bigg)+\sum_{1\leq u<v\leq n}q_{uv}\widetilde{\mathscr{G}}_{uv}\bigg(\frac{\beta}{r}q_{uv}\bigg)+\sum_{u=1}^{n}d_uF_u\bigg)\bigg)\cdot W
\end{aligned}
\end{equation}
that implements the second-order Suzuki formula, where we have applied \thm{error_type} in the last line. Since
\begin{equation}
\norm{F_{u}}\leq 2,\quad
\norm{\widetilde{\mathscr{G}}_{uv}\bigg(\frac{\beta}{r}q_{uv}\bigg)}\leq 2+2\frac{\beta}{r}q_{uv}\leq 3,\quad
\norm{\widetilde{\mathscr{H}}_{uv}\bigg(\frac{\beta}{r}p_{uv}\bigg)}\leq 2+2\frac{\beta}{r}p_{uv}\leq 3,
\end{equation}
the perturbed Hamiltonian satisfies
\begin{equation}
\sum_{1\leq u<v\leq n}p_{uv}\norm{\widetilde{\mathscr{H}}_{uv}\bigg(\frac{\beta}{r}p_{uv}\bigg)}+\sum_{1\leq u<v\leq n}q_{uv}\norm{\widetilde{\mathscr{G}}_{uv}\bigg(\frac{\beta}{r}q_{uv}\bigg)}+\sum_{u=1}^{n}|d_u|\norm{F_u}
\leq\binom{n}{2}3+\binom{n}{2}3+2n
\leq 3n^2.
\end{equation}
We also need to bound nested commutators of Hamiltonian terms with two layers of nesting. This analysis is similar to that for the transverse field Ising model; the resulting scaling is $\cO{n^4}$. By \lem{time_ordered_norm_bound}, \thm{error_type} and an analysis of the exponentiated-type error $\mathscr{E}(\tau)$, there exists a constant $c>0$ such that
\begin{equation}
\norm{W}\leq\exp\bigg(\frac{cn^4\beta^3}{r^3}e^{\frac{12n^2\beta}{r}}\bigg).
\end{equation}

To proceed, we apply \lem{interaction_picture} to switch to the interaction picture, giving
\begin{equation}
\exp\bigg(-\frac{\beta}{r}\bigg(\sum_{1\leq u<v\leq n}p_{uv}\widetilde{\mathscr{H}}_{uv}\bigg(\frac{\beta}{r}p_{uv}\bigg)+\sum_{1\leq u<v\leq n}q_{uv}\widetilde{\mathscr{G}}_{uv}\bigg(\frac{\beta}{r}q_{uv}\bigg)+\sum_{u=1}^{n}d_uF_u\bigg)\bigg)
=e^{-\frac{\beta}{r}H} V,
\end{equation}
where
\begin{equation}
\begin{aligned}
V=\exp_{\mathcal{T}}\bigg(-\int_{0}^{\frac{\beta}{r}}\mathrm{d}\tau\ e^{\tau H}\bigg(\sum_{1\leq u<v\leq n}p_{uv}\widetilde{\mathscr{H}}_{uv}\bigg(\frac{\beta}{r}p_{uv}\bigg)+\sum_{1\leq u<v\leq n}q_{uv}\widetilde{\mathscr{G}}_{uv}\bigg(\frac{\beta}{r}q_{uv}\bigg)+\sum_{u=1}^{n}d_uF_u-H\bigg)e^{-\tau H}\bigg).
\end{aligned}
\end{equation}
From \eq{perturb_term},
\begin{equation}
\begin{aligned}
&\ \norm{\sum_{1\leq u<v\leq n}p_{uv}\widetilde{\mathscr{H}}_{uv}\bigg(\frac{\beta}{r}p_{uv}\bigg)+\sum_{1\leq u<v\leq n}q_{uv}\widetilde{\mathscr{G}}_{uv}\bigg(\frac{\beta}{r}q_{uv}\bigg)+\sum_{u=1}^{n}d_uF_u-H}\\
=&\ \norm{\sum_{1\leq u<v\leq n}p_{uv}\frac{2r}{\beta p_{uv}}\mathscr{H}_{uv}\bigg(\frac{\beta}{r}p_{uv}\bigg)+\sum_{1\leq u<v\leq n}q_{uv}\frac{2r}{\beta q_{uv}}\mathscr{G}_{uv}\bigg(\frac{\beta}{r}q_{uv}\bigg)}\\
\leq&\ \binom{n}{2}2\frac{\beta}{r}p_{uv}+\binom{n}{2}2\frac{\beta}{r}q_{uv}
=2n^2\frac{\beta}{r},
\end{aligned}
\end{equation}
whereas the original Hamiltonian has spectral norm
\begin{equation}
\begin{aligned}
\norm{H}&\leq\sum_{1\leq u<v\leq n}p_{uv}\norm{{-}X_uX_v-Y_uY_v}+\sum_{1\leq u<v\leq n}q_{uv}\norm{{-}X_uX_v+Y_uY_v}+\sum_{u=1}^{n}|d_u|\norm{I+Z_u}\\
&\leq\binom{n}{2}\cdot 2+\binom{n}{2}\cdot 2+n\cdot 2=2n^2.
\end{aligned}
\end{equation}
Thus, \lem{time_ordered_norm_bound} implies
\begin{equation}
\norm{V}\leq \exp\bigg(\frac{2n^2\beta^2}{r^2}e^{\frac{4n^2\beta}{r}}\bigg).
\end{equation}

Altogether, we obtain
\begin{equation}
\begin{aligned}
&\prod_{1\leq u\leq n}f_u\big(e^{-\frac{\beta}{r}d_u}\big)\prod_{1\leq u<v\leq n}g_{uv}\bigg(\frac{\beta}{r}q_{uv}\bigg)\prod_{1\leq u<v\leq n}h_{uv}\bigg(\frac{\beta}{r}p_{uv}\bigg)\\
&\quad\cdot\prod_{1\leq u<v\leq n}h_{uv}\bigg(\frac{\beta}{r}p_{uv}\bigg)\prod_{1\leq u<v\leq n}g_{uv}\bigg(\frac{\beta}{r}q_{uv}\bigg)\prod_{1\leq u\leq n}f_u\big(e^{-\frac{\beta}{r}d_u}\big)\\
=&\ e^{-\frac{\beta}{r}H} U,
\end{aligned}
\end{equation}
where the operator $U=VW$ has spectral norm bounded by
\begin{equation}
\norm{U}=\norm{VW}\leq
\exp\bigg(\frac{2n^2\beta^2}{r^2}e^{\frac{4n^2\beta}{r}}
+\frac{cn^4\beta^3}{r^3}e^{\frac{12n^2\beta}{r}}\bigg)
\end{equation}
for some constant $c>0$. The remaining analysis continues as in \sec{app_qmc}. Similar to the case of transverse field Ising model, our Trotter error bound gives improved quantum Monte Carlo simulation of the ferromagnetic quantum spin systems.

\section{Higher-order error bounds with small prefactors}
\label{append:pf2k}

We have showed in \sec{prefactor_pf12} that our analysis reproduces known tight error bounds for first- and second-order formulas. In this section, we give heuristic strategies to derive higher-order Trotter error bounds with small prefactors. We illustrate this for the fourth-order formula, which is advantageous for simulating small-size systems \cite{CMNRS18} but does not have a tight error analysis. We further benchmark our bounds in \sec{prefactor_pf2k} by numerically simulating systems with nearest-neighbor and power-law interactions. Throughout this section, we assume $H$ is Hermitian, $t\in\R$, and consider the real-time evolution $e^{-itH}$.

We first consider a Hamiltonian $H=A+B$ consisting of two summands. This models systems with nearest-neighbor interactions where summands are grouped in an even-odd pattern \eq{even-odd}. The ideal evolution under $H$ for time $t$ is $e^{-itH}$, which we decompose using the fourth-order product formula $\mathscr{S}_4(t)$. Recall from \eq{pf2k} that $\mathscr{S}_4(t)$ is defined by
\begin{equation}
\begin{aligned}
\mathscr{S}_2(t)&:=e^{-i\frac{t}{2}A}e^{-itB}e^{-i\frac{t}{2}A},\\
\mathscr{S}_4(t)&:=\big[\mathscr{S}_2(u_2t)\big]^2\mathscr{S}_2((1-4u_2)t)\big[\mathscr{S}_2(u_2t)\big]^2,
\end{aligned}
\end{equation}
with $u_2:=1/(4-4^{1/3})$. Expanding this definition, we obtain
\begin{equation}
\mathscr{S}_4(t)=e^{-ita_6A}e^{-itb_5B}e^{-ita_5A}e^{-itb_4B}e^{-ita_4A}e^{-itb_3B}e^{-ita_3A}e^{-itb_2B}e^{-ita_2A}e^{-itb_1B}e^{-ita_1A},
\end{equation}
where
\begin{equation}
a_1:=a_6:=\frac{u_2}{2},\quad
b_1:=a_2:=b_2:=b_4:=a_5:=b_5:=u_2,\quad
a_3:=a_4:=\frac{1-3u_2}{2},\quad
b_3:=1-4u_2.
\end{equation}

Without loss of generality, we analyze the additive Trotter error of $\mathscr{S}_4(t)$. We gave an analysis in \sec{theory_type} that works for a general product formula, and we improve that here to obtain an error bound for $\mathscr{S}_4(t)$ with a small prefactor. To this end, we compute
\begin{equation}
\begin{aligned}
&\frac{\mathrm{d}}{\mathrm{d}t}\mathscr{S}_4(t)-(-iH)\mathscr{S}_4(t)\\
=&\big[e^{-ita_6A},-ib_5B\big]e^{-itb_5B}e^{-ita_5A}e^{-itb_4B}e^{-ita_4A}e^{-itb_3B}e^{-ita_3A}e^{-itb_2B}e^{-ita_2A}e^{-itb_1B}e^{-ita_1A}\\
&+\big[e^{-ita_6A}e^{-itb_5B},-ia_5A\big]e^{-ita_5A}e^{-itb_4B}e^{-ita_4A}e^{-itb_3B}e^{-ita_3A}e^{-itb_2B}e^{-ita_2A}e^{-itb_1B}e^{-ita_1A}\\
&+\cdots\\
&+\big[e^{-ita_6A}e^{-itb_5B}e^{-ita_5A}e^{-itb_4B}e^{-ita_4A}e^{-itb_3B}e^{-ita_3A}e^{-itb_2B}e^{-ita_2A},-ib_1B\big]e^{-itb_1B}e^{-ita_1A}\\
&+\big[e^{-ita_6A}e^{-itb_5B}e^{-ita_5A}e^{-itb_4B}e^{-ita_4A}e^{-itb_3B}e^{-ita_3A}e^{-itb_2B}e^{-ita_2A}e^{-itb_1B},-ia_1A\big]e^{-ita_1A}.
\end{aligned}
\end{equation}
Performing the commutation sequentially, we have
\begin{equation}
\begin{aligned}
&\frac{\mathrm{d}}{\mathrm{d}t}\mathscr{S}_4(t)-(-iH)\mathscr{S}_4(t)\\
=&\big[e^{-ita_6A},-ib_5B\big]e^{-itb_5B}e^{-ita_5A}e^{-itb_4B}e^{-ita_4A}e^{-itb_3B}e^{-ita_3A}e^{-itb_2B}e^{-ita_2A}e^{-itb_1B}e^{-ita_1A}\\
&+e^{-ita_6A}\big[e^{-itb_5B},-ia_5A\big]e^{-ita_5A}e^{-itb_4B}e^{-ita_4A}e^{-itb_3B}e^{-ita_3A}e^{-itb_2B}e^{-ita_2A}e^{-itb_1B}e^{-ita_1A}\\
&+\cdots\\
&+e^{-ita_6A}e^{-itb_5B}e^{-ita_5A}e^{-itb_4B}e^{-ita_4A}e^{-itb_3B}e^{-ita_3A}e^{-itb_2B}\big[e^{-ita_2A},-ib_1B\big]e^{-itb_1B}e^{-ita_1A}\\
&+e^{-ita_6A}e^{-itb_5B}e^{-ita_5A}e^{-itb_4B}e^{-ita_4A}e^{-itb_3B}\big[e^{-ita_3A},-ib_1B\big]e^{-itb_2B}e^{-ita_2A}e^{-itb_1B}e^{-ita_1A}\\
&+e^{-ita_6A}e^{-itb_5B}e^{-ita_5A}e^{-itb_4B}\big[e^{-ita_4A},-ib_1B\big]e^{-itb_3B}e^{-ita_3A}e^{-itb_2B}e^{-ita_2A}e^{-itb_1B}e^{-ita_1A}\\
&+e^{-ita_6A}e^{-itb_5B}\big[e^{-ita_5A},-ib_1B\big]e^{-itb_4B}e^{-ita_4A}e^{-itb_3B}e^{-ita_3A}e^{-itb_2B}e^{-ita_2A}e^{-itb_1B}e^{-ita_1A}\\
&+\big[e^{-ita_6A},-ib_1B\big]e^{-itb_5B}e^{-ita_5A}e^{-itb_4B}e^{-ita_4A}e^{-itb_3B}e^{-ita_3A}e^{-itb_2B}e^{-ita_2A}e^{-itb_1B}e^{-ita_1A}\\
&+e^{-ita_6A}e^{-itb_5B}e^{-ita_5A}e^{-itb_4B}e^{-ita_4A}e^{-itb_3B}e^{-ita_3A}e^{-itb_2B}e^{-ita_2A}\big[e^{-itb_1B},-ia_1A\big]e^{-ita_1A}\\
&+e^{-ita_6A}e^{-itb_5B}e^{-ita_5A}e^{-itb_4B}e^{-ita_4A}e^{-itb_3B}e^{-ita_3A}\big[e^{-itb_2B},-ia_1A\big]e^{-ita_2A}e^{-itb_1B}e^{-ita_1A}\\
&+e^{-ita_6A}e^{-itb_5B}e^{-ita_5A}e^{-itb_4B}e^{-ita_4A}\big[e^{-itb_3B},-ia_1A\big]e^{-ita_3A}e^{-itb_2B}e^{-ita_2A}e^{-itb_1B}e^{-ita_1A}\\
&+e^{-ita_6A}e^{-itb_5B}e^{-ita_5A}\big[e^{-itb_4B},-ia_1A\big]e^{-ita_4A}e^{-itb_3B}e^{-ita_3A}e^{-itb_2B}e^{-ita_2A}e^{-itb_1B}e^{-ita_1A}\\
&+e^{-ita_6A}\big[e^{-itb_5B},-ia_1A\big]e^{-ita_5A}e^{-itb_4B}e^{-ita_4A}e^{-itb_3B}e^{-ita_3A}e^{-itb_2B}e^{-ita_2A}e^{-itb_1B}e^{-ita_1A}.
\end{aligned}
\end{equation}
We further define
\begin{equation}
\begin{aligned}
c_1&:=a_1,\qquad\qquad\qquad\qquad\qquad &d_1&:=b_1,\\
c_2&:=a_1+a_2, &d_2&:=b_1+b_2,\\
c_3&:=a_1+a_2+a_3, &d_3&:=b_1+b_2+b_3,\\
c_4&:=a_1+a_2+a_3+a_4, &d_4&:=b_1+b_2+b_3+b_4,\\
c_5&:=a_1+a_2+a_3+a_4+a_5, &d_5&:=b_1+b_2+b_3+b_4+b_5,
\end{aligned}
\end{equation}
so that
\begin{equation}
\begin{aligned}
&\frac{\mathrm{d}}{\mathrm{d}t}\mathscr{S}_4(t)-(-iH)\mathscr{S}_4(t)\\
=&\big[e^{-ita_6A},-id_5B\big]e^{-itb_5B}e^{-ita_5A}e^{-itb_4B}e^{-ita_4A}e^{-itb_3B}e^{-ita_3A}e^{-itb_2B}e^{-ita_2A}e^{-itb_1B}e^{-ita_1A}\\
&+e^{-ita_6A}\big[e^{-itb_5B},-ic_5A\big]e^{-ita_5A}e^{-itb_4B}e^{-ita_4A}e^{-itb_3B}e^{-ita_3A}e^{-itb_2B}e^{-ita_2A}e^{-itb_1B}e^{-ita_1A}\\
&+\cdots\\
&+e^{-ita_6A}e^{-itb_5B}e^{-ita_5A}e^{-itb_4B}e^{-ita_4A}e^{-itb_3B}e^{-ita_3A}e^{-itb_2B}\big[e^{-ita_2A},-id_1B\big]e^{-itb_1B}e^{-ita_1A}\\
&+e^{-ita_6A}e^{-itb_5B}e^{-ita_5A}e^{-itb_4B}e^{-ita_4A}e^{-itb_3B}e^{-ita_3A}e^{-itb_2B}e^{-ita_2A}\big[e^{-itb_1B},-ic_1A\big]e^{-ita_1A}.
\end{aligned}
\end{equation}

In \sec{theory_type} and \append{type}, we factor out the operator-valued function $\mathscr{S}_4(t)$ from the left-hand side of the above equation as
\begin{equation}
\frac{\mathrm{d}}{\mathrm{d}t}\mathscr{S}_4(t)-(-iH)\mathscr{S}_4(t)=\mathscr{S}_4(t)\mathscr{T}(t).
\end{equation}
This approach suffices to establish the asymptotic bound in \thm{trotter_error_comm_scaling} and \cor{trotter_number_comm_scaling}. However, the resulting function $\mathscr{T}(t)$ contains unitary conjugations with a large number of conjugating layers, which defeats the goal of establishing tight error bounds. We improve this by simultaneously factoring out $\mathscr{S}_{4,\text{left}}(t)$ from the left-hand side of the equation and $\mathscr{S}_{4,\text{right}}(t)$ from the right-hand side, obtaining
\begin{equation}
\frac{\mathrm{d}}{\mathrm{d}t}\mathscr{S}_4(t)-(-iH)\mathscr{S}_4(t)=\mathscr{S}_{4,\text{left}}(t)\mathscr{T}_4(t)\mathscr{S}_{4,\text{right}}(t),
\end{equation}
where
\begin{equation}
\begin{aligned}
\mathscr{S}_{\text{left}}(t)&:=e^{-ita_6A}e^{-itb_5B}e^{-ita_5A}e^{-itb_4B}e^{-ita_4A},\\
\mathscr{S}_{\text{right}}(t)&:=e^{-itb_3B}e^{-ita_3A}e^{-itb_2B}e^{-ita_2A}e^{-itb_1B}e^{-ita_1A}.
\end{aligned}
\end{equation}
It then remains to analyze $\mathscr{T}_4(t)$.

To this end, we use the fact that
\begin{equation}
\begin{aligned}
\big[e^{tX},Y\big]&=e^{tX}\int_{0}^{t}\mathrm{d}\tau\ e^{-\tau X}\big[X,Y\big]e^{\tau X}\\
&=\int_{0}^{t}\mathrm{d}\tau\ e^{\tau X}\big[X,Y\big]e^{-\tau X}e^{tX},
\end{aligned}
\end{equation}
for any $t\in \R$ and operators $X$, $Y$. We then have from \lem{td_Duhamel} that
\begin{equation}
\mathscr{S}_4(t)=e^{-itH}+\int_{0}^{t}\mathrm{d}\tau_1\ e^{-i(t-\tau_1)H}\mathscr{S}_{4,\text{left}}(\tau_1)\mathscr{T}_4(\tau_1)\mathscr{S}_{4,\text{right}}(\tau_1),
\end{equation}
where
\begin{equation}
\begin{aligned}
&\mathscr{T}_4(\tau_1)\\
=&\int_{0}^{\tau_1}\mathrm{d}\tau_2\ e^{i\tau_1a_4A}e^{i\tau_1b_4B}e^{i\tau_1a_5A}e^{i\tau_1b_5B}e^{i\tau_2 a_6A}
\big[-ia_6A,-id_5B\big]e^{-i\tau_2 a_6A}e^{-i\tau_1b_5B}e^{-i\tau_1a_5A}e^{-i\tau_1b_4B}e^{-i\tau_1a_4A}\\
&+\int_{0}^{\tau_1}\mathrm{d}\tau_2\ e^{i\tau_1a_4A}e^{i\tau_1b_4B}e^{i\tau_1a_5A}e^{i\tau_2 b_5B}
\big[-ib_5B,-ic_5A\big]e^{-i\tau_2 b_5B}e^{-i\tau_1a_5A}e^{-i\tau_1b_4B}e^{-i\tau_1a_4A}\\
&+\int_{0}^{\tau_1}\mathrm{d}\tau_2\ e^{i\tau_1a_4A}e^{i\tau_1b_4B}e^{i\tau_2 a_5A}
\big[-i a_5A,-id_4B\big]e^{-i\tau_2 a_5A}e^{-i\tau_1b_4B}e^{-i\tau_1a_4A}\\
&+\int_{0}^{\tau_1}\mathrm{d}\tau_2\ e^{i\tau_1a_4A}e^{i\tau_2 b_4B}
\big[-ib_4B,-ic_4A\big]e^{-i\tau_2 b_4B}e^{-i\tau_1a_4A}\\
&+\int_{0}^{\tau_1}\mathrm{d}\tau_2\ e^{i\tau_2 a_4A}
\big[-ia_4A,-id_3B\big]e^{-i\tau_2 a_4A}\\
&+\int_{0}^{\tau_1}\mathrm{d}\tau_2\ e^{-i\tau_2 b_3B}
\big[-ib_3B,-ic_3A\big]e^{i\tau_2 b_3B}\\
&+\int_{0}^{\tau_1}\mathrm{d}\tau_2\ e^{-i\tau_1b_3B}e^{-i\tau_2 a_3A}
\big[-ia_3A,-id_2B\big]e^{i\tau_2 a_3A}e^{i\tau_1b_3B}\\
&+\int_{0}^{\tau_1}\mathrm{d}\tau_2\ e^{-i\tau_1b_3B}e^{-i\tau_1a_3A}e^{-i\tau_2 b_2B}
\big[-ib_2B,-ic_2A\big]e^{i\tau_2 b_2B}e^{i\tau_1a_3A}e^{i\tau_1b_3B}\\
&+\int_{0}^{\tau_1}\mathrm{d}\tau_2\ e^{-i\tau_1b_3B}e^{-i\tau_1a_3A}e^{-i\tau_1b_2B}e^{-i\tau_2 a_2A}
\big[-ia_2A,-id_1B\big]e^{i\tau_2 a_2A}e^{i\tau_1b_2B}e^{i\tau_1a_3A}e^{i\tau_1b_3B}\\
&+\int_{0}^{\tau_1}\mathrm{d}\tau_2\ e^{-i\tau_1b_3B}e^{-i\tau_1a_3A}e^{-i\tau_1b_2B}e^{-i\tau_1a_2A}e^{-i\tau_2 b_1B}
\big[-ib_1B,-ic_1A\big]e^{i\tau_2 b_1B}e^{i\tau_1a_2A}e^{i\tau_1b_2B}e^{i\tau_1a_3A}e^{i\tau_1b_3B}.
\end{aligned}
\end{equation}

The operator-valued function $\mathscr{T}_4(\tau_1)$ has the order condition $\mathscr{T}_4(\tau_1)=O(\tau_1^4)$, which follows from \prop{order_cond_rule} and the fact that $\mathscr{S}_4(t)=e^{-itH}+O(t^5)$. For terms in $\mathscr{T}_4(\tau_1)$, we compute the Taylor expansion of each layer of unitary conjugation as in \sec{theory_rep}. In light of \lem{monomial}, we expand the time variables $\tau_1$ and $\tau_2$ to third order, as there already exists the double integral $\int_{0}^{t}\mathrm{d}\tau\int_{0}^{\tau_1}\mathrm{d}\tau_2$. We then apply the triangle inequality to bound the spectral norm of a linear combination of nested commutators of $A$ and $B$ with four nesting layers. Since $[A,A]=[B,B]=0$ and $[A,B]=[B,A]$, the bound only contains $2^5/4=8$ nonzero terms. Altogether, we obtain
\begin{equation}
\begin{aligned}
\norm{\mathscr{S}_4(t)-e^{-itH}}
&\leq t^5\Big(0.0047\norm{\big[A,\big[A,\big[A,\big[B,A\big]\big]\big]\big]}+0.0057\norm{\big[A,\big[A,\big[B,\big[B,A\big]\big]\big]\big]}\\
&\qquad\ +0.0046\norm{\big[A,\big[B,\big[A,\big[B,A\big]\big]\big]\big]}+0.0074\norm{\big[A,\big[B,\big[B,\big[B,A\big]\big]\big]\big]}\\
&\qquad\ +0.0097\norm{\big[B,\big[A,\big[A,\big[B,A\big]\big]\big]\big]}+0.0097\norm{\big[B,\big[A,\big[B,\big[B,A\big]\big]\big]\big]}\\
&\qquad\ +0.0173\norm{\big[B,\big[B,\big[A,\big[B,A\big]\big]\big]\big]}+0.0284\norm{\big[B,\big[B,\big[B,\big[B,A\big]\big]\big]\big]}\Big),
\end{aligned}
\end{equation}
assuming $t\geq 0$.

\begin{proposition}[Trotter error bound for the fourth-order Suzuki formula with two summands]
	\label{prop:pf4_bound_2term}
	Let $H=A+B$ be a Hamiltonian consisting of two summands and $t\geq 0$. Let $\mathscr{S}_4(t)$ be the fourth-order Suzuki formula \eq{pf2k}. Then,
	\begin{equation}
	\begin{aligned}
	\norm{\mathscr{S}_4(t)-e^{-itH}}
	&\leq t^5\Big(0.0047\norm{\big[A,\big[A,\big[A,\big[B,A\big]\big]\big]\big]}+0.0057\norm{\big[A,\big[A,\big[B,\big[B,A\big]\big]\big]\big]}\\
	&\qquad\ +0.0046\norm{\big[A,\big[B,\big[A,\big[B,A\big]\big]\big]\big]}+0.0074\norm{\big[A,\big[B,\big[B,\big[B,A\big]\big]\big]\big]}\\
	&\qquad\ +0.0097\norm{\big[B,\big[A,\big[A,\big[B,A\big]\big]\big]\big]}+0.0097\norm{\big[B,\big[A,\big[B,\big[B,A\big]\big]\big]\big]}\\
	&\qquad\ +0.0173\norm{\big[B,\big[B,\big[A,\big[B,A\big]\big]\big]\big]}+0.0284\norm{\big[B,\big[B,\big[B,\big[B,A\big]\big]\big]\big]}\Big).
	\end{aligned}
	\end{equation}
\end{proposition}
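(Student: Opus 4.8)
The proposition is the specialization of our general fourth-order analysis to a two-term Hamiltonian $H = A+B$, and the proof essentially mirrors the derivation already laid out in \append{pf2k} leading up to the statement. First I would fix the explicit expansion of $\mathscr{S}_4(t)$ into its eleven elementary exponentials with coefficients $a_1,\ldots,a_6$ and $b_1,\ldots,b_5$ as given by the Suzuki recursion \eq{pf2k}, recording the numerical values $u_2 = 1/(4-4^{1/3})$, $a_1=a_6=u_2/2$, $b_1=a_2=b_2=b_4=a_5=b_5=u_2$, $a_3=a_4=(1-3u_2)/2$, $b_3=1-4u_2$. Then I would form $\frac{\mathrm{d}}{\mathrm{d}t}\mathscr{S}_4(t) + iH\,\mathscr{S}_4(t)$ by the product rule, commute each term so that the commutators $[e^{-it\,(\cdot)},-i(\cdot)]$ are pushed against the appropriate partial sums, and introduce the abbreviations $c_k = a_1+\cdots+a_k$, $d_k = b_1+\cdots+b_k$. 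This puts the residual in the form $\mathscr{S}_{4,\mathrm{left}}(t)\,\mathscr{T}_4(t)\,\mathscr{S}_{4,\mathrm{right}}(t)$ with $\mathscr{T}_4(\tau_1)$ the explicit sum of ten double-integral terms displayed above, each of the form $\int_0^{\tau_1}\mathrm{d}\tau_2\,(\text{conjugations})[-i(\cdot),-i(\cdot)](\text{conjugations})$.

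The second block of steps invokes the machinery from \sec{theory_order} and \sec{theory_rep}. Because $\mathscr{S}_4(t)$ is a fourth-order formula, \thm{error_order_cond} (or, at the level of $\mathscr{T}_4$, \prop{order_cond_rule} applied to $\mathscr{S}_4(t) = e^{-itH}+O(t^5)$) gives the order condition $\mathscr{T}_4(\tau_1) = O(\tau_1^4)$. I would then apply \thm{comm_exp_conj} to each layer of unitary conjugation in each of the ten terms, Taylor-expanding the time variables: since there is already one outer integration $\int_0^t\mathrm{d}\tau_1$ and one inner integration $\int_0^{\tau_1}\mathrm{d}\tau_2$ built into $\mathscr{T}_4$, I only need to expand $\tau_1$ and $\tau_2$ to third order so that the total power of $t$ reaches five, with all lower-order terms killed by the order condition (and concretely canceling by \lem{order_cond_deriv}). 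Plugging into \lem{td_Duhamel} produces the integral representation $\mathscr{S}_4(t) = e^{-itH} + \int_0^t\mathrm{d}\tau_1\, e^{-i(t-\tau_1)H}\,\mathscr{S}_{4,\mathrm{left}}(\tau_1)\,\mathscr{T}_4(\tau_1)\,\mathscr{S}_{4,\mathrm{right}}(\tau_1)$; since $A,B$ are Hermitian all the matrix exponentials are unitary, so \cor{time_ordered_distance_bound} (second case) lets me drop the exponential prefactors in the spectral-norm bound and keep only the nested-commutator content.

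**Collecting the constants.** The final step is bookkeeping: after the expansion, every term is a nested commutator of $A$ and $B$ with four nesting layers, multiplied by a rational coefficient times $t^5/(\text{factorials})$ coming from \lem{monomial} evaluating $\int_0^t\mathrm{d}\tau_1\int_0^{\tau_1}\mathrm{d}\tau_2\,\tau_1^{j}\tau_2^{k}$. Using $[A,A]=[B,B]=0$ and the antisymmetry $[A,B]=-[B,A]$ (so the innermost bracket is always $\pm[B,A]$), the $2^5$ a priori sign/ordering patterns collapse to $2^5/4 = 8$ genuinely distinct nonzero structures, listed in the proposition: $[A,[A,[A,[B,A]]]]$, $[A,[A,[B,[B,A]]]]$, $[A,[B,[A,[B,A]]]]$, $[A,[B,[B,[B,A]]]]$, $[B,[A,[A,[B,A]]]]$, $[B,[A,[B,[B,A]]]]$, $[B,[B,[A,[B,A]]]]$, $[B,[B,[B,[B,A]]]]$. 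For each, I sum the (absolute values of the) rational coefficients contributed by the ten terms of $\mathscr{T}_4$ across all layers and take the triangle inequality over the sum, yielding the eight decimal constants $0.0047,\,0.0057,\,0.0046,\,0.0074,\,0.0097,\,0.0097,\,0.0173,\,0.0284$.

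**Main obstacle.** The conceptual content is entirely contained in \thm{comm_exp_conj}, \lem{order_cond_deriv}, \lem{monomial}, and \lem{td_Duhamel}; the real difficulty is the sheer size of the computation: each of the ten terms in $\mathscr{T}_4$ contains up to five conjugating layers, each layer spawns several Taylor terms with distinct $\ad$-strings and distinct coefficient/monomial factors, and one must then correctly aggregate contributions to each of the eight commutator types, keeping track of which $c_k$ or $d_k$ multiplies which bracket and which factorial denominators arise from \lem{monomial}. This is best organized symbolically (as we did, and as reflected in the numerical benchmarking of \sec{prefactor_pf2k}); a by-hand verification would be error-prone, so in a written proof I would present the representation of $\mathscr{T}_4(\tau_1)$, state that expanding via \thm{comm_exp_conj} and integrating via \lem{monomial} and \lem{td_Duhamel} gives a finite sum of nested commutators with four layers, and then quote the coefficient table as the result of this mechanical computation.
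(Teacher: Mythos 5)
Your proposal matches the paper's own derivation in Appendix I essentially step for step: the same expansion of $\mathscr{S}_4(t)$ into eleven exponentials, the same two-sided factorization $\mathscr{S}_{4,\text{left}}\,\mathscr{T}_4\,\mathscr{S}_{4,\text{right}}$, the same use of the order condition $\mathscr{T}_4(\tau_1)=O(\tau_1^4)$ to cancel low-order Taylor terms, and the same collapse to eight nonzero commutator structures before aggregating the numerical coefficients. The argument is correct as proposed (your $[A,B]=-[B,A]$ is the right sign, and irrelevant after taking norms), so nothing further is needed.
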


A generalization of this approach analyzes Hamiltonians with three summands, which is relevant for certain nearest-neighbor and power-law systems where terms are ordered in an \textit{X-Y-Z} pattern \eq{x-y-z} and \eq{x-y-z-power-law}.
\begin{proposition}[Trotter error bound for the fourth-order Suzuki formula with three summands]
	\label{prop:pf4_bound_3term}
	Let $H=H_1+H_2+H_3$ be a Hamiltonian consisting of three summands and $t\geq 0$. Let $\mathscr{S}_4(t)$ be the fourth-order Suzuki formula \eq{pf2k}. Then,
	\begin{equation}
	\label{eq:pf4_bound_3term_coeff}
	\begin{aligned}
	\norm{\mathscr{S}_4(t)-e^{-itH}}
	&\leq t^5\sum_{i,j,k,l,m=1}^{3}c_{i,j,k,l,m}\norm{\big[H_i,\big[H_j,\big[H_k,\big[H_l,H_m\big]\big]\big]\big]},
	\end{aligned}
	\end{equation}
	where the coefficients $c_{i,j,k,l,m}$ are given by \tab{pf4-3}.
\end{proposition}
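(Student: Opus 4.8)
The plan is to mimic, with one extra summand, the derivation already carried out above for the two-summand case (\prop{pf4_bound_2term}). First I would unroll the recursion \eq{pf2k} to write the fourth-order Suzuki formula for three summands explicitly as a product of matrix exponentials $\mathscr{S}_4(t)=\prod_m e^{-ita_m H_{\sigma(m)}}$, where the summand indices $\sigma(m)$ run through the pattern $1,2,3,3,2,1,1,2,3,\dots$ dictated by Suzuki's alternating construction (with adjacent exponentials of the same summand merged, as in the $11$-exponential expansion of the two-term case), and the scalars $a_m$ are the corresponding products of $u_2=1/(4-4^{1/3})$ and $1-4u_2$. Then I would differentiate and subtract $(-iH)\mathscr{S}_4(t)$; by the product rule this yields one term per exponential, each of the form (partial product to the left)$\cdot[\,\cdot\,,-ia_m H_{\sigma(m)}]\cdot$(partial product to the right), minus the contribution of $-iH$. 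Performing these commutations sequentially—moving each $-ia_m H_{\sigma(m)}$ toward one side through the exponentials it must pass—neighbouring terms sharing the same innermost commutator $[-ia_m H_{\sigma(m)},-ic\,H_j]$ combine, so the scalar prefactors collapse into partial sums, the three-summand analogues of the $c_k,d_k$ used in the two-term derivation.

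Next I would factor the remainder as $\frac{\mathrm{d}}{\mathrm{d}t}\mathscr{S}_4(t)-(-iH)\mathscr{S}_4(t)=\mathscr{S}_{4,\text{left}}(t)\,\mathscr{T}_4(t)\,\mathscr{S}_{4,\text{right}}(t)$, choosing the split point near the middle of the expansion so that every term of $\mathscr{T}_4(t)$ is a unitary conjugation with as few conjugating layers as possible—the step that keeps the constants small, exactly as in the two-term case where the split fell between $e^{-ita_4A}$ and $e^{-itb_3B}$. Applying the variation-of-parameters formula (\lem{td_Duhamel}) gives $\mathscr{S}_4(t)=e^{-itH}+\int_0^t\mathrm{d}\tau_1\,e^{-i(t-\tau_1)H}\mathscr{S}_{4,\text{left}}(\tau_1)\mathscr{T}_4(\tau_1)\mathscr{S}_{4,\text{right}}(\tau_1)$, and since $\mathscr{S}_4$ is fourth-order, \thm{error_order_cond} together with \prop{order_cond_rule} supplies the order condition $\mathscr{T}_4(\tau_1)=\OO{\tau_1^4}$.

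Then, for each term of $\mathscr{T}_4(\tau_1)$, I would expand every layer of unitary conjugation by Taylor's theorem with integral remainder as in \sec{theory_rep} and \thm{comm_exp_conj}, introducing a second integration variable $\tau_2$; since there is already a double integral $\int_0^t\mathrm{d}\tau_1\int_0^{\tau_1}\mathrm{d}\tau_2$, it suffices by \lem{monomial} to expand the remaining conjugations to third order in $\tau_1,\tau_2$. The order condition forces all monomials of total degree at most $3$ (equivalently, the terms $T_0,\dots,T_3$ of \thm{comm_exp_conj}) to cancel, leaving a linear combination of four-fold nested commutators $[H_i,[H_j,[H_k,[H_l,H_m]]]]$ of the three summands, each weighted by a rational polynomial in $u_2$ times a Beta-function factor from $\int_0^{\tau_1}\mathrm{d}\tau_2$. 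Bounding by the triangle inequality and integrating over $\tau_1,\tau_2$ produces the claimed $t^5$ estimate with the explicit coefficients $c_{i,j,k,l,m}$ of \tab{pf4-3}; unlike the two-term case, no terms vanish a priori from $[H_i,H_i]=0$, so all $3^5$ index patterns (modulo the antisymmetry of the innermost bracket) must be carried.

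The main obstacle is the combinatorial bookkeeping rather than any conceptual difficulty: with the full expansion of $\mathscr{S}_4$, three distinct summands, and expansions of each conjugation layer to third order, one must collect and add a large number of degree-five monomials in $(\tau_1,\tau_2)$ with their correct multinomial and integration weights, and then sort the resulting nested commutators by the index pattern $(i,j,k,l,m)$. This computation—together with the choice of how aggressively to expand each layer so as to minimize the final constants—is precisely what the heuristics of \append{pf2k} and the accompanying symbolic computation that generates \tab{pf4-3} are designed to handle; everything else is identical in structure to \prop{pf4_bound_2term}.
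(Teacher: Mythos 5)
Your proposal is correct and follows essentially the same route as the paper: the paper's own treatment of the three-summand case is exactly "a generalization of" the two-summand derivation in \append{pf2k} (unroll the Suzuki recursion, differentiate and subtract $(-iH)\mathscr{S}_4(t)$, factor into $\mathscr{S}_{4,\text{left}}\mathscr{T}_4\mathscr{S}_{4,\text{right}}$, apply \lem{td_Duhamel} and the order condition $\mathscr{T}_4(\tau_1)=\OO{\tau_1^4}$, Taylor-expand each conjugation layer to third order, and collect coefficients by the triangle inequality), with the numerical values in \tab{pf4-3} produced by the same symbolic bookkeeping you describe. The only minor imprecision is your remark that "no terms vanish a priori": terms with a repeated index in the innermost bracket still vanish, which is why the table lists only the $3^3\cdot 3=81$ patterns with distinct innermost indices, but this does not affect the argument.
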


Unlike the first- and second-order cases, we do not have a rigorous proof of the tightness of these bounds. However, our numerical result suggests that these bounds are close to tight for one-dimensional Heisenberg models with nearest-neighbor \eq{heisenberg} and power-law \eq{heisenberg-power-law} interactions. We hope future work will shed light on the tightness of our analysis through either theoretical justification or numerical calculation.

\begin{table}[!htbp]
	\begin{center}
		\renewcommand{\arraystretch}{1.15}
		\small
		\scalebox{0.85}{\parbox{1.5\linewidth}{%
		\begin{tabular}{l|c|l|c|l|c}
			Commutator & Coefficient & Commutator & Coefficient & Commutator & Coefficient \\ \hline
			$\norm{[H_1,[H_1,[H_1,[H_2,H_1]]]]}$ & $0.0047$ & $\norm{[H_1,[H_1,[H_1,[H_3,H_1]]]]}$ & $0.0047$ & $\norm{[H_1,[H_1,[H_1,[H_3,H_2]]]]}$ & $0.0043$ \\
			$\norm{[H_1,[H_1,[H_2,[H_2,H_1]]]]}$ & $0.0057$ & $\norm{[H_1,[H_1,[H_2,[H_3,H_1]]]]}$ & $0.0057$ & $\norm{[H_1,[H_1,[H_2,[H_3,H_2]]]]}$ & $0.0057$ \\
			$\norm{[H_1,[H_1,[H_3,[H_2,H_1]]]]}$ & $0.0057$ & $\norm{[H_1,[H_1,[H_3,[H_3,H_1]]]]}$ & $0.0057$ & $\norm{[H_1,[H_1,[H_3,[H_3,H_2]]]]}$ & $0.0057$ \\
			$\norm{[H_1,[H_2,[H_1,[H_2,H_1]]]]}$ & $0.0046$ & $\norm{[H_1,[H_2,[H_1,[H_3,H_1]]]]}$ & $0.0046$ & $\norm{[H_1,[H_2,[H_1,[H_3,H_2]]]]}$ & $0.0035$ \\
			$\norm{[H_1,[H_2,[H_2,[H_2,H_1]]]]}$ & $0.0074$ & $\norm{[H_1,[H_2,[H_2,[H_3,H_1]]]]}$ & $0.0070$ & $\norm{[H_1,[H_2,[H_2,[H_3,H_2]]]]}$ & $0.0062$ \\
			$\norm{[H_1,[H_2,[H_3,[H_2,H_1]]]]}$ & $0.0082$ & $\norm{[H_1,[H_2,[H_3,[H_3,H_1]]]]}$ & $0.0082$ & $\norm{[H_1,[H_2,[H_3,[H_3,H_2]]]]}$ & $0.0082$ \\
			$\norm{[H_1,[H_3,[H_1,[H_2,H_1]]]]}$ & $0.0046$ & $\norm{[H_1,[H_3,[H_1,[H_3,H_1]]]]}$ & $0.0046$ & $\norm{[H_1,[H_3,[H_1,[H_3,H_2]]]]}$ & $0.0035$ \\
			$\norm{[H_1,[H_3,[H_2,[H_2,H_1]]]]}$ & $0.0070$ & $\norm{[H_1,[H_3,[H_2,[H_3,H_1]]]]}$ & $0.0058$ & $\norm{[H_1,[H_3,[H_2,[H_3,H_2]]]]}$ & $0.0046$ \\
			$\norm{[H_1,[H_3,[H_3,[H_2,H_1]]]]}$ & $0.0082$ & $\norm{[H_1,[H_3,[H_3,[H_3,H_1]]]]}$ & $0.0074$ & $\norm{[H_1,[H_3,[H_3,[H_3,H_2]]]]}$ & $0.0074$ \\
			$\norm{[H_2,[H_1,[H_1,[H_2,H_1]]]]}$ & $0.0150$ & $\norm{[H_2,[H_1,[H_1,[H_3,H_1]]]]}$ & $0.0150$ & $\norm{[H_2,[H_1,[H_1,[H_3,H_2]]]]}$ & $0.0141$ \\
			$\norm{[H_2,[H_1,[H_2,[H_2,H_1]]]]}$ & $0.0161$ & $\norm{[H_2,[H_1,[H_2,[H_3,H_1]]]]}$ & $0.0161$ & $\norm{[H_2,[H_1,[H_2,[H_3,H_2]]]]}$ & $0.0161$ \\
			$\norm{[H_2,[H_1,[H_3,[H_2,H_1]]]]}$ & $0.0161$ & $\norm{[H_2,[H_1,[H_3,[H_3,H_1]]]]}$ & $0.0161$ & $\norm{[H_2,[H_1,[H_3,[H_3,H_2]]]]}$ & $0.0161$ \\
			$\norm{[H_2,[H_2,[H_1,[H_2,H_1]]]]}$ & $0.0239$ & $\norm{[H_2,[H_2,[H_1,[H_3,H_1]]]]}$ & $0.0239$ & $\norm{[H_2,[H_2,[H_1,[H_3,H_2]]]]}$ & $0.0212$ \\
			$\norm{[H_2,[H_2,[H_2,[H_2,H_1]]]]}$ & $0.0315$ & $\norm{[H_2,[H_2,[H_2,[H_3,H_1]]]]}$ & $0.0306$ & $\norm{[H_2,[H_2,[H_2,[H_3,H_2]]]]}$ & $0.0290$ \\
			$\norm{[H_2,[H_2,[H_3,[H_2,H_1]]]]}$ & $0.0303$ & $\norm{[H_2,[H_2,[H_3,[H_3,H_1]]]]}$ & $0.0303$ & $\norm{[H_2,[H_2,[H_3,[H_3,H_2]]]]}$ & $0.0303$ \\
			$\norm{[H_2,[H_3,[H_1,[H_2,H_1]]]]}$ & $0.0179$ & $\norm{[H_2,[H_3,[H_1,[H_3,H_1]]]]}$ & $0.0179$ & $\norm{[H_2,[H_3,[H_1,[H_3,H_2]]]]}$ & $0.0153$ \\
			$\norm{[H_2,[H_3,[H_2,[H_2,H_1]]]]}$ & $0.0232$ & $\norm{[H_2,[H_3,[H_2,[H_3,H_1]]]]}$ & $0.0206$ & $\norm{[H_2,[H_3,[H_2,[H_3,H_2]]]]}$ & $0.0179$ \\
			$\norm{[H_2,[H_3,[H_3,[H_2,H_1]]]]}$ & $0.0259$ & $\norm{[H_2,[H_3,[H_3,[H_3,H_1]]]]}$ & $0.0241$ & $\norm{[H_2,[H_3,[H_3,[H_3,H_2]]]]}$ & $0.0241$ \\
			$\norm{[H_3,[H_1,[H_1,[H_2,H_1]]]]}$ & $0.0204$ & $\norm{[H_3,[H_1,[H_1,[H_3,H_1]]]]}$ & $0.0204$ & $\norm{[H_3,[H_1,[H_1,[H_3,H_2]]]]}$ & $0.0186$ \\
			$\norm{[H_3,[H_1,[H_2,[H_2,H_1]]]]}$ & $0.0225$ & $\norm{[H_3,[H_1,[H_2,[H_3,H_1]]]]}$ & $0.0225$ & $\norm{[H_3,[H_1,[H_2,[H_3,H_2]]]]}$ & $0.0217$ \\
			$\norm{[H_3,[H_1,[H_3,[H_2,H_1]]]]}$ & $0.0225$ & $\norm{[H_3,[H_1,[H_3,[H_3,H_1]]]]}$ & $0.0225$ & $\norm{[H_3,[H_1,[H_3,[H_3,H_2]]]]}$ & $0.0225$ \\
			$\norm{[H_3,[H_2,[H_1,[H_2,H_1]]]]}$ & $0.0423$ & $\norm{[H_3,[H_2,[H_1,[H_3,H_1]]]]}$ & $0.0423$ & $\norm{[H_3,[H_2,[H_1,[H_3,H_2]]]]}$ & $0.0377$ \\
			$\norm{[H_3,[H_2,[H_2,[H_2,H_1]]]]}$ & $0.0585$ & $\norm{[H_3,[H_2,[H_2,[H_3,H_1]]]]}$ & $0.0571$ & $\norm{[H_3,[H_2,[H_2,[H_3,H_2]]]]}$ & $0.0537$ \\
			$\norm{[H_3,[H_2,[H_3,[H_2,H_1]]]]}$ & $0.0502$ & $\norm{[H_3,[H_2,[H_3,[H_3,H_1]]]]}$ & $0.0502$ & $\norm{[H_3,[H_2,[H_3,[H_3,H_2]]]]}$ & $0.0502$ \\
			$\norm{[H_3,[H_3,[H_1,[H_2,H_1]]]]}$ & $0.0423$ & $\norm{[H_3,[H_3,[H_1,[H_3,H_1]]]]}$ & $0.0423$ & $\norm{[H_3,[H_3,[H_1,[H_3,H_2]]]]}$ & $0.0377$ \\
			$\norm{[H_3,[H_3,[H_2,[H_2,H_1]]]]}$ & $0.0681$ & $\norm{[H_3,[H_3,[H_2,[H_3,H_1]]]]}$ & $0.0641$ & $\norm{[H_3,[H_3,[H_2,[H_3,H_2]]]]}$ & $0.0601$ \\
			$\norm{[H_3,[H_3,[H_3,[H_2,H_1]]]]}$ & $0.0648$ & $\norm{[H_3,[H_3,[H_3,[H_3,H_1]]]]}$ & $0.0621$ & $\norm{[H_3,[H_3,[H_3,[H_3,H_2]]]]}$ & $0.0628$ \\
		\end{tabular}
		}}
		\renewcommand{\arraystretch}{1}
	\end{center}
	\caption{Coefficients of the fourth-order Trotter error bound \eq{pf4_bound_3term_coeff} for Hamiltonians with three summands.
		\label{tab:pf4-3}}
\end{table}

\clearpage
\bibliographystyle{myhamsplain2}
\bibliography{TrotterErrorTheory}

\end{document}